\documentclass[11pt]{article}
\usepackage{amsmath, amssymb, amsfonts, amsthm}
\usepackage{epsfig}

\setlength{\topmargin}{-0.8in} \setlength{\textheight}{9.2in}
\setlength{\oddsidemargin}{-.1in} \setlength{\textwidth}{6.6in}

\newtheorem{theorem}{Theorem}[section]
\newtheorem{proposition}[theorem]{Proposition}
\newtheorem{corollary}[theorem]{Corollary}
\newtheorem{lemma}[theorem]{Lemma}

\newcommand{\rd}{{\rm d}}
\newcommand{\be}{\begin{equation}}
\newcommand{\ee}{\end{equation}}
\newcommand{\bey}{\begin{eqnarray}}
\newcommand{\eey}{\end{eqnarray}}

\newcommand{\eps}{\varepsilon}

\newcommand{\bx}{{\bf x}}

\newcommand{\ph}{\varphi}

\newcommand{\la}{\langle}
\newcommand{\ra}{\rangle}

\renewcommand{\a}{\alpha}

\newcommand{\cU}{{\cal U}}

\newcommand{\bR}{{\mathbb R}}
\newcommand{\bC}{{\mathbb C}}
\newcommand{\bN}{{\mathbb N}}

\newcommand{\tr}{\mbox{Tr}}

\newcommand{\wt}{\widetilde}
\newcommand{\wh}{\widehat}

\newcommand{\const}{\mathrm{const}}

\newcommand{\cF}{{\cal F}}

\newcommand{\cE}{{\cal E}}

\newcommand{\cW}{{\cal W}}
\newcommand{\cK}{{\cal K}}
\newcommand{\cH}{{\cal H}}
\newcommand{\cL}{{\cal L}}
\newcommand{\cN}{{\cal N}}

\input epsf

\newcommand{\fh}{{\frak h}}

\newcommand{\donothing}[1]{}

\begin{document}

\title{Derivation of Effective Evolution Equations from \\ Microscopic Quantum Dynamics}

\author{Benjamin Schlein\thanks{On leave from Cambridge University. Supported by a Kovalevskaja Award from the Humboldt Foundation.}\\
\\
Institute of Mathematics, University of Munich, \\
Theresienstr.~39, D-80333 Munich, Germany}

\maketitle

\tableofcontents

\section{Introduction}
\setcounter{equation}{0}

A quantum mechanical system of $N$ particles in $d$ dimensions can be described by a complex valued wave function $\psi_N \in L^2 (\bR^{d N}, \rd x_1 \dots \rd x_N)$. The variables $x_1, \dots ,x_N \in \bR^d$ represent the position of the $N$ particles. Physically, the absolute value squared of $\psi_N (x_1, x_2, \dots ,x_N)$ is interpreted as the probability density for finding particle one at $x_1$, particle two at $x_2$, and so on. Because of this probabilistic interpretation, we will always consider wave functions $\psi_N$ with $L^2$-norm equal to one.

\medskip

In Nature there exist two different types of particles; bosons and fermions. Bosonic systems are described by wave functions which are symmetric with respect to permutations, in the sense that
\begin{equation}\label{eq:perm} \psi_N (x_{\pi 1} , x_{\pi 2}, \dots , x_{\pi N}) = \psi_N (x_1, \dots , x_N) \end{equation}
for every permutation $\pi \in S_N$. Fermionic systems, on the other hand, are described by antisymmetric wave functions satisfying \[ \psi_N (x_{\pi 1} , x_{\pi 2}, \dots , x_{\pi N}) = \sigma_{\pi} \psi_N (x_1, \dots , x_N) \qquad \text{for all $\pi \in S_N$,} \] where $\sigma_{\pi}$ is the sign of the permutation $\pi$; $\sigma_{\pi} = + 1$ if $\pi$ is even (in the sense that it can be written as the composition of an even number of transpositions) and $\sigma_{\pi} = -1$ if it is odd. In these notes we are only going to consider bosonic systems; the wave function $\psi_N$ will always be taken from the Hilbert space $L^2_s (\bR^{d N})$, the subspace of $L^2 (\bR^{d N})$ consisting of all functions satisfying (\ref{eq:perm}).

\medskip

Observables of the $N$-particle system are self adjoint operators over $L^2_s (\bR^{dN})$. The expected value of an observable $A$ in a state described by the wave function $\psi_N$ is given by the inner product \[ \langle \psi_N, A \psi_N \rangle = \int \rd x_1 \dots \rd x_N \, \overline{\psi}_N (x_1, \dots , x_N) \, (A \psi_N ) (x_1 , \dots , x_N) .\] The multiplication operator $x_j$ is the observable measuring the position of the $j$-th particle. The differential operator $p_j = -i \nabla_j$ is the observable measuring the momentum of the $j$-th particle ($p_j$ is called the momentum operator of the $j$-th particle).

\medskip

The time evolution of an $N$-particle wave function $\psi_N \in L^2_s (\bR^{d N})$ is governed by the Schr\"odinger equation
\begin{equation}\label{eq:schr}
i \partial_t \psi_{N,t} = H_N \psi_{N,t}\,.
\end{equation}
Here, and in the rest of these notes, the subscript $t$ indicates the time dependence of the wave function; all time-derivatives will be explicitly written as $\partial_t$. On the right hand side of (\ref{eq:schr}), $H_N$ is a self-adjoint operator acting on the Hilbert space $L^2_s (\bR^{d N})$, usually known as the Hamilton operator (or Hamiltonian) of the system. We will consider only time-independent Hamilton operators with two body interactions, which have the form
\begin{equation*}
H_N = \sum_{j=1}^N \left( -\Delta_{x_j} + V_{\text{ext}} (x_j) \right) + \lambda \sum_{i<j}^N V (x_i -x_j)\,.
\end{equation*}
The first part of the Hamiltonian is a sum of one-body operators (operators acting on one particle only); the sum of the Laplacians is the kinetic part of the Hamiltonian. The function $V_{\text{ext}}$ describes an { external potential} which acts in the same way on all $N$ particles. For example, $V_{\text{ext}}$ may describe a confining potential which traps the particles in a certain region. The second part of the Hamiltonian, given by a sum over all pairs of particles, describes the interactions among the particles ($\lambda \in \bR$ is a coupling constant). The Hamilton operator is the observable associated with the energy of the system. In other words, the expectation \[ \langle \psi_N , H_N \psi_N \rangle = \int \rd x_1 \dots \rd x_N \; \overline{\psi}_N (x_1, \dots ,x_N) (H_N \psi_N ) (x_1 , \dots ,x_N) \] gives the energy of the system in the state described by the wave function $\psi_N$.

\medskip

Note that the Schr\"odinger equation (\ref{eq:schr}) is linear and, since $H_N$ is a self-adjoint operator, it preserves the $L^2$-norm of the wave function (moreover, since $H_N$ is invariant with respect to permutations, it also preserves the permutation symmetry of the wave function).
In fact, the solution to (\ref{eq:schr}), with initial condition $\psi_{N,t=0} = \psi_N$, can be written by means of the unitary group generated by $H_N$ as \begin{equation}\label{eq:schr-sol} \psi_{N,t} = e^{-iH_N t} \psi_N \qquad \text{for all } t \in \bR \, .\end{equation} The global well-posedness of (\ref{eq:schr}) is not an issue here. The study of (\ref{eq:schr}) is focused, therefore, on other questions concerning the qualitative and quantitative behavior
of the solution $\psi_{N,t}$. Despite the linearity of the equation, these questions are usually quite hard to answer, because in physically interesting situation the number of particles $N$ is very large; it varies between $N \simeq 10^3$ for very dilute Bose-Einstein samples, up to values of the order $N \simeq 10^{30}$ in stars. For such huge values of $N$, it is of course impossible to compute the solution (\ref{eq:schr-sol}) explicitly; numerical methods are completely useless as well (unless the interaction among the particles is switched off).

\medskip

Fortunately, also from the point of view of physics, it is not so important to know the precise solution to (\ref{eq:schr}); it is much more important, for physicists performing experiments, to have information about the macroscopic properties of the system, which describe the typical behavior of the particles, and result from averaging over a large number of particles. Restricting the attention to macroscopic quantities simplifies the study of the solution $\psi_{N,t}$, but it still does not make it accessible to mathematical analysis. To further simplify matters, we are going to let the number of particles $N$ tend to infinity. The macroscopic properties of the system, computed in the limiting regime $N \to \infty$, are then expected to be a good approximation for the macroscopic properties observed in experiments, where the number of particles $N$ is very large, but finite (in some cases it is possible to obtain explicit bounds on the difference between the limiting behavior as $N \to \infty$ and the behavior for large but finite $N$; see Section \ref{sec:coh}).

\medskip

To consider the limit of large $N$, we are going to make use of the marginal or reduced density matrices associated with an $N$ particle wave function $\psi_N \in L^2_s (\bR^{d N})$. First of all, we define the density matrix $\gamma_N = |\psi_N \rangle \langle \psi_N|$ associated with $\psi_N$ as the orthogonal projection onto $\psi_N$; we use here and henceforth the notation $|\psi \rangle \langle \psi|$ to indicate the orthogonal projection onto $\psi$ (Dirac bracket notation). Note that, expressed through the density matrix $\gamma_N$, the expectation $\langle\psi_N , A \psi_N \rangle$ of the observable $A$ can be written as $\tr \, A \gamma_N$. The kernel of the operator $\gamma_N$ is then given by \[ \gamma_N (\bx ; \bx') = \psi (\bx) \overline{\psi} (\bx') \, ,\] where we introduced the notation $\bx = (x_1, \dots ,x_N), \bx'= (x'_1,\dots, x'_N) \in \bR^{dN}$. Note that the $L^2$-normalization of $\psi_N$ implies that $\tr \, \gamma_N =1$. For $k =1 ,\dots ,N$, we define the $k$-particle marginal density $\gamma^{(k)}_{N}$ associated with $\psi_N$ as the partial trace of $\gamma_N$ over the degrees of freedom of the last $(N-k)$ particles: \[ \gamma^{(k)}_{N} = \tr_{k+1, k+2, \dots, N} \; |\psi_{N} \rangle \langle \psi_N | \,  \] where $\tr_{k+1, \dots, N}$ denotes the partial trace over the particle $k+1, k+2, \dots ,N$. In other words, $\gamma_{N}^{(k)}$ is defined as the non-negative trace class operator on $L^2_s (\bR^{d k})$ with kernel given by
\begin{equation}\label{eq:gammakN}
\begin{split}
\gamma^{(k)}_{N} (\bx_k ; \bx'_k) &= \int \rd \bx_{N-k} \, \gamma_{N} (\bx_k , \bx_{N-k} ; \bx'_k, \bx_{N-k} ) \\ &= \int \rd \bx_{N-k} \, \overline{\psi}_{N} (\bx_k , \bx_{N-k}) \, \psi_{N} (\bx'_k, \bx_{N-k})\,.
\end{split}
\end{equation}
The last equation can be considered as the definition of partial trace. Here we used the notation $\bx_k = (x_1, \dots ,x_k), \bx'_k = (x'_1 , \dots ,x'_k) \in \bR^{dk}$ and $\bx_{N-k} = (x_{k+1}, \dots ,x_N) \in \bR^{d(N-k)}$. By definition, $\tr \, \gamma^{(k)}_{N} = 1$ for all $N$ and for all $k= 1, \dots ,N$ (note that, in the physics literature, one normally uses a different normalization of the reduced density matrices). For fixed $k <N$, the $k$-particle density matrix does not contain the full information about the state described by $\psi_N$. Knowledge of the $k$-particle marginal $\gamma^{(k)}_N$, however, is sufficient to compute the expectation of every $k$-particle observable in the state described by the wave function $\psi_N$. In fact, if $A^{(k)}$ denotes an arbitrary
bounded operator on $L^2 (\bR^{d k})$, and if $A^{(k)} \otimes 1^{(N-k)}$ denotes the operator on $L^2 (\bR^{d N})$ which acts as $A^{(k)}$ on the first $k$ particles, and as the identity on the last $(N-k)$ particles, we have
\begin{equation}\label{eq:kobs} \left\langle \psi_{N} , \left(A^{(k)} \otimes 1^{(N-k)}\right)  \psi_{N} \right\rangle = \tr \;\left(A^{(k)} \otimes 1^{(N-k)}\right) \gamma_{N} = \tr \; A^{(k)} \gamma^{(k)}_{N} .\end{equation}
Thus, $\gamma^{(k)}_{N}$ is sufficient to compute the expectation of arbitrary observables which depend non-trivially on at most $k$ particles (because of the permutation symmetry, it is not important on which particles it acts, just that it acts at most on $k$ particles).

\medskip

Marginal densities play an important role in the analysis of the $N \to \infty$ limit because, in contrast to the wave function $\psi_N$ and to the density matrix $\gamma_N$, the $k$-particle marginal $\gamma^{(k)}_N$ can have, for every fixed $k \in \bN$, a well-defined limit as $N \to \infty$ (because, if we fix $k \in \bN$, $\{ \gamma^{(k)}_N \}$ defines a sequence of operators all acting on the same space $L^2 (\bR^{d k})$).

\medskip

In these notes we are going to study macroscopic properties of the dynamics of bosonic $N$-particle systems, in the limit $N \to \infty$. We are interested in the time-evolution of marginal densities $\gamma^{(k)}_{N,t}$ associated with the solution $\psi_{N,t}$ to the Schr\"odinger equation (\ref{eq:schr}), for fixed $k$, and as $N \to \infty$. Unfortunately, it is not so simple to describe the time-dependence of $\gamma_{N,t}^{(k)}$ in the limit of large $N$; in fact it is in general impossible to obtain closed equations for the evolution of the limiting $k$-particle density $\gamma_{\infty,t}^{(k)} = \lim_{N \to \infty} \gamma_{N,t}^{(k)}$ (in general it is not even clear that this limit exists).
Nevertheless, there are some physically interesting situations for which it is indeed possible to prove the existence of $\gamma^{(k)}_{\infty,t}$ and to derive closed equations to describe its dynamics. In Section \ref{sec:mf}, we are going to study the time evolution of factorized initial wave functions of the form $\psi_N (\bx) = \prod_{j=1}^N \ph (x_j)$ in the so-called mean field limit. We will show that in this case, for every fixed $k \in \bN$, the $k$-particle marginal associated with the solution to the Schr\"odinger equation $\psi_{N,t}$ converges, as $N \to \infty$, to the limiting $k$-particle density $\gamma^{(k)}_{\infty,t} = |\ph_t \rangle \langle \ph_t|^{\otimes k}$, where $\ph_t$ is the solution of a certain one-particle nonlinear Schr\"odinger equation, known as the Hartree equation. In Section \ref{sec:GP}, we are going to study the time-evolution of Bose-Einstein condensates, in the so-called Gross-Pitaevskii limit. As we will see, although it describes a very different physical situation, the Gross-Pitaevskii scaling can be formally interpreted as a mean-field limit, with a very singular interaction potentials. Also in this case we will prove that the time evolution of the marginal densities can be described through a one-particle nonlinear Schr\"odinger equation (known as the time-dependent Gross-Pitaevskii equation); however, because of the singularity of the interaction potential, the analysis in this case is going to be much more involved. In Section \ref{sec:RS}, we will come back to the study of mean field models, and we will discuss how to prove quantitative estimates on the rate of convergence towards the Hartree dynamics.

\medskip

{\it Notation.} Throughout these notes, we will make use of the notation $\bx = (x_1, \dots ,x_N), \bx' = (x_1 , \dots x'_N) \in \bR^{dN}$, and for $k =1, \dots ,N$, $\bx_k = (x_1 , \dots ,x_k), \bx'_k = (x'_1, \dots , x'_k) \in \bR^{dk}$, and $\bx_{N-k} = (x_{k+1}, \dots ,x_N) \in \bR^{d(N-k)}$ (starting from Section \ref{sec:GP}, we will fix $d=3$). We will also use the shorthand notation $\nabla_j = \nabla_{x_j}$ and $\Delta_j = \Delta_{x_j}$.

\section{Derivation of the Hartree Equation in the Mean Field Limit}\label{sec:mf}
\setcounter{equation}{0}

\subsection{Mean Field Systems}

A mean-field system is described by an $N$-body Hamilton operator of the form
\begin{equation}\label{eq:ham-mf}
H_N = \sum_{j=1}^N \left( -\Delta_j + V_{\text{ext}} (x_j) \right)+ \frac{1}{N} \sum_{i<j}^N V (x_i -x_j)
\end{equation}
acting on the Hilbert space $L^2_s (\bR^{d N})$, the subspace of $L^2 (\bR^{dN})$ consisting of permutation symmetric functions. In these notes we will only discuss bosonic systems, which are described by symmetric wave functions. Note, however, that the mean field limit for fermionic system has also been considered in the literature; see, for example, \cite{NS,S,EESY0}. In (\ref{eq:ham-mf}) and henceforth, we use the notation $\Delta_j = \Delta_{x_j}$ (similarly, we will use the notation $\nabla_j = \nabla_{x_j}$). The mean-field character of the Hamiltonian is expressed by the factor $1/N$ in front of the interaction; this factor guarantees that the kinetic and potential energies are typically both of order $N$.

\medskip

We are going to be interested in the solution $\psi_{N,t} = e^{-i H_N t} \psi_N$ of the Schr\"odinger equation (\ref{eq:schr}) with Hamiltonian $H_N$ given by (\ref{eq:ham-mf}) and with factorized initial data
\begin{equation}\label{eq:fact}
\psi_N (\bx) = \prod_{j=1}^N \ph (x_j) \, ,
\end{equation}
for some $\ph \in L^2 (\bR^{d})$. The physical motivation for studying the evolution of factorized wave functions is that states close to the ground state of $H_N$ (the eigenvector associated with the lowest eigenvalue), which are the most accessible and thus the most interesting states, can be approximately described by wave functions like (\ref{eq:fact}) (the results which we are going to discuss in this section do not require strict factorization as in (\ref{eq:fact}); instead condensation of the initial wave function in the sense of (\ref{eq:BEC}) would be sufficient).

\medskip

Because of the interaction among the particles, the factorization (\ref{eq:fact}) is not preserved by the time evolution; in other words, the evolved wave function $\psi_{N,t}$ is not given by the product of one-particle wave functions, if $t \neq 0$. However, due to the mean-field character of the interaction each particle interacts very weakly (the strength of the interaction is of the order $1/N$) with all other $(N-1)$ particles (at least in the initial state, every particle is described by the same one-particle orbital; every particles therefore ``sees'' all other particles). For this reason, we may expect that, in the limit of large $N$, the total interaction potential experienced by a typical particle in the system can be effectively replaced by an averaged, mean-field, potential, and therefore that factorization is approximately, and in an appropriate sense, preserved by the time evolution. In other words, we may expect that, in a sense to be made precise,
\begin{equation}\label{eq:factt}
\psi_{N,t} (x_1, \dots ,x_N) \simeq \prod_{j=1}^N \ph_t (x_j) \qquad \text{as } N \to \infty
\end{equation}
for an evolved one-particle orbital $\ph_t$. Assuming (\ref{eq:factt}), it is simple to derive a self-consistent equation for the time-evolution of the one-particle orbital $\ph_t$. In fact, (\ref{eq:factt}) states that, for every fixed time $t$, the $N$ particles are independently distributed in space with density $|\ph_t (x)|^2$. If this is true, the total potential experienced, for example, by the first particle can be approximated by \[ \frac{1}{N} \sum_{j \geq 2} V (x_1 -x_j) \simeq \frac{1}{N} \sum_{j \geq 2} \int \rd y \, V (x_1 - y) |\ph_t (y)|^2 = \frac{N-1}{N} (V * |\ph_t|^2) \simeq (V * |\ph_t|^2) \, \] as $N \to \infty$. It follows that, if (\ref{eq:factt}) holds true, the one-particle orbital $\ph_t$ must satisfy the self-consistent equation \begin{equation}\label{eq:hartree1}
i\partial_t \ph_t  = - \Delta \ph_t + ( V * |\ph_t|^2 ) \ph_t
\end{equation}
with initial data $\ph_{t=0} = \ph$ given by (\ref{eq:fact}). Equation (\ref{eq:hartree1}) is known as the Hartree equation; it is an example of a cubic nonlinear Schr\"odinger equation on $\bR^{d}$. Starting from the linear Schr\"odinger equation (\ref{eq:schr}) on $\bR^{d N}$, we obtain, for the evolution of factorized wave functions, a nonlinear Schr\"odinger equation on $\bR^{d}$; the nonlinearity in the Hartree equation is a consequence of the many-body effects in the linear dynamics.

\medskip

The convergence of $\psi_{N,t}$ to the factorized wave function on the r.h.s. of (\ref{eq:factt}) as $N \to \infty$ cannot hold in the $L^2$-sense; we cannot expect, in other words, that $\| \psi_{N,t} - \ph_t^{\otimes N} \| \to 0$ as $N \to \infty$ (we use here the notation $\ph^{\otimes N} (\bx) = \prod_{j=1}^N \ph (x_j)$). Instead, (\ref{eq:factt}) has to be understood as convergence of marginal densities. Recall that, for $k =1,
\dots ,N$, the $k$-particle marginal $\gamma_{N,t}^{(k)}$ associated with $\psi_{N,t}$ is defined as the non-negative trace class operator on $L^2 (\bR^{d k})$ with kernel given by
\begin{equation}\label{eq:gammakNt}
\begin{split}
\gamma^{(k)}_{N,t} (\bx_k ; \bx'_k) &= \int \rd \bx_{N-k} \, \gamma_{N,t} (\bx_k , \bx_{N-k} ; \bx'_k, \bx_{N-k} ) \\ &= \int \rd \bx_{N-k} \, \overline{\psi}_{N,t} (\bx_k , \bx_{N-k}) \, \psi_{N,t} (\bx'_k, \bx_{N-k})\,.
\end{split}
\end{equation}
It turns out that (\ref{eq:factt}) holds in the sense that, for every fixed $k \in \bN$, the $k$-particle marginal density associated with the left hand side converges, as $N \to \infty$, to the $k$-particle marginal density associated with the right hand side (which is independent of $N$, if $N \geq k$). In other words, assuming (\ref{eq:fact}), one can show that, for a large class of interaction potentials, and for every fixed $t \in \bR$ and $k \in \bN$,
\begin{equation}\label{eq:conv} \tr \; \left| \gamma^{(k)}_{N,t} - |\ph_t \rangle \langle \ph_t|^{\otimes k} \right| \, \to \, 0 \qquad \text{as } N \to \infty \, ,
\end{equation}
where $\ph_t$ is the solution to the Hartree equation (\ref{eq:hartree1}) with initial data $\ph_{t=0} = \ph$. It is important here that the time $t \in \bR$ and the integer $k \geq 1$ are fixed; the convergence is not uniform in these two parameters. {F}rom (\ref{eq:kobs}), we observe that (\ref{eq:conv}) implies (and is actually equivalent to the condition) that, for every fixed $t\in \bR$ and $k\in\bN$, and for every fixed compact operator $J^{(k)}$ on $L^2_s (\bR^{d k})$,
\begin{equation}\label{eq:conv2}
\left\langle \psi_{N,t} , \left( J^{(k)} \otimes 1^{(N-k)} \right) \psi_{N,t} \right\rangle  \to \langle \ph_t^{\otimes k} , J^{(k)} \ph_t^{\otimes k} \rangle
\end{equation}
as $N \to \infty$. This means that, if we are interested in the expectation of observables which depend non-trivially on a fixed number $k$ of particles, then we can approximate, as $N\to \infty$, the true solution $\psi_{N,t}$ to the $N$-body Schr\"odinger equation by the product of $N$ copies of the solution $\ph_t$ to the Hartree equation (\ref{eq:hartree1}). This approximation, however, is in general not valid if we are interested in the expectation of observables depending on a macroscopic (that is, proportional to $N$) number of particles.


\medskip

The first rigorous results establishing a relation between the many body Schr\"odinger evolution and the nonlinear Hartree dynamics were obtained by Hepp in \cite{Hepp} (for smooth interaction potentials) and then generalized by Ginibre and Velo to singular potentials in \cite{GV1}. These works were inspired by techniques used in quantum field theory. We will discuss this method in Section \ref{sec:RS}, where we present a recent proof of  (\ref{eq:conv}), obtained in collaboration with I. Rodnianski in \cite{RS}, which provides a quantitative control of the rate of convergence and makes use of the original idea of Hepp.

\medskip

The first proof of the convergence (\ref{eq:conv}) was obtained by Spohn in \cite{Sp}, for bounded potentials. The method introduced by Spohn was then extended to singular potentials. In \cite{EY}, Erd\H os and Yau proved (\ref{eq:conv}) for a Coulomb potential $V(x) = \pm 1/|x|$; partial results for the Coulomb potential were also obtained by Bardos, Golse and Mauser in \cite{BGM} (note that recently a new proof of (\ref{eq:conv}) for the case of a Coulomb interaction has been proposed by Fr\"ohlich, Knowles, and Schwarz in \cite{FKS}). In \cite{ES}, a joint work with A.~Elgart, we considered again the Coulomb potential, but this time assuming a relativistic dispersion for the bosons. Recently, in a series of papers \cite{ESY0,ESY2,ESY3,ESY4,ESY5} in collaboration with L. Erd\H os and H.-T. Yau (and also in \cite{EESY}, a collaboration with A. Elgart, L. Erd\H os and H.-T. Yau) the strategy of \cite{Sp} was applied to systems with an $N$-dependent interaction potential, which converges, in the limit $N \to \infty$, to a delta-function. Note that in the one-dimensional setting, potentials converging to a delta-interaction have been considered by Adami, Golse and Teta in \cite{AGT} (making use of previous results obtained by the same authors in collaboration with Bardos in \cite{ABGT}) . We will discuss these systems in Section \ref{sec:GP}.

\medskip

Recently, a different approach to the proof of (\ref{eq:conv}) has been proposed by Fr\"ohlich, Schwarz and Graffi in \cite{FGS}. For smooth potentials, they can consider the mean-field limit uniformly in Planck's constant $\hbar$ (up to errors exponentially small in time); this allows them to combine the semiclassical limit and the mean field limit. It is also interesting to remark that the mean-field limit (\ref{eq:conv}) can be interpreted as a Egorov-type theorem; this was observed by Fr\"ohlich, Knowles, and Pizzo in \cite{FKP}.

\subsection{Derivation of the Hartree Equation for Bounded Potentials}
\label{sec:bd}

We consider, in this section, the dynamics generated by the mean field Hamiltonian (\ref{eq:ham-mf}) under the assumption that the interaction potential is a bounded operator.  We will assume, in other words, that $V \in L^{\infty} (\bR^{d})$ (recall that the operator norm of the multiplication operator $V(x_i -x_j)$ is given by the $L^{\infty}$-norm of the function $V$). To simplify a little bit the analysis we will also assume the external potential $V_{\text{ext}}$ in the Hamiltonian (\ref{eq:ham-mf}) to vanish; the techniques discussed here can however be easily extended to $V_{\text{ext}} \neq 0$.

\begin{theorem}[Spohn, \cite{Sp}]\label{thm:bd}
Suppose that \[ H_N = \sum_{j=1}^N -\Delta_j + \frac{1}{N} \sum_{i<j} V (x_i -x_j) \] with $V \in L^{\infty} (\bR^d)$. Let $\psi_N = \ph^{\otimes N} \in L^2 (\bR^{dN})$ for some $\ph \in L^2(\bR^d)$ with $\| \ph \| =1$. Let $\psi_{N,t} = e^{-iH_N t} \psi_N$, and denote by $\gamma^{(k)}_{N,t}$ the $k$-particle marginal density associated with $\psi_{N,t}$. Then, for every fixed $t \in \bR$, and for every fixed $k \geq 1$, we have
\begin{equation}\label{eq:claimbd}
\tr \; \left| \gamma^{(k)}_{N,t} - |\ph_t \rangle \langle \ph_t|^{\otimes k} \right| \, \to \, 0 \end{equation}
as $N \to \infty$. Here $\ph_t$ denotes the solution to the Hartree equation
\begin{equation}\label{eq:hartree}
i\partial_t \ph_t =-\Delta \ph_t + (V * |\ph_t|^2) \ph_t
\end{equation}
with initial data $\ph_{t=0} = \ph$.
\end{theorem}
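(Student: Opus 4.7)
The plan is to follow the BBGKY-hierarchy strategy of Spohn. First, differentiating (\ref{eq:gammakNt}) in time and using the Schr\"odinger equation one shows that the marginals $\gamma_{N,t}^{(k)}$, for $k=1,\dots,N$, satisfy the coupled chain of equations
\begin{equation*}
i\partial_t \gamma_{N,t}^{(k)} = \sum_{j=1}^{k}\bigl[-\Delta_j,\gamma_{N,t}^{(k)}\bigr] + \frac{1}{N}\sum_{i<j\leq k}\bigl[V(x_i-x_j),\gamma_{N,t}^{(k)}\bigr] + \frac{N-k}{N}\sum_{j=1}^{k}\tr_{k+1}\bigl[V(x_j-x_{k+1}),\gamma_{N,t}^{(k+1)}\bigr].
\end{equation*}
Formally letting $N\to\infty$, the second term (of order $k^2/N$) drops out, and the prefactor $(N-k)/N$ of the last term tends to $1$; this suggests that any limit point $\gamma_{\infty,t}^{(k)}$ of $\gamma_{N,t}^{(k)}$ should solve the infinite Hartree hierarchy
\begin{equation*}
i\partial_t \gamma_{\infty,t}^{(k)} = \sum_{j=1}^{k}\bigl[-\Delta_j,\gamma_{\infty,t}^{(k)}\bigr] + \sum_{j=1}^{k}\tr_{k+1}\bigl[V(x_j-x_{k+1}),\gamma_{\infty,t}^{(k+1)}\bigr].
\end{equation*}
Crucially, a direct computation shows that the factorized family $\gamma_t^{(k)}=|\ph_t\rangle\langle\ph_t|^{\otimes k}$ solves this infinite hierarchy whenever $\ph_t$ satisfies (\ref{eq:hartree}).

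Next, I would prove compactness. Since each $\gamma_{N,t}^{(k)}$ is a non-negative trace-class operator on $L^2(\bR^{dk})$ with trace $1$, for every fixed $k$ the sequence $\{\gamma_{N,t}^{(k)}\}_N$ is contained in the unit ball of the trace-class operators, which is weak-$*$ compact as the dual of the compact operators $\cK(L^2(\bR^{dk}))$. A standard diagonal extraction (in $k$) and an Arzel\`a--Ascoli argument in $t$ (using the hierarchy to bound time-derivatives in a suitable weak sense, which is where boundedness of $V$ helps) yield, along a subsequence, a limit $\{\gamma_{\infty,t}^{(k)}\}_{k\geq 1}$ with $\gamma_{\infty,0}^{(k)}=|\ph\rangle\langle\ph|^{\otimes k}$. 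Passing to the limit in the integrated (Duhamel) form of the BBGKY hierarchy, using $\|V\|_\infty<\infty$ and the fact that limits in the weak-$*$ topology of trace class pair well with the compact operators appearing after testing, I would show that any such limit solves the infinite Hartree hierarchy.

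The crux of the argument is uniqueness of solutions of the infinite hierarchy in the class of families of non-negative trace-class operators bounded by $1$. Iterating the Duhamel formula $n$ times and estimating each term using that each two-body collision operator $\tr_{k+1}[V(x_j-x_{k+1}),\,\cdot\,]$ is bounded of norm at most $2\|V\|_\infty$ on the trace class, one obtains a bound of the type $\tr|\gamma_{\infty,t}^{(k)}-\tilde\gamma_{\infty,t}^{(k)}|\leq (Ct\|V\|_\infty)^n/n!$ on the difference of any two solutions with the same initial data, where the factorial comes from integrating the nested time-ordered simplex. Sending $n\to\infty$ forces the two solutions to coincide for small $t$, and iterating gives uniqueness for all $t\in\bR$. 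Applying this uniqueness both to the limit point constructed above and to the factorized family $|\ph_t\rangle\langle\ph_t|^{\otimes k}$ identifies them. Since the limit point is unique, the full sequence converges in the weak-$*$ sense; finally, because the limit is a rank-one projection, the standard fact that weak-$*$ convergence of non-negative trace-class operators to a limit of the same trace upgrades automatically to trace-norm convergence yields the claim (\ref{eq:claimbd}).

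The main obstacle is the uniqueness step for the infinite hierarchy. In this bounded-potential setting it is essentially combinatorial/analytic once one sets up the Duhamel expansion carefully, but the argument is model for the much harder uniqueness proofs needed for singular potentials (Coulomb, Gross--Pitaevskii) that the subsequent sections will address; verifying that the estimates on each Duhamel branch truly produce the factorial denominator needed for convergence is the delicate point.
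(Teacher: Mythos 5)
Your proposal is correct in outline, but it takes a genuinely different route from the proof the paper attaches to Theorem \ref{thm:bd}: there, both the BBGKY hierarchy and the infinite hierarchy are expanded in Duhamel series and the difference $\tr\,\bigl|\gamma^{(k)}_{N,t}-|\ph_t\rangle\langle\ph_t|^{\otimes k}\bigr|$ is estimated directly, term by term, giving an explicit bound of order $k2^{k+1}/N$ for $t\leq 1/(8\|V\|)$, which is then propagated to all times by a continuation (bootstrap) argument in $t$; no compactness is invoked and a rate of convergence comes for free. Your compactness/convergence/uniqueness scheme is essentially the paper's ``second proof'' presented in Section \ref{sec:3steps}; it avoids the unwieldy expansion of the full BBGKY hierarchy (which is what makes it portable to singular potentials later in the paper), at the price of losing any quantitative rate and of having to justify carefully the passage to the limit in the hierarchy.

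Two points in your sketch need repair. First, in the uniqueness step the claimed bound $(Ct\|V\|_\infty)^n/n!$ is too strong: each single collision term $\tr_{k+1}[V(x_j-x_{k+1}),\cdot\,]$ has trace-norm bound $2\|V\|_\infty$, but $B^{(k)}$ is a sum of $k$ such terms, so iterating $n$ times from level $k$ produces the combinatorial factor $k(k+1)\cdots(k+n-1)=(k+n-1)!/(k-1)!$; together with the simplex volume $t^n/n!$ this gives $\frac{(k+n-1)!}{(k-1)!\,n!}(2\|V\|_\infty t)^n\leq 2^{k}\,(4\|V\|_\infty t)^n$, i.e.\ the factorial is consumed by the growth of the collision operators and the bound is only geometric in $t$. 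This is precisely why uniqueness holds at first only on a time interval of length of order $1/\|V\|_\infty$ and must then be iterated; your overall structure is consistent with this, but the factorial decay you are hoping to verify is simply not there. Second, in the convergence step, the assertion that weak-$*$ limits ``pair well with the compact operators appearing after testing'' hides the real difficulty: after testing the collision term against a compact $J^{(k)}$, the operator to be paired with $\gamma^{(k+1)}_{N,s}$ is $(\cU^{(k)}(s-t)J^{(k)})\,V(x_j-x_{k+1})$, which acts like a bounded multiplication in the $(k+1)$-st variable and is \emph{not} compact on $L^2(\bR^{d(k+1)})$, so weak-$*$ convergence of $\gamma^{(k+1)}_{N,s}$ does not directly give convergence of these traces. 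One needs an approximation argument, cutting off large momenta in the $x_{k+1}$-variable and using the uniform bound on $\tr\,\nabla^*_{k+1}\gamma^{(k+1)}_{N,t}\nabla_{k+1}$ provided by energy conservation (the paper additionally assumes $V(x)\to 0$ as $|x|\to\infty$ at this point to avoid a cutoff in position space). Without this, your passage from the BBGKY hierarchy to the infinite hierarchy is not justified.
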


\begin{proof}
The proof is based on the study of the time evolution of the marginal densities $\gamma^{(k)}_{N,t}$ in the limit $N \to \infty$. {F}rom (\ref{eq:schr}), it is simple to show that the dynamics of the marginals is governed by a hierarchy of $N$ coupled equation, commonly known as the BBGKY hierarchy:
\begin{equation}\label{eq:BBGKY}
\begin{split}
i\partial_t \gamma^{(k)}_{N,t} =\; & \sum_{j=1}^k \left[ -\Delta_{j} , \gamma^{(k)}_{N,t} \right] + \frac{1}{N} \sum_{i<j}^k \left[ V (x_i -x_j) , \gamma^{(k)}_{N,t} \right]
\\ &+ \frac{(N-k)}{N} \sum_{j=1}^k \tr_{k+1} \, \left[ V (x_j -x_{k+1}), \gamma^{(k+1)}_{N,t} \right] \,.
\end{split}
\end{equation}
We use here the convention that $\gamma^{(k)}_{N,t} = 0$ if $k >N$. Moreover $[A,B]= AB -BA$ denotes the commutator of the two operators $A$ and $B$. The symbol $\tr_{k+1}$ denotes the partial trace over the $(k+1)$-th particle; the kernel of the $k$-particle operator $\tr_{k+1} \, [ V (x_j -x_{k+1}) , \gamma^{(k+1)}_{N,t} ]$ is given by
\begin{equation}\label{eq:trk+1}
\begin{split}
\Big(\tr_{k+1} &\left[ V (x_j - x_{k+1}) ,  \gamma^{(k+1)}_{N,t} \right] \Big) (\bx_k ; \bx'_k) \\ &= \int \rd x_{k+1} \, \left(V (x_j -x_{k+1}) - V (x'_j - x_{k+1}) \right) \gamma^{(k+1)} (\bx_k ,x_{k+1}; \bx'_k, x_{k+1}) \, . \end{split}
\end{equation}

Rewriting the BBGKY hierarchy (\ref{eq:BBGKY}) in integral form, we find
\begin{equation}\label{eq:bd1}
\gamma_{N,t}^{(k)} = \cU^{(k)} (t) \gamma^{(k)} + \frac{1}{N} \int_0^t \rd s \, \cU^{(k)} (t-s) A^{(k)} \gamma^{(k)}_{N,s} + \left( 1-\frac{k}{N}\right) \int_0^t \rd s \, \cU^{(k)} (t-s) B^{(k)} \gamma^{(k+1)}_{N,s}
\end{equation}
where $\cU^{(k)} (t)$ denotes the free evolution of $k$ particles, defined  by
\begin{equation}\label{eq:free}
\cU^{(k)} (t) \gamma^{(k)} = e^{it\sum_{j=1}^k \Delta_j} \gamma^{(k)} e^{-it\sum_{j=1}^k \Delta_j} \end{equation}
and the maps $A^{(k)}$ and $B^{(k)}$ are defined by
\begin{equation}\label{eq:Ak} A^{(k)} \gamma^{(k)} = -i \sum_{j=1}^k \left[ V(x_i -x_j), \gamma^{(k)} \right] \end{equation}
and, respectively, by
\begin{equation}\label{eq:Bk} B^{(k)} \gamma^{(k+1)} = -i\sum_{j=1}^k \tr_{k+1} \, \left[ V (x_j -x_{k+1}) , \gamma^{(k+1)} \right] \,. \end{equation} Note that $B^{(k)}$ maps $(k+1)$-particle operators into $k$-particle operators (while $A^{(k)}$ maps $k$-particle operators into $k$-particle operators). Since we are interested in the limit $N \to \infty$ with fixed $k \geq 1$, it is clear that the second term on the r.h.s. of (\ref{eq:bd1}), as well as the contribution proportional to $k/N$ to the third term on the r.h.s. of (\ref{eq:bd1}) should be considered as small perturbations. Iterating the integral equation (\ref{eq:bd1}) for $n$ times, and stopping the iteration every time we hit a perturbation, we obtain the Duhamel type series
\begin{equation}\label{eq:duhbd1}
\begin{split}
\gamma_{N,t}^{(k)} = \; &\cU^{(k)} (t) \gamma^{(k)}_0 \\ &+ \sum_{m=1}^{n-1} \int_0^t \rd s_1 \dots \int_0^{s_{m-1}} \rd s_m \, \cU^{(k)} (t-s_1) B^{(k)} \cU^{(k+1)} (s_1 -s_2) \dots B^{(k+m-1)} \cU^{(k+m)} (s_m) \gamma_0^{(k+m)} \\ &+\int_0^t \rd s_1 \dots \int_0^{s_{n-1}} \rd s_n \, \cU^{(k)} (t-s_1) B^{(k)} \cU^{(k+1)} (s_1 -s_2) \dots B^{(k+n-1)} \gamma_{N,s_n}^{(k+n)} \\ &+ \frac{1}{N} \sum_{m=1}^N \int_0^t \rd s_1 \dots \int_0^{s_{m-1}} \rd s_m \cU^{(k)} (t-s_1) B^{(k)} \dots \cU^{(k+m-1)} (s_{m-1} -s_m) A^{(k+m-1)} \gamma^{(k+m-1)}_{N,s_m} \\
&- \sum_{m=1}^n \frac{k+m-1}{N} \int_0^t \rd s_1 \dots \int_0^{s_{m-1}} \rd s_m \, \cU^{(k)} (t-s_1) B^{(k)} \dots B^{(k+m-1)} \gamma^{(k+m)}_{N,s_m}\,.
\end{split}
\end{equation}
To show (\ref{eq:claimbd}), we need to compare $\gamma_{N,t}^{(k)}$ with $\gamma_{\infty,t}^{(k)} = |\ph_t \rangle \langle \ph_t|^{\otimes k}$, where $\ph_t$ is the solution to the Hartree equation (\ref{eq:hartree}). It is simple to check that the family $\{ \gamma^{(k)}_{\infty,t} \}_{k \geq 1}$ solves the infinite hierarchy (written directly in integral form)
\begin{equation}\label{eq:infhierbd}
\gamma^{(k)}_{\infty,t} = \cU^{(k)} (t) \gamma^{(k)}_0 + \int_0^t \rd s \, \cU^{(k)} (t-s) B^{(k)} \gamma^{(k+1)}_{\infty,s}
\end{equation}
which leads, after iteration, to the expansion
\begin{equation}\label{eq:duhbd2}
\begin{split}
\gamma^{(k)}_{\infty,t} = \; &\cU^{(k)} (t) \gamma^{(k)}_0 \\ &+ \sum_{m=1}^{n-1} \int_0^t \rd s_1 \dots \int_0^{s_{m-1}} \rd s_m \, \cU^{(k)} (t-s_1) B^{(k)} \cU^{(k+1)} (s_1 -s_2) \dots B^{(k+m-1)} \cU^{(k+m)} (s_m) \gamma_0^{(k+m)} \\ &+\int_0^t \rd s_1 \dots \int_0^{s_{n-1}} \rd s_n \, \cU^{(k)} (t-s_1) B^{(k)} \cU^{(k+1)} (s_1 -s_2) \dots B^{(k+n-1)} \gamma_{\infty,s_n}^{(k+n)} \,.
\end{split}
\end{equation}
The difference between $\gamma_{N,t}^{(k)}$ and $\gamma^{(k)}_{\infty,t}$ can thus be bounded by
\begin{equation}\label{eq:duhbd3}
\begin{split}
\tr \; \Big| &\gamma^{(k)}_{N,t} - \gamma^{(k)}_{\infty,t} \Big| \\ \leq \; &  \int_0^t \rd s_1 \dots \int_0^{s_{n-1}} \rd s_n \, \tr \; \left| \cU^{(k)} (t-s_1) B^{(k)} \cU^{(k+1)} (s_1 -s_2) \dots B^{(k+n-1)} \left(\gamma_{N,s_n}^{(k+n)} - \gamma^{(k+n)}_{\infty,s_n} \right) \right|  \\ &+ \frac{1}{N} \sum_{m=1}^N \int_0^t \rd s_1 \dots \int_0^{s_{m-1}} \rd s_m \tr \; \left| \cU^{(k)} (t-s_1) B^{(k)} \dots \cU^{(k+m-1)} (s_{m-1} -s_m) A^{(k+m-1)} \gamma^{(k+m-1)}_{N,s_m} \right| \\ &+ \sum_{m=1}^n \frac{k+m-1}{N} \int_0^t \rd s_1 \dots \int_0^{s_{m-1}} \rd s_m \,\tr \; \left| \cU^{(k)} (t-s_1) B^{(k)} \dots B^{(k+m-1)} \gamma^{(k+m)}_{N,s_m} \right|\,.
\end{split}
\end{equation}
Next we observe that, since $\cU^{(k)} (t)$ is a unitary operator,
\begin{equation}\label{eq:cU} \tr \; \left| \cU^{(k)} (t) \gamma^{(k)} \right| = \tr \; \left| \gamma^{(k)} \right| \, .\end{equation}
Moreover, since $V$ is a bounded potential, we have
\begin{equation}\label{eq:Akbd}
\tr \; \left| A^{(k)} \gamma^{(k)} \right| \leq k^2 \| V \| \tr \left| \gamma^{(k)} \right|
\end{equation}
and
\begin{equation}\label{eq:Bkbd}
\tr \; \left| B^{(k)} \gamma^{(k+1)} \right| \leq 2k \| V \| \, \tr \left| \gamma^{(k+1)} \right|
\end{equation}
where we used the fact that
\begin{equation}\label{eq:223} \tr \left| \tr_{k+1} \gamma^{(k+1)} \right| \leq \tr \left| \gamma^{(k+1)} \right| \,. \end{equation}
(Here the trace on the r.h.s. is a trace over $(k+1)$ particles.) Applying these bounds iteratively to the terms on the r.h.s. of (\ref{eq:duhbd3}), and using the a-priori information $\tr \, \left| \gamma^{(k+n)}_{N,t} \right| = \tr \, \gamma^{(k+n)}_{N,t} = 1$ (and analogously for $\gamma^{(k+n)}_{\infty,t}$), we obtain
\begin{equation}\label{eq:diffbd1}
\begin{split}
\tr \; \Big| \gamma^{(k)}_{N,t} - \gamma^{(k)}_{\infty,t} \Big| \leq \; &2 \, ( 2 \| V \| t)^n \frac{(k+n-1)!}{(k-1)! n!} + \frac{2}{N} \sum_{m=1}^{n} (2\| V \| t)^m (k+m-1) \frac{(k+m-1)!}{m! (k-1)!} \\
\leq \; &2^k \, (4 \| V \| t)^n + \frac{k \, 2^{k+1}}{N} \sum_{m=1}^N (4 \| V \| t)^m\,.
\end{split}
\end{equation}
If $0 < t \leq t_0$, with $t_0 = 1/ (8 \|V \|)$, it follows that
\begin{equation*}
\tr \; \Big| \gamma^{(k)}_{N,t} - \gamma^{(k)}_{\infty,t} \Big| \leq \frac{2^k}{2^n} + \frac{k 2^{k+1}}{N}\,.
\end{equation*}
Since the l.h.s. is independent of the order $n$ of the expansion, it follows that
\begin{equation}\label{eq:diffbd2}
\tr\; \Big| \gamma^{(k)}_{N,t} - \gamma^{(k)}_{\infty,t} \Big| \leq \frac{k 2^{k+1}}{N}
\end{equation}
and thus that
\begin{equation}\label{eq:t0}
\tr \; \Big| \gamma^{(k)}_{N,t} - \gamma^{(k)}_{\infty,t} \Big| \to 0 \qquad \text{as } N \to \infty
\end{equation}
for all $0 \leq t \leq t_0$ and for all $k \geq 1$. Next, set \[ t_1 := \sup \left\{ t > 0 : \lim_{N \to \infty} \, \tr \; \Big| \gamma^{(k)}_{N,s} - \gamma^{(k)}_{\infty,s} \Big| = 0 \quad \text{for all fixed $0 \leq s \leq t$ and $k \geq 1$} \right\} \,.\] {F}rom (\ref{eq:t0}), it follows that $t_1 \geq t_0$. We show that $t_1 = \infty$ by contradiction. Suppose that $t_1 < \infty$. Then, if $t_2 = t_1 - (t_0/2)$, we have, by definition,
\begin{equation}\label{eq:t2} \lim_{N\to \infty}  \tr\; \left|  \gamma^{(k)}_{N, t_2} - \gamma^{(k)}_{\infty, t_2} \right| = 0 \qquad \text{for all $k \geq 1$}. \end{equation} Starting from (\ref{eq:t2}), we are going to prove that
\begin{equation}\label{eq:t1} \lim_{N \to \infty} \tr \; \left| \gamma^{(k)}_{N, t} - \gamma^{(k)}_{\infty, t} \right| = 0 \end{equation} for all $k \geq 1$ and for all $0\leq t \leq t_1 + (t_0/2)$; this contradicts the definition of $t_1$. To show (\ref{eq:t1}), we expand $\gamma^{(k)}_{N, t}$ and $\gamma^{(k)}_{\infty, t}$ in Duhamel series similar to (\ref{eq:duhbd1}) and (\ref{eq:duhbd2}), but starting at time $t_2 = t_1 -(t_0/2)$. Analogously to (\ref{eq:duhbd3}), we obtain, for $t = t_2 + \tau$,
\begin{equation}\label{eq:duhbd4}
\begin{split}
\tr \; &\Big| \gamma^{(k)}_{N,t} - \gamma^{(k)}_{\infty,t} \Big| \\ \leq \; &\tr \; \left| \cU^{(k)} (\tau) \left( \gamma^{(k)}_{N, t_2} - \gamma^{(k)}_{\infty, t_2} \right) \right| \\ &+ \sum_{m=1}^{n-1} \int_0^{\tau} \rd s_1 \dots \int_0^{s_{m-1}} \rd s_m \\ &\hspace{1cm} \times \tr \; \left|  \cU^{(k)} (\tau- s_1) B^{(k)} \cU^{(k+1)} (s_1 -s_2) \dots B^{(k+m-1)} \cU^{(k+m)} (s_m)  \left(\gamma_{N,t_2}^{(k+n)} - \gamma^{(k+n)}_{\infty,t_2} \right) \right| \\ &+  \int_0^{\tau} \rd s_1 \dots \int_0^{s_{n-1}} \rd s_n \, \tr \; \left| \cU^{(k)} (\tau-s_1) B^{(k)}  \dots B^{(k+n-1)} \left(\gamma_{N,s_n}^{(k+n)} - \gamma^{(k+n)}_{\infty,s_n} \right) \right|  \\ &+ \frac{1}{N} \sum_{m=1}^N \int_0^{\tau} \rd s_1 \dots \int_0^{s_{m-1}} \rd s_m \tr \; \left| \cU^{(k)} (\tau-s_1) B^{(k)} \dots \cU^{(k+m-1)} (s_{m-1} -s_m) A^{(k+m-1)} \gamma^{(k+m-1)}_{N,s_m} \right| \\ &+ \sum_{m=1}^n \frac{k+m-1}{N} \int_0^{\tau} \rd s_1 \dots \int_0^{s_{m-1}} \rd s_m \,\tr \; \left| \cU^{(k)} (\tau-s_1) B^{(k)} \dots B^{(k+m-1)} \gamma^{(k+m)}_{N,s_m} \right|\,.
\end{split}
\end{equation}
With respect to (\ref{eq:duhbd3}), we have one more term on the r.h.s. of the last equation, due to the fact that at time $t=t_2$ the densities do not coincide (while they do at time $t=0$).
Analogously to (\ref{eq:diffbd1}) we find
\begin{equation*}
\begin{split}
\tr \; \Big| \gamma^{(k)}_{N,t} - \gamma^{(k)}_{\infty,t} \Big| \leq \; & 2^k \sum_{m=0}^{n-1} \frac{1}{2^m} \, \tr \; \left| \left(\gamma_{N,t_2}^{(k+m)} - \gamma^{(k+m)}_{\infty,t_2} \right) \right| +\frac{2^k}{2^n} + \frac{k 2^{k+1}}{N}\, ,
\end{split}
\end{equation*}
if $t_1 - (t_0/2) \leq t \leq t_1 + (t_0/2)$ (that is, if $0 \leq \tau \leq t_0$). Choosing first $n >0$ sufficiently large (to make the second term on the r.h.s. smaller than $\eps/3$), and then $N > 0$ sufficiently large (this guarantees that the third term, and, by (\ref{eq:t2}), also the first term, are smaller than $\eps /3$), the quantity on the l.h.s. can be made smaller than any $\eps >0$ (for arbitrary $k \geq 1$ and $t_1 - (t_0/2) \leq t \leq t_1 + (t_0/2)$). This shows (\ref{eq:t1}) and completes the proof of the theorem.
\end{proof}

\subsection{Another Proof of Theorem \ref{thm:bd}}
\label{sec:3steps}

{F}rom the proof of Theorem \ref{thm:bd} presented above, we notice that the expansion of the BBGKY hierarchy in (\ref{eq:duhbd1}) is much more involved than the corresponding expansion (\ref{eq:duhbd2}) of the infinite hierarchy (\ref{eq:infhierbd}). It turns out that it is possible to avoid the expansion of the BBGKY hierarchy making use of a simple compactness argument; this will be especially important when dealing with singular potentials. In the following we explain the main steps of this alternative proof to Theorem~\ref{thm:bd}. Then, in the next section, we will illustrate how to extend it to potentials with a Coulomb singularity.

\medskip

The idea, which was first presented in \cite{BGM,BGMEY,EY}, consists in characterizing the limit of the densities $\gamma^{(k)}_{N,t}$ as the unique solution to the infinite hierarchy of equations (\ref{eq:infhierbd}); combined with the compactness, this information provides a proof of Theorem~\ref{thm:bd}. More precisely, the proof is divided into three main steps. First of all, one shows the compactness of the sequence $\{ \gamma^{(k)}_{N,t} \}_{k \geq 1}$ with respect to an appropriate weak topology. Then, one proves that an arbitrary limit point $\{ \gamma^{(k)}_{\infty,t} \}_{k \geq 1}$ of the sequence $\{ \gamma^{(k)}_{N,t} \}_{k=1}^N$ is a solution to the infinite hierarchy (\ref{eq:infhierbd}) (one proves, in other words, the convergence to the infinite hierarchy). Finally, one shows the uniqueness of the solution to the infinite hierarchy (\ref{eq:infhierbd}). Since it is simple to verify that the factorized family $\{ \gamma^{(k)}_{\infty,t} \}_{k\geq 1}$, with $\gamma^{(k)}_t = |\ph_t \rangle \langle \ph_t|^{\otimes k}$ for all $k \geq 1$, is a solution to the infinite hierarchy, it follows immediately that $\gamma^{(k)}_{N,t} \to |\ph_t \rangle \langle \ph_t|^{\otimes k}$ as $N \to \infty$ (at first only in the weak topology with respect to which we have compactness; since the limit is an orthogonal rank one projection, it is however simple to check that weak convergence implies strong convergence, in the sense (\ref{eq:claimbd})). Next, we discuss these three main steps (compactness, convergence, and uniqueness) in some more details.

\medskip

{\bf Compactness:} Let $\cL^1_k \equiv \cL^1 (L^2 (\bR^{dk}))$ denote the space of trace class operators on $L^2 (\bR^{dk})$, equipped with the trace norm
\[ \| A \|_1 = \tr\, | A | = \tr \, \left( A^* A \right)^{1/2} \, \quad \text{for all } A \in \cL_k^1 \, . \] Moreover, let $\cK_k \equiv \cK (L^2 (\bR^{dk}))$ be the space of compact operators on $L^2 (\bR^{dk})$, equipped with the operator norm. Then $\cL^1_k$ and $\cK_k$ are Banach spaces and $\cL^1_k = \cK_k^*$ (see, for example, \cite{RSi}[Theorem VI.26]). By definition, the $k$-particle marginal density $\gamma^{(k)}_{N,t}$ is a non-negative operator in $\cL^1_k$, with
\[ \| \gamma^{(k)}_{N,t} \|_1 = \tr \; | \gamma_{N,t}^{(k)} | = \tr \; \gamma^{(k)}_{N,t} = 1 \] for all $N \geq k$. For fixed $t \in \bR$ and $k \geq 1$, it follows from the Banach-Alaouglu Theorem that the sequence $\{\gamma^{(k)}_{N,t} \}_{N \geq k}$ is compact with respect to the weak* topology of $\cL^1_k$.

\medskip

Since we want to identify limit points of the sequence $\gamma^{(k)}_{N,t}$ as solutions to the system of integral equations (\ref{eq:infhierbd}), compactness for fixed $t \in \bR$ is not enough. To make sure that there are subsequences of $\gamma_{N,t}^{(k)}$ which converge for all times in a certain interval, we use the fact that, since $\cK_k$ is separable, the weak* topology on the unit ball of $\cL_k^1$ is metrizable. It is possible, in other words, to introduce a metric $\eta_k$ on $\cL^1_k$ such that a uniformly bounded sequence $\{ A_n \}_{n \in \bN} \in \cL^1_k$ converges to $A \in \cL^1_k$ as $n \to \infty$ with respect to the weak* topology of $\cL^1_k$ if and only if $\eta_k (A_n,A) \to 0$ (see \cite{Ru}[Theorem~3.16], for the explicit construction of the metric $\eta_k$). For arbitrary $T >0$ let $C([0,T],\cL_1^k)$ be the space of functions of $t \in [0,T]$ with values in $\cL^1_k$ which are continuous with respect to the metric $\eta_k$; on $C([0,T],\cL^1_k)$ we can define the metric
\begin{equation}\label{eq:whetak}
\widehat \eta_k (\gamma^{(k)} (\cdot ) , \bar \gamma^{(k)} (\cdot ))
:= \sup_{t \in [0,T]} \eta_k (\gamma^{(k)} (t) , \bar \gamma^{(k)}
(t))\,.
\end{equation}
Finally, we denote by $\tau_{\text{prod}}$ the topology on the
space $\bigoplus_{k \geq 1} C([0,T], \cL^1_k)$ given by the product
of the topologies generated by the metrics $\wh \eta_k$ on $C([0,T],
\cL^1_k)$.

\medskip

The metric structure introduced on the space $\bigoplus_{k \geq 1} C([0,T], \cL^1_k)$ allows us to invoke the Arzela-Ascoli Theorem to prove the compactness of the sequence $\Gamma_{N,t} = \{ \gamma_{N,t}^{(k)} \}_{k=1}^N$. We obtain the following proposition (for the detailed proof, see, for example, \cite[Section 6]{ESY3}).
\begin{proposition}\label{prop:compactness}
Fix an arbitrary $T>0$. Then the sequence
$\Gamma_{N,t} = \{ \gamma_{N,t}^{(k)} \}_{k=1}^N \in \bigoplus_{k \geq 1} C([0,T], \cL_k^1)$ is compact with respect to the product topology $\tau_{\text{prod}}$ defined above.
For any limit point $ \Gamma_{\infty,t} = \{ \gamma_{\infty,t}^{(k)}
\}_{k \geq 1}$, $ \gamma^{(k)}_{\infty,t}$ is symmetric w.r.t. permutations, non-negative and such that   \begin{equation}\label{eq:bou} \tr \; \gamma^{(k)}_{\infty,t} \leq 1
\,\end{equation} for every $k \geq 1$.
\end{proposition}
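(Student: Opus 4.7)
My plan is to apply an Arzel\`a--Ascoli argument for each fixed $k$ and then extract a diagonal subsequence in $k$ to obtain compactness with respect to the product topology $\tau_{\text{prod}}$. For fixed $k$ and $t \in [0,T]$, the Banach--Alaouglu theorem gives weak-$*$ compactness of $\{ \gamma_{N,t}^{(k)} \}_{N \geq k}$ in the unit ball of $\cL^1_k$, since $\| \gamma_{N,t}^{(k)} \|_1 = 1$ places every element in that ball. The remaining ingredient needed by Arzel\`a--Ascoli is equicontinuity in $t$, uniform in $N$, of the family $t \mapsto \gamma_{N,t}^{(k)}$ viewed as maps $[0,T] \to (\cL^1_k, \eta_k)$.

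\textbf{Equicontinuity.} Since $\cK_k$ is separable, the metric $\eta_k$ on the unit ball of $\cL^1_k$ can be built from a countable dense family $\{J_m^{(k)}\}_{m \in \bN} \subset \cK_k$, so it suffices to prove that for every $J^{(k)}$ in a suitable dense subclass of $\cK_k$ the map $t \mapsto \tr J^{(k)} \gamma_{N,t}^{(k)}$ is Lipschitz in $t$ with a constant independent of $N$. I would pick the subclass of compact operators with smooth, compactly supported kernels, so that $[\Delta_j, J^{(k)}]$ is bounded. Integrating the BBGKY hierarchy (\ref{eq:BBGKY}) in time, testing against such a $J^{(k)}$, and using cyclicity of the trace on the kinetic piece yields, for $0 \leq s \leq t \leq T$,
\[
\bigl| \tr J^{(k)} \bigl( \gamma^{(k)}_{N,t} - \gamma^{(k)}_{N,s} \bigr) \bigr| \leq |t-s| \biggl( \sum_{j=1}^k \bigl\| [\Delta_j, J^{(k)}] \bigr\|_{\mathrm{op}} + C(k) \| V \|_{\infty} \| J^{(k)} \|_{\mathrm{op}} \biggr),
\]
after invoking the bounds (\ref{eq:Akbd})--(\ref{eq:Bkbd}) on the potential terms together with the a-priori estimate $\| \gamma^{(k)}_{N,r} \|_1 = \| \gamma^{(k+1)}_{N,r} \|_1 = 1$. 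The right-hand side is $N$-independent, giving the required equi-Lipschitz continuity. Extension to arbitrary compact $J^{(k)}$ then follows by density combined with the uniform trace-norm bound on $\gamma^{(k)}_{N,t}$. This is the step I expect to be the main technical obstacle, because the kinetic commutator is unbounded and so direct testing against a generic compact operator is not a priori meaningful.

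\textbf{Diagonal extraction and properties of the limit.} Arzel\`a--Ascoli in $C([0,T], (\cL^1_k, \eta_k))$ yields, for each fixed $k$, a subsequence along which $\gamma_{N,\cdot}^{(k)}$ converges to some $\gamma_{\infty,\cdot}^{(k)}$ with respect to the metric $\widehat\eta_k$ in (\ref{eq:whetak}). A standard Cantor diagonal extraction then produces a single subsequence that converges for every $k \geq 1$ simultaneously, which by definition gives convergence in $\tau_{\text{prod}}$. The limit is permutation-symmetric because, for any $\pi \in S_k$ and any compact $J^{(k)}$, the identity $\tr J^{(k)} \gamma_{N,t}^{(k)} = \tr J^{(k)}_{\pi} \gamma_{N,t}^{(k)}$ persists in the limit. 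Non-negativity is preserved by testing with rank-one operators $J^{(k)} = |\varphi\rangle\langle\varphi|$: the inequality $\langle \varphi, \gamma_{N,t}^{(k)} \varphi \rangle \geq 0$ passes to the limit for every $\varphi \in L^2(\bR^{dk})$. Finally, the bound (\ref{eq:bou}) is the lower semicontinuity of the trace norm under weak-$*$ limits in $\cL^1_k$: $\tr \gamma^{(k)}_{\infty,t} = \| \gamma^{(k)}_{\infty,t} \|_1 \leq \liminf_N \| \gamma^{(k)}_{N,t} \|_1 = 1$. Note that equality may fail, which reflects a possible loss of mass to infinity in the weak-$*$ limit.
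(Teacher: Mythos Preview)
Your proposal is correct and follows essentially the same approach as the paper. The paper only sketches the argument---Banach--Alaouglu for pointwise compactness, metrizability of the weak-$*$ topology on the unit ball, and Arzel\`a--Ascoli---referring to \cite[Section~6]{ESY3} for details; your write-up supplies precisely the expected equicontinuity argument via the BBGKY hierarchy tested against observables with smooth kernels, and the diagonal extraction and limit properties are handled in the standard way.
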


{\it Remark.} Convergence of $\Gamma_{N,t} = \{ \gamma^{(k)}_{N,t} \}_{k= 1}^N$ to $\Gamma_{\infty,t} = \{ \gamma^{(k)}_{\infty,t} \}_{k\geq 1}$ with respect to the topology $\tau_{\text{prod}}$ is equivalent to the statement that, for every fixed $k \geq 1$, and for every fixed compact operator $J^{(k)} \in \cK_k$, \begin{equation}\label{eq:prodtop} \tr J^{(k)} \left( \gamma^{(k)}_{N,t} - \gamma^{(k)}_{\infty,t} \right) \to 0 \end{equation} as $N \to \infty$, uniformly in $t$ for $t \in [0,T]$. Compactness of $\Gamma_{N,t}$ with respect to the topology $\tau_{\text{prod}}$ means therefore that for every sequence $\{ M_j \}_{j\in \bN}$ there exists a subsequence $\{ N_j \}_{j \in \bN} \subset \{ M_j \}_{j \in \bN}$ and a limit point $\Gamma_{\infty,t}$ such that $\Gamma_{N_j,t} \to \Gamma_{\infty,t}$ in the sense (\ref{eq:prodtop}).

\bigskip

{\bf Convergence:} The second main step consists in characterizing the limit points of the (compact) sequence $\Gamma_{N,t} = \{ \gamma^{(k)}_{N,t} \}_{k \geq 1}$ as solutions to the infinite hierarchy of equations (\ref{eq:infhierbd}).
\begin{proposition}\label{prop:conv}
Suppose that $V \in L^{\infty} (\bR^d)$ such that $V(x) \to 0$ as $|x| \to \infty$. Assume moreover that $\Gamma_{\infty,t} = \{ \gamma^{(k)}_{\infty,t} \}_{k \geq 1} \in \bigoplus_{k \geq 1} C([0,T], \cL^1_k)$ is a limit point of the sequence $\Gamma_{N,t} = \{ \gamma^{(k)}_{N,t} \}_{k=1}^N$ with respect to the product topology $\tau_{\text{prod}}$. Then $\gamma^{(k)}_{\infty,0} = |\ph \rangle \langle \ph|^{\otimes k}$ and
\begin{equation}\label{eq:infhier}
\gamma^{(k)}_{\infty,t} = \cU^{(k)} (t) \gamma^{(k)}_{0,\infty} + \int_0^t \rd s \, \cU^{(k)} (t-s) B^{(k)} \gamma^{(k+1)}_{\infty,t}
\end{equation}
for all $k \geq 1$. Here $\cU^{(k)} (t)$, and $B^{(k)}$ are defined as in (\ref{eq:free}) and, respectively, in (\ref{eq:Bk}).
\end{proposition}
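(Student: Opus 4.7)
The plan is to pass to a subsequence along which $\Gamma_{N,t} \to \Gamma_{\infty,t}$ in the product topology $\tau_{\text{prod}}$, and then to pass to the limit in the BBGKY hierarchy (\ref{eq:bd1}) after pairing each term with an arbitrary compact operator $J^{(k)} \in \cK_k$. The identification of the initial data is immediate: since $\psi_N = \ph^{\otimes N}$ gives $\gamma^{(k)}_{N,0} = |\ph\rangle\langle\ph|^{\otimes k}$ for every $N \geq k$, convergence at $t=0$ against any compact test operator forces $\gamma^{(k)}_{\infty,0} = |\ph\rangle\langle\ph|^{\otimes k}$.

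Pairing (\ref{eq:bd1}) with $J^{(k)}$, the quantity $\tr J^{(k)} \gamma^{(k)}_{N,t}$ converges to $\tr J^{(k)} \gamma^{(k)}_{\infty,t}$ uniformly in $t \in [0,T]$ by the definition of $\tau_{\text{prod}}$. For the free-evolution term, cyclicity of the trace moves the unitary onto $J^{(k)}$,
\begin{equation*}
\tr J^{(k)} \cU^{(k)}(t) \gamma^{(k)}_{N,0} = \tr \bigl(\cU^{(k)}(-t) J^{(k)}\bigr)\, \gamma^{(k)}_{N,0},
\end{equation*}
and $\cU^{(k)}(-t) J^{(k)}$ is again compact, so convergence follows. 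The $A^{(k)}$-term is $O(1/N)$ by the bound (\ref{eq:Akbd}), and the prefactor $(1 - k/N)$ of the $B^{(k)}$-term contributes a further $O(1/N)$ correction.

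For the main $B^{(k)}$-contribution I use cyclicity once more to write
\begin{equation*}
\tr J^{(k)} \cU^{(k)}(t-s) B^{(k)} \gamma^{(k+1)}_{N,s}
= i \sum_{\ell=1}^k \tr\,\widetilde{J}^{(k+1)}_{\ell,s,t}\, \gamma^{(k+1)}_{N,s},
\end{equation*}
with $\widetilde{J}^{(k+1)}_{\ell,s,t} := \bigl[V(x_\ell - x_{k+1}),\, \bigl(\cU^{(k)}(-(t-s))J^{(k)}\bigr) \otimes 1_{k+1}\bigr]$. To pass to the limit under $\int_0^t \rd s$, I aim for pointwise-in-$s$ convergence together with the uniform bound $\|\widetilde{J}^{(k+1)}_{\ell,s,t}\| \leq 2\|V\|_\infty \|J^{(k)}\|$, which combined with $\|\gamma^{(k+1)}_{N,s}\|_1 \leq 1$ permits dominated convergence in $s$.

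The main obstacle is the pointwise convergence, and this reduces to showing that $\widetilde{J}^{(k+1)}_{\ell,s,t}$ is \emph{compact} on $L^2(\bR^{d(k+1)})$. The difficulty is that neither ingredient of the commutator is compact: $(\cU^{(k)}(-(t-s)) J^{(k)}) \otimes 1_{k+1}$ contains the identity on the $(k+1)$-st variable, and $V(x_\ell - x_{k+1})$ is merely a bounded multiplication operator, so only the commutator itself has a chance. Here the hypothesis $V(x) \to 0$ as $|x| \to \infty$ is essential: by approximating $J^{(k)}$ in operator norm by finite-rank operators and splitting $V$ into a compactly supported part plus a small uniform remainder, the commutator becomes a norm-limit of operators whose kernel is effectively supported on a bounded set in the relative coordinate $x_\ell - x_{k+1}$; together with the finite rank in the remaining variables this yields compactness. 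Assembling the four limits then produces (\ref{eq:infhier}) tested against arbitrary $J^{(k)} \in \cK_k$, which characterizes the operator identity and completes the proof.
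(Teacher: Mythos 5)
Your overall scheme (testing the BBGKY hierarchy against compact $J^{(k)}$, disposing of the free-evolution term by cyclicity, and treating the $A^{(k)}$ term and the $k/N$ correction as $O(1/N)$) coincides with the paper's proof. The gap is in your key step: the commutator $\widetilde J^{(k+1)}_{\ell,s,t}=\bigl[V(x_\ell-x_{k+1}),\,(\cU^{(k)}(-(t-s))J^{(k)})\otimes 1_{k+1}\bigr]$ is \emph{not} compact on $L^2(\bR^{d(k+1)})$ unless it vanishes. Indeed, both $V(x_\ell-x_{k+1})$ (a multiplication operator) and $(\cU^{(k)}(\cdot)J^{(k)})\otimes 1_{k+1}$ (which acts as the identity in the $x_{k+1}$-variable) commute with multiplication by $e^{ip\cdot x_{k+1}}$ for every $p\in\bR^d$, hence so does the commutator $C$. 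For any fixed $\psi$ one has $\|C e^{ip\cdot x_{k+1}}\psi\|=\|e^{ip\cdot x_{k+1}}C\psi\|=\|C\psi\|$, while $e^{ip\cdot x_{k+1}}\psi\rightharpoonup 0$ as $|p|\to\infty$; if $C$ were compact the left side would tend to zero, forcing $C\psi=0$, i.e.\ $C=0$. Your heuristic (finite-rank approximation of $J^{(k)}$ plus splitting $V$ into a compactly supported piece and a small remainder) only yields \emph{position} localization in the relative coordinate; since $C$ acts locally, by multiplication, in $x_{k+1}$, it does not damp high-frequency oscillations in that variable, and test states of the form $f(x_\ell)\,e^{ip\cdot x_{k+1}}\chi(x_{k+1})$ defeat compactness. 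Consequently the pointwise-in-$s$ convergence you need cannot be obtained from weak* convergence of $\gamma^{(k+1)}_{N,s}$ alone, and the dominated-convergence argument does not close.

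This is exactly the point the paper flags: the operators $(\cU^{(k)}(s-t)J^{(k)})\,V(x_\ell-x_{k+1})$ and their adjoint-side counterparts are not compact, and one must instead run an approximation argument with a cutoff in \emph{momentum} in the $x_{k+1}$-variable, combined with the a priori bound, uniform in $N$ and $t$, $\tr\,\nabla_{k+1}^*\gamma^{(k+1)}_{N,t}\nabla_{k+1}\le C$ coming from energy conservation (and the resulting bound for $\gamma^{(k+1)}_{\infty,t}$). The kinetic-energy bound guarantees that the marginals put little weight on high momenta in $x_{k+1}$, which neutralizes precisely the non-compact direction identified above; the hypothesis $V(x)\to 0$ as $|x|\to\infty$ serves only to avoid an additional position-space cutoff, not to render the commutator compact. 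To repair your proof you would need to insert this momentum-cutoff/energy-conservation step (the paper points to Eqs.\ (7.35)--(7.36) in the proof of Theorem 7.1 of \cite{ESY3} for the details).
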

Note that in Proposition \ref{prop:conv} we assume the potential to vanish at infinity. This condition, which was not required in Section \ref{sec:bd}, is not essential but it simplifies the proof and it is also satisfied for the singular potentials (like the Coulomb potential) that we are going to study in the next sections.

\begin{proof}
Passing to a subsequence we can assume that $\Gamma_{N,t} \to \Gamma_{\infty,t}$ as $N \to \infty$, with respect to the product topology $\tau_{\text{prod}}$; this implies immediately that $\gamma_{\infty,0} = |\ph \rangle \langle \ph|^{\otimes k}$. To prove (\ref{eq:infhier}), on the other hand, it is enough to show that for every fixed $k \geq 1$, and for every fixed $J^{(k)}$ from a dense subset of $\cK_k$,
\begin{equation}\label{eq:Jinf}
\begin{split}
\tr \; J^{(k)} \gamma^{(k)}_{\infty,t} = \tr J^{(k)} \cU^{(k)} (t) \gamma^{(k)}_{\infty,0} + \int_0^t \rd s \, \cU^{(k)} (t-s) \tr \, J^{(k)} B^{(k)} \gamma^{(k+1)}_s \,.
\end{split}
\end{equation}
To demonstrate (\ref{eq:Jinf}), we start from the BBGKY hierarchy (\ref{eq:BBGKY}) which leads to the relations
\begin{equation}\label{eq:BBGKYint2}
\begin{split}
\tr\, J^{(k)} \gamma^{(k)}_{N,t} = \; & \tr\, J^{(k)} \cU^{(k)} (t) \gamma^{(k)}_{N,0} + \frac{1}{N} \sum_{j=1}^k \int_0^t \rd s \, \tr J^{(k)} \cU^{(k)} (t-s) \left[ V (x_i - x_j) , \gamma^{(k)}_{N,s} \right] \\ &+ \frac{N-k}{N} \int_0^t \rd s \, \tr J^{(k)} \cU^{(k)} (t-s) B^{(k)}  \gamma^{(k+1)}_{N,s}\, .
\end{split}
\end{equation}
Since, by assumption, the l.h.s. and the first term on the r.h.s. of the last equation converge, as $N \to \infty$, to the l.h.s. and, respectively, to the first term on the r.h.s. of (\ref{eq:Jinf}) (for every compact operator $J^{(k)}$), (\ref{eq:infhier}) follows if we can prove that
\begin{equation}\label{eq:conver1}
\frac{1}{N} \sum_{j=1}^k \int_0^t \rd s \, \tr J^{(k)} \cU^{(k)} (t-s) \left[ V (x_i - x_j) , \gamma^{(k)}_{N,s} \right] \to 0
\end{equation}
and that
\begin{equation}\label{eq:conver2}
\frac{N-k}{N} \int_0^t \rd s \, \tr J^{(k)} \cU^{(k)} (t-s) B^{(k)}  \gamma^{(k+1)}_{N,s} \to \int_0^t \rd s \tr J^{(k)} \cU^{(k)} (t-s) B^{(k)} \gamma^{(k+1)}_{\infty,s}
\end{equation}
as $N \to \infty$. Eq. (\ref{eq:conver1}) follows because \[ \left| \tr J^{(k)} \cU^{(k)} (t-s) \left[ V (x_i -x_j), \gamma^{(k)}_{N,s} \right] \right| \leq 2 \| J^{(k)} \| \| V \| \tr \, \left| \gamma^{(k)}_{N,s} \right| \leq 2 \| J^{(k)} \| \| V \| \] is finite, uniformly in $N$. To prove Eq. (\ref{eq:conver2}) one can use a similar argument, combined with the observation that
\begin{equation*}
\begin{split}
\tr  \; &J^{(k)} \cU^{(k)} (t-s) B^{(k)} \left( \gamma^{(k+1)}_{N,s} - \gamma^{(k+1)}_{\infty,s} \right) \\ &= k \tr\;\left[ \left( \cU^{(k)} (s-t) J^{(k)} \right) V (x_1 - x_{k+1}) - V (x_1 -x_{k+1}) \left( \cU^{(k)} (s-t) J^{(k)} \right) \right] \, \left( \gamma^{(k+1)}_{N,s} - \gamma^{(k+1)}_{\infty,s} \right)
\\ &\to 0
\end{split}
\end{equation*}
as $N \to \infty$. This does not follow directly from the assumption that $\Gamma_{N,t} \to \Gamma_{\infty,t}$ with respect to the topology $\tau_{\text{prod}}$ because the operators $(\cU^{(k)} (s-t) J^{(k)}) V (x_1 -x_{k+1})$ and $V (x_1 -x_{k+1})(\cU^{(k)} (s-t) J^{(k)})$ are not compact on $L^2 (\bR^{d(k+1)})$. Instead we have to apply an approximation argument, cutting off high momenta in the $x_{k+1}$-variable, and using the fact that, by energy conservation, $\tr \,  \nabla_{k+1}^* \gamma_{N,t}^{(k+1)} \nabla_{k+1}$ is bounded, uniformly in $N$ and in $t$ (and that, therefore, $\tr \, \nabla_{k+1}^* \gamma_{\infty,t}^{(k+1)} \nabla_{k+1}$ is bounded as well). Note that, because of the assumption that $V(x) \to 0$ as $|x| \to \infty$, we only need a cutoff in momentum, and no cutoff in position space is necessary. The details of this approximation argument can be found, for example, in Eq. (7.35) and Eq. (7.36) in the proof of Theorem 7.1 in \cite{ESY3} (after replacing $\delta_{\beta}$ through the bounded potential $V$).
\end{proof}

\bigskip

{\bf Uniqueness:} to conclude the proof of Theorem \ref{thm:bd}, we still have to prove the uniqueness of the solution to the infinite hierarchy (\ref{eq:infhier}).
\begin{proposition}\label{prop:unique}
Fix $\Gamma_{\infty,0} = \{ \gamma^{(k)}_{\infty,0} \}_{k \geq 1} \in \bigoplus_{k \geq 1} \cL^1_k$. Then there exists at most one solution $\Gamma_{\infty,t} = \{ \gamma^{(k)}_{\infty,t} \}_{k \geq 1} \in \bigoplus_{k \geq 1} C([0,T], \cL^k_1)$ to the infinite hierarchy (\ref{eq:infhier}) such that $\gamma^{(k)}_{\infty,t=0} = \gamma^{(k)}_{\infty,0}$ and $\tr \, | \gamma^{(k)}_{\infty,t} | \leq 1$ for all $k \geq 1$ and all $t \in [0,T]$.
\end{proposition}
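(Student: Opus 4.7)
The plan is to adapt the Duhamel expansion argument from the proof of Theorem \ref{thm:bd}, but now applied to the \emph{difference} of two solutions of the infinite hierarchy, rather than to $\gamma_{N,t}^{(k)} - \gamma_{\infty,t}^{(k)}$. Let $\Gamma_{\infty,t} = \{\gamma_{\infty,t}^{(k)}\}_{k\geq 1}$ and $\widetilde\Gamma_{\infty,t} = \{\wt\gamma_{\infty,t}^{(k)}\}_{k\geq 1}$ be two solutions of (\ref{eq:infhier}) with the same initial data and satisfying the a priori bound $\tr |\gamma_{\infty,t}^{(k)}|, \tr |\wt\gamma_{\infty,t}^{(k)}| \leq 1$. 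Set $\delta^{(k)}_t := \gamma^{(k)}_{\infty,t} - \wt\gamma^{(k)}_{\infty,t}$; subtracting the two integral equations, the initial-data term cancels and $\delta^{(k)}_t$ satisfies the \emph{homogeneous} equation
\begin{equation*}
\delta^{(k)}_t = \int_0^t \rd s \, \cU^{(k)}(t-s)\, B^{(k)}\, \delta^{(k+1)}_s\,.
\end{equation*}

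First I would iterate this identity $n$ times to express $\delta^{(k)}_t$ as an $n$-fold time integral involving $\delta^{(k+n)}_{s_n}$:
\begin{equation*}
\delta^{(k)}_t = \int_0^t \rd s_1 \cdots \int_0^{s_{n-1}} \rd s_n \, \cU^{(k)}(t-s_1)\, B^{(k)}\, \cU^{(k+1)}(s_1-s_2)\cdots B^{(k+n-1)}\, \delta^{(k+n)}_{s_n}\,.
\end{equation*}
Taking trace norms, and using the bounds $\tr | \cU^{(j)}(\tau) A | = \tr |A|$ from (\ref{eq:cU}) and $\tr | B^{(j)} A | \leq 2j\|V\|_\infty \tr |A|$ from (\ref{eq:Bkbd}), together with the a priori bound $\tr |\delta^{(k+n)}_{s_n}| \leq 2$, yields
\begin{equation*}
\tr |\delta^{(k)}_t| \;\leq\; 2\,(2\|V\|_\infty t)^n\, k(k+1)\cdots(k+n-1)/n! \;=\; 2\,(2\|V\|_\infty t)^n \binom{k+n-1}{n} \;\leq\; 2^{k+1}(2\|V\|_\infty t)^n\,,
\end{equation*}
exactly as in (\ref{eq:diffbd1}). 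For $0 \leq t \leq t_0$ with $t_0 := 1/(4\|V\|_\infty)$, the right-hand side tends to zero as $n \to \infty$, so $\delta^{(k)}_t \equiv 0$ on $[0,t_0]$ for every $k \geq 1$.

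Finally, I would extend uniqueness from $[0,t_0]$ to the entire interval $[0,T]$ by the same bootstrap argument used at the end of the proof of Theorem \ref{thm:bd}: once the two solutions are shown to coincide at time $t_0$, one repeats the expansion starting from $t_0$ rather than from $0$, and concludes uniqueness on $[t_0, 2t_0]$; iterating finitely many times covers $[0,T]$. The only point requiring attention is the combinatorial factor $\binom{k+n-1}{n}$, which grows like $2^{k+n}$ and is precisely what forces the short-time restriction $t_0 \sim 1/\|V\|_\infty$; this is the same obstacle that appeared in the proof of Theorem \ref{thm:bd}, and it is handled in the same way. Since here we do not have the $O(1/N)$ perturbation terms on the right-hand side of (\ref{eq:duhbd1}), the argument is in fact strictly simpler than the convergence proof.
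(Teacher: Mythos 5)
Your argument is essentially the paper's own proof: expand the difference of the two solutions in the Duhamel series, apply (\ref{eq:cU}) and (\ref{eq:Bkbd}) recursively together with the a priori trace-norm bound, conclude the difference vanishes for short times since the bound is independent of $n$, and then iterate to cover $[0,T]$. The only slip is the final simplification $2\binom{k+n-1}{n}\leq 2^{k+1}$, which is false (e.g. $k=2$, large $n$); the correct bound $\binom{k+n-1}{n}\leq 2^{k+n-1}$ gives $\tr\,|\delta^{(k)}_t|\leq 2^{k}\,(4\|V\|t)^{n}$, so the short-time interval must be taken strictly below $1/(4\|V\|)$ (the paper uses $t_0=1/(8\|V\|)$), after which your bootstrap goes through unchanged.
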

\begin{proof}
Suppose that $\{ \gamma^{(k)}_{\infty,1,t} \}_{k\geq 1}$ and $\{ \gamma^{(k)}_{\infty,2,t}\}_{k\geq 1}$ are two solutions of (\ref{eq:infhier}) with the same initial data $\{ \gamma^{(k)}_{\infty,0} \}_{k \geq 1}$, such that $\tr \, | \gamma^{(k)}_{\infty,i,t}| \leq 1$, for all $k \geq 1$, $t \in [0,T]$, and for $i =1,2$. Then we can expand $\gamma^{(k)}_{\infty,1,t}$ and $\gamma^{(k)}_{\infty,2,t}$ in the Duhamel series (\ref{eq:duhbd3}). It follows that
\begin{equation*}
\begin{split}
\tr \; \left| \gamma^{(k)}_{\infty,1,t} - \gamma^{(k)}_{\infty,2,t} \right| \leq \int_0^t \rd s_1 \dots \int_0^{s_{n-1}} \rd s_n \, \tr \; \left| \cU^{(k)} (t-s_1) B^{(k)} \dots B^{(k+n-1)} \left( \gamma_{\infty,1,s_n}^{(k+n)} - \gamma_{\infty,2,s_n}^{(k+n)} \right) \right|\,.
\end{split}
\end{equation*}
Applying recursively the bounds (\ref{eq:cU}) and (\ref{eq:Bkbd}), we obtain
\begin{equation*}
\tr \; \left| \gamma^{(k)}_{\infty,1,t} - \gamma^{(k)}_{\infty,2,t} \right| \leq \frac{(k+n-1)!}{(k-1)! n!} (2 \| V \| t )^n \leq 2^k \, (4 \| V \| t )^n \end{equation*}
and thus, for $0< t < 1/ 8 \|V \|$,
\[ \tr  \; \left| \gamma^{(k)}_{\infty,1,t} - \gamma^{(k)}_{\infty,2,t} \right| \leq 2^{k-n} \,. \]
Since the l.h.s. is independent of $n \geq 1$, it has to vanish. This proves uniqueness for short time. Iterating the same argument, we obtain uniqueness for all times.
\end{proof}

\subsection{Derivation of the Hartree Equation for a Coulomb Potential}
\label{sec:cou}

The arguments presented in Section \ref{sec:bd} and in Section \ref{sec:3steps} required the interaction potential $V$ to be bounded. Unfortunately, several systems of physical interest are described by unbounded potential. For example, in a non-relativistic approximation, a system of gravitating bosons (a boson star) can be described by the Hamiltonian
\begin{equation}\label{eq:HNcou}
H_N = \sum_{j=1}^N -\Delta_j - \frac{\lambda}{N} \sum_{i<j}^N \frac{1}{|x_i -x_j|}\,
\end{equation}
with a singular Coulomb interaction among the particles. The factor of $1/N$ in front of the potential energy can be justified, when describing gravitating particles, by the smallness of the gravitational constant. As in the case of bounded potential, we are interested in the dynamics generated by the Hamiltonian (\ref{eq:HNcou}) on factorized initial $N$-particle wave functions. We specialize here in the physically most interesting case of particles moving in three dimensions; however, the theorem remains valid in all dimensions $d \geq 2$.
\begin{theorem}[Erd\H os-Yau, \cite{EY}] \label{thm:cou}
Let $\psi_N = \ph^{\otimes N}$ for some $\ph \in H^1 (\bR^3)$ and let $\psi_{N,t} = e^{-iH_N t} \psi_N$ where the Hamiltonian $H_N$ is defined as in (\ref{eq:HNcou}). Then, for arbitrary $k \geq 1$ and $t \in \bR$, we have
\begin{equation}\label{eq:claim-cou} \tr \; \left| \gamma^{(k)}_{N,t} - |\ph_t \rangle \langle \ph_t |^{\otimes k} \right| \to 0 \end{equation} as $N \to \infty$. Here $\ph_t$ is the solution to the nonlinear Hartree equation
\[ i\partial_t \ph_t = -\Delta \ph_t - \lambda \left( \frac{1}{| \cdot |} * |\ph_t|^2 \right) \ph_t \] with initial data $\ph_{t=0} = \ph$.
\end{theorem}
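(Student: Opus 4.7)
The plan is to follow the three-step strategy (compactness, convergence, uniqueness) from Section~\ref{sec:3steps}, modified to cope with the Coulomb singularity. The key new ingredient, compared to the bounded case, is that energy conservation together with Hardy's inequality provides the regularity bounds needed to tame $1/|x|$. Specifically, since $\frac{1}{|x|^2} \leq 4(-\Delta)$, one has the operator inequality $\pm \frac{\lambda}{|x_i - x_j|} \leq C(1-\Delta_i)(1-\Delta_j)$, and consequently
\be\label{eq:apriori-cou}
\tr \, \prod_{j=1}^k (1-\Delta_j) \, \gamma^{(k)}_{N,t} \leq C^k
\ee
uniformly in $N$ and $t$. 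This follows from conservation of $\la \psi_{N,t}, H_N^2 \psi_{N,t}\ra$ (or an iterated energy estimate) combined with the fact that $\| \varphi \|_{H^1} < \infty$ makes the initial energies grow like $N$ per particle.

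For compactness, the abstract framework of Section~\ref{sec:3steps} applies verbatim: $\gamma^{(k)}_{N,t}$ is a sequence in the unit ball of $\cL^1_k$, so after introducing the metric $\widehat \eta_k$ on $C([0,T], \cL^1_k)$ one extracts, by Arzel\`a-Ascoli, a subsequence converging with respect to $\tau_{\text{prod}}$ to a limit family $\Gamma_{\infty,t} = \{\gamma^{(k)}_{\infty,t}\}_{k \geq 1}$. The a~priori estimate \eqref{eq:apriori-cou} is preserved in the weak$^*$ limit and will be used in the next two steps.

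For convergence, I would mimic the proof of Proposition~\ref{prop:conv}: write the BBGKY hierarchy in weak form against a compact test operator $J^{(k)}$, and pass to the limit term by term. The term with coefficient $1/N$ is now unbounded, but tested against $J^{(k)}$ it is controlled by
\[
\frac{1}{N}\left| \tr J^{(k)} \cU^{(k)}(t-s) \bigl[|x_i - x_j|^{-1}, \gamma^{(k)}_{N,s}\bigr] \right| \leq \frac{C\|J^{(k)}\|}{N} \tr \, (1-\Delta_i)(1-\Delta_j)\gamma^{(k)}_{N,s},
\]
which vanishes by \eqref{eq:apriori-cou}. The main term, involving $B^{(k)} \gamma^{(k+1)}_{N,s}$ with $B^{(k)}$ built from the singular potential $|x_j - x_{k+1}|^{-1}$, is passed to the limit by the same approximation argument as in Proposition~\ref{prop:conv}: insert a momentum cutoff $\chi(|p_{k+1}| \leq M)$, use the a~priori bound \eqref{eq:apriori-cou} to discard the tail uniformly in $N$, and then exploit the $\tau_{\text{prod}}$-convergence against the now-compact operator. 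The outcome is that every limit point solves the infinite Coulomb hierarchy \eqref{eq:infhier} with $B^{(k)}$ built from $-\lambda/|x|$ and with $\gamma^{(k)}_{\infty,0} = |\varphi\rangle\langle\varphi|^{\otimes k}$. Since $\varphi_t^{\otimes k}$ is easily checked to solve this hierarchy, the conclusion follows once uniqueness is established.

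The hard part is uniqueness (Proposition~\ref{prop:unique} with bounded $V$ replaced by Coulomb): the bound $\tr|B^{(k)}\gamma^{(k+1)}| \leq 2k\|V\|\tr|\gamma^{(k+1)}|$ is no longer available. My plan is to work in the class of densities satisfying the higher Sobolev estimate $\tr \prod_{j=1}^k (1-\Delta_j) \gamma^{(k)}_{\infty,t} \leq C^k$, which is preserved by the flow. In this class one obtains, using Hardy and a Cauchy-Schwarz argument, an estimate of the form
\[
\Bigl| \tr J^{(k)} \cU^{(k)}(t-s) B^{(k)} \gamma^{(k+1)} \Bigr| \leq C k \, \|J^{(k)}\|_{\mathrm{op}} \, \Bigl( \tr \prod_{j=1}^{k+1}(1-\Delta_j) \gamma^{(k+1)} \Bigr)^{1/2}
\]
where the right side grows only like $C^{k/2}$. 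Iterating the Duhamel expansion $n$ times for the difference of two solutions produces a sum of $\sim n!$ terms, each controlled by $(Ct)^n/\sqrt{n!}$ after using the a~priori bound. To absorb the combinatorial factor, I would use the ``board game'' argument of Erd\H os--Yau: group the $n!$ terms according to equivalence classes of orderings, reducing the count to $\leq C^n$. This gives absolute convergence of the Duhamel series for small $t \leq t_0(\lambda)$, hence uniqueness on $[0,t_0]$, and a bootstrap on $t_0$-intervals yields uniqueness for all time. The combinatorial rearrangement is the step I expect to be most delicate; everything else is a quantitative variant of the bounded-potential argument powered by Hardy's inequality and the $H^1$-a~priori estimate~\eqref{eq:apriori-cou}.
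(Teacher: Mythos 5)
There is a genuine gap at the foundation of your argument: the uniform a\,priori bound $\tr\,(1-\Delta_1)\cdots(1-\Delta_k)\,\gamma^{(k)}_{N,t}\leq C^k$ which you assert for the \emph{original} dynamics with the \emph{original} factorized data $\ph^{\otimes N}$, $\ph\in H^1$, and on which both your convergence step and your verification that limit points lie in the uniqueness class rely. The paper states explicitly that such bounds are not known in this setting, and the bulk of Section \ref{sec:cou} is devoted to circumventing exactly this. Two things go wrong with your shortcut. First, ``conservation of $\langle\psi_{N,t},H_N^k\psi_{N,t}\rangle$'' is useless unless the initial value is of order $N^k$, and for $\ph$ only in $H^1$ the quantity $\langle\ph^{\otimes N},H_N^k\ph^{\otimes N}\rangle$ is not bounded by $C^kN^k$ — for $k\geq 2$ it contains terms like $\|\Delta\ph\|^2$ and is in general not even finite. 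Second, the higher-order energy estimate comparing $(H_N+C_1N)^k$ with $N^k(1-\Delta_1)\cdots(1-\Delta_k)$ fails for the bare attractive Coulomb Hamiltonian: in the induction step one derivative pair inevitably lands on the interacting pair itself, and this term (cf.\ (\ref{eq:enest4})) is only controllable after the singularity is smoothed at scale $\eps N^{-1}$. This is why the paper introduces the regularized Hamiltonian (\ref{eq:wtHN}) \emph{and} the energy-cutoff initial data (\ref{eq:reginit}), proves Propositions \ref{prop:enest}--\ref{prop:apricou} for $\wt\psi_{N,t}$ only, and then transfers the conclusion back via the Gronwall comparison $\|\psi_{N,t}-\wt\psi_{N,t}\|\leq C(\eps^{1/4}t+\delta^{1/2})$ and (\ref{eq:compgamma}), letting $N\to\infty$ first and $\eps,\delta\to 0$ last. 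Without this double regularization your plan has no source for the bound (\ref{eq:aprik1}) on the limit points, and the argument does not close. (Hardy's inequality alone gives only the one-derivative, $k=1$ information from energy conservation; that is not enough.)

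Your uniqueness step is also off track, though less fatally. No ``board game'' resummation is needed (nor used in \cite{EY} or here; that device belongs to the Klainerman--Machedon treatment of the Gross--Pitaevskii hierarchy): the paper proves the operator-norm-type bound $\|B^{(k)}\gamma^{(k+1)}\|_{\cH_k}\leq Ck\,\|\gamma^{(k+1)}\|_{\cH_{k+1}}$ in (\ref{eq:Bcou}), which can be iterated directly, and the time-ordered simplex integrals supply a $t^n/n!$ that already absorbs the $(k+n)!/k!$ terms generated by the expansion, yielding $(Ct)^n$ and hence uniqueness for short times (Theorem \ref{thm:unique-cou}). Your displayed weak-form estimate, with $\|J^{(k)}\|_{\mathrm{op}}$ on the left and a square root of the Sobolev trace on the right, is not of a form that can be iterated through $n$ collision operators, and the $\sqrt{n!}$ bookkeeping you then need to repair with a combinatorial argument is an artifact of this choice rather than a feature of the problem.
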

{\it Remark.} Although, physically, the value of the constant $\lambda$ is positive (corresponding to the Coulomb attraction among gravitating particles), the theorem remains valid also for negative values of $\lambda$ (corresponding to repulsive Coulomb interaction).

\bigskip

The general strategy used in \cite{EY} to prove Theorem \ref{thm:cou} is the same as the one outlined in Section~\ref{sec:3steps}. First one proves the compactness of the sequence of marginal $\{ \gamma^{(k)}_{N,t} \}_{k =1}^N$ with respect to an appropriate weak topology (the product topology $\tau_{\text{prod}}$ introduced after (\ref{eq:whetak})), then one shows that an arbitrary limit point $\{ \gamma^{(k)}_{\infty,t} \}_{k \geq 1}$ of the sequence $\{ \gamma^{(k)}_{N,t} \}_{k=1}^N$ is a solution to the infinite hierarchy of equations
\begin{equation}\label{eq:infhier-cou}
\gamma^{(k)}_t = \cU^{(k)} (t) \gamma^{(k)} + \int_0^t \rd s \, \cU^{(k)} (t-s) B^{(k)} \gamma^{(k+1)}_s
\end{equation}
where $\cU^{(k)}$ is the free evolution defined in (\ref{eq:free}), and the collision map $B^{(k)}$ is now given by
\begin{equation}\label{eq:Bk-cou} B^{(k)} \gamma^{(k+1)} = -i\lambda \sum_{j=1}^k \tr_{k+1} \, \left[ \frac{1}{|x_j - x_{k+1}|} , \gamma^{(k+1)} \right] \,. \end{equation}
Finally, one proves the uniqueness of the solution to (\ref{eq:infhier-cou}). Although the proof of the compactness and of the convergence also require several changes with respect to what we discussed in Section \ref{sec:3steps}, the main difficulty one has to face when the bounded potential is replaced by the Coulomb interaction is the proof of the uniqueness of the solution to the infinite hierarchy. The key idea introduced by Erd\H os and Yau in \cite{EY} was to restrict the class of densities for which uniqueness must be proven. In Proposition \ref{prop:unique}, uniqueness is proved in the class of densities with $\tr \, | \gamma^{(k)}_t| \leq 1$ for all $k \geq 1$, and all $t \in [0,T]$ (but the same argument works under the weaker assumption $\tr \, | \gamma^{(k)}_t | \leq C^k$, for some constant $C < \infty$). Following \cite{EY}, in the case of a Coulomb potential we are only going to show the uniqueness of (\ref{eq:infhier-cou}) in the class of densities $\Gamma_t = \{ \gamma^{(k)}_t \}_{k \geq 1}$ satisfying the a-priori bound
\begin{equation} \label{eq:aprik1}
\tr \, \left| (1-\Delta_1)^{1/2} \dots (1-\Delta_k)^{1/2} \gamma^{(k)}_t  (1-\Delta_k)^{1/2} \dots (1-\Delta_1)^{1/2} \right| \leq C^k
\end{equation}
for all $k \geq 1$ and for all $t \in [0,T]$. Note that, for non-negative densities $\gamma^{(k)}_t \geq 0$ (in the sense of operators, that is, in the sense that $\langle \psi^{(k)} , \gamma^{(k)}_t \psi^{(k)} \rangle \geq 0$ for all $\psi^{(k)} \in L^2 (\bR^{3k})$) we have
\begin{equation*}
\tr \, \left| (1-\Delta_1)^{1/2} \dots (1-\Delta_k)^{1/2} \gamma^{(k)}_t  (1-\Delta_k)^{1/2} \dots (1-\Delta_1)^{1/2} \right| = \tr \, (1-\Delta_1) \dots (1-\Delta_k) \gamma^{(k)}_t \, .
\end{equation*}

\medskip

There is, of course, a price to pay in order to restrict the proof of the uniqueness to this class of densities. In fact, to apply this uniqueness result to the proof of Theorem \ref{thm:cou}, one has to show that an arbitrary limit point $\Gamma_{\infty,t} = \{ \gamma^{(k)}_{\infty,t} \}_{k \geq 1}$ of the sequence of densities $\Gamma_{N,t} = \{ \gamma^{(k)}_{N,t} \}_{k=1}^N$ associated with $\psi_{N,t}$ satisfies the a-priori bound (\ref{eq:aprik1}). Due to the Coulomb singularity, this is actually not so simple and requires an additional approximation argument.

\bigskip

{\bf Approximation of the Coulomb singularity:} For a fixed $\eps >0$ we define the regularized Hamiltonian
\begin{equation}\label{eq:wtHN}
\wt H_N = \sum_{j=1}^N -\Delta_j - \frac{\lambda}{N} \sum_{i<j}^N \frac{1}{|x_i -x_j|+ \eps N^{-1}}.
\end{equation}
Moreover, for a fixed sufficiently small $\delta >0$, we introduce the regularized initial data \begin{equation}\label{eq:reginit} \wt \psi_N = \frac{\chi (\delta \wt H_N / N) \psi_N }{\| \chi (\delta \wt H_N/N) \psi_N \|} \qquad (\text{recall that } \psi_N = \ph^{\otimes N} )   \end{equation} where $\chi \in C_0^{\infty} (\bR)$ is a monotone decreasing function such that $\chi (s) = 1$ for all $s \leq 1$ and $\chi (s) = 0$ for all $s \geq 2$. We consider then the regularized evolution of the regularized initial wave function \[ \wt \psi_{N,t} = e^{-i \wt H_N t} \wt \psi_N \, . \] The advantage of working with the regularized wave function $\wt \psi_{N,t}$ instead of $\psi_{N,t}$ is that it satisfies the following strong a-priori bounds.
\begin{proposition}\label{prop:apricou}
Let $\wt \psi_{N,t} = e^{-i \wt H_N t} \wt \psi_N$, for some fixed $\eps,\delta >0$. Then there exists a constant $C > 0$ (depending on $\eps, \delta$) and, for all $k \geq 1$, there exists $N_0 = N_0 (k) > k$ such that
\begin{equation}\label{eq:apricou0} \langle \wt \psi_{N,t}, (1-\Delta_1) \dots (1-\Delta_k) \, \wt \psi_{N,t} \rangle \leq C^k \,  \end{equation}
for all $N \geq N_0$.
\end{proposition}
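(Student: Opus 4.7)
The plan is to combine the spectral energy cutoff in the definition of $\wt\psi_N$ with an operator inequality that controls the product $(1-\Delta_1)\cdots(1-\Delta_k)$ by the $k$-th power of $\wt H_N/N$, and then to propagate the resulting bound in time by energy conservation.

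First, since $\chi$ vanishes on $[2,\infty)$, the state $\wt\psi_N$ lies in the spectral subspace of $\wt H_N$ where $\wt H_N \leq 2N/\delta$. Hence for every $\alpha > 0$ and $k \geq 1$,
\[ \langle \wt\psi_N, (\wt H_N + \alpha N)^k \wt\psi_N \rangle \leq ((2/\delta + \alpha) N)^k. \]
Because $e^{-i\wt H_N t}$ commutes with every function of $\wt H_N$, this expectation is conserved in time, so the same bound holds with $\wt\psi_N$ replaced by $\wt\psi_{N,t}$ for every $t \in \bR$.

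The core step would then be the operator inequality
\[ (1-\Delta_1)(1-\Delta_2)\cdots(1-\Delta_k) \;\leq\; C_0^k \, N^{-k} \, (\wt H_N + \alpha N)^k \quad \text{on } L^2_s(\bR^{3N}), \]
valid for suitable $\alpha, C_0 > 0$ depending only on $\lambda, \eps$, and for all $N \geq N_0(k)$. Combined with the conserved energy bound this gives the proposition with $C = C_0(2/\delta + \alpha)$. For $k = 1$ this is elementary: the Kato-type inequality $1/(|x|+\eps/N) \leq \eta(-\Delta) + C_\eta$, summed over pairs and combined with the mean-field prefactor $1/N$, yields $\wt H_N + \alpha N \geq \tfrac{1}{2}\sum_{j=1}^N(-\Delta_j)$ as operators; permutation symmetry then gives $\langle\psi, (\wt H_N + \alpha N)\psi\rangle \geq (N/2)\langle \psi, -\Delta_1 \psi\rangle$ for symmetric $\psi$, which is the inequality modulo harmless additive constants. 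For $k \geq 2$ I would proceed by induction on $k$: write $(\wt H_N + \alpha N)^k = (\wt H_N + \alpha N)(\wt H_N + \alpha N)^{k-1}$, apply the induction hypothesis to the second factor to extract a factor of $N^{k-1}(1-\Delta_1)\cdots(1-\Delta_{k-1})$, and then use the outer factor together with permutation symmetry over the remaining $N-(k-1)$ variables to generate $(1-\Delta_k)$.

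The main obstacle is precisely the operator inequality for $k \geq 2$: since $\wt H_N$ does not commute with the one-particle kinetic operators $(1-\Delta_j)$ — they are coupled through the interaction $V_\eps(x_i - x_j)$ — one cannot simply raise the $k = 1$ bound to the $k$-th power, as the map $x \mapsto x^k$ is not operator monotone for $k \geq 2$. The induction must therefore handle commutators of the type $[\wt H_N, -\Delta_j]$, which involve $\nabla V_\eps$ and hence the regularization parameter $\eps$; the resulting error terms are of lower order in $N$ but must be absorbed into the leading $N^k$ term, which is precisely the reason for the restriction $N \geq N_0(k)$ and for the dependence of $C$ on $\eps$ and $\delta$.
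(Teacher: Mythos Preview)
Your proposal is correct and follows essentially the same approach as the paper: the ``core step'' operator inequality you isolate is precisely the energy estimate stated separately as Proposition~\ref{prop:enest}, and the paper's proof of Proposition~\ref{prop:apricou} then combines it with energy conservation and the spectral cutoff exactly as you describe. You have also correctly identified the mechanism of the induction (extract $k-1$ factors from the inner $(\wt H_N+C_1N)^{k-1}$ via the induction hypothesis, then use the outer factor and symmetry over the remaining variables to produce $(1-\Delta_k)$), the role of the regularization $\eps>0$ in controlling the commutator terms, and the origin of the threshold $N_0(k)$.
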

{\it Remark.} Expressed in terms of the $k$-particle marginal $\wt \gamma_{N,t}^{(k)}$ associated with $\wt \psi_{N,t}$, the bound (\ref{eq:apricou0}) reads
\begin{equation}\label{eq:apricou} \tr \; (1-\Delta_1) \dots (1-\Delta_k) \, \wt \gamma^{(k)}_{N,t} \leq C^k \, .\end{equation}

\medskip

We will show Proposition \ref{prop:apricou} below, making use of Proposition \ref{prop:enest}; we will see there that the regularization of the Coulomb singularity and of the initial wave function both play an important role. For the solution $\psi_{N,t}$ of the original Schr\"odinger equation with the original factorized initial data $\psi_N = \ph^{\otimes N}$ it is not known whether bounds like (\ref{eq:apricou0}) hold true.

\medskip

In order for the regularized wave function $\wt \psi_{N,t}$ to be useful, one needs to prove that it approximates, in an appropriate sense, the wave function $\psi_{N,t}$. To compare the two $N$-particle wave function, we introduce a third wave function $\wh \psi_{N,t} = e^{-i \wt H_N t} \psi_N$, and we use the triangle inequality
\begin{equation}\label{eq:trian} \| \psi_{N,t} - \wt \psi_{N,t} \| \leq \| \psi_{N,t} - \wh\psi_{N,t} \| + \| \wh \psi_{N,t} - \wt \psi_{N,t} \| \,. \end{equation}
The second term is actually independent of time because of the unitarity of the evolution. Using the definition of the regularized initial data $\wt \psi_N$, one can prove that
\[ \| \wh \psi_{N,t} - \wt \psi_{N,t} \| = \| \psi_N - \wt \psi_N \| \leq C \delta^{1/2} \] uniformly in $N$. To control the first term on the r.h.s. of (\ref{eq:trian}), we observe that
\begin{equation*}
\frac{\rd}{\rd t} \, \left\| \psi_{N,t} - \wh \psi_{N,t} \right\|^2 = 2 \text{Im} \, \left\langle \left( H_N - \wt H_N \right) \wh \psi_{N,t} , \psi_{N,t} - \wh \psi_{N,t} \right\rangle
\end{equation*}
and thus that
\begin{equation}\label{eq:gronwa}
\left| \frac{\rd}{\rd t} \, \left\| \psi_{N,t} - \wh \psi_{N,t} \right\|^2 \right| \leq 2 \left\| \left( H_N - \wt H_N \right) \wh \psi_{N,t} \right\| \, \left\| \psi_{N,t} - \wh \psi_{N,t} \right\|\,.
\end{equation}
We have
\begin{equation*}
\| ( H_N - \wt H_N ) \wh \psi_{N,t} \| =  \left\| \frac{\eps}{N^2} \sum_{i<j}^N  \frac{1}{|x_i -x_j| \left( |x_i -x_j| + \eps N^{-1}\right)} \, \wh \psi_{N,t} \right\| \,.
\end{equation*}
Using the permutation symmetry of $\wh \psi_{N,t}$, it follows that
\begin{equation*}
\begin{split}
\| ( H_N - &\wt H_N ) \wh \psi_{N,t} \|^2  \\
\leq &\; \eps^2 \Big\langle \wh \psi_{N,t}, \frac{1}{|x_1 -x_2| \left( |x_1 -x_2|+\eps N^{-1} \right)} \, \frac{1}{|x_3 -x_4| \left( |x_3 -x_4|+ \eps N^{-1} \right)} \, \wh \psi_{N,t} \Big\rangle \\ &+ \eps^2 N^{-1} \Big\langle \wh \psi_{N,t}, \frac{1}{|x_1 -x_2| \left( |x_1 -x_2|+\eps N^{-1} \right)} \, \frac{1}{|x_2 -x_3| \left( |x_2 -x_3|+ \eps N^{-1} \right)} \, \wh \psi_{N,t} \Big\rangle \\ & + \eps^2 N^{-2} \Big\langle \wh \psi_{N,t}, \frac{1}{|x_1 -x_2|^2 \left( |x_1 -x_2|+\eps N^{-1} \right)^2} \, \wh \psi_{N,t} \Big\rangle
\end{split}
\end{equation*}
and thus
\begin{equation*}
\begin{split}
\| ( H_N - \wt H_N ) \wh \psi_{N,t} \|^2
\leq & \;  \eps^2 \Big\langle \wh \psi_{N,t}, \frac{1}{|x_1 -x_2|^2 |x_3 -x_4|^2} \, \wh \psi_{N,t} \Big\rangle \\ &+ \eps^2 N^{-1} \Big\langle \wh \psi_{N,t}, \frac{1}{|x_1 -x_2|^2 |x_2 -x_3|^2} \, \wh \psi_{N,t} \Big\rangle \\ &+ \eps^{1/2} N^{-1/2} \Big\langle \wh \psi_{N,t}, \frac{1}{|x_1 -x_2|^{5/2}} \, \wh \psi_{N,t} \Big\rangle\,.
\end{split}
\end{equation*}
Applying Hardy inequalities in the form
\begin{equation}\label{eq:hardy} \frac{1}{|x_i -x_j|^{\alpha}} \leq C (1-\Delta_i)^{\beta/2} (1-\Delta_j)^{\gamma/2} \end{equation}
for every $0 \leq \alpha < 3$, if $\beta+ \gamma \geq \alpha$ (see \cite{ES}[Lemma 9.1] for a proof) we find that
\begin{equation*}
\begin{split}
\| ( H_N - \wt H_N ) \wh \psi_{N,t} \|^2  \leq \; & \eps^2 (1  + N^{-1}) \langle \wh\psi_{N,t}, (1-\Delta_1) (1-\Delta_3) \wh \psi_{N,t} \rangle \\ &+ \eps^{1/2} N^{-1/2} \langle \wh \psi_{N,t}, (1-\Delta_1) ( 1-\Delta_2) \wh \psi_{N,t} \rangle
\end{split}
\end{equation*}
and thus, from (\ref{eq:apricou}),
\[ \| ( H_N - \wt H_N ) \wh \psi_{N,t} \|  \leq   C \eps^{1/4} \]
uniformly in $N$. {F}rom (\ref{eq:gronwa}), applying Gronwall's Lemma, it follows that
\[ \| \psi_{N,t} - \wh \psi_{N,t} \| \leq  C \eps^{1/4} t \, .  \]
{F}rom (\ref{eq:trian}), we obtain that
\[  \| \psi_{N,t} - \wt \psi_{N,t} \| \leq  C \left( \eps^{1/4} t + \delta^{1/2} \right) \]
and thus
\begin{equation}\label{eq:compgamma} \tr \; \left| \gamma^{(k)}_{N,t} - \wt \gamma^{(k)}_{N,t} \right| \leq C \left( \eps^{1/4} t +\delta^{1/2} \right) \end{equation}
for every $k \in \bN$, uniformly in $N \geq k$. Because of (\ref{eq:compgamma}), it suffices to prove (\ref{eq:claim-cou}) with $\gamma^{(k)}_{N,t}$ (the $k$-particle marginal associated with $\psi_{N,t}$) replaced by $\wt \gamma^{(k)}_{N,t}$ (the $k$-particle marginal associated with the regularized wave function $\wt \psi_{N,t}$) for fixed $\eps,\delta >0$; at the end (\ref{eq:claim-cou}) follows by letting $\eps,\delta \to 0$. Note that in \cite{EY} a slightly different approximation of the initial data was used; the details of the approximation presented above can be found (for a different model) in \cite{ESY2}[Section 5].

\bigskip

{\bf Energy estimates:} To prove the a-priori bounds of Proposition \ref{prop:apricou}, one can use so called energy estimates; these are estimates that compare the expectation of high powers of the Hamiltonian with corresponding powers of the kinetic energy.
\begin{proposition}\label{prop:enest}
Suppose that $\wt H_N$ is defined as in (\ref{eq:wtHN}) with $\lambda >0$ (the case $\lambda <0$ is simpler). Then there exist constants $C_1 >1$ and $C_2 >0$ and, for every $k \geq 1$, there exists an $N_0 = N_0 (k) \in \bN$ such that
\begin{equation}\label{eq:enest}
\langle \psi_N , ( \wt H_N + C_1 N )^k \, \psi_N \rangle \geq C_2^k N^k \, \langle \psi_N , (- \Delta_1 + C_1) \dots (- \Delta_k + C_1) \psi_N \rangle
\end{equation}
for every $\psi_N \in L^2_s (\bR^{3N})$ (symmetric with respect to permutations).
\end{proposition}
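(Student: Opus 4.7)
The plan is to prove (\ref{eq:enest}) by induction on $k$, using a Hardy-type operator inequality to control the Coulomb singularity by the kinetic energy plus a term of order $N$, combined with the permutation symmetry of $\psi_N$. As a first step, I would record the standard operator bound $|y|^{-1} \leq \eta(-\Delta) + C/\eta$ on $L^2(\bR^3)$ for any $\eta>0$, which follows by combining Hardy's inequality $|y|^{-2} \leq 4(-\Delta)$ with the pointwise estimate $|y|^{-1} \leq \eta |y|^{-2} + 1/(4\eta)$. Since $1/(|x_i-x_j|+\eps N^{-1}) \leq 1/|x_i-x_j|$, summing over pairs and optimizing gives, for any $\alpha\in(0,1)$,
\[
\frac{\lambda}{N} \sum_{i<j}^N \frac{1}{|x_i-x_j|+\eps N^{-1}} \leq \alpha \sum_{j=1}^N (-\Delta_j) + C_\alpha N .
\]
With $\alpha=1/2$ and $C_1$ taken sufficiently large (depending on $\lambda$), this yields the basic form bound $\wt H_N + C_1 N \geq c_0 \sum_{j=1}^N S_j$, where $S_j := -\Delta_j + C_1$ and $c_0 := 1/2$. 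In particular $\wt H_N + C_1 N \geq 0$, so its powers are well defined on $L^2_s(\bR^{3N})$.

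The base case $k=1$ is then immediate from symmetry: since $\langle \psi_N, S_j \psi_N\rangle$ is independent of $j$ for permutation-invariant $\psi_N$, one has $\sum_j \langle \psi_N, S_j \psi_N\rangle = N \langle \psi_N, S_1\psi_N\rangle$, and (\ref{eq:enest}) holds with $C_2 = c_0$.

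For the inductive step, set $H:=\wt H_N + C_1 N$ and assume (\ref{eq:enest}) is known up to order $k$; the goal is to prove the corresponding statement at order $k+1$. I would write
\[
\langle \psi_N , H^{k+1} \psi_N\rangle = \langle H^{1/2}\psi_N, H^k \, H^{1/2}\psi_N\rangle,
\]
and apply the inductive hypothesis to $H^{1/2}\psi_N \in L^2_s$ (which is symmetric because $\wt H_N$ commutes with permutations). The problem is thereby reduced to proving the operator inequality
\[
H^{1/2}\, S_1 \cdots S_k\, H^{1/2} \ \geq\ c_0 N\, S_1 \cdots S_{k+1} \qquad \text{on } L^2_s(\bR^{3N}).
\]
Expanding $H = \sum_{j=1}^N S_j - C_1 N + V_{\mathrm{int}}$, the diagonal contribution of the $(N-k)$ factors $S_j$ with $j>k$ produces, after symmetrization, precisely the desired $c_0 N \cdot S_1\cdots S_{k+1}$; the remaining pieces come from the indices $j \leq k$, the constant $-C_1 N$, and the interaction $V_{\mathrm{int}}$, and they give rise to commutators $[\wt H_N , S_1\cdots S_k]$ involving derivatives of the regularized Coulomb potential. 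These commutators generate singularities of type $|x_i-x_j|^{-2}$, which by the stronger Hardy inequality $|x_i-x_j|^{-2}\leq 4(-\Delta_i)$ can be bounded by products of two kinetic factors, and absorbed via Cauchy--Schwarz into a small multiple of the leading term plus lower powers of $H$ that are controlled by the inductive hypothesis at orders $<k$.

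The main obstacle is precisely the bookkeeping of these commutator errors: every time a factor $S_j$ is moved past $\wt H_N$, derivatives of $1/(|x_i-x_j|+\eps N^{-1})$ appear and create progressively more singular contributions that must be re-expressed as products of kinetic operators and paired with the $1/N$ factor built into the mean-field coupling of the potential. The regularization $\eps N^{-1}$ is not used at this stage (the energy estimates are uniform in $\eps$), but the symmetry of $\psi_N$ is crucial, both to identify the dominant diagonal contribution and to produce the $(N-k)\sim N$ enhancement that matches the $N^k$ factor on the right-hand side of (\ref{eq:enest}). The repulsive case $\lambda<0$ is simpler, because the interaction is then non-negative and can be dropped from the lower bound altogether.
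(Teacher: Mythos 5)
Your base case and the overall philosophy (control the Coulomb singularity by the kinetic energy, use permutation symmetry to convert sums over particles into factors of $N$) agree with the paper, but the inductive mechanism you propose has a genuine gap. You pass from order $k$ to $k+1$ by writing $\langle \psi_N, H^{k+1}\psi_N\rangle = \langle H^{1/2}\psi_N, H^k H^{1/2}\psi_N\rangle$ with $H = \wt H_N + C_1 N$, and you correctly reduce the step to the operator inequality $H^{1/2} S_1\cdots S_k H^{1/2} \geq c_0 N\, S_1\cdots S_{k+1}$ on $L^2_s(\bR^{3N})$. But at that point you say you will ``expand $H$'' and control commutators $[\wt H_N, S_1\cdots S_k]$ --- in the quantity you actually have to bound, $H$ never appears as an integer power, only through $H^{1/2}$, which is a nonlocal operator defined by the spectral theorem for an interacting many-body Hamiltonian. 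It cannot be decomposed as $\sum_j S_j + V_{\mathrm{int}}$ (note also the sign slip: with your convention $S_j = -\Delta_j + C_1$ one has $H = \sum_j S_j + V_{\mathrm{int}}$, not $\sum_j S_j - C_1N + V_{\mathrm{int}}$), and commutators of differential operators with $H^{1/2}$ are not accessible to the Hardy/Cauchy--Schwarz manipulations you invoke. This is exactly why the paper runs a \emph{two-step} induction, from all $k \leq n$ to $k = n+2$: sandwiching as in $(\wt H_N + C_1 N)^{n+2} \geq C_2^n N^n (\wt H_N + C_1 N) S_1^2\cdots S_n^2 (\wt H_N + C_1 N)$ leaves a full copy of $\wt H_N + C_1 N$ on each side, which can be expanded term by term as in (\ref{eq:enest0})--(\ref{eq:enest1}). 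As written, your single-step reduction leaves the key inequality unproved, and no elementary route to it is indicated.

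The second problem is your claim that the regularization $\eps N^{-1}$ is not used and that the estimate is uniform in $\eps$; this contradicts the actual structure of the argument. Even in the correctly organized expansion, the dangerous contribution is the one in which the interaction couples two particles that both carry kinetic factors, i.e.\ a term of the form $S_1^2\cdots S_{n+1}^2\,\lambda\,(|x_1-x_2|+\eps N^{-1})^{-1} + \mathrm{h.c.}$: commuting the Laplacians in $x_1$ and $x_2$ past the potential generates singularities of order $|x_1-x_2|^{-3}$ and worse, which are \emph{not} of the ``$|x_i-x_j|^{-2}$, controllable by Hardy'' type you describe --- the bound (\ref{eq:alpha1}) only holds for exponents $\alpha < 3$, and two derivatives on each side exceed it. The paper handles precisely this term using the cutoff, trading the excess singularity for a factor $\eps^{-1}N$ in (\ref{eq:enest4}), which is then beaten by the prefactor $n(n-1)/N$ and by the leading term $(N-n)(N-n-1)S_1^2\cdots S_{n+2}^2$ for $N \geq N_0$, with $N_0$ depending on $\eps$ (the constants $C_1, C_2$ are $\eps$-independent, but the threshold is not). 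Without the $\eps$-regularization, or some substitute for it, the inductive step cannot be closed by the estimates you list, so this is a second genuine gap rather than a harmless simplification.
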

\begin{proof}
Using the operator inequality \[ \frac{1}{|x_i -x_j|} \leq \frac{\pi}{4} \, |\nabla_j| \, \] we can find a constant $C_1 > 1$ (depending on the coupling constant $\lambda$) such that \begin{equation}\label{eq:C1} \frac{\lambda}{|x_i -x_j|} \leq \frac{1}{2} \, (- \Delta_j + C_1) = \frac{1}{2} S_j^2 , \end{equation}
where we defined $S_j = (-\Delta_j + C_1)^{1/2}$. Note also that, for every $0<\alpha<3$ there exists a constant $C_{\alpha} <\infty$ such that \begin{equation}\label{eq:alpha1}
\frac{1}{|x_i -x_j|^{\alpha}} \leq C_{\alpha} S_j^{\alpha} \,.
\end{equation}

\medskip

We are going to prove (\ref{eq:enest}) for $C_1$ fixed as in (\ref{eq:C1}), and for an arbitrary $0< C_2 < 1/2$. The proof is by a two-step induction over $k \geq 0$. For $k = 0$, the claim is trivial. For $k =1$, it follows from (\ref{eq:C1}) because, as an operator inequality on the permutation symmetric space $L^2_s (\bR^{3N})$, we have
\begin{equation}\label{eq:k1}
\begin{split}
(\wt H_N + C_1 N)  \geq \; & N S_1^2 - \frac{N}{2} \,  \frac{\lambda}{|x_1 - x_2|}  \geq C_2 N S_1^2 \, .
\end{split}
\end{equation}
Next we assume that (\ref{eq:enest}) holds for all $k \leq n$ and we prove it for $k = n+2$, for an arbitrary $n \in \bN$. To this end, we observe that, because of the induction assumption,
\begin{equation*}
\begin{split}
(\wt H_N + C_1 N)^{n+2} &= (\wt H_N + C_1 N) (\wt H_N + C_1 N)^n (\wt H_N +C_1 N) \\ &\geq C_2^n N^n   (\wt H_N + C_1 N) S_1^2 \dots S_n^2 (\wt H_N + C_1 N) ,
\end{split}
\end{equation*}
for all $N \geq N_0 (n)$. Writing
\[ (\wt H_N + C_1 N) = \sum_{j \geq n+1} S_j^2 + h_N, \quad \text{with } \quad h_N = \sum_{j=1}^n S_j^2 - \frac{\lambda}{N} \sum_{i<j}^N \frac{1}{|x_i -x_j|+\eps N^{-1}} \] it follows that
\begin{equation}\label{eq:enest0}
\begin{split}
(\wt H_N + C_1 N)^{n+2} \geq \; &C_2^n N^n (N-n)(N-n-1) S_1^2 \dots S_{n+2}^2 \\ &+ C_2^n N^n (N-n) S_1^4 S_2^2 \dots S_{n+1}^2 \\ &+ C_2^n N^n (N-n) \left( S_1^2 \dots S_{n+1}^2 \, h_N + \text{h.c.} \right) \,.
\end{split}
\end{equation}
The first two terms are positive. As for the third term, by the definition of $h_N$,
we find that
\begin{equation}\label{eq:enest1}
\begin{split}
\left( S_1^2 \dots S_{n+1}^2 \, h_N  + \text{h.c.} \right) \geq &\; -\frac{(N-n)(N-n-1)}{2N} \,  \left( S_1^2 \dots S_{n+1}^2 \, \frac{\lambda}{|x_{n+2} - x_{n+3}|+\eps N^{-1}} +\text{h.c.} \right) \\ &-\frac{(N-n) n}{N} \left( S_1^2 \dots S_{n+1}^2 \, \frac{\lambda}{|x_{1} - x_{n+2}|+\eps N^{-1}} + \text{h.c.} \right) \\ & - \frac{n(n-1)}{2N} \left( S_1^2 \dots S_{n+1}^2 \frac{\lambda}{|x_1 -x_2|+\eps N^{-1}} + \text{h.c.} \right)\,.
\end{split}
\end{equation}
The first term on the r.h.s. of (\ref{eq:enest1}) can be bounded by
\begin{equation}\label{eq:enest2}
\begin{split}
\left( S_1^2 \dots S_{n+1}^2 \, \frac{\lambda}{|x_{n+2} - x_{n+3}| + \eps N^{-1}} + \text{h.c.} \right) & \leq 2 S_1 \dots S_{n+1} \frac{\lambda}{|x_{n+2} - x_{n +3}|} S_{n+1} \dots S_1 \\ &\leq S_1^2 \dots S_{n+2}^2\,.
\end{split}\end{equation}
As for the second term on the r.h.s. of (\ref{eq:enest1}), we remark that
\begin{equation*}
\begin{split}
\Big( S_1^2 \dots S_{n+1}^2 \, &\frac{\lambda}{|x_{1} - x_{n+2}|+\eps N^{-1}} + \text{h.c} \Big) \\ = \; & S_{n+1} \dots S_{2} \,\left( (-\Delta_1 +C_1) \frac{\lambda}{|x_{1} - x_{n+2}|+\eps N^{-1}} + \text{h.c.} \right) S_2 \dots S_{n+1} \\ \leq \; &  2C_1 \,  S_{n+1} \dots S_2 \frac{\lambda}{|x_{1} - x_{n+2}|+\eps N^{-1}} S_2 \dots S_{n+1} \\ &+ 2 S_{n+1} \dots S_{2} \, \nabla_1^* \frac{\lambda}{|x_{1} - x_{n+2}|+\eps N^{-1}} \nabla_1 S_2 \dots S_{n+1} \\ &+ \lambda \, S_{n+1} \dots S_{2} \, \left( \nabla_1^* \frac{(x_1 -x_{n+2})}{|x_1 -x_{n+2}| \left( |x_{1} - x_{n+2}|+\eps N^{-1}\right)^2} + \text{h.c.} \right) \, S_2 \dots S_{n+1}\,.
\end{split}\end{equation*}
Applying a Schwarz inequality in the last term, we conclude that there exists a constant $D>0$ (depending on $\lambda$) such that
\begin{equation}\label{eq:enest3}
\Big( S_1^2 \dots S_{n+1}^2 \,\frac{\lambda}{|x_{1} - x_{n+2}|+\eps N^{-1}} + \text{h.c.} \Big)
\leq  D \, S_1^2 \dots S_{n+2}^2 \,.
\end{equation}
Similarly, using the operator inequalities (\ref{eq:alpha1}), the last term on the r.h.s. of (\ref{eq:enest1}) can be bounded by
\begin{equation}\label{eq:enest4}
\Big( S_1^2 \dots S_{n+1}^2 \, \frac{1}{|x_{1} - x_{2}|+\eps N^{-1}} + \text{h.c.} \Big) \leq  D  \eps^{-1} N  \, S_1^2 \dots S^2_{n+1}  +  D S_1^4 S_2^2 \dots S_{n+1}^2
\end{equation}
for all $0< \eps <1$ and for a constant $D$ depending only on $\lambda$ (it is at this point that the condition $\eps>0$ is needed).
Inserting (\ref{eq:enest2}), (\ref{eq:enest3}), and (\ref{eq:enest4}) in the r.h.s. of (\ref{eq:enest1}), and the resulting bound in the r.h.s. of (\ref{eq:enest0}), we obtain that there exists $N_0 > 0$ (depending on $n$) such that
\begin{equation*}
(\wt H_N + C_1 N)^{n+2} \geq C_2^{n+2} S_1^2 \dots S_{n+2}^2
\end{equation*}
for all $N > N_0$. Note that the value of $N_0$ also depends on the parameter $\eps >0$.
\end{proof}
Using the result of Proposition \ref{prop:enest}, it is simple to complete the proof of the a-priori bounds for $\wt \psi_{N,t} = e^{-i\wt H_N t} \wt \psi_N$ (recall the definition of the regularized initial data $\wt \psi_N$ in (\ref{eq:reginit})).

\begin{proof}[Proof of Proposition \ref{prop:apricou}]
{F}rom (\ref{eq:enest}), and since $C_1 >1$, we have
\begin{equation*}
\begin{split}
\langle \wt \psi_{N,t} , (1-\Delta_1) \dots (1-\Delta_k) \wt \psi_{N,t} \rangle &\leq \langle \wt \psi_{N,t} , (C_1-\Delta_1) \dots (C_1-\Delta_k) \wt \psi_{N,t} \rangle \\ &\leq \frac{1}{C_2^k N^k} \langle \wt \psi_{N,t} , (\wt H_N + C_1 N)^k \wt \psi_{N,t} \rangle \\ &= \frac{1}{C_2^k N^k} \langle \wt \psi_{N} , (\wt H_N + C_1 N)^k \wt \psi_{N} \rangle
\end{split}
\end{equation*}
where in the last line we used the fact that the expectation of any power of $\wt H_N$ is preserved by the time-evolution. {F}rom the definition (\ref{eq:reginit}) of $\wt\psi_N$, we immediately obtain (\ref{eq:apricou0}).
\end{proof}

Since the a-priori bounds for $\wt\gamma_{N,t}^{(k)}$ obtained in Proposition \ref{prop:apricou} hold uniformly in $N$, they can also be used to derive a-priori bounds on the limit points $\{ \gamma^{(k)}_{\infty,t} \}_{k \geq 1}$ of the sequence $\{ \gamma^{(k)}_{N,t} \}_{k=1}^N$.
\begin{corollary}\label{cor:apriinfty}
Suppose that $\Gamma_{\infty,t} = \{ \gamma^{(k)}_{\infty,t} \}_{k\geq 1} \in \bigoplus_{k\geq 1} C([0,T], \cL^1_k)$ is a limit point of the sequence $\wt \Gamma_{N,t} = \{ \wt \gamma^{(k)}_{N,t} \}_{k=1}^N$ with respect to the product topology $\tau_{\text{prod}}$ defined after (\ref{eq:whetak}). Then $\gamma^{(k)}_{\infty,t} \geq 0$ and there exists a constant $C$ such that
\begin{equation}\label{eq:apricouinf}
\tr \; (1-\Delta_1) \dots (1-\Delta_k) \gamma^{(k)}_{\infty,t} \leq C^k \,
\end{equation}
for all $k \geq 1$.
\end{corollary}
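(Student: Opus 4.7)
The plan is to lift the $N$-uniform a-priori bound of Proposition~\ref{prop:apricou} to the limit $\gamma^{(k)}_{\infty,t}$ by a weighted Banach--Alaoglu argument. Set $S_j = (1-\Delta_j)^{1/2}$; since these factors act on distinct variables they commute, and $S := S_1 \cdots S_k$ is a positive self-adjoint operator with $S^2 = (1-\Delta_1)\cdots(1-\Delta_k)$. The object to study is the weighted marginal
\begin{equation*}
A_N^{(k)} := S \, \wt\gamma^{(k)}_{N,t} \, S \,.
\end{equation*}
Because $\wt\gamma^{(k)}_{N,t}$ is the partial trace of $|\wt\psi_{N,t}\rangle\langle\wt\psi_{N,t}|$, the operator $A_N^{(k)}$ is non-negative and trace class, and Proposition~\ref{prop:apricou} rewrites as $\tr A_N^{(k)} = \langle\wt\psi_{N,t},(1-\Delta_1)\cdots(1-\Delta_k)\wt\psi_{N,t}\rangle \le C^k$ for all $N \ge N_0(k)$.

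Next, along the subsequence $\{N_j\}$ realizing $\wt\Gamma_{N_j,t} \to \Gamma_{\infty,t}$ in $\tau_{\text{prod}}$, the sequence $\{A_{N_j}^{(k)}\}$ is bounded in trace norm. Since $\cL^1_k = \cK_k^*$, Banach--Alaoglu (along a further subsequence, still denoted $\{N_j\}$) produces a weak-$*$ limit $A_\infty^{(k)} \in \cL^1_k$ with $\tr|A_\infty^{(k)}| \le C^k$ by lower semicontinuity of $\|\cdot\|_1$. Testing the weak-$*$ convergence against the rank-one compact operators $|u\rangle\langle u|$ preserves non-negativity in the limit, giving both $A_\infty^{(k)} \ge 0$ and $\gamma^{(k)}_{\infty,t} \ge 0$.

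The key step is to identify $A_\infty^{(k)}$ with $S\gamma^{(k)}_{\infty,t}S$. For any $\phi,\psi \in \cS(\bR^{3k})$, the vectors $S\phi, S\psi$ lie in $L^2$, and both $|\psi\rangle\langle\phi|$ and $|S\psi\rangle\langle S\phi|$ are rank-one compact operators, so combining the two weak-$*$ convergences gives
\begin{equation*}
\langle\phi, A_\infty^{(k)}\psi\rangle \;=\; \lim_j \langle S\phi, \wt\gamma^{(k)}_{N_j,t} S\psi\rangle \;=\; \langle S\phi, \gamma^{(k)}_{\infty,t} S\psi\rangle.
\end{equation*}
Hence $A_\infty^{(k)} = S\gamma^{(k)}_{\infty,t}S$ as quadratic forms on the Schwartz class.

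To convert this into the stated trace bound, I would pick an orthonormal basis $\{\phi_n\} \subset \cS(\bR^{3k})$ of $L^2(\bR^{3k})$ and use the spectral decomposition $\gamma^{(k)}_{\infty,t} = \sum_m \lambda_m |u_m\rangle\langle u_m|$ to compute, via Fubini for non-negative terms,
\begin{equation*}
\tr A_\infty^{(k)} \;=\; \sum_n \langle\phi_n, A_\infty^{(k)}\phi_n\rangle \;=\; \sum_m \lambda_m \sum_n |\langle u_m, S\phi_n\rangle|^2.
\end{equation*}
Finiteness of the left-hand side forces each $u_m$ with $\lambda_m > 0$ to lie in $D(S)$ (otherwise the functional $\phi \mapsto \langle u_m, S\phi\rangle$ is unbounded on $L^2$ and its Parseval sum in $\{\phi_n\}$ diverges), and then $\sum_n |\langle u_m, S\phi_n\rangle|^2 = \|Su_m\|^2$. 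Consequently $\tr A_\infty^{(k)} = \sum_m \lambda_m \|Su_m\|^2 = \tr(1-\Delta_1)\cdots(1-\Delta_k)\gamma^{(k)}_{\infty,t}$, and combining with $\tr A_\infty^{(k)} \le C^k$ yields the claim. The main obstacle is precisely this final step---upgrading a Schwartz-class quadratic-form identity with the unbounded weight $S$ to a genuine trace equality---but the already-established finiteness of $\tr A_\infty^{(k)}$ from Banach--Alaoglu is exactly what forces everything to fit together without pathology.
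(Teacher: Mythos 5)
Your setup (defining $A_N^{(k)}=S\,\wt\gamma^{(k)}_{N,t}\,S$, extracting a weak-* limit $A_\infty^{(k)}$ with $\tr |A_\infty^{(k)}|\leq C^k$, and identifying $\la \phi, A_\infty^{(k)}\psi\ra=\la S\phi,\gamma^{(k)}_{\infty,t}S\psi\ra$ for Schwartz $\phi,\psi$) is sound; the paper itself states the corollary without proof as a direct consequence of the uniform bounds of Proposition \ref{prop:apricou}, so you are on your own for the last step, and that is exactly where there is a genuine gap. The parenthetical justification ``otherwise the functional $\phi\mapsto\la u_m,S\phi\ra$ is unbounded on $L^2$ and its Parseval sum in $\{\phi_n\}$ diverges'' invokes a false principle: an unbounded, densely defined linear functional can have a finite --- even vanishing --- ``Parseval sum'' along an orthonormal basis contained in its domain. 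Concretely, if $u\notin D(S)$, then $\ell(\phi)=\la u,S\phi\ra$ is unbounded on $\cS$ (because $\cS$ is a core for $S$), hence $\ker\ell$ is a dense subspace of $L^2(\bR^{3k})$ consisting of Schwartz functions; Gram--Schmidt applied to a countable dense subset of $\ker\ell$ yields an orthonormal basis $\{\phi_n\}\subset\cS$ with $\la u,S\phi_n\ra=0$ for every $n$. For such a basis your criterion would certify $u\in D(S)$ incorrectly, so for the \emph{arbitrary} Schwartz basis you chose, finiteness of $\tr A_\infty^{(k)}$ does not by itself force $u_m\in D(S)$, and the chain of equalities ending in $\tr A_\infty^{(k)}=\tr\,(1-\Delta_1)\cdots(1-\Delta_k)\gamma^{(k)}_{\infty,t}$ is not justified as written.

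The conclusion you want is true and can be recovered from exactly the data you already have, but by a different argument. From the form identity, $\|\gamma^{1/2}_{\infty,t}S\phi\|^2=\la\phi,A^{(k)}_\infty\phi\ra\leq \|A^{(k)}_\infty\|\,\|\phi\|^2$ for $\phi\in\cS$, so $\gamma^{1/2}_{\infty,t}S|_{\cS}$ extends to a bounded operator $T$ with $A^{(k)}_\infty=T^*T$; hence $T$ is Hilbert-Schmidt with $\|T\|^2_{\text{HS}}=\tr A^{(k)}_\infty\leq C^k$. Since $\cS$ is a core for $S$, the bounded adjoint $T^*$ satisfies $T^*v=S\gamma^{1/2}_{\infty,t}v$ for \emph{every} $v\in L^2$ (in particular $\mathrm{Ran}\,\gamma^{1/2}_{\infty,t}\subset D(S)$, which is the fact you tried to extract basis-wise), and then $\tr\, S\gamma^{(k)}_{\infty,t}S=\|T^*\|^2_{\text{HS}}=\tr A^{(k)}_\infty\leq C^k$. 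Alternatively --- and this is the standard route in \cite{EY,ESY2,ESY3} --- one can avoid $A^{(k)}_\infty$ altogether: for any compact $J^{(k)}$ with $0\leq J^{(k)}\leq \chi(S\leq M)\,S^2\,\chi(S\leq M)$ one has $\tr J^{(k)}\wt\gamma^{(k)}_{N,t}\leq C^k$ uniformly in $N$; passing to the limit along the $\tau_{\text{prod}}$-convergent subsequence, taking the supremum over such $J^{(k)}$, and letting $M\to\infty$ by monotone convergence gives (\ref{eq:apricouinf}) directly. A third repair of your own argument would be to fix a specific basis (e.g.\ Hermite functions) whose span is dense in $\cS$ in the Schwartz topology and use that $S$ maps $\cS$ continuously into itself, so that boundedness of $\ell$ on that span really does imply $u_m\in D(S)$ --- but this choice and the accompanying continuity argument must be made explicit; they are not automatic for an arbitrary Schwartz orthonormal basis.
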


\bigskip

{\bf Uniqueness:} The bounds of Corollary \ref{cor:apriinfty} are crucial; from (\ref{eq:apricouinf}) it follows that it is enough to show the uniqueness of the infinite hierarchy (\ref{eq:infhier}) in the class of densities satisfying (\ref{eq:apricouinf}), a much simpler task than proving uniqueness for all densities with $\tr \; |\gamma^{(k)}_t| \leq C^k$.
\begin{theorem}\label{thm:unique-cou}
Fix $\{ \gamma^{(k)} \}_{k\geq 1} \in \bigoplus_{k \geq 1} \cL^1_k$. Then there exists at most one solution $\{ \gamma^{(k)}_t \}_{k\geq 1} \in \bigoplus_{k\geq 1} C ([0,T] , \cL^1_k)$ to the infinite hierarchy (\ref{eq:infhier-cou}),
such that \begin{equation} \label{eq:aprik-cou}  \tr\; \left| (1-\Delta_1)^{1/2} \dots (1-\Delta_k)^{1/2} \gamma^{(k)} (1-\Delta_k)^{1/2} \dots (1-\Delta_1)^{1/2} \right| \leq C^k \end{equation} for all $k \geq 1$, and all $t \in [0,T]$.
\end{theorem}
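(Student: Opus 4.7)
The plan is to subtract two solutions of the infinite hierarchy (\ref{eq:infhier-cou}) sharing the same initial datum, iterate the Duhamel formula $n$ times, and exploit the a-priori bound (\ref{eq:aprik-cou}) to close the estimate despite the Coulomb singularity. Setting $\delta^{(k)}_t := \gamma^{(k)}_{1,t} - \gamma^{(k)}_{2,t}$, since the initial conditions cancel at every stage of the iteration, I obtain
\begin{equation*}
\delta^{(k)}_t = \int_0^t \! ds_1 \cdots \int_0^{s_{n-1}} \! ds_n \; \cU^{(k)}(t-s_1) B^{(k)} \cU^{(k+1)}(s_1-s_2) \cdots B^{(k+n-1)} \, \delta^{(k+n)}_{s_n}.
\end{equation*}
The free propagators $\cU^{(j)}$ are unitary and commute with products of the weights $S_i := (1-\Delta_i)^{1/2}$, so after taking trace norms the real task is to bound the weighted trace norm of the iterated collision term $B^{(k)} B^{(k+1)} \cdots B^{(k+n-1)} \delta^{(k+n)}_{s_n}$.

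The core estimate I would establish is that, for every symmetric trace class $\sigma^{(k+1)}$,
\begin{equation*}
\tr \; \left| S_1 \cdots S_k \, B^{(k)} \sigma^{(k+1)} \, S_k \cdots S_1 \right| \;\leq\; C\, k \;\; \tr \; \left| S_1 \cdots S_{k+1} \, \sigma^{(k+1)} \, S_{k+1} \cdots S_1 \right|.
\end{equation*}
To prove it I would expand the commutator in (\ref{eq:Bk-cou}) and the partial trace $\tr_{k+1}$, and sandwich each factor $|x_j - x_{k+1}|^{-1}$ between inverse Sobolev weights: applying the Hardy-type inequality (\ref{eq:hardy}) with $\alpha = 2$, $\beta = 0$, $\gamma = 2$ one sees that $|x_j-x_{k+1}|^{-1} S_{k+1}^{-1}$ is bounded uniformly in $j \leq k$, while the $S_i$ with $i \neq j,k+1$ commute past the potential and the partial trace. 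The single factor $S_j$ that does not commute can be absorbed by a Cauchy--Schwarz splitting of $|x_j-x_{k+1}|^{-1}$ into symmetric halves, exactly as in the operator manipulations underlying Proposition \ref{prop:enest}. This is the main technical work.

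Iterating this weighted bound $n$ times and invoking (\ref{eq:aprik-cou}) at the last step, the trace norm of $\delta^{(k)}_t$ is dominated by
\begin{equation*}
\frac{t^n}{n!} \, C^n \, k(k+1)\cdots(k+n-1) \, C^{k+n} \;\leq\; C_0^k \, (C_1 \, t)^n,
\end{equation*}
after using $k(k+1)\cdots(k+n-1) \leq 2^{k+n-1} n!$. For $t \leq t_0 := 1/(2 C_1)$ the right-hand side tends to $0$ as $n \to \infty$, forcing $\delta^{(k)}_t \equiv 0$ on $[0,t_0]$ for every $k \geq 1$. Since (\ref{eq:aprik-cou}) holds uniformly on $[0,T]$ with a time-independent constant, the argument can be restarted at $t_0, 2t_0, \dots$ and iterated finitely many times to cover all of $[0,T]$.

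The principal obstacle is the weighted collision estimate itself: arranging the Sobolev weights on both sides of $B^{(k)}\sigma^{(k+1)}$ so that each Coulomb singularity is flanked by an $S_{k+1}^{-1}$ and the appropriate $S_j^{-1}$, without picking up uncontrolled powers of $k$ or spurious factorials. Once this weighted bound is in place, the remaining ingredients---unitarity of the free evolutions, factorial bookkeeping of the number of collision terms against the $n!$ in the time simplex, and Grönwall-type iteration in time---are routine.
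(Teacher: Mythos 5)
Your proposal is correct and follows essentially the same route as the paper's proof: the same weighted norm $\| \gamma^{(k)} \|_{\cH_k} = \tr\,|S_1\cdots S_k\,\gamma^{(k)}\,S_k\cdots S_1|$, the same key collision estimate $\|B^{(k)}\gamma^{(k+1)}\|_{\cH_k}\leq Ck\,\|\gamma^{(k+1)}\|_{\cH_{k+1}}$ proved via Hardy-type bounds together with the expansion of $(1-\Delta_j)$ against the Coulomb factor (the paper writes this as the operator bound $\|S_1S_{k+1}^{-1}|x_1-x_{k+1}|^{-1}S_{k+1}^{-1}S_1^{-1}\|<\infty$, handled exactly by the Schwarz-type splitting you indicate), and the same Duhamel iteration with binomial bookkeeping, short-time vanishing, and restart to cover $[0,T]$.
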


\begin{proof}
We define the norm
\[ \| \gamma^{(k)} \|_{\cH_k} = \tr\; \left| (1-\Delta_1)^{1/2} \dots (1-\Delta_k)^{1/2} \gamma^{(k)} (1-\Delta_k)^{1/2} \dots (1-\Delta_1)^{1/2} \right| \]
and we observe that there exists a constant $C >0$ with (recall the definition (\ref{eq:Bk-cou}) for the collision map $B^{(k)}$)
\begin{equation}\label{eq:Bcou}
\| B^{(k)} \gamma^{(k+1)} \|_{\cH_k} \leq C k \| \gamma^{(k+1)} \|_{\cH_{k+1}}\,.
\end{equation}
To prove (\ref{eq:Bcou}), we write
\begin{equation*}\begin{split} \| B^{(k)} \gamma^{(k+1)} \|_{\cH_k} \leq \; &\sum_{j=1}^k \tr \, \left| S_1 \dots S_k \left( \tr_{k+1} \, \frac{1}{|x_j -x_{k+1}|}  \gamma^{(k+1)} \right) S_k \dots S_1 \right| \\ &+\sum_{j=1}^k \tr \, \left| S_1 \dots S_k \left( \tr_{k+1} \, \gamma^{(k+1)} \frac{1}{|x_j -x_{k+1}|}\right) S_k \dots S_1 \right|\,.
\end{split}
\end{equation*}
All terms can be handled similarly. We show how to bound the summand with $j=1$ on the first line.
\begin{equation*}
\begin{split}
\tr \, \Big| S_1 \dots S_k & \left( \tr_{k+1} \, \frac{1}{|x_1 -x_{k+1}|}  \gamma^{(k+1)} \right) S_k \dots S_1 \Big| \\ = \; & \tr \, \left| S_1 \dots S_k \left( \tr_{k+1} S_{k+1}^{-1} \, \frac{1}{|x_1 -x_{k+1}|} S_{k+1}^{-1} S_{k+1} \gamma^{(k+1)} S_{k+1} \right) S_k \dots S_1 \right|  \\
\leq \; & \tr \, \left| S_1 S_{k+1}^{-1} \frac{1}{|x_j -x_{k+1}|} S_{k+1}^{-1} S_1^{-1} S_1 \dots S_k  S_{k+1} \gamma^{(k+1)} S_{k+1} S_k \dots S_1 \right|  \\
\leq \; &  \| S_1 S_{k+1}^{-1} \frac{1}{|x_1 -x_{k+1}|} S_{k+1}^{-1} S_1^{-1} \| \, \| \gamma^{(k+1)} \|_{\cH_{k+1}} \\ \leq \; &C \| \gamma^{(k+1)} \|_{\cH_{k+1}}
\end{split}
\end{equation*}
where in the second line we used the cyclicity of the partial trace, in the third line we used (\ref{eq:223}) and, in the last line, we used the bound  \begin{equation}\label{eq:star} \| S_1 S_{k+1}^{-1} \frac{1}{|x_1 -x_{k+1}|} S_{k+1}^{-1} S_1^{-1} \| < \infty \, .\end{equation} To prove (\ref{eq:star}) we write, assuming for example that $k=1$,
\begin{equation*}
\begin{split}
S_1 S_2^{-1} \frac{1}{|x_1 -x_2|} S_2^{-1} S_1^{-1} = \; & S_1^{-1} S_2^{-1} (1-\Delta_1) \frac{1}{|x_1 -x_2|} S_2^{-1} S_1^{-1} \\ = \; & S_1^{-1} S_2^{-1} \frac{1}{|x_1 -x_2|} S_2^{-1} S_1^{-1} \\ & + S_1^{-1} S_2^{-1} \nabla_1^* \frac{1}{|x_1- x_2|} \nabla_1 S_2^{-1} S_1^{-1} + S_1^{-1} S_2^{-1} \nabla_1^* \frac{(x_1 -x_2)}{|x_1 -x_2|^3} S_1^{-1} S_2^{-1}\,,
\end{split}
\end{equation*}
and we use the norm-estimates $\| \nabla_1 S_1^{-1} \| < \infty$ and $\| S_2^{-1} |x_1 - x_2|^{-\alpha} S_2^{-1}\| < \infty$ for all $0\leq \alpha \leq 2$ (by (\ref{eq:hardy})).

\medskip

Suppose now that $\{ \gamma_{i,t}^{(k)} \}_{ k \geq 1}$, for $i =1,2$ are two solutions to the infinite hierarchy (\ref{eq:infhier-cou}). Using (\ref{eq:duhbd2}), we can expand both $\gamma^{(k)}_{1,t}$ and $\gamma^{(k)}_{2,t}$ in a Duhamel series. {F}rom (\ref{eq:Bcou}), and from the fact that $\| \cU^{(k)} \gamma^{(k)} \|_{\cH_k} = \| \gamma^{(k)} \|_{\cH_k}$, we obtain that
\begin{equation*}
\begin{split}
\left\| \gamma^{(k)}_{1,t} - \gamma^{(k)}_{2,t} \right\|_{\cH_k} \leq \; &C^n \frac{(k+n)!}{k!} \int_0^t \rd s_1 \dots \int_0^{s_{n-1}} \rd s_n \, \| \gamma^{(k+n)}_{1,s_n} - \gamma^{(k+n)}_{2,s_n} \|_{\cH_{k+n}} \\ \leq \; &C^k (Ct)^n \end{split}\end{equation*}
for any $n$. Here we used the a-priori bounds (\ref{eq:aprik-cou}). For $t \leq 1/ (2C)$, the l.h.s. must vanish. This shows uniqueness for short time, and thus, by iteration, for all times.
\end{proof}

\section{Dynamics of Bose-Einstein Condensates: the Gross-Pitaevskii Equation}
\setcounter{equation}{0}\label{sec:GP}

Dilute Bose gases at very low temperature are characterized by the macroscopic
occupancy of a single one-particle state; a non-vanishing fraction of the total
number of particles $N$ is described by the same one-particle orbital. Although
this phenomenon, known as Bose-Einstein condensation, has been predicted in the
early days of quantum mechanics, the first experimental evidence for its existence
was only obtained in 1995, in experiments performed by groups led by
Cornell and Wieman at the University of Colorado at Boulder and by Ketterle at MIT
(see \cite{CW,Kett}). In these important experiments, atomic gases were initially
trapped by magnetic fields and cooled down at very low temperatures. Then the magnetic
traps were switched off and the consequent time evolution of the gas was observed;
for sufficiently small temperatures, it was observed that the gas coherently moves
as a single particle, a clear sign for the existence of condensation.

\medskip

To describe these experiments from a theoretical point of view, we have, first of all,
to give a precise definition of Bose-Einstein condensation. It is simple to understand the meaning of condensation if one considers factorized wave functions, given by the (symmetrization of the) product of one-particle orbitals. In this case, to decide whether we have condensation, we only have to count the number of particles occupying every orbital; if there is a single orbital with macroscopic occupancy the wave function exhibits Bose-Einstein condensation, otherwise it does not. In particular, wave functions of the form $\psi_N (\bx) = \prod_{j=1}^N \ph (x_j)$, for some $\ph \in L^2 (\bR^3)$ (we consider in this section three dimensional systems only), exhibit Bose-Einstein condensation; since in these examples all particles occupy the same one-particle orbital, we say that $\psi_N$ exhibits complete Bose-Einstein condensation in the state $\ph$.

\medskip

Although factorized wave functions were used as initial data in Theorem \ref{thm:bd} and Theorem \ref{thm:cou}, they are, from a physical point of view, not very satisfactory, because they do not allow for any correlation among the particles. Since we would like to consider systems of interacting particles, the complete absence of correlations is not a realistic assumption. For this reason, we want to give a definition of Bose-Einstein condensation, in particular of complete Bose-Einstein condensation, that applies also to wave functions which are not factorized. To this end, we will make use of the one-particle density $\gamma^{(1)}_N$ associated with an $N$-particle wave function $\psi_N$. By definition (see (\ref{eq:gammakN})), the one-particle density is a non-negative trace class operator on $L^2 (\bR^3)$ with trace equal to one. It is simple to verify that the eigenvalues of $\gamma^{(1)}_N$ (which are all non-negative and sum up to one) can be interpreted as probabilities for finding particles in the state described by the corresponding eigenvector (a one-particle orbital). This observation justifies the following definition of Bose-Einstein condensation. We will say that a sequence $\{ \psi_N \}_{N \in \bN}$ with $\psi_N \in L^2_s (\bR^{3N})$ exhibits complete Bose-Einstein condensation in the one-particle state with orbital $\ph \in L^2 (\bR^3)$ if \begin{equation}\label{eq:BEC} \tr \; \left| \gamma^{(1)}_N - |\ph \rangle \langle \ph| \right| \to 0 \end{equation} as $N \to \infty$. In particular, complete Bose-Einstein condensation implies that the largest eigenvalue of $\gamma^{(1)}_N$ converges to one, as $N \to \infty$. More generally, we say that a sequence $\{ \psi_N \}_{N \in \bN}$ exhibits (not necessarily complete) Bose-Einstein condensation if the largest eigenvalue of $\gamma^{(1)}_N$ remains strictly positive in the limit $N \to \infty$. Note that condensation is not a property of a single $N$-particle wave function $\psi_N$, but it is a property characterizing a sequence $\{ \psi_N \}_{N \in \bN}$ in the limit $N \to \infty$.

\medskip

It is in general very difficult to verify that Bose-Einstein condensation occurs in physically interesting wave functions of interacting systems. There exists, however, a class of interacting systems for which complete condensation of the ground state has been recently established.

\medskip

In \cite{LSY}, Lieb, Yngvason, and Seiringer considered a trapped Bose
gas consisting of $N$ three-dimensional particles described by the
Hamiltonian
\begin{equation}\label{eq:ham1} H^{\text{trap}}_N
= \sum_{j=1}^N \left(-\Delta_j +
V_{\text{ext}} (x_j) \right) + \sum_{i<j}^N V_N (x_i -x_j),
\end{equation} where $V_{\text{ext}}$ is an external confining
potential with $\lim_{|x|\to \infty} V_{\text{ext}} (x) = \infty$, and $V_N (x)= N^2 V (Nx)$, where $V$ is pointwise positive, spherically symmetric, and rapidly decaying (for simplicity, $V$ can be thought of as being compactly supported). Note that the potential $V_N$ scales with $N$ so that its scattering length is of the order $1/N$ (Gross-Pitaevskii scaling). The scattering length of $V$ is a physical quantity measuring the effective range of the potential; two particles interacting through $V$ see each others, when they are far apart, as hard spheres with radius given by the scattering length of $V$. More precisely, if $f$ denotes the spherical symmetric solution to the zero-energy scattering equation \begin{equation}\label{eq:0en} \left( -\Delta + \frac{1}{2} V (x) \right) f = 0 \qquad \text{with boundary condition } f (x) \to 1 \quad \text{as } |x| \to \infty, \end{equation} the
scattering length of $V$ is defined by \[ a_0 = \lim_{|x| \to
\infty} |x| - |x| f(x)\,. \] This limit can be proven to exist if
$V$ decays sufficiently fast at infinity. Another equivalent characterization of
the scattering length is given by \begin{equation}\label{eq:8pia0} 8
\pi a_0 = \int \rd x \, V(x) f(x) \, .\end{equation}
It is simple to verify that, if $f$ solves (\ref{eq:0en}), the rescaled function $f_N (x) = f(Nx)$ solves the zero energy scattering equation with rescaled potential $V_N$, that is \begin{equation}\label{eq:0enN} \left( - \Delta + \frac{1}{2} V_N \right) f_N = 0 \qquad \text{with } \quad f_N (x) \to 1 \quad \text{as } |x| \to \infty \, .\end{equation} This implies immediately that the scattering length of $V_N$ is given by $a= a_0/N$, where $a_0$ is the scattering length of the unscaled potential $V$. Note that, for $|x| \gg a$, $f_N (x) \simeq 1- a/|x|$. For $|x| < a$, $f_N$ remains bounded; for practical purposes, we can think of this function as $f_N (x) \simeq 1 - a/(|x|+a)$.

\medskip

Letting $N \to \infty$,  Lieb, Yngvason, and Seiringer showed that the ground state energy $E (N)$ of (\ref{eq:ham1}) divided by the number of particle $N$ converges to
\[ \lim_{N \to \infty}\frac{E(N)}{N} =
\min_{\ph \in L^2 (\bR^3): \; \| \ph\| =1} \cE_{\text{GP}} (\ph) \]
where $\cE_{\text{GP}}$ is the Gross-Pitaevskii energy functional
\begin{equation}\label{eq:GPfunc} \cE_{\text{GP}} (\ph) = \int \rd x \; \left( |\nabla \ph
(x)|^2 + V_{\text{ext}} (x) |\ph (x)|^2 + 4 \pi a_0 |\ph (x)|^4
\right) \, .\end{equation}

\medskip

Later, in \cite{LS}, Lieb and Seiringer also proved that the ground state of the Hamiltonian (\ref{eq:ham1}) exhibits complete Bose-Einstein condensation into the minimizer of the Gross-Pitaevskii energy functional $\cE_{\text{GP}}$. More precisely they showed that, if $\psi_N$ is the ground state wave function of the Hamiltonian
(\ref{eq:ham1}) and if $\gamma^{(1)}_N$ denotes the corresponding
one-particle marginal, then
\begin{equation}\label{eq:conden}
\gamma_N^{(1)} \to |\phi_{\text{GP}} \rangle \langle
\phi_{\text{GP}}| \qquad \text{as } N \to \infty \, ,
\end{equation}
where $\phi_{\text{GP}} \in L^2 (\bR^3)$ is the minimizer of the
Gross-Pitaevskii energy functional (\ref{eq:GPfunc}).

\medskip

To describe the experiments mentioned above, it is important to understand the time-evolution of the Bose-Einstein condensate after removing the external traps. We define therefore the translation invariant Hamiltonian
\begin{equation}\label{eq:hamGP}
H_N = \sum_{j=1}^N -\Delta_j + \sum_{i<j}^N V_N (x_i -x_j) \,
\end{equation}
and we consider solutions to the $N$-particle Schr\"odinger equation
\begin{equation}\label{eq:schrGP}
i \partial_t \psi_{N,t} = H_N \psi_{N,t} \quad \Rightarrow \quad \psi_{N,t} = e^{-iH_N t} \psi_N \end{equation}
with initial data $\psi_N$ exhibiting complete Bose-Einstein condensation. In a series of joint articles with L. Erd\H os and H.-T. Yau, see \cite{ESY2,ESY3,ESY4,ESY5}, we prove that, for every fixed time $t \in \bR$, the evolved $N$-particle wave function $\psi_{N,t}$ still exhibits complete Bose-Einstein condensation. Moreover we show that the time evolution of the condensate wave function evolves according to the one-particle time-dependent Gross-Pitaevskii equation associated with the energy functional $\cE_{\text{GP}}$. Our main result is the following theorem.
\begin{theorem}\label{thm:mainGP}
Suppose that $V \geq 0$ is spherically symmetric and $V(x) \leq C \langle x \rangle^{-\sigma}$,
for some $\sigma >5$, and for all $x \in \bR^3$. Assume that the family $\{ \psi_N \}_{N \in \bN}$ with $\psi_N \in L_s^2 (\bR^{3N})$ and $\| \psi_N \| =1$ for all $N$, has finite energy per particle, that is \begin{equation}\label{eq:assH1} \langle \psi_N, H_N \psi_N \rangle
\leq C N \end{equation} for all $N \in \bN$, and that it exhibits complete Bose-Einstein condensation in
the sense that
\begin{equation}\label{eq:asscond} \tr \; \left| \gamma_N^{(1)} - |\ph \rangle \langle \ph| \right| \to 0 \end{equation} as $N \to \infty$ for some $\ph \in L^2 (\bR^3)$. Then, for every $k \geq 1$ and $t \in \bR$, we have
\begin{equation}\label{eq:claimGP} \tr \; \left| \gamma^{(k)}_{N,t} - |\ph_t \rangle \langle \ph_t|^{\otimes k} \right| \to 0 \end{equation} as $N \to \infty$. Here $\ph_t$ is the solution of the nonlinear Gross-Pitaevskii
equation \begin{equation}\label{eq:GPthm} i \partial_t \ph_t = -\Delta\ph_t +
 8\pi a_0 |\ph_t|^2 \ph_t \end{equation} with initial data $\ph_{t=0} = \ph$.
\end{theorem}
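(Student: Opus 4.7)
My plan is to follow the three-step compactness--convergence--uniqueness strategy outlined in Section \ref{sec:3steps} and applied to the Coulomb case in Section \ref{sec:cou}, now adapted to the far more singular Gross-Pitaevskii scaling. Starting from the BBGKY hierarchy for the marginals associated with $\psi_{N,t} = e^{-iH_N t}\psi_N$,
\begin{equation*}
i\partial_t \gamma^{(k)}_{N,t} = \sum_{j=1}^k [-\Delta_j,\gamma^{(k)}_{N,t}] + \sum_{i<j}^k [V_N(x_i-x_j),\gamma^{(k)}_{N,t}] + (N-k)\sum_{j=1}^k \tr_{k+1}[V_N(x_j-x_{k+1}),\gamma^{(k+1)}_{N,t}],
\end{equation*}
one expects that, thanks to the identity $8\pi a_0 = \int V(x)f(x)\,\rd x$, any limit point $\{\gamma^{(k)}_{\infty,t}\}_{k\geq 1}$ satisfies the Gross-Pitaevskii hierarchy
\begin{equation*}
\gamma^{(k)}_{\infty,t} = \cU^{(k)}(t)\gamma^{(k)}_{\infty,0} - 8\pi i a_0 \sum_{j=1}^k \int_0^t \rd s\, \cU^{(k)}(t-s)\tr_{k+1}[\delta(x_j-x_{k+1}),\gamma^{(k+1)}_{\infty,s}],
\end{equation*}
and then that the factorized family $\gamma^{(k)}_{\infty,t}=|\ph_t\ra\la\ph_t|^{\otimes k}$ with $\ph_t$ solving \eqref{eq:GPthm} is the unique solution with the correct initial data, which yields \eqref{eq:claimGP}.

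The compactness step proceeds exactly as in Proposition \ref{prop:compactness}, provided one can establish uniform a-priori estimates of the form $\tr\,(1-\Delta_1)\cdots(1-\Delta_k)\gamma^{(k)}_{N,t}\leq C^k$. These in turn require higher-order energy estimates in the spirit of Proposition \ref{prop:enest}: by a two-step induction one should prove $\langle \psi_{N,t},(H_N+C_1 N)^k \psi_{N,t}\rangle \geq C_2^k N^k \langle \psi_{N,t}, S_1^2 \cdots S_k^2 \psi_{N,t}\rangle$. Because energy is conserved, it suffices to bound the left-hand side at $t=0$, which demands a regularization of the initial datum $\psi_N$ (cutting off large energies as in \eqref{eq:reginit}) and an approximation argument controlling the error, analogous to the one used in the Coulomb case.

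The convergence step is substantially more subtle than in Section \ref{sec:cou}. The interaction $V_N(x)=N^3 V(Nx)$ formally tends to $(\int V)\delta(x)$, but the correct coefficient in the limiting hierarchy is $8\pi a_0$, not $\int V$. The discrepancy is accounted for by the short-scale two-particle correlation structure, roughly of the form $\prod_{i<j}f_N(x_i-x_j)$ with $f_N$ the solution of the zero-energy scattering equation \eqref{eq:0enN}, which the true dynamics develops on the length scale $1/N$. To extract the scattering length one tests the BBGKY hierarchy against a compact $J^{(k)}$ and rewrites, using $(-\Delta)f_N = \tfrac12 V_N f_N$,
\begin{equation*}
(N-k)\,V_N(x_j-x_{k+1})\,\gamma^{(k+1)}_{N,t} \;\approx\; 2(N-k)\,(-\Delta)f_N(x_j-x_{k+1})\cdot f_N^{-1}\gamma^{(k+1)}_{N,t},
\end{equation*}
so that after integration by parts the singular interaction is traded for two derivatives hitting $f_N$, while $f_N^{-1}\gamma^{(k+1)}_{N,t}$ converges to $\gamma^{(k+1)}_{\infty,t}$ once the short-scale correlation has been factored out. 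The energy bounds of the previous step supply the regularity needed to justify these manipulations and to pass to the weak limit, producing exactly the coefficient $8\pi a_0$.

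The main obstacle, and by far the most delicate step, is the uniqueness of the Gross-Pitaevskii hierarchy. The simple Duhamel iteration used in Proposition \ref{prop:unique} and Theorem \ref{thm:unique-cou} fails: the collision kernel is a product of delta functions, so the naive bound on $B^{(k)}$ would produce factorials $(k+n)!/k!$ that are not compensated by any small factor. The strategy, as developed in \cite{ESY3,ESY4}, is to expand both candidate solutions in iterated Duhamel series around the free evolution and to reorganize the resulting sum of $n!$ terms into a much smaller number of equivalence classes via a combinatorial regrouping argument (the so-called board game). Each class is then controlled by a space-time $L^2$ estimate of Strichartz type for the free Schr\"odinger propagators $\cU^{(k)}$, applied in the class of densities satisfying the Sobolev a-priori bounds produced in the compactness step. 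Carrying out this analysis, and verifying that the limit points of $\{\gamma^{(k)}_{N,t}\}$ actually belong to the uniqueness class, is where essentially all the technical work beyond the Coulomb case of Theorem \ref{thm:cou} is concentrated.
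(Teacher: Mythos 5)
Your overall architecture (compactness, convergence to the limiting hierarchy with the scattering length emerging from the correlation factor $f_N$ via $\int V f = 8\pi a_0$, then uniqueness) matches the paper, but two of your key steps would fail as stated. First, the a-priori estimates you propose are false in the Gross--Pitaevskii scaling: neither the uniform bound $\tr\,(1-\Delta_1)\cdots(1-\Delta_k)\gamma^{(k)}_{N,t}\leq C^k$ nor the uncut energy estimate $\langle \psi, (H_N+C_1N)^k\psi\rangle \geq C_2^k N^k \langle \psi, S_1^2\cdots S_k^2\psi\rangle$ can hold uniformly in $N$ for $k\geq 2$, precisely because $\psi_{N,t}$ develops the short-scale structure $f_N(x_i-x_j)$ on the length scale $1/N$; already $\tr\,(1-\Delta_1)(1-\Delta_2)\gamma^{(2)}_{N,t}\simeq N$ (see \eqref{eq:nottrue} and the discussion around it). The Coulomb-type induction of Proposition \ref{prop:enest} therefore cannot be transplanted. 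What is true, and what the paper proves, is a localized version with cutoffs $\theta_1\cdots\theta_{k-1}$ excluding collisions (Proposition \ref{prop:highen}, with $N\ell^2\gg1$ and later $N\ell^3\ll1$), which yields the $H^1$-type bounds \eqref{eq:aprik3} only for the limit points $\gamma^{(k)}_{\infty,t}$ (Proposition \ref{prop:aprikGP}); compactness itself needs only the trace bounds. Likewise, the regularity you need in the convergence step is not regularity of $\gamma^{(k+1)}_{N,t}$ but of the factored object $\psi_{N,t}/f_N(x_i-x_j)$, i.e.\ the second-order energy estimate of Proposition \ref{prop:enestGP} and Corollary \ref{cor:apri} (or the wave-operator version, Proposition \ref{prop:energ2}, for large $V$), which then feeds the decomposition \eqref{eq:deco} together with Poincar\'e-type inequalities; your ``integrate by parts onto $f_N$'' manipulation is not justified by the bounds you claimed to have.

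Second, your uniqueness step rests on the Klainerman--Machedon board-game/Strichartz argument ``in the class of densities satisfying the Sobolev a-priori bounds''. This is a mismatch: the uniqueness theorem of \cite{KM} is proved in a different class, defined by a space-time bound on $B^{(k)}\gamma^{(k+1)}$, and it is not known whether the limit points of $\{\gamma^{(k)}_{N,t}\}$ belong to that class --- the paper states explicitly that for this reason \cite{KM} cannot be used to prove Theorem \ref{thm:mainGP}. Since $\delta(x)\leq C(1-\Delta)^{\alpha}$ only for $\alpha>3/2$, the a-priori bounds \eqref{eq:aprit0} alone do not close a Duhamel iteration either. The paper's actual uniqueness proof (Proposition \ref{prop:uniqueGP}) is the diagrammatic expansion in momentum space: the Duhamel terms are regrouped into at most $2^{4m+k}$ Feynman-graph contributions $K_{\Lambda,t}$, and each is controlled by combining the propagator decay $\langle \a_e-p_e^2\rangle^{-1}$ (the smoothing of the free evolutions) with the $\langle p_e\rangle^{-5/2}$ decay at the leaves supplied by \eqref{eq:aprit0}, through the leaf-to-root integration scheme with weight $|p|^{-\lambda}$, $\lambda=2+\eps$, yielding $|\tr J^{(k)}\eta^{(k)}_{n,t}|\leq C^{k+n}t^{n/4}$ and hence uniqueness. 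Without either verifying the Klainerman--Machedon space-time bound for the limit points or carrying out such a graph-by-graph analysis, your proof of the uniqueness step (and hence of \eqref{eq:claimGP}) has a genuine gap.
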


Making use of an approximation of the initial $N$-particle wave function (similarly to (\ref{eq:reginit})), it is possible to replace the assumption (\ref{eq:assH1}) with the much more stringent condition
\begin{equation}\label{eq:asskGP}
\langle \psi_N , H_N^k \psi_N \rangle \leq C^k N^k
\end{equation}
for all $k \in \bN$. In the following we will illustrate the main ideas involved in the proof of Theorem \ref{thm:mainGP}, assuming the initial wave function $\psi_N$ to satisfy (\ref{eq:asskGP}).

\subsection{Comparison with Mean-Field Systems}
\label{sec:comp-mf}

The formal relation of the Hamiltonian (\ref{eq:hamGP}) with the mean-field Hamiltonian (\ref{eq:ham-mf}) considered in Section \ref{sec:mf} is evident; (\ref{eq:hamGP}) can in fact be rewritten as
\begin{equation}\label{eq:GP-mf} H_N = \sum_{j=1}^N -\Delta_j + \frac{1}{N} \sum_{i<j} v_N (x_i -x_j) \,  \end{equation} with $v_N (x) = N^3 V(Nx)$. Since we are considering three dimensional systems, $v_N$ converges to a delta-function in the limit of large $N$; at least formally, as $N \to \infty$, we have $v_N (x) \to b_0 \delta (x)$, where $b_0 = \int V(x) \rd x$. In other words, the Hamiltonian (\ref{eq:hamGP}) in the Gross-Pitaevskii scaling can be formally interpreted as a mean field Hamiltonian with an $N$-dependent potential which converges, as $N \to \infty$, to a $\delta$-function. Despite the formal similarity, it should be stressed that the physics described by the Gross-Pitaevskii Hamiltonian is completely different from the physics described by the mean field Hamiltonian (\ref{eq:ham-mf}). In a mean-field system, each particle typically interacts with all other particles through a very weak potential. The Gross-Pitaevskii Hamiltonian (\ref{eq:hamGP}), on the other hand, describes a very dilute gas, where interactions are very rare and at the same time very strong. Although the physics described by (\ref{eq:ham-mf}) and (\ref{eq:hamGP}) are completely different, due to the formal similarity of the two models, we may try to apply the strategy discussed in Section \ref{sec:3steps} to prove Theorem \ref{thm:mainGP}. In other words, we may try to prove Theorem \ref{thm:mainGP} by showing the compactness of the sequence $\Gamma_{N,t} = \{ \gamma^{(k)}_{N,t} \}_{k=1}^N$ with respect to an appropriate weak topology (it is going to be the same topology introduced in Section \ref{sec:3steps}), the convergence to an infinite hierarchy similar to (\ref{eq:infhier}), and the uniqueness of the solution to the infinite hierarchy. It turns out that it is indeed possible to extend the general strategy introduced in Section \ref{sec:3steps} to prove Theorem \ref{thm:mainGP}; however, as we will see, many important modifications of the arguments used for bounded or for Coulomb potential are required. We discuss next the main changes.

\medskip

First of all, the simple observation that, formally, $v_N (x) \to b_0 \delta(x)$ as $N \to \infty$ may lead to the conclusion that the evolution of the condensate wave function $\ph_t$ should be described by the nonlinear Hartree equation (\ref{eq:hartree}) with $V$ replaced by $b_0 \delta$, that is by the equation \begin{equation}\label{eq:wrong} i\partial_t \ph_t = -\Delta \ph_t + b_0 ( \delta * |\ph_t|^2) \ph_t  = -\Delta \ph_t + b_0 |\ph_t|^2 \ph_t \, . \end{equation}
Comparing with (\ref{eq:GPthm}), we note that (\ref{eq:wrong}) is characterized by a different coupling constant in front of the nonlinearity. The emergence of the scattering length in the Gross-Pitaevskii equation (\ref{eq:GPthm}) is the consequence of a subtle interplay between the $N$-dependent interaction potential and the short scale correlation structure developed by the solution of the $N$-particle Schr\"odinger equation $\psi_{N,t}$ (as we will see, the correlation structure varies on lengths of the order $1/N$, the same lengthscale characterizing the interaction potential). This remark implies that, in order to prove the convergence to the infinite hierarchy (where the coupling constant $8\pi a_0$ already appears), we will need to identify the singular correlation structure of $\psi_{N,t}$ (which is then inherited by the marginal densities $\gamma^{(k)}_{N,t}$). This is one of the main difficulties in the proof of the convergence, which was completely absent in the analysis of mean-field systems presented in Section \ref{sec:mf}; we will discuss it in more details in Section~\ref{sec:convGP} and in Section \ref{sec:largeV}.

\medskip

The presence of the correlation structure in $\psi_{N,t}$ also affects the proof of a-priori bounds
\begin{equation}\label{eq:apriGP0}
\tr \, (1-\Delta_1) \dots (1-\Delta_k) \gamma^{(k)}_{\infty,t} \leq C^k
\end{equation}
for all $k\geq 1$, $t \in\bR$, for the limit points $\Gamma_{\infty,t} = \{ \gamma^{(k)}_{\infty,t} \}_{k \geq 1}$ of the marginal densities $\Gamma_{N,t} = \{ \gamma^{(k)}_{N,t} \}_{k=1}^N$ associated with $\psi_{N,t}$. As in the case of a Coulomb potential discussed in Section \ref{sec:cou}, these a-priori bounds play a fundamental role because they allow us to restrict the proof of the uniqueness to a smaller class of densities. In Section \ref{sec:cou}, we derived the a-priori bounds for $\gamma_{\infty,t}^{(k)}$ making use of the estimates (\ref{eq:apricou}) which hold uniformly in $N$, for $N$ sufficiently large (in turns, the bounds (\ref{eq:apricou}) were obtained through the energy estimates of Proposition \ref{prop:enest}). In the present setting, however, bounds of the form \begin{equation}\label{eq:nottrue}
\tr \, (1-\Delta_1) \dots (1-\Delta_k) \gamma^{(k)}_{N,t} \leq C^k \end{equation} cannot hold uniformly in $N$,  because of the short scale correlation structure developed by the solution of the Schr\"odinger equation. Remember in fact that the short scale structure varies on the length scale $1/N$; therefore, when we take derivatives of $\gamma^{(k)}_{N,t}$ as in (\ref{eq:nottrue}), we cannot expect to obtain bounds uniform in $N$ (unless we take only one derivative, because of energy conservation). Although the marginals $\gamma^{(k)}_{N,t}$, for large but finite $N$, do not satisfy the strong estimates (\ref{eq:nottrue}), it turns out that one can still prove the a-priori bound (\ref{eq:apriGP0}) on the limit point $\gamma^{(k)}_{\infty,t}$. This is indeed possible because in the weak limit $N \to \infty$, the singular short scale correlation structure characterizing the marginal densities $\gamma^{(k)}_{N,t}$ disappears, producing limit points $\gamma^{(k)}_{\infty,t}$ which are much more regular than the densities $\gamma_{N,t}^{(k)}$. Because of the absence of estimates of the form (\ref{eq:nottrue}) for $\gamma^{(k)}_{N,t}$, the proof of the a-priori bounds for the limit points $\gamma^{(k)}_{\infty,t}$ requires completely new ideas with respect to what has been discussed in Section \ref{sec:cou}; we briefly discuss the most important ones in Section \ref{sec:apriGP}.

\medskip

Finally, the singularity of the interaction potential strongly affects the proof of the uniqueness of the solution to the infinite hierarchy. In Section \ref{sec:cou}, the main idea to prove the uniqueness of the infinite hierarchy was to expand the solution in a Duhamel series and to control all Coulomb potentials appearing in the expansion through Laplacians acting on appropriate variables and at the end to control the expectation of the Laplacians through the a-priori bounds (\ref{eq:apricou}) on the densities $\gamma^{(k)}_{\infty,t}$. In this argument, it was very important that the Coulomb potential can be controlled by the kinetic energy, in the sense of the operator inequality
\begin{equation}\label{eq:opine}
\frac{1}{|x|} \leq C (1 - \Delta ) \, .
\end{equation}
In the present setting, the Coulomb potential has to be replaced by a $\delta$-function. In three dimensions, the $\delta$-potential cannot be controlled by the kinetic energy. In other words, the bound \[ \delta (x) \leq C (1-\Delta)^{\alpha} \] is not true for $\alpha =1$; it only holds if $\alpha > 3/2$ (in three dimensions, the $L^{\infty}$ norm of a function can be controlled by the $H^{\alpha}$-norm, only if $\alpha >3/2$). This observation implies that the a-priori bounds (\ref{eq:apriGP0}) are not sufficient to conclude the proof of the uniqueness of the infinite hierarchy with delta-interaction (while similar bounds were enough to prove the uniqueness of the infinite hierarchy with Coulomb potential). Since it does not seem possible to improve the a-priori bounds to gain control of higher derivatives (one would need more than $3/2$ derivatives per particle), we need new techniques to prove the uniqueness of the infinite hierarchy. We will briefly discuss these new methods in Section~\ref{sec:uniqueGP}.

\subsection{Convergence to the Infinite Hierarchy}\label{sec:convGP}

The goal of this section is to discuss the main ideas used to prove the next proposition which identifies limit points of the sequence $\Gamma_{N,t} = \{ \gamma^{(k)}_{N,t} \}_{k=1}^N$ as solutions to a certain infinite hierarchy of equations (this proposition replaces Proposition \ref{prop:conv}, which was stated for mean-field systems with bounded interaction potential).
\begin{proposition}\label{prop:conv-GP}
Suppose that $V \geq 0$, with $V(x) \leq C \langle x \rangle^{-\sigma}$, for some $\sigma >5$, and for all $x \in \bR^3$. Assume that the sequence $\psi_N$ satisfies (\ref{eq:asscond}) and the additional assumption (\ref{eq:asskGP}). Fix $T >0$ and let $\Gamma_{\infty,t} = \{ \gamma^{(k)}_{\infty,t} \}_{k \geq 1} \in
\bigoplus_{k \geq 1} C([0,T] , \cL_k^1)$ be a limit point of $\Gamma_{N,t} =
\{ \gamma_{N,t}^{(k)} \}_{k =1}^N$ (with respect to the product topology $\tau_{\text{prod}}$ defined in Section \ref{sec:3steps}). Then $\Gamma_{\infty,t}$ is a solution to the infinite hierarchy
\begin{equation}\label{eq:infhier-GP}
\gamma^{(k)}_{\infty,t} = \cU^{(k)} (t) \gamma^{(k)}_{\infty,0} -
8 \pi a_0 i \sum_{j=1}^k \int_0^t \rd s \, \cU^{(k)} (t-s) \tr_{k+1}
\left[ \delta (x_j - x_{k+1}), \gamma_{\infty,s}^{(k+1)} \right]
\end{equation}
with initial data $\gamma_{\infty,0}^{(k)} = |\ph \rangle \langle \ph|^{\otimes k}$ (see (\ref{eq:free}) for the definition of $\cU^{(k)}$).
\end{proposition}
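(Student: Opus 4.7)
The plan is to mimic the three-step strategy of Section~\ref{sec:3steps}. Passing to a subsequence we may assume $\Gamma_{N,t}\to\Gamma_{\infty,t}$ in $\tau_{\text{prod}}$. Pairing the integrated BBGKY hierarchy (for the Hamiltonian written as in (\ref{eq:GP-mf})) with a compact $J^{(k)}\in\cK_k$ gives
\begin{equation*}
\tr J^{(k)}\gamma^{(k)}_{N,t} = \tr J^{(k)}\cU^{(k)}(t)\gamma^{(k)}_{N,0} + E^{(k)}_N(t) + \frac{N-k}{N}\sum_{j=1}^{k}\int_{0}^{t}\rd s\,\tr J^{(k)}\cU^{(k)}(t-s)\,\tr_{k+1}\bigl[v_N(x_j-x_{k+1}),\gamma^{(k+1)}_{N,s}\bigr],
\end{equation*}
with $v_N(x)=N^3V(Nx)$ and $E^{(k)}_N(t)$ the $1/N$ self-interaction piece over the first $k$ variables. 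The LHS and the initial-data term converge to their counterparts with $\gamma^{(k)}_{\infty,t}$ and $|\ph\rangle\langle\ph|^{\otimes k}$ directly from $\tau_{\text{prod}}$-convergence together with (\ref{eq:asscond}), after an $H^1$-cutoff approximating $\cU^{(k)}(s-t)J^{(k)}$ in operator norm by compact operators on $L^2(\bR^{3k})$. The self-interaction term vanishes as $N\to\infty$: by permutation symmetry and the energy bound $\langle\psi_{N,s},H_N\psi_{N,s}\rangle\le CN$, one has $\tr v_N(x_i-x_j)\gamma^{(k)}_{N,s}=\cO(1/N)$ for fixed $i<j\le k$, so the whole piece is $\cO(k^2/N)$.

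The substantive step is to identify the limit of the coupling term. The naive replacement $v_N\to b_0\delta$ with $b_0=\int V$ gives the wrong constant; the physical coupling $8\pi a_0$ in (\ref{eq:infhier-GP}) is produced by the short-scale two-body correlation $f_N(x_j-x_{k+1})$, with $f_N(x)=f(Nx)$ from (\ref{eq:0enN}), built into $\psi_{N,s}$, together with the identity $\int v_N f_N=8\pi a_0$ derived from (\ref{eq:8pia0}). To exploit this, I would introduce a smooth $\delta$-sequence $h_\alpha(x)=\alpha^{-3}h(x/\alpha)$ with $h\ge 0$, $\int h=1$, and, for fixed small $\alpha$ and $N\gg 1/\alpha$, compare the true kernel $v_N(x_j-x_{k+1})$ against $8\pi a_0\,h_\alpha(x_j-x_{k+1})$ acting on $\gamma^{(k+1)}_{N,s}$. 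On the support of $v_N$ the marginal factorizes to leading order as $\gamma^{(k+1)}_{N,s}\simeq f_N(x_j-x_{k+1})\,\omega^{(k+1)}_{N,s}$ with $\omega^{(k+1)}_{N,s}$ regular on scale $\alpha$, so the two kernels carry the same total mass when paired with the smooth factor coming from $\cU^{(k)}(s-t)J^{(k)}$ and from $\omega^{(k+1)}_{N,s}$; the difference is $\cO(\alpha)$ uniformly in $N$. One then first lets $N\to\infty$ at fixed $\alpha>0$, for which $h_\alpha$ is bounded and continuous so that $\tau_{\text{prod}}$-convergence (assisted by the one-derivative energy bound) passes to the limit in the smooth kernel, and finally $\alpha\to 0$, invoking the a-priori regularity of $\gamma^{(k+1)}_{\infty,s}$ (in the spirit of Corollary~\ref{cor:apriinfty}) to recover the $\delta$-interaction with coupling $8\pi a_0$.

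The main obstacle is the middle step: rigorously justifying the factorization $\gamma^{(k+1)}_{N,s}\simeq f_N(x_j-x_{k+1})\,\omega^{(k+1)}_{N,s}$ and controlling $\omega^{(k+1)}_{N,s}$ on scale $\alpha\gg 1/N$. No uniform bound of the form $\tr(1-\Delta_1)\cdots(1-\Delta_{k+1})\gamma^{(k+1)}_{N,s}\le C^{k+1}$ can hold in the GP scaling, because differentiating the marginal picks up the singular scattering profile (as emphasized in Section~\ref{sec:comp-mf}); consequently the Coulomb strategy of Section~\ref{sec:cou}, which relied on exactly such a bound, does not transfer directly. Instead one must formulate regularity estimates for the ``correlation-free'' object obtained after peeling off $f_N$, and prove them via high-moment energy estimates of the form $\langle\psi_{N,s},(H_N+CN)^k\psi_{N,s}\rangle\le C^k N^k$, which follow from the strengthened assumption (\ref{eq:asskGP}) together with a GP-scaled analogue of Proposition~\ref{prop:enest}. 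Combined with the compactness (proven separately, as in Proposition~\ref{prop:compactness}) and eventually with the uniqueness of solutions to (\ref{eq:infhier-GP}) in the appropriate a-priori class, this identification of the limit completes the proof of Proposition~\ref{prop:conv-GP}.
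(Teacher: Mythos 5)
Your outline does track the paper's own route for this proposition: test the integral BBGKY hierarchy against compact observables, discard the intra-cluster term, and identify the limit of the coupling term by peeling off the correlation factor $f_N(x_j-x_{k+1})$, inserting a mollifier $h_\alpha$, sending $N\to\infty$ at fixed $\alpha$ and only then $\alpha\to 0$; this is exactly the decomposition (\ref{eq:deco}). Two small slips along the way: the per-pair bound is $\tr\, v_N(x_i-x_j)\gamma^{(k)}_{N,s}=\cO(1)$, not $\cO(1/N)$ (the energy bound gives $\tr\, V_N(x_1-x_2)\gamma^{(2)}_{N,s}\le C/N$ and $v_N=NV_N$), which combined with the explicit $1/N$ prefactor still yields the claimed $\cO(k^2/N)$; and the step replacing $\gamma^{(2)}_{N,s}$ by $\gamma^{(2)}_{\infty,s}$ tested against $J^{(1)}_{s-t}h_\alpha(x_1-x_2)$ needs an additional approximation argument, since that operator is not compact and $\tau_{\text{prod}}$-convergence alone does not apply.

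The genuine gap is the step you yourself flag as the main obstacle: the proposal contains no proof of the $N$-uniform regularity of the correlation-free object, and the route you sketch would not deliver it. What the argument actually needs is the second-moment energy estimate of Proposition \ref{prop:enestGP}, $\langle\psi,H_N^2\psi\rangle\ge CN^2\int\rd\bx\,\bigl|\nabla_i\nabla_j\bigl(\psi(\bx)/f_N(x_i-x_j)\bigr)\bigr|^2$, which together with (\ref{eq:asskGP}) and energy conservation gives $\tr\,(1-\Delta_i)(1-\Delta_j)\,f_N^{-1}\gamma^{(k)}_{N,t}f_N^{-1}\le C$ uniformly in $N,t$ (Corollary \ref{cor:apri}); only against this regularized density can $N^3V(N\cdot)f(N\cdot)$ be compared with $8\pi a_0\delta$, via the Poincar\'e-type inequalities of Appendix \ref{app:poin}, making your ``difference is $\cO(\alpha)$ uniformly in $N$'' claim rigorous. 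Your proposed source for this regularity --- the upper bound (\ref{eq:asskGP}) plus ``a GP-scaled analogue of Proposition \ref{prop:enest}'' --- cannot work: (\ref{eq:asskGP}) is already an assumption, and Coulomb-type lower bounds in the style of Proposition \ref{prop:enest} control $(1-\Delta_1)\cdots(1-\Delta_k)$, precisely what you correctly note fails uniformly in the GP scaling. The estimate that works uses only $H_N^2$, conjugation by $f_N$, the zero-energy scattering equation, positivity of $V$ and Hardy's inequality, and it is restricted to small $\rho$; for general potentials the paper must switch to an entirely different device, conjugation by the wave operator $W_{N,(i,j)}$ (Proposition \ref{prop:energ2}), which only controls $\nabla_i\cdot\nabla_j$ and forces new Poincar\'e inequalities (Lemma \ref{lm:VL1}). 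Likewise, the regularity of the limit point invoked in your final $\alpha\to 0$ step is not the Coulomb Corollary \ref{cor:apriinfty} but Proposition \ref{prop:aprikGP}, whose proof requires the cutoff higher-order energy estimates of Proposition \ref{prop:highen}. Without these inputs the identification of the coupling constant $8\pi a_0$ --- the whole point of the proposition --- remains unproven.
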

The detailed proof of this proposition can be found in \cite[Theorem 8.1]{ESY5} (for small interaction potential, see also \cite[Theorem 7.1]{ESY3}).

\medskip

To prove the proposition, we start by studying the time-evolution of the marginal densities $\gamma^{(k)}_{N,t}$, which is governed by the BBGKY hierarchy. In integral form, the BBGKY hierarchy is given by
\begin{equation}\label{eq:BBGKY-int}
\begin{split}
\gamma^{(k)}_{N,t} = \; &\cU^{(k)} (t) \gamma^{(k)}_{N,0} - i \sum_{i<j}^k \int_0^t \rd s \, \cU^{(k)} (t-s) \, \left[ V_N (x_i -x_j) , \gamma^{(k)}_{N,s} \right] \\ &-i (N-k) \sum_{j=1}^k \int_0^t \rd s \; \cU^{(k)} (t-s) \, \tr_{k+1} \, \left[ V_N (x_j -x_{k+1}), \gamma^{(k+1)}_{N,s} \right]\,.
\end{split}
\end{equation}
Assuming (by passing to an appropriate subsequence) that $\Gamma_{N,t} \to \Gamma_{\infty,t}$ as $N \to \infty$ with respect to the product topology $\tau_{\text{prod}}$ introduced in Section \ref{sec:3steps}, it is simple to prove that the l.h.s. and the first term on the r.h.s. of (\ref{eq:BBGKY-int}) converge, as $N \to \infty$, to the l.h.s. and, respectively, to the first term on the r.h.s. of (\ref{eq:infhier-GP}). The second term on the r.h.s. of (\ref{eq:BBGKY-int}), on the other hand, can be proven to vanish in the limit $N \to \infty$ (at least formally, this follows by the observation that the second term is smaller by a factor of $N$ w.r.t. the third term). The fact that the second term on the r.h.s. of (\ref{eq:BBGKY-int}) is negligible in the limit $N \to \infty$ (compared with the third term) corresponds to the physical intuition that the interactions among the first $k$ particles affect their time-evolution less than their interaction with the other $(N-k)$ particles.

\medskip

To conclude the proof of Proposition \ref{prop:conv-GP}, we only need to show that the third term on the r.h.s. of (\ref{eq:BBGKY-int}) converges, as $N \to \infty$, to the last term on the r.h.s. of (\ref{eq:infhier-GP}). As already remarked in Section \ref{sec:comp-mf}, this convergence relies critically on the correlation structure characterizing the $(k+1)$-particle density $\gamma^{(k+1)}_{N,t}$. A naive approach, based on the observation that $(N-k) V_N (x_j -x_{k+1}) \simeq N^3 V (N (x_j - x_{k+1})) \simeq b_0 \delta (x_j -x_{k+1})$ for large $N$, fails to explain the coupling constant in front of the last term on the r.h.s. of (\ref{eq:infhier-GP}). The emergence of the scattering length can only be understood by taking into account the correlation structure of $\gamma^{(k+1)}_{N,t}$. Assuming for a moment that the correlations can be described, in good approximation, by the solution $f_N$ to the zero-energy scattering equation (\ref{eq:0enN}), we can expect that, for large $N$, \begin{equation}\label{eq:stru} \gamma^{(k+1)}_{N,t} (\bx_{k+1}; \bx'_{k+1}) \simeq f_N (x_j -x_{k+1}) \gamma^{(k+1)}_{\infty,t} (\bx_{k+1};\bx'_{k+1}) \end{equation} in the region where $x_j -x_{k+1}$ is of the order $1/N$ (and all other variables are at larger distances). Assuming some regularity of the limit point $\gamma^{(k+1)}_{\infty,t}$, and using (\ref{eq:8pia0}), the approximation (\ref{eq:stru}) immediately leads to
\begin{equation}\label{eq:stru2}
\begin{split}
\Big( \tr_{k+1} (N-k) V_N &(x_j - x_{k+1}) \gamma^{(k+1)}_{N,t} \Big) (\bx_k ; \bx'_k)  \\ \simeq \; & \int \rd x_{k+1} \, N^3 V (N (x_j -x_{k+1})) f (N (x_j -x_{k+1})) \gamma^{(k+1)}_{\infty,t} (\bx_{k}, x_{k+1} ; \bx'_{k}, x_{k+1} ) \\ = \; &\int \rd y \, V (y) f (y) \gamma^{(k+1)}_{\infty,t} \left(\bx_{k}, x_j + \frac{y}{N} ; \bx'_{k}, x_j + \frac{y}{N} \right) \\ \simeq \; &\left( \int \rd y \, V (y) f (y) \right) \gamma^{(k+1)}_{\infty,t} (\bx_k, x_j ; \bx'_k, x_j) \\ =\; & 8\pi a_0 \int \rd x_{k+1} \, \delta (x_j -x_{k+1})  \gamma^{(k+1)}_{\infty,t} (\bx_k, x_{k+1} ; \bx'_k, x_{k+1})
\end{split}
\end{equation}
and thus explains the emergence of the scattering length on the r.h.s. of (\ref{eq:infhier-GP}) (note that the third term on r.h.s. of (\ref{eq:BBGKY-int}) is a commutator and thus produces two summands; in (\ref{eq:stru2}) we only consider one of these terms, the other can be handled analogously). This heuristic argument shows that in order to prove Proposition \ref{prop:conv-GP} we need to identify the short scale structure of the marginal densities and prove that it can be described by the function $f_N$ as in (\ref{eq:stru}). To this end we are going to use energy estimates. In \cite{ESY3} and \cite{ESY5}, we developed two different approaches to this problem. The first approach is simpler, but it only works for sufficiently small interaction potentials. The second approach is a little bit more involved, but it can be used for all potentials satisfying the assumptions of Theorem \ref{thm:mainGP}. In the following we will focus on the first, simpler, approach; in the next subsection, we present the main ideas of the second approach.

\medskip

To measure the strength of the interaction potential $V$, we define the dimensionless constant
\begin{equation}\label{eq:rho} \rho = \sup_{x \in \bR^3} |x|^2 V (x) + \int \frac{\rd x}{|x|} V(x). \end{equation}
\begin{proposition}\label{prop:enestGP}
Assume that the potential $V$ satisfies the conditions of Theorem \ref{thm:mainGP}, and suppose that $\rho>0$ is sufficiently small. Then there exists $C > 0$ such that
\begin{equation}\label{eq:enesti}
\langle \psi, H_N^2 \psi \rangle \geq C N^2 \, \int \rd \bx \;
\left| \nabla_{i} \nabla_{j} \frac{\psi (\bx)}{f_N (x_i -x_j)}
\right|^2 \,
\end{equation}
for all $i \neq j$ and for all $\psi \in L^2_s (\bR^{3N}, \rd \bx)$.
\end{proposition}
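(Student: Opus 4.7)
The plan is to use the zero-energy scattering equation to conjugate away the singular pair interaction at $(i,j)$, and then extract the desired Dirichlet-form lower bound from $\|H_N\psi\|^2$. By permutation symmetry of $\psi\in L^2_s(\bR^{3N})$, it suffices to fix $(i,j)=(1,2)$. Set $W(\bx) = f_N(x_1-x_2)$ and $\phi = \psi/W$. The scattering equation \eqref{eq:0enN} gives $\Delta_k W = \tfrac12 V_N(x_1-x_2)W$ for $k=1,2$, and a short Leibniz computation yields the key identity
\begin{equation*}
H_N \psi \;=\; W\cdot L\phi \;-\; 2(\nabla W)\cdot(\nabla_1-\nabla_2)\phi,
\end{equation*}
where $L := \sum_{k=1}^N (-\Delta_k) + \sum_{(k,\ell)\neq(1,2)} V_N(x_k-x_\ell)$ is the full Hamiltonian with only the pair interaction $V_N(x_1-x_2)$ removed. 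The singular contribution of $H_N$ has been swallowed by the scattering factor and replaced by the smooth first-order remainder $2(\nabla f_N)\cdot(\nabla_1-\nabla_2)\phi$, whose coefficient has a small $L^2$-norm on $\bR^3$.

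Squaring and applying Young's inequality, $\|H_N\psi\|^2 \ge \tfrac12 \int W^2 |L\phi|^2 - C \|(\nabla W)\cdot(\nabla_1-\nabla_2)\phi\|^2$, and the plan is to extract the quantity $C N^2 \int W^2 |\nabla_1\nabla_2 \phi|^2$ from $\int W^2|L\phi|^2$. Writing $L = \sum_k(-\Delta_k) + (\text{rest})$ and expanding the square, the dominant contribution comes from products of ``outer'' kinetic operators: the diagonal piece $\sum_{k,\ell\geq 3}\int W^2 (-\Delta_k\phi)\overline{(-\Delta_\ell\phi)}$ equals, by integration by parts in the variables $x_k, x_\ell$ (which commute with $W$), $\sum_{k,\ell\geq 3}\int W^2 |\nabla_k \nabla_\ell \phi|^2$, containing $(N-2)(N-3) \sim N^2$ symmetric copies. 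Using the permutation symmetry of $\psi$ to re-label a representative outer pair into $(1,2)$, together with the fact that $W\to 1$ uniformly away from a set of measure $O(N^{-3})$, one transfers this bound into the desired $\int |\nabla_1\nabla_2 \phi|^2$-quantity; since $f_N \geq 1 - C\rho$ is uniformly bounded below for $\rho$ small, the weight $W^2$ may be dropped at the cost of a constant, completing the main-term analysis.

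The main obstacle, and the source of the smallness assumption on $\rho$, is controlling the many error terms generated by this expansion: cross products of outer kinetic operators with ``other'' interactions $V_N(x_k-x_\ell)$ with $\{k,\ell\}\cap\{1,2\}\neq\emptyset$, quadratic products of such interactions, commutator errors from integrating by parts across the weight $W$ in the variables $x_1,x_2$, and the first-order remainder $(\nabla f_N)\cdot(\nabla_1-\nabla_2)\phi$, which is pointwise of size $\sim N^2$ but supported on the thin shell $|x_1-x_2|\lesssim N^{-1}$. These errors are only marginally smaller than the target quantity and cannot be controlled by dimensional analysis alone. The weighted Hardy- and Sobolev-type inequalities encoded in the definition \eqref{eq:rho}, namely $\sup_x|x|^2V(x)\leq\rho$ and $\int|x|^{-1}V\leq\rho$, together with the rescaling $V_N(x)=N^2V(Nx)$, yield operator estimates such as $V_N(x_k-x_\ell)\leq C\rho(1-\Delta_k)(1-\Delta_\ell)/N$ and $|\nabla f_N|^2\leq C\rho(-\Delta)$ under which every error term carries an explicit factor of $\rho$. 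Choosing $\rho$ sufficiently small then absorbs all errors into a small fraction of the positive main term, closing the bound; organising this absorption while preserving the $N^2$-scaling is the technical heart of the proof.
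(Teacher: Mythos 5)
Your conjugation identity is fine and is the same device the paper uses (via the zero-energy equation (\ref{eq:0enN})), and your treatment of the $(\nabla f_N)\cdot(\nabla_1-\nabla_2)\phi$ remainder by Hardy's inequality plus smallness of $\rho$ matches the paper's handling of its one error term. But the main-term extraction is wrong, and this is not a repairable technicality. The terms you designate as dominant, the outer-outer kinetic products with $k,\ell\geq 3$, give $\int W^2|\nabla_k\nabla_\ell\phi|^2=\int|\nabla_k\nabla_\ell\psi|^2$ (since $W=f_N(x_1-x_2)$ is independent of $x_k,x_\ell$), which by symmetry of $\psi$ equals $\int|\nabla_1\nabla_2\psi|^2$ -- \emph{not} the quantity $\int|\nabla_1\nabla_2(\psi/f_N(x_1-x_2))|^2$ appearing in (\ref{eq:enesti}). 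The relabeling step fails because $\phi=\psi/f_N(x_1-x_2)$ is not permutation symmetric: permuting an outer pair into $(1,2)$ moves the derivatives but not the regularizing factor, and the two quantities differ essentially (for correlated $\psi$ one has $\int|\nabla_1\nabla_2\psi|^2\sim N$ while $\int|\nabla_1\nabla_2\phi|^2\sim 1$, which is exactly the point of the remark after Corollary \ref{cor:apri}). Worse, your claimed structure is quantitatively impossible: for $\psi$ with the pair correlation structure and $\langle\psi,H_N^2\psi\rangle\leq CN^2$, your "main term" $\sum_{k\neq\ell\geq3}\int|\nabla_k\nabla_\ell\psi|^2\sim N^3$ exceeds $\|H_N\psi\|^2$ itself, so the remaining cross terms in the expansion of $\int W^2|L\phi|^2$ cannot all be errors carrying a factor $\rho$ to be absorbed -- they must contain negative contributions of order $N^3$. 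In addition, the desired mixed derivative $\nabla_1\nabla_2\phi$ only arises from the single $(-\Delta_1)(-\Delta_2)$ cross term in your expansion, so this route could at best produce the bound with prefactor $O(1)$ rather than $N^2$.

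The paper gets both the $N^2$ and the correct derivative differently: write $H_N=\sum_j h_j$ with $h_j=-\Delta_j+\frac12\sum_{i\neq j}V_N(x_i-x_j)$, use permutation symmetry to get $\langle\psi,H_N^2\psi\rangle\geq N(N-1)\langle\psi,h_ih_j\psi\rangle$ for the fixed pair $(i,j)$, drop by positivity every interaction except $V_N(x_i-x_j)$, and conjugate $-\Delta_i+\frac12V_N(x_i-x_j)$ and $-\Delta_j+\frac12V_N(x_i-x_j)$ by $f_N(x_i-x_j)$ to obtain the operators $L_i,L_j$, symmetric with respect to the weight $f_N^2$. A single such cross term then yields $\int f_N^2|\nabla_i\nabla_j\phi|^2$ plus one commutator term, controlled via $|\nabla^2 f_N(x)|\leq C\rho/|x|^2$ (see (\ref{eq:pro3})) and Hardy, absorbed for $\rho$ small; the factor $N^2$ comes from the $N(N-1)$ identical pair terms, not from expanding products over outer particles. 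You need this architecture -- the $N^2$ from the pair count and the mixed derivative from the inner-pair cross term -- for the proof to close.
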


This energy estimate, combined with the assumption (\ref{eq:asskGP}) on the initial wave function $\psi_N$, leads to the following a-priori bounds on the solution $\psi_{N,t}=e^{-iH_N t} \psi_N$ of the Schr\"odinger equation (\ref{eq:schrGP}).
\begin{corollary}\label{cor:apri}
Assume that $V$ satisfies the conditions of Theorem \ref{thm:mainGP}, and suppose that $\rho>0$ is sufficiently small. Suppose that $\psi_N$ satisfies (\ref{eq:assH1}) and (\ref{eq:asskGP}). Then we have
\begin{equation}\label{eq:aprior}
\int \rd \bx \, \left| \nabla_{i} \nabla_{j} \frac{\psi_{N,t}
(\bx)}{f_N (x_i -x_j)} \right|^2 \leq C
\end{equation}
for all $i \neq j$, uniformly in $N\in \bN$ and in $t \in \bR$. Therefore, if $\gamma^{(k)}_{N,t}$ denotes the $k$-particle marginal associated with $\psi_{N,t}$, we have, for every $1\leq i,j \leq k$ with $i \neq j$,
\[ \tr \; (1-\Delta_i) (1-\Delta_j) \frac{1}{f_N (x_i -x_j)} \gamma^{(k)}_{N,t} \frac{1}{f_N (x_i -x_j)} \leq C  \] uniformly in $N \in \bN$ and in $t \in \bR$.
\end{corollary}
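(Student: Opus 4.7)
My plan is to combine the energy estimate of Proposition \ref{prop:enestGP} with conservation of energy, and then to rewrite the resulting integral bound as the desired trace bound by exploiting the relation between the wave function and its marginal density.

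For the first statement, I would simply apply Proposition \ref{prop:enestGP} to $\psi = \psi_{N,t} \in L^2_s(\bR^{3N})$ (which is permutation symmetric since $\psi_N$ is symmetric and $H_N$ commutes with permutations). This gives
\[
\int \rd \bx \, \left| \nabla_i \nabla_j \frac{\psi_{N,t}(\bx)}{f_N(x_i - x_j)} \right|^2 \leq \frac{1}{C N^2} \, \langle \psi_{N,t}, H_N^2 \psi_{N,t} \rangle \, .
\]
Since $e^{-i H_N t}$ commutes with $H_N^2$, the expectation $\langle \psi_{N,t}, H_N^2 \psi_{N,t} \rangle = \langle \psi_N, H_N^2 \psi_N \rangle$ is preserved in time, and by the assumption (\ref{eq:asskGP}) applied with $k=2$ it is bounded by $C N^2$. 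This yields (\ref{eq:aprior}) uniformly in $N$ and $t$.

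For the trace bound, fix $i \neq j$ in $\{1,\dots,k\}$; by permutation symmetry we may take $i=1$, $j=2$. Since $f_N(x_1-x_2)$ depends only on coordinates of the first $k$ particles, the definition of the partial trace and the cyclicity of the trace give
\[
\tr \, (1-\Delta_1)(1-\Delta_2) \frac{1}{f_N} \gamma^{(k)}_{N,t} \frac{1}{f_N} = \int \rd \bx \, \left| (1-\Delta_1)^{1/2} (1-\Delta_2)^{1/2} \frac{\psi_{N,t}(\bx)}{f_N(x_1-x_2)} \right|^2 \, .
\]
Expanding the right-hand side by integration by parts produces a sum of four non-negative terms: $\int |\psi_{N,t}/f_N|^2$, $\int |\nabla_1(\psi_{N,t}/f_N)|^2$, $\int |\nabla_2(\psi_{N,t}/f_N)|^2$, and $\int |\nabla_1 \nabla_2(\psi_{N,t}/f_N)|^2$. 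The fourth term is controlled by (\ref{eq:aprior}). For the first term, the fact that $V \geq 0$ implies that the zero-energy scattering solution $f$ is monotone non-decreasing with $0 < f(0) \leq f \leq 1$, so $f_N \geq f(0) > 0$ uniformly in $N$ and $\int |\psi_{N,t}/f_N|^2 \leq f(0)^{-2}$.

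The main thing to check carefully is the control of the two first-order terms. I would expand, e.g., $\nabla_1(\psi_{N,t}/f_N) = (\nabla_1 \psi_{N,t})/f_N - \psi_{N,t} (\nabla_1 f_N)/f_N^2$ and bound the two resulting contributions using (i) the uniform lower bound $f_N \geq f(0) > 0$, (ii) the standard first-order energy bound $\int |\nabla_1 \psi_{N,t}|^2 \leq \langle \psi_{N,t}, H_N \psi_{N,t}\rangle / N \leq C$ (which follows from assumption (\ref{eq:assH1}) together with $V \geq 0$ and energy conservation), and (iii) uniform control of $\nabla f_N$, using the zero-energy scattering equation $-\Delta f_N + (V_N/2) f_N = 0$ and the assumption $V(x) \leq C \langle x \rangle^{-\sigma}$ on the potential. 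Alternatively, one can invoke a first-order analog of Proposition \ref{prop:enestGP}, which follows from the same arguments but is simpler. Either route gives a bound uniform in $N$ and $t$, completing the proof. The main obstacle is the bookkeeping in this last step, in particular ensuring that the $N$-dependence of $\nabla_1 f_N$ (which has size $N$ on the scale $1/N$) is compensated by the restriction of the support to a region of volume $N^{-3}$, so that the contribution is indeed $O(1)$ uniformly in $N$.
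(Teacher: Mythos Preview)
Your argument for the integral bound (\ref{eq:aprior}) is exactly the paper's proof: apply Proposition \ref{prop:enestGP} to $\psi_{N,t}$, use conservation of $\langle H_N^2 \rangle$, and invoke (\ref{eq:asskGP}) with $k=2$. The paper in fact stops there and leaves the trace reformulation implicit, so your additional discussion of the four terms in $\langle \phi, (1-\Delta_1)(1-\Delta_2)\phi\rangle$ with $\phi=\psi_{N,t}/f_N$ goes beyond what the paper writes out; your reasoning there is sound, though the cleanest way to handle the first-order terms is your alternative (b): from the identity $\langle \psi, (-\Delta_i + \tfrac12 V_N(x_i-x_j))\psi\rangle = \int f_N^2 |\nabla_i \phi|^2$ (which appears in the proof of Proposition \ref{prop:enestGP}) together with $H_N \geq N h_i$ and (\ref{eq:assH1}), one gets $\int |\nabla_i \phi|^2 \leq C$ directly, without needing the volume-counting heuristic for $\nabla f_N$ (indeed the relevant pointwise bound is $|\nabla f_N(x)| \leq C\rho/|x|$ from (\ref{eq:pro3}), which is not compactly supported and is handled via Hardy rather than a support argument).
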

\begin{proof}
Using (\ref{eq:enesti}), the conservation of the energy along the time evolution, and the assumption (\ref{eq:asskGP}) on the initial wave function $\psi_N$, we find
\begin{equation*} \int \rd \bx \, \left| \nabla_{i} \nabla_{j}
\frac{\psi_{N,t} (\bx)}{f_N (x_i -x_j)} \right|^2 \leq C N^{-2}
\langle \psi_{N,t}, H_N^2 \psi_{N,t} \rangle = C N^{-2} \langle
\psi_{N}, H_N^2 \psi_{N} \rangle \, \leq C .\end{equation*}
\end{proof}

Remark that the a-priori bounds (\ref{eq:aprior}) cannot hold true
if we do not divide the solution $\psi_{N,t}$ of the Schr\"odinger
equation by $f_N (x_i -x_j)$. In fact, using that $f_N (x) \simeq 1-a_0/(N|x| + 1)$, it is simple to check that
\[ \int \rd x \, |\nabla^2 f_N (x)|^2 \simeq N \, .\]
This implies that, if we replace $\psi_{N,t} (\bx)/ f_N (x_i
-x_j)$ by $\psi_N (\bx)$ the integral in (\ref{eq:aprior}) would be
of order $N$. Only after removing the singular factor $f_N
(x_i -x_j)$ from $\psi_{N,t} (\bx)$ we can obtain useful bounds on the regular part of the wave function (regular in the variable $(x_i-x_j)$). These a-priori bounds allow us to identify the correlation structure of the wave function $\psi_{N,t}$ and to show that, when $x_i$ and $x_j$ are close to each other,
$\psi_{N,t} (\bx)$ can be approximated by the time independent
correlation factor $f_N (x_i -x_j)$, which varies on the length scale
$1/N$, multiplied with a regular part (which only varies on scales of order one). In other words, the bounds (\ref{eq:aprior}) establish a strong separation of scales for the solution $\psi_{N,t}$ of the $N$-particle Schr\"odinger equation, and for its marginal densities; on length scales of order $1/N$, $\psi_{N,t}$ is characterized by a singular, time independent, short scale correlation structure described by the  the solution $f_N$ to the zero-energy scattering equation. On scales of order one, on the other hand, the wave function $\psi_{N,t}$ is regular, and, as it follows from Theorem \ref{thm:mainGP}, it can be approximated, in an appropriate sense, by products of the solution to the time-dependent Gross-Pitaevskii equation. Remark that although the short-scale correlation structure is time independent, it still affects, in a non-trivial way, the time-evolution on length scales of order one (because it produces the scattering length in the Gross-Pitaevskii equation).

\medskip

\begin{proof}[Proof of Proposition \ref{prop:enestGP}]
We decompose the Hamiltonian (\ref{eq:hamGP}) as
\[ H_N = \sum_{j=1}^N \, h_j \qquad \text{with} \qquad h_j = -\Delta_j
+ \frac{1}{2} \sum_{i \neq j} \, V_N (x_i - x_j)\,. \] For an
arbitrary permutation symmetric wave function $\psi$ and for any
fixed $i \neq j$, we have
\[ \langle \psi, H^2_N \psi \rangle = N \langle \psi, h_i^2
\psi \rangle + N (N-1) \langle \psi, h_i h_j \psi \rangle \geq N
(N-1) \langle \psi, h_i h_j \psi \rangle \,. \] Using the positivity
of the potential, we find
\begin{equation}\label{eq:pro1}\langle \psi, H^2_N \psi \rangle \geq N
(N-1) \left\langle \psi, \left(-\Delta_i + \frac{1}{2} V_N (x_i
-x_j) \right) \left(-\Delta_j + \frac{1}{2} V_N (x_i -x_j) \right)
\psi \right\rangle \,. \end{equation} Next, we define $\phi (\bx)$
by $\psi (\bx) = f_N (x_i -x_j) \, \phi (\bx)$ ($\phi$ is well
defined because $f_N (x) >0$ for all $x \in \bR^3$); note that the
definition of the function $\phi$ depends on the choice of $i,j$.
Then
\[ \frac{1}{f_N (x_i -x_j)} \Delta_i \, \left( f_N (x_i-x_j)
\phi (\bx) \right) = \Delta_i \phi (\bx) + \frac{(\Delta f_N )(x_i
-x_j)}{f_N (x_i -x_j)} \phi (\bx) + \frac{\nabla f_N (x_i -x_j)}{f_N
(x_i -x_j)} \nabla_i \phi (\bx) \, . \] {F}rom (\ref{eq:0en}) it
follows that
\[ \frac{1}{f_N (x_i -x_j)} \left( -\Delta_i + \frac{1}{2} V_N (x_i
-x_j) \right) f_N (x_i -x_j) \phi (\bx) = L_i \phi (\bx)
\] and analogously
\[
\frac{1}{f_N (x_i -x_j)} \left(-\Delta_j + \frac{1}{2} V_N (x_i
-x_j) \right) f_N (x_i -x_j) \phi (\bx) = L_j \phi (\bx)
\]
where we defined
\[ L_{\ell}  = -\Delta_{\ell} + 2 \frac{\nabla_{\ell} \, f_N (x_i
-x_j)}{f_N (x_i -x_j)} \, \nabla_{\ell}, \qquad \text{for } \quad
\ell=i,j \,.
\] Remark that, for $\ell =i,j$, the operator $L_{\ell}$ satisfies
\[ \int \rd \bx \, f_N^2 (x_i -x_j) \; L_{\ell} \, \overline{\phi} (\bx) \;
\psi (\bx) =\int \rd \bx \, f_N^2 (x_i -x_j) \; \overline{\phi}
(\bx) \; L_{\ell} \, \psi (\bx) = \int \rd \bx \, f_N^2 (x_i -x_j)
\; \nabla_{\ell} \, \overline{\phi} (\bx) \; \nabla_{\ell} \, \psi
(\bx) \, .\] Therefore, from (\ref{eq:pro1}), we obtain
\begin{equation}\label{eq:pro2}
\begin{split}
\langle \psi, H^2_N \psi \rangle \geq \; &N (N-1) \int \rd \bx \;
f^2_N (x_i -x_j) \; L_i\, \overline{\phi} (\bx)\, L_j\, \phi (\bx)
\\ = \;&N (N-1) \int \rd \bx \; f^2_N (x_i -x_j) \; \nabla_i
\overline{\phi} (\bx)\, \nabla_i L_j \, \phi (\bx)
\\= \;&N (N-1) \int \rd \bx \; f^2_N (x_i -x_j) \;
\nabla_i \overline{\phi} (\bx)\, L_j \, \nabla_i \phi (\bx)
\\ &+N (N-1) \int \rd \bx \; f^2_N (x_i -x_j) \;
\nabla_i \overline{\phi} (\bx)\, [\nabla_i, L_j] \phi (\bx)
\\= \;&N (N-1) \int \rd \bx \; f^2_N (x_i -x_j) \; \left| \nabla_j
\nabla_i \phi (\bx) \right|^2
\\ &+N (N-1) \int \rd \bx \; f^2_N (x_i -x_j) \; \left( \nabla_i
\frac{\nabla f_N (x_i -x_j)}{f_N (x_i -x_j)} \right) \;  \nabla_i
\overline{\phi} (\bx)\, \nabla_j \phi (\bx)\,.
\end{split}
\end{equation}
To control the second term on the right hand side of the last
equation we use bounds on the function $f_N$, which can be derived
from the zero energy scattering equation (\ref{eq:0en}):
\begin{equation}\label{eq:pro3} 1 - C \rho \leq f_N (x) \leq 1, \quad |\nabla f_N
(x)| \leq C \frac{ \rho }{|x|}, \quad |\nabla^2 f_N (x) | \leq C
\frac{\rho}{|x|^2} \end{equation} for constants $C$ independent of
$N$ and of the potential $V$ (recall the definition of the
dimensionless constant $\rho$ from (\ref{eq:rho})). Therefore,
for $\rho <1$,
\begin{equation*}
\begin{split}
\Big| \int \rd \bx \; f^2_N (x_i -x_j) \; &\left( \nabla_i
\frac{\nabla f_N (x_i -x_j)}{f_N (x_i -x_j)} \right) \;  \nabla_i
\overline{\phi} (\bx)\, \nabla_j \phi (\bx) \Big| \\
\leq \; & C \rho \int \rd \bx \; \frac{1}{|x_i -x_j|^2} \,
|\nabla_i \phi (\bx)| \, |\nabla_j \phi (\bx)| \\ \leq \; & C \rho
\int \rd \bx \; \frac{1}{|x_i -x_j|^2} \, \left( |\nabla_i \phi
(\bx)|^2 + |\nabla_j \phi (\bx)|^2 \right)
\\ \leq \; &C \rho \int \rd \bx \; |\nabla_i \nabla_j \phi
(\bx)|^2
\end{split}
\end{equation*}
where we used Hardy inequality. Thus, from (\ref{eq:pro2}), and
using again the first bound in (\ref{eq:pro3}), we obtain
\[\langle \psi, H^2_N \psi \rangle \geq \; N (N-1) (1- C \rho)
\int \rd \bx \left| \nabla_i \nabla_j \phi (\bx) \right|^2 \] which
implies (\ref{eq:enesti}).
\end{proof}

\bigskip

Equipped with the a-priori bounds of Corollary \ref{cor:apri}, we can now come back to the problem of proving the convergence of the last term on the r.h.s. of (\ref{eq:BBGKY-int}) to the last term on the r.h.s. of (\ref{eq:conv}). For simplicity, we consider the case $k=1$, and we only discuss the term with the interaction potential on the left of the density (the commutator also has a term with the interaction on the right of the density, which can be handled analogously). After multiplying with a smooth one-particle observable $J^{(1)}$ (a compact operator on $L^2 (\bR^3)$, with sufficiently smooth kernel), we need to prove that
\begin{equation*}
\tr \;  \left(\cU^{(1)} (s-t) J^{(1)} \right) \left( N^3 V (N(x_1 -x_2)) \gamma^{(2)}_{N,t} - 8\pi a_0 \delta (x_1 -x_2) \gamma^{(2)}_{\infty,t} \right) \to 0 \end{equation*}
as $N \to \infty$. To this end we decompose the difference in several terms. We use the notation $J^{(1)}_t = \cU^{(1)} (t) J^{(1)}$, and, for a bounded function $h(x) \geq 0$ with $\int \rd x \, h (x)  =1$, we define $h_{\alpha} (x) = \alpha^{-3} h (\alpha^{-1} x)$ for all $\alpha >0$. Then we have
\begin{equation}\label{eq:deco}
\begin{split}
\tr &\;\left(\cU^{(1)} (s-t) J^{(1)} \right) \left( N^3 V (N(x_1 -x_2)) \gamma^{(2)}_{N,t} - 8\pi a_0 \delta (x_1 -x_2) \gamma^{(2)}_{\infty,t} \right) \\
=\; & \tr \; J^{(1)}_{s-t} \,  N^3 V (N(x_1 -x_2)) f (N(x_1 -x_2)) \frac{1}{f(N(x_1 -x_2))} \gamma^{(2)}_{N,t}\frac{1}{f(N(x_1-x_2))} (f (N (x_1 -x_2)) - 1)  \\ &+
\tr \; J^{(1)}_{s-t}  \left(N^3 V (N(x_1 -x_2)) f (N(x_1 -x_2)) -8\pi a_0 \delta (x_1 -x_2) \right) \frac{1}{f(N(x_1 -x_2))} \gamma^{(2)}_{N,t}\frac{1}{f(N(x_1-x_2))} \\
&+ 8\pi a_0 \, \tr \; J^{(1)}_{s-t}  \left(\delta (x_1 -x_2) - h_{\alpha} (x_1 -x_2) \right) \frac{1}{f(N(x_1 -x_2))} \gamma^{(2)}_{N,t}\frac{1}{f(N(x_1-x_2))} \\
&+ 8\pi a_0 \, \tr \; J^{(1)}_{s-t} h_{\alpha} (x_1 -x_2) \left( \frac{1}{f(N(x_1 -x_2))} \gamma^{(2)}_{N,t}\frac{1}{f(N(x_1-x_2))} - \gamma^{(2)}_{N,t} \right) \\
&+ 8\pi a_0 \, \tr \; J^{(1)}_{s-t} h_{\alpha} (x_1 -x_2) \left(\gamma^{(2)}_{N,t} - \gamma^{(2)}_{\infty,t} \right) \\
&+ 8\pi a_0 \, \tr \; J^{(1)}_{s-t} \left( h_{\alpha} (x_1 -x_2) -\delta (x_1 -x_2) \right) \gamma^{(2)}_{\infty,t}\,.
\end{split}
\end{equation}
The idea here is that in order to compare the $N$-dependent potential $N^3 V(N(x_1-x_2))$ with the limiting $\delta$-potential, we have to test it against a regular density (using an appropriate Poincar{\'e} inequality). For this reason, we first regularize the density $\gamma^{(2)}_{N,t}$ in the variable $(x_1 -x_2)$ dividing it by the correlation function $f_N (x_1-x_2)$ on the left and the right (first term on the r.h.s. of the last equation). Using the regularity of $f_N^{-1} (x_1 -x_2) \gamma^{(2)}_{N,t} f_N^{-1} (x_1-x_2)$ from Corollary \ref{cor:apri}, we can then compare, in the regime of large $N$, the interaction potential with the delta-function (second term on the r.h.s.). At this point we are still not done, because, in order to remove the regularizing factors $f_N^{-1} (x_1 -x_2)$ (fourth term on the r.h.s. of (\ref{eq:deco})) and in order to replace the density $\gamma^{(2)}_{N,t}$ by its limit point $\gamma^{(2)}_{\infty,t}$ (fifth term on the r.h.s. of (\ref{eq:deco})), we need to test the density against a compact observable. For this reason, in the third term on the r.h.s. of (\ref{eq:deco}), we replace the $\delta$-function (which is of course not bounded) by the function $h_{\alpha}$ which approximate the delta-function on the length scale $\alpha$; it is important here that $\alpha$ is now decoupled from $N$. In the last term, after removing all the $N$ dependence, we go back to the $\delta$-potential using the regularity of the limiting density $\gamma^{(2)}_{\infty,t}$.

\medskip

To control the first and fourth term on the r.h.s. of (\ref{eq:deco}), we use the fact that $1- f_N (x_1 -x_2)\simeq 1/(N|x_1 -x_2|+1)$ varies on a length scale of order $1/N$. It follows that the first term converges to zero as $N \to\infty$, as well as the fourth term, for every fixed $\alpha >0$. To estimate the second, the third and the last term, we make use of appropriate Poincar{\'e} inequalities, combined with the result of Corollary \ref{cor:apri} and, for the last term, of Proposition \ref{prop:aprikGP} (we present an example of a Poincar{\'e} inequality, which can be used to estimate these terms in Appendix \ref{app:poin}). It follows that the second term converges to zero as $N \to \infty$, and that the third and the fifth terms converge to zero as $\alpha \to 0$, uniformly in $N$. Finally, the fifth term on the r.h.s. of (\ref{eq:deco}) converges to zero as $N \to \infty$, for every fixed $\alpha$; this follows from the assumption that $\gamma^{(2)}_{N,t} \to \gamma^{(2)}_{\infty,t}$ as $N \to \infty$ with respect to the weak* topology (some additional work has to be done here, because the operator $J^{(1)}_{s-t} h_{\alpha} (x_1-x_2)$ is not compact). Therefore, if we first fix $\alpha >0$ and let $N \to \infty$ and then we let $\alpha \to 0$ all terms on the r.h.s. of (\ref{eq:deco}) converge to zero; this concludes the proof of Proposition \ref{prop:conv-GP}.

\subsection{Convergence for Large Interaction Potentials}
\label{sec:largeV}

As pointed out in Section \ref{sec:convGP}, the energy estimate given in Proposition \ref{prop:enestGP}, which was a crucial ingredient for the proof of Proposition   \ref{prop:conv-GP}, only holds for sufficiently small potentials (for sufficiently small values of the parameter $\rho$ defined in (\ref{eq:rho})). For large potentials, we need a different approach. The new technique, developed in \cite{ESY5}, is based on the use of the wave operator associated with the one-particle Hamiltonian $\fh_N = -\Delta + (1/2) V_N$, defined through the strong limit
\begin{equation}\label{eq:waveop}
W_N = s-\lim_{t\to \infty} e^{i\fh_N t} e^{i\Delta t}\,.
\end{equation}
Under the assumptions of Theorem \ref{thm:mainGP} on the potential $V$, it is simple to show that the limit (\ref{eq:waveop}) exists, that the wave operator $W_N$ is complete, in the sense that
\begin{equation*}
W_N^{-1} = W_N^* = s-\lim_{t\to \infty} e^{-i\Delta t} e^{-i\fh_N t}\, ,
\end{equation*}
and that it satisfies the intertwining relation
\begin{equation}\label{eq:inter}
W^*_N \, \fh \, W_N = -\Delta \, .
\end{equation}
It is also important to observe that the wave operator $W_N$ is related by simple scaling to the wave operator $W$ associated with the one-particle Hamiltonian $\fh = -\Delta +(1/2) V$ (and defined analogously to (\ref{eq:waveop})). In fact, if $W_N (x;x')$ and $W (x;x')$ denote the kernels of $W_N$ and, respectively, of $W$, we have
\[ W_N (x;x') = N^3 W (Nx;Nx') \, \qquad \text{and } \quad W^*_N (x;x') = N^3 W^* (Nx ; Nx') \, .\]
In particular this implies that the norm of $W_N$, as an operator from $L^p (\bR^3)$ to $L^p (\bR^3)$, for arbitrary $1\leq p \leq \infty$, is independent of $N$. From the work of Yajima, see \cite{Ya,Ya0}, we know that, under the conditions on $V$ assumed in Theorem \ref{thm:mainGP}, $W$ is a bounded operator from $L^p (\bR^3)$ to $L^p (\bR^3)$, for all $1\leq p \leq \infty$. Therefore
\[ \| W_N \|_{L^p \to L^p} = \| W \|_{L^p \to L^p} < \infty \qquad \text{for all } 1 \leq p \leq \infty \, . \]
In the following we will denote by $W_{N, (i,j)}$ the wave operator $W_N$ acting only on the relative variable $x_j - x_i$. In other words, the action of $W_{N,(i,j)}$ on a $N$-particle wave function $\psi_N \in L^2 (\bR^{3N})$ is given by \begin{equation}\label{eq:Wij} \left( W_{N,(i,j)}
\psi_N \right) (\bx) = \int \rd v \; W_N (x_j - x_i; v) \, \psi_N \left( x_1, \dots ,
 \frac{x_i + x_j}{2} + \frac{v}{2}, \dots , \frac{x_i + x_j}{2} - \frac{v}{2},
\dots, x_N \right) \end{equation} if $j <i$ (the formula for $i >j$ is similar). Similarly, we define $W^*_{N,(i,j)}$. Using the wave operator we have the following energy estimate, which replaces Proposition~\ref{prop:enestGP}, and whose proof can be found in \cite[Proposition 5.2]{ESY5}.
\begin{proposition}\label{prop:energ2}
Suppose $V \geq 0$, $V \in L^1 (\bR^3) \cap L^2 (\bR^3)$ and $V(x) = V(-x)$ for all $x\in \bR^3$. Then we have, for every $i \neq j$,
\begin{equation}\label{eq:energ2}
\langle \psi_N , H_N^2 \psi_N \rangle \geq C N^2 \int \rd \bx \; \left|
\left(\nabla_i \cdot \nabla_j \right) \, W^*_{N,(i,j)} \psi_N \right|^2\, .
\end{equation}
\end{proposition}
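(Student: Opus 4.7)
The plan is to parallel the proof of Proposition~\ref{prop:enestGP}, replacing the pointwise division by the zero-energy scattering solution $f_N(x_i-x_j)$ — which forced the small-$\rho$ assumption — with conjugation by the wave operator $W_{N,(i,j)}$ on the relative pair. Because $W_N$ intertwines $\fh_N^{(v)} = -\Delta_v + (1/2)V_N(v)$ with $-\Delta_v$ regardless of the size of $V$, this substitution should remove the smallness hypothesis.

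First, I would copy the opening of the proof of Proposition~\ref{prop:enestGP} verbatim: bosonic symmetry of $\psi_N$ together with positivity of the remaining pair interactions gives
\begin{equation*}
\langle \psi_N, H_N^2 \psi_N\rangle \;\geq\; N(N-1)\,\langle \psi_N, \fh_{ij}^{(1)} \fh_{ij}^{(2)}\psi_N\rangle,
\end{equation*}
with $\fh_{ij}^{(1)} = -\Delta_i + (1/2)V_N(x_i-x_j)$ and $\fh_{ij}^{(2)} = -\Delta_j + (1/2)V_N(x_i-x_j)$. Passing to centre-of-mass $R = (x_i+x_j)/2$ and relative $v = x_j-x_i$ coordinates gives $\fh_{ij}^{(1,2)} = -(1/4)\Delta_R \pm \nabla_R\cdot\nabla_v + \fh_N^{(v)}$. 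Expanding the product and using the $x_i\leftrightarrow x_j$ symmetry of $\psi_N$ to kill the anti-Hermitian piece $[\nabla_R\cdot\nabla_v,\fh_N^{(v)}]\propto \nabla_R\cdot(\nabla V_N)(v)$, the expectation factorises as
\begin{equation*}
\langle\psi_N, \fh_{ij}^{(1)}\fh_{ij}^{(2)}\psi_N\rangle \;=\; (1/4)\|h_{ij}^{(2b)}\psi_N\|^2 \;-\; \|\nabla_R\cdot\nabla_v \,\psi_N\|^2,
\end{equation*}
where $h_{ij}^{(2b)} = -\Delta_i - \Delta_j + V_N(x_i-x_j)$ is the bare two-body Hamiltonian.

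Next I would bring in the wave operator. Since $V \geq 0$ admits no bound states and the assumptions on $V$ imply via Yajima's theorem that $W$, and hence by scaling $W_N$, is bounded on every $L^p$ with norm independent of $N$, $W_N$ is unitary on $L^2(\bR^3)$ and obeys $W_N^*\fh_N^{(v)} = (-\Delta_v) W_N^*$. Because $W_{N,(i,j)}^*$ acts only on the relative variable and so commutes with $\Delta_R$, the intertwining lifts to $W_{N,(i,j)}^*\, h_{ij}^{(2b)} = -(\Delta_i + \Delta_j)\,W_{N,(i,j)}^*$, yielding $\|h_{ij}^{(2b)}\psi_N\|^2 = \|(\Delta_i + \Delta_j) W_{N,(i,j)}^*\psi_N\|^2$. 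Combining $(\Delta_i+\Delta_j)^2 \geq 2\Delta_i\Delta_j$ with the pointwise Cauchy--Schwarz inequality $(\nabla_i\cdot\nabla_j)^2 \leq 3\,\Delta_i\Delta_j$ (trace versus Frobenius norm on mixed partials) then produces a universal $c > 0$ with $\|(\Delta_i+\Delta_j)\phi\|^2 \geq c\int |\nabla_i\cdot\nabla_j\phi|^2$, applied to $\phi = W_{N,(i,j)}^*\psi_N$.

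The hard part is the negative subtraction $\|\nabla_R\cdot\nabla_v\psi_N\|^2$, which is not obviously dominated by $\|h_{ij}^{(2b)}\psi_N\|^2$ because the two derivatives live in distinct bilinear combinations of $\nabla_i,\nabla_j$. My plan is to recover it by retaining — rather than discarding in the opening step — the $N\langle \psi_N, h_i^2\psi_N\rangle$ contribution, and to show by a parallel single-particle energy estimate that it dominates $N^2\|\nabla_R\cdot\nabla_v\psi_N\|^2$. Producing this absorption with constants uniform in $N$ and in the strength of $V$, while handling commutators between $W_{N,(i,j)}^*$ and $\nabla_v$ via the $N$-uniform $L^p$ boundedness of the wave operator, is the genuine new technical point beyond Proposition~\ref{prop:enestGP}.
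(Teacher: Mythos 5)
Your reduction is sound as far as it goes, and it follows the same skeleton as the argument the paper defers to for this statement (see \cite[Proposition 5.2]{ESY5}; the paper itself gives no proof here): the symmetrization step, the passage to center-of-mass and relative coordinates, the identity $\langle\psi_N,\fh^{(1)}_{ij}\fh^{(2)}_{ij}\psi_N\rangle=\frac14\|h^{(2b)}_{ij}\psi_N\|^2-\|\nabla_R\cdot\nabla_v\,\psi_N\|^2$ (the commutator term is indeed killed by the $x_i\leftrightarrow x_j$ symmetry, using $V(-x)=V(x)$), and the intertwining $W^*_{N,(i,j)}h^{(2b)}_{ij}=-(\Delta_i+\Delta_j)W^*_{N,(i,j)}$ are all correct. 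The genuine gap is your treatment of the negative term, and as proposed it cannot be closed. Since $\nabla_R\cdot\nabla_v=\frac12(\Delta_j-\Delta_i)$, the quantity $\|\nabla_R\cdot\nabla_v\psi_N\|^2$ is generically of the same order as $\langle\psi_N,h_i^2\psi_N\rangle$ (for a symmetrized product of $N$ distinct orbitals with order-one energies both are of order one), so the retained term $N\langle\psi_N,h_i^2\psi_N\rangle$ falls short of dominating $N^2\|\nabla_R\cdot\nabla_v\psi_N\|^2$ by a full power of $N$; no single-particle energy estimate can repair this. Moreover, by the time you need the absorption you have already discarded the object that performs it: replacing $\frac14\|(\Delta_i+\Delta_j)W^*_{N,(i,j)}\psi_N\|^2$ by $c\int|\nabla_i\cdot\nabla_j W^*_{N,(i,j)}\psi_N|^2$ via $(\Delta_i+\Delta_j)^2\geq2\Delta_i\Delta_j\geq2(\nabla_i\cdot\nabla_j)^2$ throws away exactly the cross term $\langle\psi_N,\fh^{(v)}_N(-\Delta_R)\psi_N\rangle$ contained in $\frac14\|h^{(2b)}_{ij}\psi_N\|^2$, which is what must beat $\|\nabla_R\cdot\nabla_v\psi_N\|^2$. (The Yajima $L^p$-bounds play no role in this proposition; they are needed only later, in the convergence argument.)

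The estimate closes if you keep that cross term instead of appealing to $h_i^2$. In center-of-mass and relative coordinates $\nabla_i\cdot\nabla_j=\frac14\Delta_R-\Delta_v$ contains no mixed $R$--$v$ derivative, so the intertwining relation (\ref{eq:inter}) together with $[\Delta_R,W^*_{N,(i,j)}]=0$ gives $\nabla_i\cdot\nabla_j\,W^*_{N,(i,j)}=W^*_{N,(i,j)}\left(\frac14\Delta_R+\fh^{(v)}_N\right)$, with $\fh^{(v)}_N=-\Delta_v+\frac12V_N(v)$; hence, using only $\|W^*_N\|\leq1$, one has $\int|\nabla_i\cdot\nabla_j W^*_{N,(i,j)}\psi_N|^2\leq\|(\frac14\Delta_R+\fh^{(v)}_N)\psi_N\|^2$. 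Writing $\frac14\|h^{(2b)}_{ij}\psi_N\|^2=\|(\fh^{(v)}_N-\frac14\Delta_R)\psi_N\|^2$ and expanding both squares (the non-negative operators $\fh^{(v)}_N$ and $-\Delta_R$ commute), one finds
\begin{equation*}
\frac14\|h^{(2b)}_{ij}\psi_N\|^2-\Big\|\Big(\frac14\Delta_R+\fh^{(v)}_N\Big)\psi_N\Big\|^2=\big\langle\psi_N,\fh^{(v)}_N(-\Delta_R)\psi_N\big\rangle\geq\big\langle\psi_N,(-\Delta_v)(-\Delta_R)\psi_N\big\rangle\geq\|\nabla_R\cdot\nabla_v\,\psi_N\|^2\,,
\end{equation*}
where the middle inequality holds because $V_N\geq0$ is independent of $R$, and the last follows from Cauchy--Schwarz in Fourier space, $(P\cdot k)^2\leq P^2k^2$. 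Combined with your identity this yields $\langle\psi_N,\fh^{(1)}_{ij}\fh^{(2)}_{ij}\psi_N\rangle\geq\|(\frac14\Delta_R+\fh^{(v)}_N)\psi_N\|^2\geq\int|\nabla_i\cdot\nabla_j W^*_{N,(i,j)}\psi_N|^2$, and your first reduction $\langle\psi_N,H_N^2\psi_N\rangle\geq N(N-1)\langle\psi_N,\fh^{(1)}_{ij}\fh^{(2)}_{ij}\psi_N\rangle$ then gives the proposition with no smallness condition on $V$, no use of the $N\langle\psi_N,h_i^2\psi_N\rangle$ term, and with the only spectral input on the wave operator being its existence, the intertwining relation, and $\|W^*_N\|\le 1$.
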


{F}rom Proposition \ref{prop:energ2}, we obtain immediately an a-priori bound on $\psi_{N,t}$ and on its marginal densities.
\begin{corollary}\label{cor:apriV}
Assume that $V$ satisfies the conditions of Theorem \ref{thm:mainGP}. Suppose that $\psi_N$ satisfies (\ref{eq:assH1}) and (\ref{eq:asskGP}). Then we have, for all $i \neq j$,
\begin{equation}\label{eq:apriorV}
\int \rd \bx \, \left| \left( \nabla_{i} \cdot \nabla_{j} \right)  W^*_{N,(i,j)} \psi_{N,t} (\bx) \right|^2 \leq C
\end{equation}
uniformly in $N \in \bN$ and $t \in \bR$. Therefore, if $\gamma^{(k)}_{N,t}$ denote the $k$-particle marginal associated with $\psi_{N,t}$, we have, for every $1\leq i,j \leq k$ with $i \neq j$,
\[ \tr \; \left( (\nabla_i \cdot \nabla_j)^2 -\Delta_i - \Delta_j +1 \right) W_{N,(i,j)}^*  \gamma^{(k)}_{N,t} W_{N,(i,j)} \leq C  \] uniformly in $N \in \bN$ and in $t \in \bR$.
\end{corollary}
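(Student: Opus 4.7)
The plan is to deduce both assertions directly from Proposition~\ref{prop:energ2}, from the conservation of the moments of the energy along the Schr\"odinger evolution, and from the intertwining relation \eqref{eq:inter}.

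First I would establish the pointwise bound \eqref{eq:apriorV}. Since $\psi_{N,t}$ is permutation symmetric for every $t$, Proposition~\ref{prop:energ2} applied to $\psi_{N,t}$ yields
\begin{equation*}
CN^2 \int \rd\bx\,\bigl|(\nabla_i\cdot\nabla_j)\,W^*_{N,(i,j)}\psi_{N,t}(\bx)\bigr|^2
\;\leq\;\langle\psi_{N,t},H_N^2\psi_{N,t}\rangle.
\end{equation*}
Because $H_N$ commutes with $e^{-iH_N t}$, the right-hand side is independent of $t$ and equals $\langle\psi_N,H_N^2\psi_N\rangle$, which by the hypothesis \eqref{eq:asskGP} with $k=2$ is at most $C^2N^2$. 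Dividing by $N^2$ gives \eqref{eq:apriorV}.

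To upgrade this to the trace bound I would rewrite it as an expectation in $\psi_{N,t}$. Using the defining relation $\tr(B\gamma^{(k)}_{N,t})=\langle\psi_{N,t},(B\otimes I_{N-k})\psi_{N,t}\rangle$ for $k$-particle observables together with cyclicity of the trace, for any bounded two-particle operator $A$ acting on the variables $x_i,x_j$,
\begin{equation*}
\tr\bigl(A\,W^*_{N,(i,j)}\gamma^{(k)}_{N,t}\,W_{N,(i,j)}\bigr)
=\bigl\langle\psi_{N,t},\,W_{N,(i,j)}\,A\,W^*_{N,(i,j)}\,\psi_{N,t}\bigr\rangle.
\end{equation*}
I would then split $A=(\nabla_i\cdot\nabla_j)^2-\Delta_i-\Delta_j+1$ into three pieces. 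The first piece gives $\|(\nabla_i\cdot\nabla_j)\,W^*_{N,(i,j)}\psi_{N,t}\|^2$ (since $\nabla_i\cdot\nabla_j$ is self-adjoint), which is bounded by \eqref{eq:apriorV}. The constant $1$ gives $\|W^*_{N,(i,j)}\psi_{N,t}\|^2\leq 1$, as $V\geq 0$ precludes bound states of $\fh_N$ and hence $W_N$, and so $W_{N,(i,j)}$, is unitary. For the Laplacian piece I would pass to the center-of-mass and relative coordinates $R=(x_i+x_j)/2$, $r=x_j-x_i$, so that $-\Delta_i-\Delta_j=-\tfrac12\Delta_R-2\Delta_r$, use that $W_{N,(i,j)}$ acts only on $r$ and hence commutes with $\Delta_R$, and invoke \eqref{eq:inter} in the form $W_{N,(i,j)}(-\Delta_r)W^*_{N,(i,j)}=-\Delta_r+\tfrac12 V_N(r)$, to obtain the key identity
\begin{equation*}
W_{N,(i,j)}\bigl(-\Delta_i-\Delta_j\bigr)W^*_{N,(i,j)}
=-\Delta_i-\Delta_j+V_N(x_j-x_i).
\end{equation*}
Each of the three resulting expectations in $\psi_{N,t}$ is then $O(1)$: $\langle\psi_{N,t},-\Delta_i\psi_{N,t}\rangle$ and the analogous $j$-term via permutation symmetry and the conserved one-body bound $\langle\psi_{N,t},H_N\psi_{N,t}\rangle=\langle\psi_N,H_N\psi_N\rangle\leq CN$ coming from \eqref{eq:assH1} together with $V\geq 0$; and $\langle\psi_{N,t},V_N(x_j-x_i)\psi_{N,t}\rangle\leq \tfrac{2}{N(N-1)}\langle\psi_{N,t},H_N\psi_{N,t}\rangle=O(1/N)$ by the same reasoning.

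The main obstacle is the intertwining identity for $W_{N,(i,j)}(-\Delta_i-\Delta_j)W^*_{N,(i,j)}$. Establishing it cleanly requires $W_N$ to be a genuine unitary, which is ensured by $V\geq 0$ ruling out bound states of $\fh_N=-\Delta+\tfrac12 V_N$, and careful bookkeeping of the reduced-mass factor in the change to relative coordinates. Everything else reduces to a direct application of \eqref{eq:apriorV}, the hypothesis \eqref{eq:assH1}, and the conservation of $\langle\psi_{N,t},H_N\psi_{N,t}\rangle$.
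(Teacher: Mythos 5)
Your proof is correct and follows the paper's (largely implicit) argument: the first bound is exactly Proposition \ref{prop:energ2} applied to $\psi_{N,t}$, combined with conservation of $\langle \psi_{N,t}, H_N^2 \psi_{N,t}\rangle$ and assumption (\ref{eq:asskGP}) with $k=2$, just as in the proof of Corollary \ref{cor:apri}. The trace bound, which the paper treats as immediate, is filled in the intended way: your identity $W_{N,(i,j)}(-\Delta_i-\Delta_j)W^*_{N,(i,j)}=-\Delta_i-\Delta_j+V_N(x_i-x_j)$ is correct (the reduced-mass factor $2$ in $-2\Delta_r$ cancels the $\tfrac12$ in $\fh_N$, and unitarity of $W_N$ is available since the paper's completeness statement gives $W_N^{-1}=W_N^*$), and the remaining expectations are $O(1)$ uniformly in $N$ and $t$ by permutation symmetry, positivity of $V$, and (\ref{eq:assH1}).
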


The philosophy of the bounds (\ref{eq:apriorV}) and (\ref{eq:aprior}) is the same; first we have to regularize the wave function $\psi_{N,t}$, and then we can prove useful bounds on its derivatives. There are however important differences. In (\ref{eq:aprior}) we regularized $\psi_{N,t}$ in position space, by factoring out the short scale correlation structure $f_N (x_i -x_j)$. In (\ref{eq:apriorV}), instead, we regularize $\psi_{N,t}$ applying the wave operator $W_{N,(i,j)}^*$. Another important difference is that (\ref{eq:apriorV}) is weaker than (\ref{eq:aprior}); in fact, (\ref{eq:apriorV}) only gives a control on the combination $\sum_{\alpha=1}^3 \partial_{x_{i,\alpha}} \partial_{x_{j,\alpha}}$, while (\ref{eq:aprior}) controls $\partial_{x_{i,\alpha}}\partial_{x_{j,\beta}}$ for all $1 \leq \alpha,\beta \leq 3$. The weakness of the bound (\ref{eq:apriorV}) makes the proof of the convergence more difficult. In particular we have to establish new Poincar{\'e} inequalities, which only require control of the inner product $\nabla_i \cdot \nabla_j$. It turns out that the weaker control provided by (\ref{eq:apriorV}) is still enough to conclude the proof of convergence to the infinite hierarchy (Proposition \ref{prop:conv-GP}). For more details, see \cite[Section 8]{ESY5}.

\subsection{A-Priori Estimates on Limit Points $\Gamma_{\infty,t}$}\label{sec:apriGP}

In this section we present some of the arguments involved in the proof of the a-priori bounds (\ref{eq:apriGP0}).
\begin{proposition} \label{prop:aprikGP}
Assume that $V$ satisfies the conditions of Theorem \ref{thm:mainGP}. Suppose that $\psi_N$ satisfies (\ref{eq:assH1}) and (\ref{eq:asskGP}). Let $\Gamma_{\infty,t}= \{\gamma_{\infty,t}^{(k)} \}_{k\geq 1} \in \bigoplus_{k \geq 1} C([0,T], \cL^1_k)$ be a limit point of the sequence $\Gamma_{N,t} = \{ \gamma^{(k)}_{N,t} \}_{k=1}^N$ with respect to the product topology $\tau_{\text{prod}}$ defined in Section \ref{sec:3steps}. Then $\gamma^{(k)}_{\infty,t} \geq 0$ and there exists a constant $C$ such that
\begin{equation}\label{eq:aprik3}
\tr \; (1- \Delta_1) \dots (1-\Delta_k) \gamma^{(k)}_{\infty,t} \leq C^k
\end{equation}
for all $k \geq 1$ and $t \in [0,T]$.
\end{proposition}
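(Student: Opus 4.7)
The strategy is to obtain uniform-in-$N$ bounds on $k$ derivatives of a suitably \emph{regularized} $k$-particle marginal of $\psi_{N,t}$, and then to observe that the regularization becomes trivial in the weak limit, so that the bound survives for the limit point $\gamma^{(k)}_{\infty,t}$. The key technical input, in the spirit of (but considerably stronger than) Proposition \ref{prop:enestGP} (respectively Proposition \ref{prop:energ2}), is a higher-order energy estimate of the schematic form
\begin{equation*}
\langle \psi, H_N^k \psi \rangle \;\geq\; c^k\, N^k \int \rd\bx \, \bigl| \nabla_1 \cdots \nabla_k \bigl(\theta_k^{-1}\psi\bigr)(\bx) \bigr|^2
\end{equation*}
for every $\psi \in L^2_s(\bR^{3N})$, where $\theta_k(\bx) = \prod_{1 \leq i < j \leq k} f_N(x_i-x_j)$ is the short-scale correlation pattern among the first $k$ coordinates (in the large-potential regime of Section \ref{sec:largeV}, $\theta_k^{-1}$ is replaced by a product of pair wave operators $W^*_{N,(i,j)}$). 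I would prove this by induction on $k$, starting from the bounds for $k=1,2$ given by $H_N \geq 0$ and Proposition \ref{prop:enestGP}, and then splitting $H_N^{k+2} = H_N \cdot H_N^k \cdot H_N$, applying the induction hypothesis to the middle factor, and expanding the outer Hamiltonians. Permutation symmetry, positivity of $V\geq 0$, the pointwise estimates \eqref{eq:pro3} on $f_N$ and its derivatives, and Hardy inequalities should absorb all error terms produced by commutators with $\theta_k$ into the leading positive kinetic contribution.

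Applying this estimate to $\psi_{N,t}$, using unitarity of the Schrödinger flow (so that $\langle \psi_{N,t}, H_N^k \psi_{N,t}\rangle = \langle \psi_N, H_N^k \psi_N\rangle$) together with assumption \eqref{eq:asskGP}, I obtain
\begin{equation*}
\int \rd\bx \; \bigl| \nabla_1 \cdots \nabla_k \, \theta_k^{-1} \psi_{N,t}(\bx)\bigr|^2 \leq C^k,
\end{equation*}
uniformly in $t \in \bR$ and in $N \geq N_0(k)$. Introducing the \emph{regularized} marginal $\widetilde\gamma^{(k)}_{N,t}$, defined as the multiplication $\theta_k^{-1}\gamma^{(k)}_{N,t}\theta_k^{-1}$ (which, up to lower order, is the $k$-particle density obtained from $\theta_k^{-1}\psi_{N,t}$ by tracing out the last $N-k$ variables), this translates into the operator bound
\begin{equation*}
\tr \, (1-\Delta_1) \cdots (1-\Delta_k) \, \widetilde\gamma^{(k)}_{N,t} \leq C^k.
\end{equation*}

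Finally I would pass to the limit $N \to \infty$. By Banach--Alaoglu and the uniform trace bound above, along the subsequence realizing the $\tau_{\text{prod}}$-limit $\gamma^{(k)}_{\infty,t}$ of $\gamma^{(k)}_{N,t}$, a further subsequence of $\widetilde\gamma^{(k)}_{N,t}$ converges weak$^*$ in $\cL^1_k$ to some $\widetilde\gamma^{(k)}_{\infty,t}$ satisfying $\tr(1-\Delta_1)\cdots(1-\Delta_k)\widetilde\gamma^{(k)}_{\infty,t} \leq C^k$, by lower semicontinuity of trace expectations of positive operators against positive compact test operators. Since $|1-f_N(x)| \leq C/(N|x|+1)$, testing against any compact operator $J^{(k)}$ whose smooth kernel vanishes in a neighbourhood of the diagonals $\{x_i=x_j\}$ (a dense subclass of $\cK_k$) yields $\tr J^{(k)}\bigl(\widetilde\gamma^{(k)}_{N,t}-\gamma^{(k)}_{N,t}\bigr) \to 0$, hence $\widetilde\gamma^{(k)}_{\infty,t} = \gamma^{(k)}_{\infty,t}$, and \eqref{eq:aprik3} follows. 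The main obstacle is the higher-order energy estimate: the inductive step $k\mapsto k+2$ generates many cross-terms from commuting the outer $H_N$'s with the correlation factors in $\theta_k$, and keeping the cancellations between the negative two-body potentials and the positive regularized kinetic terms tight enough to preserve the constant $c^k$ at every step requires considerable bookkeeping, substantially more delicate than in the proof of Proposition \ref{prop:enestGP}.
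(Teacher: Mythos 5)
There is a genuine gap, and it sits exactly where you wave your hands: the higher-order energy estimate you take as your key lemma is not the one the paper proves, and your sketch of its proof does not address the configurations that make it hard. Conjugating by $\theta_k^{-1}=\prod_{1\le i<j\le k} f_N(x_i-x_j)^{-1}$ only removes the pair correlations \emph{among the first $k$ particles}. But when you expand $\langle\psi,H_N^k\psi\rangle$ (or run the two-step induction $k\mapsto k+2$ modeled on Proposition \ref{prop:enest}), you must also control the potentials $V_N(x_m-x_j)\sim N^2 V(N(x_m-x_j))$ for $m\le k<j$, i.e.\ collisions between a differentiated particle and one of the other $N-k$ particles, and the commutators of $\nabla_1\cdots\nabla_k$ with these potentials. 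In the $k=2$ estimate of Proposition \ref{prop:enestGP} this is exactly what is cancelled by the zero-energy equation for the single factor $f_N(x_i-x_j)$ that is being conjugated out; for your $\theta_k$ there is no correlation factor for the pairs $(m,j)$ with $j>k$, so no such cancellation is available, and Hardy/Schwarz bookkeeping alone cannot absorb terms of size $N^2$ supported on scale $N^{-1}$. This is precisely why the paper does \emph{not} prove your inequality. Instead (Proposition \ref{prop:highen}) it proves
\begin{equation*}
\langle\psi,(H_N+C_1N)^k\psi\rangle \;\geq\; C_2 N^k\int \rd\bx\,\theta_1(\bx)\cdots\theta_{k-1}(\bx)\,|\nabla_1\cdots\nabla_k\psi(\bx)|^2 ,
\end{equation*}
with exponential cutoffs $\theta_j$ that vanish whenever \emph{any} of the $N$ particles comes within $\ell$ of $x_j$, with one ``free'' derivative allowed (since $|\nabla f_N(x)|\le C/|x|$ is locally square integrable, the correlations are dangerous only when hit twice), and under the scale condition $N\ell^2\gg 1$ needed to control the kinetic-energy localization error. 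The passage to the limit then exploits the freedom to also impose $N\ell^3\ll 1$, so that the cutoffs occupy vanishing volume and (\ref{eq:cutoffs}) yields (\ref{eq:aprik3}) for the limit points. Your final step (removing the regularization by testing against compacts supported away from the diagonals, plus lower semicontinuity) is the analogue of this last part and is essentially sound; the unproven energy estimate is the real issue.

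Two further points. First, your base case and induction lean on Proposition \ref{prop:enestGP}, which holds only for small $\rho$; in the large-potential regime the available substitute, Proposition \ref{prop:energ2}, controls only the combination $\nabla_i\cdot\nabla_j$ after conjugation by the wave operator $W^*_{N,(i,j)}$, which is strictly weaker than what your induction would need, and since the $W^*_{N,(i,j)}$ for different pairs do not commute and are not multiplication operators, your definition $\widetilde\gamma^{(k)}_{N,t}=\theta_k^{-1}\gamma^{(k)}_{N,t}\theta_k^{-1}$ and the subsequent argument ``$|1-f_N(x)|\le C/(N|x|+1)$, hence the regularization disappears'' have no analogue there. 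Second, even granting your estimate, the trace bound $\tr\,(1-\Delta_1)\cdots(1-\Delta_k)\widetilde\gamma^{(k)}_{N,t}\le C^k$ requires all the mixed lower-order terms (subsets of the $k$ derivatives acting on $\theta_k^{-1}\psi_{N,t}$, with the full factor $\theta_k^{-1}$ still present), which your sketch does not supply; this is minor bookkeeping compared with the main gap, but it should be stated.
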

The main difficulty in proving Proposition \ref{prop:aprikGP} is the
fact that the estimate (\ref{eq:aprik3}) does not hold true if we
replace $\gamma^{(k)}_{\infty,t}$ with the marginal density
$\gamma^{(k)}_{N,t}$. More precisely,
\begin{equation}\label{eq:aprwrong} \tr \; (1-\Delta_1) \dots
(1-\Delta_k) \gamma^{(k)}_{N,t} \leq C^k \end{equation} cannot hold
true with a constant $C$ independent of $N$. In fact, for finite $N$
and $k >1$, the $k$-particle density $\gamma_{N,t}^{(k)}$ still
contains the singular short scale correlation structure. For example, when particle one and particle two are very close to each other (at distances of order $1/N$), we can expect the two-particle density to be approximately given by
\[ \gamma^{(2)}_{N,t} (\bx_2, \bx'_2) \simeq  \const \, f_N (x_1 -x_2) f_N (x'_1 -x'_2) \]
(the constant part takes into account factors which vary on larger scales). It is then simple to check that \[ \tr\; (1-\Delta_1) (1-\Delta_2) \gamma^{(2)}_{N,t} \simeq N \, . \]
Only after taking the weak limit $N \to \infty$, the short scale correlation structure disappears (because it varies on a length scale of order $1/N$), and one can hope to prove bounds like (\ref{eq:aprik3}).

\medskip

To overcome this problem, we cutoff the wave function $\psi_{N,t}$
when two or more particles come at distances smaller than some
intermediate length scale $\ell$, with $N^{-1} \ll \ell \ll 1$ (more
precisely, the cutoff will be effective only when one or more
particles come close to one of the variable $x_j$ over which we want
to take derivatives). For fixed $j= 1, \dots ,N$, we define
$\theta_j \in C^{\infty} (\bR^{3N})$ such that
\[ \theta_j (\bx) \simeq \left\{ \begin{array}{ll} 1 & \quad
\text{if} \quad |x_i -x_j| \gg \ell \quad \text{for all } i \neq j
\\
0 & \quad \text{if there exists } i \neq j \quad \text{with} \quad
|x_i -x_j| \lesssim \ell \end{array} \right. \; . \] It is
important, for our analysis, that $\theta_j$ controls its
derivatives (in the sense that, for example, $|\nabla_i \theta_j|
\leq C \ell^{-1} \theta^{1/2}_j$); for this reason we cannot use
standard compactly supported cutoffs. Instead we have to
construct appropriate functions which decay exponentially when
particles come close together (the prototype of such function is $\theta (x) = \exp (- \sqrt{(x/\ell)^2 + 1})$). Making use of the functions $\theta_j
(\bx)$, we prove the following higher order energy estimates.

\begin{proposition}\label{prop:highen}
Choose $\ell \ll 1$ such that $N \ell^2 \gg 1$. Suppose that
$\alpha$ is small enough. Then there exist constants $C_1$ and $C_2$
such that, for any $\psi \in L^2_s (\bR^{3N})$,
\begin{equation}\label{eq:highen}
\langle \psi, (H_N +C_1 N)^k \psi \rangle \geq C_2 N^k \int \rd \bx
\; \theta_1 (\bx) \dots \theta_{k-1} (\bx) \, |\nabla_1 \dots
\nabla_k \psi (\bx)|^2 \, .
\end{equation}
\end{proposition}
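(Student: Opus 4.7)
My plan is to argue by a two-step induction on $k \geq 0$, following the strategy of Proposition \ref{prop:enest}. The base cases $k = 0$ (trivial) and $k = 1$ use only the positivity $V_N \geq 0$ together with the permutation symmetry of $\psi$: one has $H_N + C_1 N \geq \sum_{j=1}^N (-\Delta_j + C_1)$ and therefore, as a quadratic form on $L^2_s(\bR^{3N})$,
\[
\langle \psi, (H_N + C_1 N)\psi \rangle \geq N \langle \psi, (-\Delta_1 + C_1)\psi \rangle \geq C_2 N \int |\nabla_1 \psi|^2 \, \rd \bx.
\]

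For the inductive step, I would assume (\ref{eq:highen}) at order $k$ and prove it at order $k+2$. Since $H_N$ preserves the symmetric subspace, applying the inductive hypothesis as a quadratic form inequality to the vector $(H_N + C_1 N)\psi \in L^2_s$ yields
\[
\langle \psi, (H_N + C_1 N)^{k+2}\psi \rangle \geq C_2^k N^k \int \theta_1 \cdots \theta_{k-1} \, \bigl|\nabla_1 \cdots \nabla_k (H_N + C_1 N)\psi\bigr|^2 \, \rd \bx.
\]
The task then reduces to extracting from the two outer factors of $(H_N + C_1 N)$ two further gradients $\nabla_{k+1}, \nabla_{k+2}$ and inserting the two missing cutoffs $\theta_k, \theta_{k+1}$, paying only a factor $N^2$. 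I would write $H_N + C_1 N = \sum_{j=1}^N (h_j + C_1)$ with $h_j = -\Delta_j + \frac{1}{2} \sum_{i \neq j} V_N(x_i - x_j) \geq 0$; the dominant $O(N^2)$ contribution comes from the pair of spectator indices $(j,j') \in \{k+1,\dots,N\}^2$, which by permutation symmetry can be reduced to $(k+1, k+2)$ and produces exactly the required gradients on the outer particles.

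The cutoffs $\theta_k, \theta_{k+1}$ are inserted by hand, using the self-control $|\nabla_i \theta_j| \leq C\alpha\,\ell^{-1} \theta_j^{1/2}$ afforded by the exponential construction of $\theta_j$; this lets every commutator between an external gradient and an inserted cutoff be re-absorbed into the main quadratic form $T_{k+2} = \nabla_{k+2}^* \cdots \nabla_1^* (\theta_1 \cdots \theta_{k+1}) \nabla_1 \cdots \nabla_{k+2}$ with a prefactor proportional to $\alpha$. The remaining error terms split into three families: (a) cross terms pairing some $h_j$ or $h_{j'}$ with an index $\leq k$, of which there are only $O(N)$ and which contribute a subleading $N^{k+1}$ that is absorbed by the induction; (b) commutators involving the interaction potentials $V_N(x_i - x_j)$ generated when gradients are pushed past $h_j$, controlled by Hardy-type bounds in the spirit of Proposition \ref{prop:enestGP}; and (c) contributions from the short-scale region where the singular correlation factor $f_N$ makes $|\nabla_i \nabla_j \psi|^2$ locally of size $N$.

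The main obstacle is item (c): without the cutoffs, the right-hand side of (\ref{eq:highen}) would diverge as $N^{k+1}$, precisely because of the correlation structure on scale $1/N$. The role of $\theta_j$ is to suppress the collision regions, and the hypothesis $N\ell^2 \gg 1$ is what allows the exponential smallness of $\theta_j$ on $\{|x_i - x_j| \lesssim \ell \text{ for some } i\}$ to beat the $O(N)$ amplification from the short-scale structure; simultaneously $\ell \ll 1$ keeps the bulk leading-order term intact. The smallness of $\alpha$ then closes the induction by forcing all commutator remainders to be genuine small perturbations of $T_{k+2}$, so that a single constant $C_2 > 0$ works uniformly at every induction step.
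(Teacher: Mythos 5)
Your outline has the same skeleton as the argument the paper points to (the two-step induction of Section 7 of \cite{ESY5}, modelled on Proposition \ref{prop:enest}): base cases $k=0,1$, apply the order-$k$ bound as a quadratic-form inequality to the symmetric vector $(H_N+C_1N)\psi$, keep the $\sim N^2$ spectator pairs which by symmetry reduce to $(k+1,k+2)$, integrate by parts to generate $\nabla_{k+1},\nabla_{k+2}$, and note that the new cutoffs $\theta_k,\theta_{k+1}$ can be slipped into the resulting positive term simply because $0\le\theta_j\le 1$ (the derivative bounds on $\theta_j$ are needed not for this insertion but for the commutators with the \emph{old} cutoffs $\theta_1,\dots,\theta_{k-1}$, which depend on all variables including $x_{k+1},x_{k+2}$).

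The genuine gap is in your error family (b), which is where the actual work of the proof lies. When the inner gradients $\nabla_1\cdots\nabla_k$ fall on $V_N(x_j-x_{k+1})$ they produce factors of size $N^3(\nabla V)(N\cdot)$. For $j\le k-1$ these are tamed because $\theta_j$ is exponentially small on the support of $V_N$ (scale $1/N\ll\ell$); but for the ``free'' index $j=k$ there is no cutoff $\theta_k$ available in the induction hypothesis, and the tool you invoke, ``Hardy-type bounds in the spirit of Proposition \ref{prop:enestGP},'' does not apply: that proposition's mechanism is conjugation by $f_N$ and the zero-energy scattering equation, and it requires $\rho$ small, while a direct Hardy/Sobolev bound on a term containing $\nabla V_N$ would cost two derivatives per particle on $\psi$ near the collision point, which is precisely the quantity that diverges like $N$ due to the short-scale structure. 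Handling this term forces the integration by parts to be arranged so that at most one derivative ever lands on the singular structure (this is the ``one free derivative'' discussed after the proposition), combined with weighted Schwarz inequalities built from the $\theta_j$'s and the constraint $N\ell^2\gg1$; none of this is supplied by your sketch. Relatedly, ``absorbed by the induction'' is not an argument for the $O(N)$ cross terms in family (a): they are not sign-definite and the induction hypothesis is a lower bound, so they must be estimated in absolute value and absorbed into the positive main term, which is exactly where the smallness of $\alpha$ and the condition $N\ell^2\gg1$ are spent.
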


The meaning of the bound (\ref{eq:highen}) is clear. The
$L^2$-norm of the $k$-th derivative $\nabla_1 \dots \nabla_k
\psi$ can be controlled by the expectation of the $k$-th power of the energy per
particle, if we restrict the integration domain to regions where the first
$(k-1)$ particles are ``isolated'' (in the sense that there is no
particle at distances smaller than $\ell$ from $x_1, x_2, \dots,
x_{k-1}$).

\medskip

Note that we can allow one ``free derivative''; in (\ref{eq:highen})
we take the derivative over $x_k$ although there is no cutoff
$\theta_k (\bx)$. The reason is that the correlation structure
becomes singular, in the $L^2$-sense, only when we derive it twice
(if one uses the zero energy solution $f_N$ introduced in
(\ref{eq:0en}) to describe the correlations, this can be seen by
observing that $|\nabla f_N (x)| \leq 1/ |x|$, which is locally
square integrable). Remark that the condition $N\ell^2 \gg 1$ is necessary to control the error due to the localization of the kinetic energy on distances of order
$\ell$. The proof of Proposition
\ref{prop:highen} is based on induction over $k$; for details see
Section 7 in \cite{ESY5}.

\medskip

{F}rom the estimates (\ref{eq:highen}), using the preservation of
the expectation of $H_N^k$ along the time evolution and the condition (\ref{eq:asskGP}), we obtain the following bounds for the solution $\psi_{N,t} =
e^{-iH_N t} \psi_N$ of the Schr\"odinger equation (\ref{eq:schrGP}).
\begin{equation*}
\int \rd \bx \; \theta_1 (\bx) \dots \theta_{k-1} (\bx) \; \left|
\nabla_1 \dots \nabla_k \psi_{N,t} (\bx) \right|^2 \leq C^k
\end{equation*}
uniformly in $N$ and $t$, and for all $k \geq 1$. Translating these
bounds in the language of the density matrix $\gamma_{N,t}$, we
obtain \begin{equation}\label{eq:cutoffs} \tr \; \theta_1 \dots
\theta_{k-1} \nabla_1 \dots \nabla_k \gamma_{N,t} \nabla_1^* \dots
\nabla_k^* \leq C^k \,. \end{equation} The idea now is to use the
freedom in the choice of the cutoff length $\ell$. If we fix the
position of all particles but $x_j$, it is clear that the cutoff
$\theta_j$ is effective at most in a volume of the order $N\ell^3$.
If we choose $\ell$ such that $N\ell^3 \to 0$ as $N \to \infty$
(which is of course compatible with the condition that $N \ell^2 \gg
1$), we can expect that, in the limit of large $N$, the cutoff
becomes negligible. This approach yields in fact the desired
results; starting from (\ref{eq:cutoffs}), and choosing $\ell$ such
that $N \ell^3 \ll 1$, we can complete the proof of Proposition
\ref{prop:aprikGP} (see Proposition 6.3 in \cite{ESY3} for more
details).

\subsection{Uniqueness of the Solution to the Infinite Hierarchy}
\label{sec:uniqueGP}

To complete the proof of Theorem \ref{thm:mainGP} we have to prove the uniqueness of the solution to the infinite hierarchy (\ref{eq:infhier-GP}) in the class of densities satisfying the a-priori bounds (\ref{eq:aprik3}). Remark that the uniqueness of the infinite hierarchy (\ref{eq:infhier-GP}), in a different class of densities, was recently proven by Klainerman and Machedon in \cite{KM}. The proof proposed by Klainerman and Machedon is simpler than the proof of Proposition \ref{prop:uniqueGP} which we discuss below. Unfortunately, the result of \cite{KM} cannot be applied to the proof of Theorem \ref{thm:mainGP}, because it is not yet clear whether limit points of the sequence of marginal densities $\Gamma_{N,t} = \{ \gamma^{(k)}_{N,t} \}_{k=1}^N$ fit into the class of densities for which uniqueness is proven.
\begin{proposition}\label{prop:uniqueGP} Fix $T >0$ and $\Gamma = \{ \gamma^{(k)} \}_{k \geq
1} \in \bigoplus_{k \geq 1} \cL^1_k$. Then there exists at most one solution $\Gamma_{t} = \{ \gamma^{(k)}_{t} \}_{k \geq 1} \in \bigoplus C ([0,T], \cL_k)$ of the infinite hierarchy (\ref{eq:infhier-GP}) with $\Gamma_{t=0} = \Gamma$, such that $\gamma_t^{(k)} \geq 0$ is symmetric with respect to permutations, and
\begin{equation}\label{eq:aprit0}
\tr  \, (1-\Delta_1)  \dots (1-\Delta_k)  \, \gamma^{(k)}  \leq C^k
\end{equation} for all $k \geq 1$ and all $t \in [0,T]$.
\end{proposition}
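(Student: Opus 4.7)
The plan is to adapt the Duhamel iteration used in Proposition \ref{prop:unique} (Coulomb case), overcoming the fundamental obstruction pointed out in Section \ref{sec:comp-mf}: since $\delta(x)\not\le C(1-\Delta)$ in three dimensions, the simple operator bound analogous to (\ref{eq:Bcou}) fails for the $\delta$-interaction, and the purely operator-theoretic argument of Theorem \ref{thm:unique-cou} breaks down. Let $\gamma^{(k)}_{1,t}$ and $\gamma^{(k)}_{2,t}$ be two solutions with common initial data, both satisfying (\ref{eq:aprit0}), and set $\eta^{(k)}_t = \gamma^{(k)}_{1,t} - \gamma^{(k)}_{2,t}$, so that $\eta^{(k)}_0=0$. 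Iterating the homogeneous Duhamel formula $n$ times one obtains
\[ \eta^{(k)}_t = (-8\pi a_0 i)^n \!\!\sum_{j_1=1}^{k} \!\cdots\! \sum_{j_n=1}^{k+n-1}\! \int_0^t \!\!\rd s_1 \cdots\! \int_0^{s_{n-1}}\!\! \rd s_n\; \cU^{(k)}(t{-}s_1)\, \wt B^{(k)}_{j_1}\, \cU^{(k+1)}(s_1{-}s_2) \cdots \cU^{(k+n-1)}(s_{n-1}{-}s_n)\, \wt B^{(k+n-1)}_{j_n}\, \eta^{(k+n)}_{s_n}, \]
where $\wt B^{(m)}_j \gamma^{(m+1)} := \tr_{m+1}[\delta(x_j-x_{m+1}),\gamma^{(m+1)}]$. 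A naive count yields $k(k+1)\cdots(k+n-1) \sim n!$ summands, which is too many to allow any loss per iteration.

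First I would perform a combinatorial reorganization. Using the full permutation symmetry of the marginals and the tree structure implicit in the label sequence $(j_1,\dots,j_n)$ (each $j_\ell$ specifies which already-present particle collides with the newly traced-in one at step $\ell$), one groups equivalent summands into classes; this is the ``board game'' technique developed for the BBGKY expansion in the earlier work of Erd\H os, Schlein and Yau. After this reduction, the $\sim n!$ sum is replaced by a sum of at most $C^n$ inequivalent ``skeleton'' integrals, each collecting its equivalent preimages with a positive combinatorial weight bounded by $C^n$.

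Second, and this is the heart of the matter, I would estimate a single skeleton. Writing $S_\ell = (1-\Delta_\ell)^{1/2}$, one cannot bound $S_i^{-1}\,\delta(x_i-x_j)\,S_j^{-1}$ as an operator in three dimensions (one would need strictly more than $3/4$ of a derivative per colliding variable), so the stationary bound used for the Coulomb kernel is unavailable. Instead one must exploit the free evolution $\cU^{(m)}(s_{m-1}-s_m)$ sitting between two consecutive $\wt B$-factors and prove an $L^1_t$ space-time smoothing estimate of the schematic form
\[ \int_0^t\!\! \rd s\; \bigl\| S_1 \cdots S_m\, \cU^{(m)}(t-s)\, \wt B^{(m)}_j\, \gamma^{(m+1)}_s\, S_m \cdots S_1 \bigr\|_1 \;\le\; C\, \sup_{s\in[0,T]} \bigl\| S_1\cdots S_{m+1}\, \gamma^{(m+1)}_s\, S_{m+1}\cdots S_1 \bigr\|_1, \]
in which the dispersive regularization of the Schr\"odinger group absorbs the $\delta$-collision with only one derivative per variable on either side. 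Chained $n$ times along the skeleton and closed off with the a-priori bound (\ref{eq:aprit0}) on $\eta^{(k+n)}_{s_n}$, this yields $\|\text{single skeleton}\|_{\cH_k} \le (Ct)^n C^{k+n}$; combined with the $C^n$-bound on the number of skeletons this gives $\|\eta^{(k)}_t\|_{\cH_k} \le C^k (Ct)^n$ for every $n$, and sending $n\to\infty$ for $t$ small enough that $Ct<1$ forces $\eta^{(k)}_t\equiv 0$ on a short interval. The usual bootstrap then extends uniqueness to all of $[0,T]$.

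The main obstacle will be the space-time smoothing estimate of the second step: rigorously controlling iterated three-dimensional $\delta$-collisions alternated with free Schr\"odinger propagators using only one derivative per particle, as afforded by (\ref{eq:aprit0}). This is precisely the bilinear-Strichartz-type mechanism later isolated by Klainerman and Machedon in the infinite-hierarchy setting; translating it into the trace-class density-matrix framework used here, and coupling it compatibly to the combinatorial reduction, is the demanding technical core that distinguishes this uniqueness proof from its Coulomb counterpart.
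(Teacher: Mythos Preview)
Your proposal identifies the right two ingredients---combinatorial reduction from $\sim n!$ to $C^n$ terms, and space-time smoothing from the free Schr\"odinger propagators---but your route differs from the paper's in both ingredients, and leans toward the Klainerman--Machedon argument rather than the one in \cite{ESY2}. Two remarks:

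\medskip

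\emph{Combinatorics.} The ``board game'' reorganization you cite is not the reduction used here (and it originates with Klainerman--Machedon, not with Erd\H os--Schlein--Yau). The paper instead expands each Duhamel term as a sum over Feynman graphs $\Lambda\in\cF_{m,k}$ (paired, oriented, rooted trees), with $|\cF_{m,k}|\le 2^{4m+k}$. The partial (rather than total) ordering of vertices is what collapses the $m!$ time-orderings.

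\medskip

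\emph{Smoothing.} Your schematic trace-class $L^1_t$ estimate
\[
\int_0^t\!\rd s\;\bigl\| S_1\cdots S_m\,\cU^{(m)}(t-s)\,\wt B^{(m)}_j\gamma^{(m+1)}_s\,S_m\cdots S_1\bigr\|_1 \;\le\; C\sup_s \bigl\| S_1\cdots S_{m+1}\,\gamma^{(m+1)}_s\,S_{m+1}\cdots S_1\bigr\|_1
\]
is not the estimate proved in either paper. The KM mechanism is an $L^2_t$ bilinear Strichartz bound in Hilbert--Schmidt norm, and the paper explicitly warns (just before the statement of Proposition~\ref{prop:uniqueGP}) that the KM uniqueness class is \emph{different} from the trace-class/a-priori-bound class (\ref{eq:aprit0}); whether limit points of $\Gamma_{N,t}$ lie in the KM class was open. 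The paper avoids this issue entirely: it writes each graph contribution $K_{\Lambda,t}$ in Fourier space with propagators $(\alpha_e-p_e^2+i\tau_e\mu_e)^{-1}$, extracts a factor $t^{m/4}$ (not $t^m$; part of the time integral is consumed by the smoothing), and then closes the frequency/momentum integrals via a leaf-to-root integration scheme and a power-counting argument, using the a-priori bound (\ref{eq:aprit0}) only to supply the decay $\langle p_e\rangle^{-5/2}$ on the leaves. No physical-space Strichartz estimate or trace-class recursion of the form you wrote appears.

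\medskip

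In short, your strategy is a legitimate alternative in spirit, but as written it presupposes a trace-class dispersive bound whose validity is precisely the open point flagged in the text; the paper's actual proof sidesteps this by working graph-by-graph in momentum space.
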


In this section we briefly explain some of the main steps involved in the proof of Proposition \ref{prop:uniqueGP}; the details can be found in \cite{ESY2}[Section 9].

\medskip

To shorten the notation, we write the infinite hierarchy (\ref{eq:infhier-GP}) in the form
\begin{equation}\label{eq:GPhier-int}
\gamma_{t} = \cU^{(k)} (t) \gamma_{0} +
\int_0^t \rd s \; \cU^{(k)} (t-s) \, B^{(k)} \gamma^{(k+1)}_{s} \, ,
\end{equation}
where $\cU^{(k)} (t)$ denotes the free evolution of $k$ particles
\[ \cU^{(k)} (t) \gamma^{(k)} = e^{it\sum_{j=1}^k \Delta_j}
\gamma^{(k)} e^{-it\sum_{j=1}^k \Delta_j} \] and the collision
operator $B^{(k)}$ maps $(k+1)$-particle operators into $k$-particle
operators according to
\begin{equation}\label{eq:B}
B^{(k)} \gamma^{(k+1)} = -8i \pi a_0  \sum_{j=1}^k \tr_{k+1}\, \left[
\delta (x_j -x_{k+1}), \gamma^{(k+1)} \right] \,.
\end{equation}
The map $B^{(k)}$ is defined in analogy to Section \ref{sec:mf}; in particular the kernel of $B^{(k)} \gamma^{(k+1)}$ is given by the expression on the r.h.s. of (\ref{eq:trk+1}), with $V (x)$ replaced by $8\pi a_0 \delta (x)$.

\medskip

Iterating (\ref{eq:GPhier-int}) $n$ times we obtain the Duhamel type
series
\begin{equation}\label{eq:Duhamel}
\gamma^{(k)}_{t} = \cU^{(k)} (t) \gamma^{(k)}_0 + \sum_{m=1}^{n-1}
\xi^{(k)}_{m,t} + \eta^{(k)}_{n,t}
\end{equation}
with
\begin{equation}\label{eq:xi}
\begin{split}
\xi^{(k)}_{m,t} &= \int_0^t \rd s_1 \dots \int_0^{s_{m-1}} \rd s_m
\; \cU^{(k)} (t-s_1) B^{(k)} \cU^{(k+1)} (s_1 -s_2) B^{(k+1)} \dots
B^{(k+m-1)} \cU^{(k+m)} (s_m) \gamma^{(k+m)}_{0} \\&= \sum_{j_1=1}^k
\sum_{j_2=1}^{k+1} \dots \sum_{j_m=1}^{k+m} \int_0^t \rd s_1 \dots
\int_0^{s_{m-1}} \rd s_m \; \cU^{(k)} (t-s_1)  \tr_{k+1} \Big[
\delta (x_{j_1} - x_{k+1}),
\\  &\hspace{.3cm}  \left. \cU^{(k+1)} (s_1 -s_2) \tr_{k+2} \left[ \delta
(x_{j_2} - x_{k+2}), \dots \tr_{k+m} \left[ \delta ( x_{j_m}-
x_{k+m}), \cU^{(k+m)} (s_m) \gamma_0^{(k+m)} \right] \dots\right]
\right]
\end{split}
\end{equation} and the error term
\begin{equation}\label{eq:eta}
\eta^{(k)}_{n,t} =\int_0^t \rd s_1 \int_0^{s_1} \rd s_2 \dots
\int_0^{s_{n-1}} \rd s_n \; \cU^{(k)} (t-s_1) B^{(k)} \cU^{(k+1)}
(s_1 -s_2) B^{(k+1)} \dots B^{(k+n-1)} \gamma^{(k+m)}_{s_n}\,.
\end{equation}
Note that the error term (\ref{eq:eta}) has exactly the same form as
the terms in (\ref{eq:xi}), with the only difference that the last
free evolution is replaced by the full evolution
$\gamma^{(k+m)}_{s_n}$.

\medskip

To prove the uniqueness of the infinite hierarchy, it is enough to
prove that the fully expanded terms (\ref{eq:xi}) are well-defined and that
the error term (\ref{eq:eta}) converges to zero as $n \to
\infty$ (in some norm, or even after testing it against a
sufficiently large class of smooth observables). The main problem
here is that the delta function in the collision operator $B^{(k)}$
cannot be controlled by the kinetic energy (in the sense that, in
three dimensions, the operator inequality $\delta (x) \leq C(1-
\Delta)$ does not hold true). For this reason, the a-priori
estimates (\ref{eq:aprit0}) are not sufficient
to show that (\ref{eq:eta}) converges to zero, as $n \to \infty$.
Instead, we have to make use of the smoothing effects of the free
evolutions $\cU^{(k+j)} (s_j -s_{j+1})$ in (\ref{eq:eta}) (in a
similar way, Stricharzt estimates are used to prove the
well-posedness of nonlinear Schr\"odinger equations). To this end,
we rewrite each term in the series (\ref{eq:Duhamel}) as a sum of
contributions associated with certain Feynman graphs, and then we
prove the convergence of the Duhamel expansion by controlling each
contribution separately.

\bigskip

The details of the diagrammatic expansion can be found in \cite[Section 9]{ESY2}. Here we only sketch the main ideas. We start by
considering the term $\xi_{m,t}^{(k)}$ in (\ref{eq:xi}). After
multiplying it with a compact $k$-particle observable $J^{(k)}$ and
taking the trace, we expand the result as
\begin{equation}\label{eq:expa}
\tr \; J^{(k)} \xi^{(k)}_{m,t} = \sum_{\Lambda \in \cF_{m,k}}
K_{\Lambda,t}
\end{equation}
where $K_{\Lambda,t}$ is the contribution associated with the
Feynman graph $\Lambda$. Here $\cF_{m,k}$ denotes the set of all
graphs consisting of $2k$ disjoint, paired, oriented, and rooted
trees with $m$ vertices. An example of a graph in $\cF_{m,k}$ is
drawn in Figure \ref{fig:feynman}. Each vertex has one of the two
forms drawn in Figure \ref{fig:feynman}, with one ``father''-edge on
the left (closer to the root of the tree) and three ``son''-edges on
the right. One of the son edge is marked (the one drawn on the same
level as the father edge; the other two son edges are drawn below).
Graphs in $\cF_{m,k}$ have $2k+3m$ edges, $2k$ roots (the edges on
the very left), and $2k+2m$ leaves (the edges on the very right). It
is possible to show that the number of different graphs in
$\cF_{m,k}$ is bounded by $2^{4m+k}$.
\begin{figure}
\begin{center}
\epsfig{file=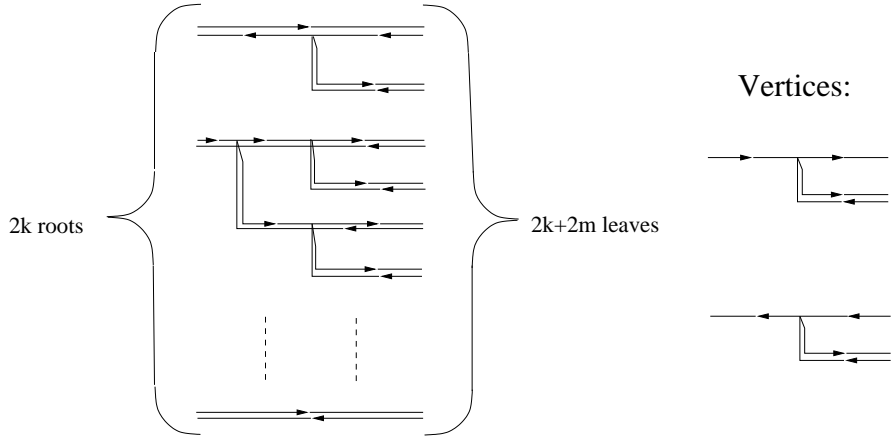,scale=.60}
\end{center}
\caption{A Feynman graph in $\cF_{m,k}$ and its two types of
vertices}\label{fig:feynman}
\end{figure}

\medskip

The particular form of the graphs in $\cF_{m,k}$ is due to the
quantum mechanical nature of the expansion; the presence of a
commutator in the collision operator (\ref{eq:B}) implies that, for
every $B^{(k+j)}$ in (\ref{eq:xi}), we can choose whether to write
the interaction on the left or on the right of the density. When we
draw the corresponding vertex in a graph in $\cF_{m,k}$, we have to
choose whether to attach it on the incoming or on the outgoing edge.

\medskip

Graphs in $\cF_{m,k}$ are characterized by a natural partial
ordering among the vertices ($v \prec v'$ if the vertex $v$ is on
the path from $v'$ to the roots); there is, however, no total
ordering. The absence of total ordering among the vertices is the
consequence of a rearrangement of the summands on the r.h.s. of
(\ref{eq:xi}); by removing the order between times associated with
non-ordered vertices we substantially reduce the number of terms in
the expansion. In fact, while (\ref{eq:xi}) contains $(m+k)!/k!$
summands, in (\ref{eq:expa}) we are only summing over at most $2^{4m+k}$
contributions. The price we have to pay is that the apparent gain of
a factor $1/m!$ due to the ordering of the time integrals in
(\ref{eq:xi}) is lost in the new expansion (\ref{eq:expa}). However,
since we want to use the time integrations to smooth out
singularities it seems quite difficult to make use of this factor $1/m!$.
In fact,
we find that the expansion (\ref{eq:expa}) is better suited for
analyzing the cumulative space-time smoothing effects of the
multiple free evolutions than (\ref{eq:xi}).

\medskip

Because of the pairing of the $2k$ trees, there is a natural pairing
between the $2k$ roots of the graph. Moreover, it is also possible
to define a natural pairing of the leaves of the graph (this is
evident in Figure \ref{fig:feynman}); two leaves $\ell_1$ and
$\ell_2$ are paired if there exists an edge $e_1$ on the path from
$\ell_1$ back to the roots, and an edge $e_2$ on the path from
$\ell_2$ to the roots, such that $e_1$ and $e_2$ are the two
unmarked son-edges of the same vertex (or, in case there is no unmarked
sons in the path from $\ell_1$ and $\ell_2$ to the roots, if the two
roots connected to $\ell_1$ and $\ell_2$ are paired).

\medskip

For $\Lambda \in \cF_{m,k}$, we denote by $E(\Lambda)$,
$V(\Lambda)$, $R(\Lambda)$ and $L(\Lambda)$ the set of all edges,
vertices, roots and, respectively, leaves in the graph $\Lambda$.
For every edge $e \in E(\Lambda)$, we introduce a three-dimensional
momentum variable $p_e$ and a one-dimensional frequency variable
$\a_e$. Then, denoting by $\wh\gamma^{(k+m)}_0$ and by $\wh J^{(k)}$
the kernels of the density $\gamma^{(k+m)}_0$ and of the observable
$J^{(k)}$ in Fourier space, the contribution $K_{\Lambda,t}$ in
(\ref{eq:expa}) is given by
\begin{equation}\label{eq:contriLam}
\begin{split}
K_{\Lambda,t} = &\; \int \prod_{e \in E(\Lambda)} \frac{\rd p_e \rd
\a_e}{ \a_e - p_e^2 + i \tau_e \mu_e} \, \prod_{v \in V(\Lambda)}
\delta \left(\sum_{e \in v} \pm \a_e \right) \delta \left(\sum_{e
\in v} \pm p_e \right)  \\
&\hspace{2cm} \times  \exp \left( -it \sum_{e \in R(\Lambda)} \tau_e
(\a_e +i\tau_e \mu_e) \right) \, \wh J^{(k)} \left( \{ p_e \}_{e \in
R(\Lambda)} \right) \, \wh\gamma^{(k+m)}_0 \left( \{ p_e \}_{e \in
L(\Lambda)} \right)\,.
\end{split}
\end{equation}
Here $\tau_e = \pm 1$, according to the orientation of the edge $e$.
We observe from (\ref{eq:contriLam}) that the momenta of the roots
of $\Lambda$ are the variables of the kernel of $J^{(k)}$, while the
momenta of the leaves of $\Lambda$ are the variables of the kernel
of $\gamma^{(k+m)}_0$ (this also explain why roots and leaves of
$\Lambda$ need to be paired).

\medskip

The denominators $(\a_e - p_e^2 + i \tau_e \mu_e)^{-1}$ are called
propagators; they correspond to the free evolutions in the expansion
(\ref{eq:xi}) and they enter the expression (\ref{eq:contriLam})
through the formula
\[ e^{itp^2} = \int^{\infty}_{-\infty} \rd \a \; \frac{e^{it(\a+i\mu)}}{\a- p^2
+i\mu} \] (here and in (\ref{eq:contriLam}) the measure $\rd \a$ is
defined by $\rd \a = \rd' \a / (2\pi i)$ where $\rd'\a$ is the
Lebesgue measure on $\bR$).

\medskip

The regularization factors $\mu_e$ in (\ref{eq:contriLam}) have to
be chosen such that $\mu_{\text{father}} = \sum_{e=\text{ son}}
\mu_{e}$ at every vertex. The delta-functions in
(\ref{eq:contriLam}) express momentum and frequency conservation
(the sum over $e \in v$ denotes the sum over all edges adjacent to
the vertex $v$; here $\pm \a_e = \a_e$ if the edge points towards
the vertex, while $\pm \a_e=-\a_e$ if the edge points out of the
vertex, and analogously for $\pm p_e$).

\medskip

An analogous expansion can be obtained for the error term
$\eta^{(k)}_{n,t}$ in (\ref{eq:eta}). The problem now is to analyze
the integral (\ref{eq:contriLam}) (and the corresponding integral
for the error term). Through an appropriate choice of the
regularization factors $\mu_e$ one can extract the time dependence
of $K_{\Lambda,t}$ and show that
\begin{equation}\label{eq:bound1}
\begin{split}
|K_{\Lambda,t}| \leq C^{k+m} \; t^{m/4}  \int & \prod_{e \in
E(\Gamma)} \frac{\rd \a_e \rd p_e}{\langle \a_e - p_e^2 \rangle} \;
\prod_{v \in V(\Gamma)} \delta \left( \sum_{e \in v} \pm \a_e
\right) \delta \left( \sum_{e \in v} \pm p_e \right) \, \\ &\times
\Big|\wh J^{(k)} \left( \{ p_e \}_{e \in R(\Gamma)} \right) \Big| \;
\Big| \wh \gamma^{(k+m)}_0 \left( \{ p_e \}_{e\in L
(\Gamma)}\right)\Big|
\end{split}
\end{equation}
where we introduced the notation $\langle x \rangle =
(1+x^2)^{1/2}$.

\medskip

Because of the singularity of the interaction at zero, we may be
faced here with an ultraviolet problem; we have to show that all
integrations in (\ref{eq:bound1}) are finite in the regime of large
momenta and large frequency. Because of (\ref{eq:aprit0}), we know
that the kernel $\wh \gamma^{(k+m)}_0 (\{ p_e \}_{e \in
L(\Lambda)})$ in (\ref{eq:bound1}) provides decay in the momenta of
the leaves. {F}rom (\ref{eq:aprit0}) we have, in momentum space,
\[ \int \rd p_1 \dots \rd p_{n} \; (p_1^2+1) \dots (p_{n}^2 + 1) \,
\wh \gamma^{(n)}_0 (p_1, \dots ,p_{n}; p_1, \dots ,p_{n}) \leq C^n
\] for all $n \geq 1$. Heuristically, this suggests that
\begin{equation}\label{eq:bound2} |\wh \gamma_{0}^{(k+m)} (\{ p_e \}_{e
\in L(\Lambda)})| \lesssim \prod_{e \in L(\Lambda)} \langle p_e
\rangle^{-5/2} \, ,\end{equation} where $\langle p \rangle = (1+p^2)^{1/2}$.
Using this decay in the momenta of
the leaves and the decay of the propagators $\langle \a_e - p_e^2
\rangle^{-1}, e \in E(\Lambda)$, we can prove the finiteness of all
the momentum and frequency integrals in (\ref{eq:contriLam}).
On the heuristic level, this can be seen using a simple power counting
argument. Fix $\kappa \gg 1$, and cutoff all momenta $|p_e| \geq
\kappa$ and all frequencies $|\a_e| \geq \kappa^2$. Each
$p_e$-integral scales then as $\kappa^3$, and each $\a_e$-integral
scales as $\kappa^2$. Since we have $2k+3m$ edges in $\Lambda$, we
have $2k+3m$ momentum- and frequency integrations. However, because
of the $m$ delta functions (due to momentum and frequency
conservation), we effectively only have to perform $2k+2m$ momentum-
and frequency-integrations. Therefore the whole integral in
(\ref{eq:contriLam}) carries a volume factor of the order
$\kappa^{5(2k+2m)} = \kappa^{10k + 10m}$. Now, since there are
$2k+2m$ leaves in the graph $\Lambda$, the estimate
(\ref{eq:bound2}) guarantees a decay of the order $\kappa^{-5/2 (2k
+ 2m)} = \kappa^{-5k-5m}$. The $2k+3m$ propagators, on the other
hand, provide a decay of the order $\kappa^{-2(2k+3m)} =
\kappa^{-4k-6m}$. Choosing the observable $J^{(k)}$ so that $\wh
J^{(k)}$ decays sufficiently fast at infinity, we can also gain an
additional decay $\kappa^{-6k}$. Since
\[ \kappa^{10k+10m} \cdot \kappa^{-5k-5m -4k -6m -6k} =
\kappa^{-m-5k} \ll 1 \] for $\kappa \gg 1$, we can expect
(\ref{eq:contriLam}) to converge in the large momentum and large
frequency regime. Remark the importance of the decay provided by the
free evolution (through the propagators); without making use of it,
we would not be able to prove the uniqueness of the infinite
hierarchy.

\medskip

This heuristic argument is clearly far from rigorous. To obtain a
rigorous proof, we use an integration scheme dictated by the
structure of the graph $\Lambda$; we start by integrating the
momenta and the frequency of the leaves (for which (\ref{eq:bound2})
provides sufficient decay). The point here is that when we perform
the integrations over the momenta of the leaves we have to propagate
the decay to the next edges on the left. We move iteratively from
the right to the left of the graph, until we reach the roots; at
every step we integrate the frequencies and momenta of the son edges
of a fixed vertex and as a result we obtain decay in the momentum of
the father edge. When we reach the roots, we use the decay of the
kernel $\wh J^{(k)}$ to complete the integration scheme. In a
typical step, we consider a vertex as the one drawn in Figure
\ref{fig:2} and we assume to have decay in the momenta of the three
son-edges, in the form $|p_e|^{-\lambda}$, $e=u,d,w$ (for some $2<
\lambda < 5/2$). Then we integrate over the frequencies $\a_u, \a_d,
\a_w$ and the momenta $p_u, p_d, p_w$ of the son-edges and as a
result we obtain a decaying factor $|p_r|^{-\lambda}$ in the
momentum of the father edge.
\begin{figure}
\begin{center}
\epsfig{file=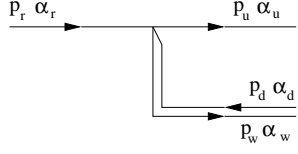,scale=.75}
\end{center}
\caption{Integration scheme: a typical vertex}\label{fig:2}
\end{figure}
In other words, we prove bounds of the form
\begin{equation}\label{eq:scheme}
\int  \frac{\rd \a_u \rd \a_d \rd \a_w \rd p_u \rd p_d \rd
p_w}{|p_u|^{\lambda} |p_d|^{\lambda} |p_w|^{\lambda}} \,
\frac{\delta (\a_r = \a_u +\a_d - \a_w)\delta (p_r = p_u +p_d
-p_w)}{\langle \a_u - p_u^2 \rangle \langle \a_d - p_d^2 \rangle
\langle \a_w -p_w^2 \rangle} \leq \frac{\const}{|p_r|^{\lambda}}\,.
\end{equation}
Power counting implies that (\ref{eq:scheme}) can only be correct if
$\lambda >2$. On the other hand, to start the integration scheme we
need $\lambda < 5/2$ (from (\ref{eq:bound2}) this is the decay in
the momenta of the leaves, obtained from the a-priori estimates). It
turns out that, choosing $\lambda = 2 + \eps$ for a sufficiently
small~$\eps>0$, (\ref{eq:scheme}) can be made precise, and the
integration scheme can be completed.

\medskip

After integrating all the frequency and momentum variables, from
(\ref{eq:bound1}) we obtain that
\[ |K_{\Lambda,t}| \leq C^{k+m} \; t^{m/4} \] for every $\Lambda \in
\cF_{m,k}$. Since the number of diagrams in $\cF_{m,k}$ is bounded
by $C^{k+m}$, it follows immediately that
\[ \left| \tr \; J^{(k)} \, \xi_{m,t}^{(k)} \right| \leq C^{k+m}
t^{m/4} \,. \] Note that, from (\ref{eq:xi}), one may expect
$\xi_{m,t}^{(k)}$ to be proportional to $t^m$. The reason why we
only get a bound proportional to $t^{m/4}$ is that we effectively
use part of the time integration to control the singularity of the
potentials.

\medskip

The only property of $\gamma^{(k+m)}_0$ used in the
analysis of (\ref{eq:contriLam}) is the estimate (\ref{eq:aprit0}),
which provides the necessary decay in the momenta of the leaves.
Since the a-priori bound (\ref{eq:aprit0}) hold uniformly in
time, we can use a similar argument to bound the contribution
arising from the error term $\eta^{(k)}_{n,t}$ in (\ref{eq:eta}) (as
explained above, also $\eta^{(k)}_{n,t}$ can be expanded analogously
to (\ref{eq:expa}), with contributions associated to Feynman graphs
similar to (\ref{eq:contriLam}); the difference, of course, is that
these contributions will depend on $\gamma^{(k+n)}_s$ for all $s \in
[0,t]$, while (\ref{eq:contriLam}) only depends on the initial
data). We get \begin{equation}\label{eq:errorbound}
\left| \tr \; J^{(k)} \, \eta_{n,t}^{(k)} \right| \leq C^{k+n}\;
t^{n/4}\,. \end{equation} This bound immediately implies the
uniqueness. In fact, given two solutions $\Gamma_{1,t} = \{
\gamma_{1,t}^{(k)} \}_{k \geq 1}$ and $\Gamma_{2,t} = \{
\gamma_{2,t}^{(k)} \}_{ k \geq 1}$ of the infinite hierarchy
(\ref{eq:GPhier-int}), both satisfying the a-priori bounds
(\ref{eq:aprit0}) and with the same initial data, we can expand both
in a Duhamel series of order $n$ as in (\ref{eq:Duhamel}). If we fix
$k \geq 1$, and consider the difference between $\gamma^{(k)}_{1,t}$
and $\gamma^{(k)}_{2,t}$, all terms (\ref{eq:xi}) cancel out because
they only depend on the initial data. Therefore, from
(\ref{eq:errorbound}), we immediately obtain that, for arbitrary
(sufficiently smooth) compact $k$-particle operators $J^{(k)}$,
\[ \left|\tr J^{(k)} \left( \gamma_{1,t}^{(k)} - \gamma_{2,t}^{(k)}
\right)\right| \leq 2 \, C^{k+n} \; t^{n/4} \,. \] Since it is
independent of $n$, the left side has to vanish for all $t < 1/(2C)^4$.
This proves uniqueness for short times. But then, since the a-priori
bounds hold uniformly in time, the argument can be repeated to prove
uniqueness for all times.

\subsection{Other Microscopic Models Leading to the Nonlinear Schr\"odinger Equation}

As discussed in Section \ref{sec:comp-mf}, the strategy used to prove Theorem \ref{thm:mainGP} is dictated by the formal similarity with the mean-field systems discussed in Section \ref{sec:mf}; from (\ref{eq:GP-mf}) the Hamiltonian characterizing dilute Bose gases in the Gross-Pitaevskii scaling can be formally interpreted as a mean field Hamiltonian with an $N$-dependent potential converging to a delta-function as $N \to \infty$ (the physics described by the two models is however completely different). The choice of the $N$-dependent potential $V_N (x) = N^2 V (Nx)$ in the Gross-Pitaevskii scaling is, of course, not the only choice for which the formal identification with a mean-field model is possible. For arbitrary $\beta >0$, we can for example define the $N$-particle Hamiltonian
\begin{equation*}
H_{N,\beta} = \sum_{j=1}^N -\Delta_j + \frac{1}{N} \sum_{i<j} N^{3\beta} V (N^{\beta} (x_i -x_j)) \,
\end{equation*}
acting on the Hilbert space $L^2_s (\bR^{3N})$. The Hamiltonian (\ref{eq:hamGP}) is recovered by choosing $\beta = 1$. For $0 < \beta < 1$, the potential $N^{3\beta} V (N^{\beta} x)$ still converges to a delta-function as  $N \to \infty$, but the convergence is slower. This fact has important consequences for the macroscopic dynamics; it turns out, in fact, that for $0< \beta <1$ the correlation structure developed by the evolved wave function $\psi_{N,\beta,t} = e^{-iH_{N,\beta} t}$ varies on much shorter length scales compared with the length scale $N^{-\beta}$ characterizing the potential. Therefore, for $0< \beta <1$, the time evolution of the condensate wave function is still governed by a cubic nonlinear Schr\"odinger equation; this time, however, the coupling constant in front of the nonlinearity is given by $b_0 = \int V$ instead of $8 \pi a_0$ (recall that the emergence of the scattering length in the Gross-Pitaevskii equation was a consequence of the interplay between the correlation structure in the many body wave function and the interaction potential; since, for $0< \beta <1$, the potential and the correlation structure vary on different length scales, this interplay is suppressed). The following theorem can be proven using the techniques developed in \cite{ESY5} (the statement for $0< \beta <1/2$ was proven in \cite{ESY2}; in \cite{ESY3}, the whole range $0< \beta \leq 1$ was covered, but only for sufficiently small potentials). 
\begin{theorem}
Suppose that $V \geq 0$ satisfies the same assumption as in Theorem \ref{thm:mainGP}, and assume that $0 < \beta \leq 1$. Let $\psi_N (\bx) = \prod_{j=1}^N \ph (x_j)$, for some $\ph \in H^1 (\bR^3)$ and $\psi_{N,t}= e^{-iH_{\beta,N} t} \psi_N$ with the
mean-field Hamiltonian
\[ H_{\beta,N} = \sum_{j=1}^N -\Delta_j + \frac{1}{N} \sum_{i<j}^N N^{3\beta}
V(N^{\beta}(x_i -x_j))\,. \]
Then, for every fixed $k \geq 1$ and $t \in \bR$, we have \[ \gamma^{(k)}_{N,t} \to |\ph_t \rangle \langle \ph_t|^{\otimes k} \] as $N \to \infty$, where $\ph_t$ is the solution to the
nonlinear Schr\"odinger equation
\[ i\partial_t \ph_t = -\Delta \ph_t + \sigma |\ph_t|^2 \ph_t \]
with initial data $\ph_{t=0} = \ph$ and with \[ \sigma = \left\{
\begin{array}{ll} 8 \pi a_0 \quad & \text{if } \beta =1 \\
b_0  \quad & \text{if } 0 < \beta < 1
\end{array} \right. \, .\]
\end{theorem}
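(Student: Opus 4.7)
The plan is to follow the three-step strategy (compactness, convergence, uniqueness) that was used for Theorem \ref{thm:mainGP}. The case $\beta=1$ is already covered there, with the identification of the scattering length $a_0$ coming from the interplay between the potential $V_N(x)=N^2V(Nx)$ and the zero-energy scattering solution $f_N$, both of which vary on the same length scale $N^{-1}$. The new content is therefore the regime $0<\beta<1$; the overall strategy is identical, but the analysis is genuinely simpler because the potential $N^{3\beta}V(N^\beta x)$ now varies on scale $N^{-\beta}\gg N^{-1}$, whereas the correlation structure developed by $\psi_{N,\beta,t}$ still varies on the much shorter scale $N^{-1}$. Thus the two scales decouple, and when we integrate the interaction against a regularized density, the correlation factor is essentially equal to one on the support of the potential; no compensating $f_N$ has to be divided out, and the effective coupling constant is simply $b_0=\int V$ rather than $8\pi a_0$.

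First I would establish compactness of $\Gamma_{N,t}=\{\gamma^{(k)}_{N,t}\}_{k=1}^N$ in $\bigoplus_{k\geq 1}C([0,T],\cL^1_k)$ with respect to $\tau_{\text{prod}}$, by the Arzela--Ascoli argument of Proposition \ref{prop:compactness}; this step is insensitive to the value of $\beta$. Next I would derive a-priori bounds on any limit point $\{\gamma^{(k)}_{\infty,t}\}_{k\geq 1}$ of the form
\begin{equation*}
\tr\,(1-\Delta_1)\cdots(1-\Delta_k)\gamma^{(k)}_{\infty,t}\le C^k,
\end{equation*}
via the cutoff higher-order energy estimates of Proposition \ref{prop:highen}: the bounds (\ref{eq:highen}) are insensitive to the scaling of the potential, and for $0<\beta<1$ the cutoffs $\theta_j$ may be chosen at any intermediate scale $\ell$ with $N^{-1}\ll\ell\ll 1$, so the same argument yielding Proposition \ref{prop:aprikGP} applies. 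I would use the approximation of $\psi_N$ by $\chi(\delta H_{N,\beta}/N)\psi_N$ as in (\ref{eq:reginit}) to upgrade (\ref{eq:assH1}) to (\ref{eq:asskGP}).

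The core step is the convergence to the infinite hierarchy. Starting from the BBGKY hierarchy for $\gamma^{(k)}_{N,t}$ (written in the integral form analogous to (\ref{eq:BBGKY-int}) with $V_N$ replaced by $N^{3\beta}V(N^\beta\cdot)$), I would show that any limit point solves
\begin{equation*}
\gamma^{(k)}_{\infty,t}=\cU^{(k)}(t)\gamma^{(k)}_{\infty,0}-i\sigma\sum_{j=1}^k\int_0^t ds\,\cU^{(k)}(t-s)\tr_{k+1}\bigl[\delta(x_j-x_{k+1}),\gamma^{(k+1)}_{\infty,s}\bigr]
\end{equation*}
with the appropriate $\sigma$. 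The term quadratic in $V_N$ inside the first $k$ particles carries an extra $1/N$ and is negligible. The main step is the analysis of the interaction term between the first $k$ and the last $N-k$ particles. Following the decomposition (\ref{eq:deco}): for $\beta=1$ one regularizes the two-particle marginal by dividing by $f_N(x_j-x_{k+1})$ on both sides, uses the energy estimate of Proposition \ref{prop:enestGP} (or the wave-operator version Proposition \ref{prop:energ2} for large $V$) together with Corollary \ref{cor:apri}, and then (\ref{eq:8pia0}) produces the scattering length; for $0<\beta<1$ no such regularization is needed, since $N^{3\beta-1}\int V(N^\beta y)dy=b_0+o(1)$ and a Poincar\'e-type argument against the smooth density $\gamma^{(k+1)}_{\infty,s}$ yields directly the coupling $b_0$. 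In both cases one inserts an intermediate mollifier $h_\alpha$ of the delta function, lets $N\to\infty$ first and then $\alpha\to 0$, and uses the a-priori regularity to control the error.

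Finally, uniqueness of the limiting hierarchy in the class of densities satisfying (\ref{eq:aprit0}) is provided by Proposition \ref{prop:uniqueGP}: the proof there depends only on the structure of the collision operator $B^{(k)}$ with a $\delta$-interaction and on the a-priori bounds, so it applies verbatim once the constant $8\pi a_0$ is replaced by $\sigma$. Since the factorized ansatz $\gamma^{(k)}_t=|\ph_t\rangle\langle\ph_t|^{\otimes k}$ with $\ph_t$ solving $i\partial_t\ph_t=-\Delta\ph_t+\sigma|\ph_t|^2\ph_t$ trivially satisfies the infinite hierarchy and the a-priori bound (propagation of $H^k$-regularity for NLS with smooth initial data $\ph\in H^1$ can be bootstrapped from standard estimates), uniqueness forces every limit point to coincide with this factorized family, and weak* convergence of $\gamma^{(k)}_{N,t}$ to the rank-one projection $|\ph_t\rangle\langle\ph_t|^{\otimes k}$ upgrades automatically to trace-norm convergence. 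The hardest step is the convergence argument for $\beta=1$, because there one must track the singular correlation structure through the energy estimates to recover $8\pi a_0$ in place of $b_0$; for $0<\beta<1$ this is precisely what does \emph{not} happen, which is why the coupling is the naive one.
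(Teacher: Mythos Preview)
Your sketch is essentially the approach the paper itself points to: the paper does not give a self-contained proof of this theorem but states that it ``can be proven using the techniques developed in \cite{ESY5}'' (with \cite{ESY2} covering $0<\beta<1/2$ and \cite{ESY3} the full range for small potential), and your outline---compactness via Proposition~\ref{prop:compactness}, higher-order energy estimates with cutoffs as in Proposition~\ref{prop:highen} to get the a-priori bounds of Proposition~\ref{prop:aprikGP}, convergence to the infinite hierarchy along the lines of Proposition~\ref{prop:conv-GP} and the decomposition (\ref{eq:deco}), and uniqueness via Proposition~\ref{prop:uniqueGP}---is exactly that strategy. Your explanation of why $\sigma=b_0$ for $\beta<1$ (decoupling of the potential scale $N^{-\beta}$ from the correlation scale $\sim N^{-1}$) matches the paper's own heuristic discussion preceding the theorem.

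Two small points worth cleaning up. First, the displayed identity ``$N^{3\beta-1}\int V(N^\beta y)\,dy=b_0+o(1)$'' is miscomputed: the left side equals $N^{-1}b_0$, not $b_0$. What you want is that $(N-k)\cdot N^{-1}\cdot\int N^{3\beta}V(N^\beta y)\,dy\to b_0$, i.e.\ the factor $(N-k)$ in the BBGKY interaction term cancels the $1/N$ in the Hamiltonian. Second, the phrase ``a Poincar\'e-type argument against the smooth density $\gamma^{(k+1)}_{\infty,s}$'' is slightly misleading: in the convergence step you must work with $\gamma^{(k+1)}_{N,s}$, not its limit; the Poincar\'e inequality (Lemma~\ref{lm:poin}) is applied to a regularized version of $\gamma^{(k+1)}_{N,s}$, and only after inserting the mollifier $h_\alpha$ do you pass to the limit. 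For $\beta<1$ the regularization needed is milder than for $\beta=1$ (one does not need to divide out $f_N$), but some form of two-body energy estimate is still required to justify replacing $N^{3\beta}V(N^\beta\cdot)$ by $b_0\delta$ against the finite-$N$ density.
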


\section{Rate of Convergence towards Mean-Field Dynamics}
\setcounter{equation}{0}\label{sec:RS}

{F}rom the results of Section \ref{sec:mf}, we obtain that, for every fixed $t \in \bR$, and for every fixed $k \in \bN$, \[ \gamma^{(k)}_{N,t} \to |\ph_t \rangle \langle \ph_t|^{\otimes k} \] where $\gamma^{(k)}_{N,t}$ is the $k$-particle density associated with the solution $\psi_{N,t}$ of the $N$-particle Schr\"odinger equation, and $\ph_t$ is the solution of the Hartree equation. {F}rom Theorem \ref{thm:bd} and Theorem \ref{thm:cou}, we do not get any information about the rate of convergence of $\gamma^{(k)}_{N,t}$ to $|\ph_t \rangle \langle \ph_t|^{\otimes k}$. We only know that the difference $\gamma^{(k)}_{N,t} - |\ph_t \rangle \langle \ph_t|^{\otimes k}$ converges to zero, but we do not know how fast. Also in Section \ref{sec:GP}, we do not obtain any information about the rate of convergence of the $N$-particle Schr\"odinger evolution towards the Gross-Pitaevskii equation. This is not only a question of academic interest; in order to apply these results to physically relevant situations, bounds on the error are essential.

\medskip

For bounded potential, (\ref{eq:diffbd2}) implies (specializing to the case $k=1$) that
\begin{equation}\label{eq:err} \tr \; \left| \gamma^{(1)}_{N,t} - |\ph_t \rangle \langle \ph_t | \right| \leq C N^{-1} \end{equation}
for sufficiently short times $0 \leq t \leq t_0 = 1/(8\| V \|)$. Iterating the argument leading to (\ref{eq:err}) to obtain estimates valid for larger times, it is possible to derive bounds of the form
\begin{equation}\label{eq:errexp} \tr \; \left| \gamma^{(1)}_{N,t} - |\ph_t \rangle \langle \ph_t | \right| \leq \frac{C}{N^{2^{-t}}} \, . \end{equation} Although (\ref{eq:errexp}) shows that, for every fixed $t>0$, the difference $\gamma^{(1)}_{N,t} - |\ph_t \rangle \langle \ph_t|$ converges to zero, it is not very useful in applications because it deteriorates too fast; one would like to find bounds on the difference $\gamma^{(1)}_{N,t} - |\ph_t \rangle \langle \ph_t |$ which are of the same order in $N$ for every fixed time.

\medskip

In \cite{RS}, a joint work with I. Rodnianski, we obtain such bounds for mean-field systems with potential having at most a Coulomb singularity; the problem of obtaining error estimates for the Gross-Pitaevskii dynamics is still open. To prove such bounds, we do not make use of the BBGKY hierarchy. Instead, we use an approach, introduced by Hepp in \cite{Hepp} and extended by Ginibre and Velo in \cite{GV1}, based on embedding the $N$-body Schr\"odinger system into the second quantized Fock-space representation and on the use of coherent states as initial data (coherent states do not have a fixed number of particles; this is what makes the use of a Fock-space representation necessary). The Hartree dynamics emerges as the main component of the evolution of coherent states in the mean field limit. To obtain bounds on the difference $\gamma^{(1)}_{N,t} - |\ph_t \rangle \langle \ph_t |$  for initial data describing coherent states, it is therefore enough to study the fluctuations around the Hartree dynamics, and to prove that, in an appropriate sense, they are small. Since factorized $N$-particle wave functions can be written as appropriate linear combinations of coherent states, the estimates for coherent initial data can be translated into bounds for factorized initial data. Using these techniques, we prove in \cite{RS} the following theorem. We focus in this section on three dimensional systems, which are the most interesting from the point of view of physics; most of the results however can be extended to dimension $d \neq 3$.

\begin{theorem}\label{thm:fact}
Suppose that there exists $D>0$ such that the operator inequality
\begin{equation}\label{eq:assump} V^2 (x) \leq D \; (1-\Delta_x)  \end{equation} holds true.
Let
\begin{equation*}
\psi_N (\bx) = \prod_{j=1}^N \ph (x_j),\end{equation*} for some $\ph
\in H^1 (\bR^3)$ with $\| \ph \| =1$. Denote by $\psi_{N,t} = e^{-i
H_N t} \psi_N$ the solution to the Schr\"odinger equation
(\ref{eq:schr}) with initial data $\psi_{N,0}= \psi_N$, and let
$\gamma^{(1)}_{N,t}$ be the one-particle density associated with
$\psi_{N,t}$. Then there exist constants $C,K$, depending only on
the $H^1$ norm of $\ph$ and on the constant $D$ on the r.h.s. of
(\ref{eq:assump}) such that
\begin{equation}\label{eq:RS-fact}
\tr\; \Big| \gamma^{(1)}_{N,t} - |\ph_t \rangle \langle \ph_t| \Big|
\leq \frac{C}{N^{1/2}} \; e^{K |t|} \, ,
\end{equation}
for every $t \in \bR$ and every $N \in \bN$. Here $\ph_t$ is the solution to the nonlinear Hartree equation
\begin{equation}\label{eq:hartree2}
i\partial_t \ph_t = -\Delta \ph_t + (V *|\ph_t|^2 ) \ph_t
\end{equation}
with initial data $\ph_{t=0} = \ph$.
\end{theorem}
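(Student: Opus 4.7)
The plan is to follow the Hepp/Ginibre--Velo strategy as announced in the text, embedding the problem into bosonic Fock space $\cF=\bigoplus_{n\ge 0} L^2_s(\bR^{3n})$ and comparing the evolution of coherent states to a suitably modified Hartree evolution. Introduce the second-quantized Hamiltonian $\cH_N=\int\nabla_x a_x^*\nabla_x a_x\,\rd x+(2N)^{-1}\int\int V(x-y)a_x^*a_y^*a_ya_x\,\rd x\rd y$ which agrees with $H_N$ on the $N$-particle sector, and the Weyl operators $W(f)=\exp(a^*(f)-a(f))$. The coherent state $W(\sqrt{N}\ph)\Omega=e^{-N/2}\sum_n (N^{n/2}/\sqrt{n!})\,\ph^{\otimes n}$ is the natural object to propagate: its $N$-particle component is proportional to $\ph^{\otimes N}$ with a Stirling factor of order $N^{-1/4}$.

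First I would define the fluctuation dynamics $\cU_N(t;s)$ by the identity
\begin{equation*}
e^{-i\cH_N t}\,W(\sqrt{N}\ph)\Omega = e^{-i\chi_N(t)}\,W(\sqrt{N}\ph_t)\,\cU_N(t;0)\Omega,
\end{equation*}
where $\ph_t$ solves the Hartree equation (\ref{eq:hartree2}) and $\chi_N(t)$ is an explicit phase. Conjugating $\cH_N$ by $W(\sqrt{N}\ph_t)$ and subtracting the Hartree flow, the generator of $\cU_N(t;s)$ becomes a sum of a quadratic part in $a^*,a$ (bounded independently of $N$ on states of bounded particle number) plus cubic and quartic pieces of size $N^{-1/2}$ and $N^{-1}$ respectively. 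The central estimate is a Gr\"onwall-type bound
\begin{equation*}
\langle\cU_N(t;0)\Omega,\,(\cN+1)^k\,\cU_N(t;0)\Omega\rangle \le C_k\, e^{K_k|t|}
\end{equation*}
for $k=1,2$; this is where the hypothesis $V^2\le D(1-\Delta)$ is used, to control the cubic term $N^{-1/2}\int V(x-y)a_x^*(a^*\ph_t)(y)a_y a_x$ against the kinetic energy on Fock space. The constant $K$ ultimately depends only on $\|\ph\|_{H^1}$ and $D$, via propagation of regularity for the Hartree equation and a bound on $\|\ph_t\|_{H^1}$ in terms of $\|\ph\|_{H^1}$.

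Given this, one computes the one-particle density of the evolved coherent state directly: for a one-particle observable $J$, the expectation $N^{-1}\langle e^{-i\cH_N t}W(\sqrt N\ph)\Omega,\,\rd\Gamma(J)\,e^{-i\cH_N t}W(\sqrt N\ph)\Omega\rangle$ decomposes after conjugation by $W(\sqrt N\ph_t)$ into the leading term $\langle\ph_t,J\ph_t\rangle$ plus four error terms each involving at most one factor of $\cU_N(t;0)\Omega$ against an $a^\sharp$-operator; using the bound above and the Cauchy--Schwarz inequality $|\langle\psi,a(f)\phi\rangle|\le \|f\|\,\|\psi\|\,\|\cN^{1/2}\phi\|$, every error is bounded by $CN^{-1/2}e^{K|t|}$. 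This yields the desired rate for coherent initial data.

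The last and probably most delicate step is the passage from coherent data to the factorized data $\ph^{\otimes N}$. Write $\ph^{\otimes N}=d_N P_N W(\sqrt N\ph)\Omega$ with $P_N$ the projection onto the $N$-particle sector and $d_N\sim N^{1/4}$. Since $P_N$ commutes with $\cH_N$, the evolved factorized state equals $d_N P_N$ applied to the evolved coherent state. For any bounded one-particle observable $J$ with $\|J\|\le 1$, one then expresses $\tr\,J(\gamma_{N,t}^{(1)}-|\ph_t\rangle\langle\ph_t|)$ as a ratio whose numerator is a matrix element of $\rd\Gamma(J)-N\langle\ph_t,J\ph_t\rangle$ between $P_N W(\sqrt N\ph_t)\cU_N(t;0)\Omega$ and itself, divided by $N\|P_N W(\sqrt N\ph)\Omega\|^2$. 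I expect the main obstacle to lie precisely here: the insertion of $P_N$ destroys the explicit action of $W(\sqrt N\ph_t)$ on creation/annihilation operators, and one must show that shifting the projection through $W(\sqrt N\ph_t)$ costs only $O(N^{-1/2})$. The key technical input is a Stirling-type estimate $\sum_{|n-N|>K\sqrt N}|\langle P_n W(\sqrt N\ph_t)\Omega|^2|\le CK^{-2}$, together with the moment bound on $\cN$ above, which together confine the relevant particle numbers to the window $|n-N|\lesssim\sqrt N$ and convert the $N^{-1/4}$ from $d_N$ and the $N^{-1/2}$ from the coherent-state analysis into the final $N^{-1/2}e^{K|t|}$. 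Taking the supremum over $\|J\|\le 1$ and invoking the identity $\tr|\gamma^{(1)}_{N,t}-|\ph_t\rangle\langle\ph_t||\le 2\sup_{\|J\|\le 1}|\tr J(\gamma^{(1)}_{N,t}-|\ph_t\rangle\langle\ph_t|)|$ (valid since the difference has rank at most $2$ up to a small error, by a standard trick using that $|\ph_t\rangle\langle\ph_t|$ is rank one) yields (\ref{eq:RS-fact}).
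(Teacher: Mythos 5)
Your overall route --- Fock space, coherent states, the fluctuation dynamics $\cU_N(t;s)$ obtained by conjugating with $W(\sqrt{N}\ph_t)$, growth bounds on $\cN$, and then projection onto the $N$-particle sector --- is the same as the paper's, but the two steps you treat as routine are exactly where the real work lies, and as written they do not go through. First, the bound $\langle \cU_N(t;0)\Omega, (\cN+1)^k\,\cU_N(t;0)\Omega\rangle \le C_k e^{K_k|t|}$ is \emph{not} a direct Gr\"onwall estimate: the generator (\ref{eq:cLN}) contains the cubic term $N^{-1/2}\int \rd x\,\rd y\, V(x-y)\, a_x^*\bigl(\ph_t(y)a_y^* + \overline{\ph}_t(y)a_y\bigr)a_x$, whose commutator with $\cN$ is again cubic, so its expectation in the state $\cU_N(t;0)\Omega$ is only controlled by quantities of the type $N^{-1/2}\|(\cN+1)^{3/2}\cU_N\Omega\|\,\|(\cN+1)^{1/2}\cU_N\Omega\|$; the resulting hierarchy of moment inequalities does not close, and the hypothesis $V^2\le D(1-\Delta)$ (which only yields $\sup_x\|V(x-\cdot)\ph_t\|\le C\|\ph_t\|_{H^1}$) does not help with this structural problem. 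The paper closes the argument by introducing the truncated dynamics $\cW_N(t;s)$ with the cutoff $\chi(\cN<N)$ inserted in the cubic term of its generator (\ref{eq:cKN}), proving the moment bounds (\ref{eq:WNts}) for $\cW_N$ and the weak a-priori bounds (\ref{eq:weakbd}) for $\cU_N$, and then comparing the two evolutions to obtain (\ref{eq:lm1cou}); some such device is needed and is absent from your sketch.

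Second, the passage from coherent to factorized data. Your identity $\ph^{\otimes N}=d_N P_N W(\sqrt{N}\ph)\Omega$ is correct (writing $P_N$ as the average $\int_0^{2\pi}\frac{\rd\theta}{2\pi}e^{i\theta(N-\cN)}$ is exactly the paper's $\theta$-integral), but the mechanism you propose --- confining the particle number to the window $|n-N|\lesssim \sqrt{N}$ by a Stirling-type estimate together with the moment bounds --- only reproduces the rate $N^{-1/4}$: in the expansion (\ref{eq:gamma1-phi}) the two terms \emph{linear} in the fluctuation carry the prefactor $d_N/\sqrt{N}\simeq N^{-1/4}$ times $\|a_x\,\cU_N^{\theta}(t;0)\Omega\|$, and the number-window argument produces no additional cancellation there; this is precisely the first bound (\ref{eq:fintrace}) obtained in the paper. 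The improvement to $N^{-1/2}$ stated in the theorem requires a genuinely finer analysis of those two linear terms (carried out in \cite{RS}, Lemma 4.2), which your proposal asserts but does not supply. A minor point: in the final trace-norm step the useful statement is the one in the paper --- $\gamma^{(1)}_{N,t}-|\ph_t\rangle\langle\ph_t|$ has trace zero and at most one negative eigenvalue, hence its trace norm is at most twice its operator norm and so at most twice its Hilbert--Schmidt norm --- whereas the inequality $\tr|A|\le 2\sup_{\|J\|\le 1}|\tr JA|$ over all bounded $J$ is circular and gives nothing by itself.
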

{\it Remarks.} Condition (\ref{eq:assump}) is in particular satisfied by bounded potentials and by potentials with an attractive or repulsive Coulomb singularity. Theorem \ref{thm:fact} implies therefore Theorem \ref{thm:bd} and Theorem \ref{thm:cou}. Note, moreover, that the decay of the order $N^{-1/2}$ on the r.h.s. of (\ref{eq:RS-fact}) is not expected to be optimal. In fact, for initially coherent states we obtain in Theorem \ref{thm:coh} the expected decay of the order $1/N$ for every fixed time $t \in \bR$; unfortunately, when factorized initial data are expressed as a superposition of coherent states, part of the decay is lost (note, however, that for a certain class of bounded potential a decay of the order $N^{-1}$ for factorized initial data has recently been established in \cite{ErS}).

\subsection{Fock-Space Representation}

We define the bosonic Fock space over $L^2 (\bR^3, \rd x)$ as the
Hilbert space
\[ \cF = \bigoplus_{n \geq 0} L^2 (\bR^3 , \rd x)^{\otimes_s n} =
\bC \oplus \bigoplus_{n \geq 1} L^2_s (\bR^{3n}, \rd x_1 \dots \rd
x_n)\, ,
\] where we put $L^2 (\bR^3)^{\otimes_s 0} = \bC$.
Vectors in $\cF$ are sequences $\psi = \{ \psi^{(n)} \}_{n \geq 0}$
of $n$-particle wave functions $\psi^{(n)} \in L^2_s (\bR^{3n})$.
The scalar product on $\cF$ is defined by
\[ \langle \psi_1 , \psi_2 \rangle = \sum_{n \geq 0} \langle
\psi_1^{(n)} , \psi_2^{(n)} \rangle_{L^2 (\bR^{3n})} =
\overline{\psi_1^{(0)}} \psi_2^{(0)} + \sum_{n \geq 1} \int \rd x_1
\dots \rd x_n \, \overline{\psi_1^{(n)}} (x_1 , \dots , x_n)
\psi_2^{(n)} (x_1, \dots ,x_n) \,. \] An $N$ particle state with
wave function $\psi_N$ is described on $\cF$ by the sequence $\{
\psi^{(n)} \}_{ n \geq 0}$ where $\psi^{(n)} = 0$ for all $n \neq N$
and $\psi^{(N)} = \psi_N$. The vector $\{1, 0, 0, \dots \} \in \cF$
is called the vacuum, and will be denoted by $\Omega$.

\medskip

On $\cF$, we define the number of particles operator $\cN$, by $(\cN
\psi)^{(n)} = n \psi^{(n)}$. Eigenvectors of $\cN$ are vectors of
the form $\{ 0, \dots, 0, \psi^{(m)}, 0,  \dots \}$ with a fixed
number of particles $m$. For $f \in L^2 (\bR^3)$ we also define the
creation operator $a^* (f)$ and the annihilation operator $a(f)$ on
$\cF$ by
\begin{equation*}
\begin{split}
\left(a^* (f) \psi \right)^{(n)} (x_1 , \dots ,x_n) &=
\frac{1}{\sqrt n} \sum_{j=1}^n f(x_j) \psi^{(n-1)} ( x_1, \dots,
x_{j-1}, x_{j+1},
\dots , x_n) \\
\left(a (f) \psi \right)^{(n)} (x_1 , \dots ,x_n) &= \sqrt{n+1} \int
\rd x \; \overline{f (x)} \, \psi^{(n+1)} (x, x_1, \dots ,x_n) \, .
\end{split}
\end{equation*}
The operators $a^* (f)$ and $a(f)$ are unbounded, densely defined and
closed; $a^* (f)$ creates a particle with wave function $f$, $a(f)$ annihilates it. It is simple to check that, for arbitrary $n \geq 1$, \[ \frac{ \left( a^*(f) \right)^n}{\sqrt{n!}} \,\Omega = \{ 0, \dots ,0, f^{\otimes n} , 0, \dots \} \,.\]
The creation operator $a^*(f)$ is the adjoint of
the annihilation operator $a(f)$ (note that by definition $a(f)$ is
anti-linear in $f$), and they satisfy the canonical commutation
relations \begin{equation}\label{eq:comm} [ a(f) , a^* (g) ] =
\langle f,g \rangle_{L^2 (\bR^3)}, \qquad [ a(f) , a(g)] = [ a^*
(f), a^* (g) ] = 0 \,. \end{equation} For every $f\in L^2 (\bR^3)$,
we introduce the self adjoint operator
\[ \phi (f) = a^* (f) + a(f) \,. \]
We will also make use of operator valued distributions $a^*_x$ and
$a_x$ ($x \in \bR^3$), defined so that \begin{equation*}\begin{split}
a^* (f) &= \int \rd x \, f(x) \, a_x^* \\ a(f) & = \int \rd x \,
\overline{f (x)} \, a_x \end{split}
\end{equation*}
for every $f \in L^2 (\bR^3)$. The canonical commutation relations
take the form \[ [ a_x , a^*_y ] = \delta (x-y) \qquad [ a_x, a_y
] = [ a^*_x , a^*_y] = 0 \, .\]
The number of particle operator, expressed through the distributions
$a_x,a^*_x$, is given by
\[ \cN = \int \rd x \, a_x^* a_x \,. \]

The following lemma provides some useful bounds to control creation
and annihilation operators in terms of the number of particle
operator $\cN$.
\begin{lemma}\label{lm:a-bd}
Let $f \in L^2 (\bR^3)$. Then
\begin{equation*}
\begin{split}
\| a(f) \psi \| &\leq \| f \| \, \| \cN^{1/2} \psi \| \\
\| a^* (f) \psi \| &\leq \| f \| \, \| \left( \cN + 1 \right)^{1/2}
\psi \| \\
\| \phi (f) \psi \| &\leq 2 \| f \| \| \left( \cN + 1 \right)^{1/2}
\psi \|\, .
\end{split}
\end{equation*}
\end{lemma}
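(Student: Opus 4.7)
The plan is to prove the three bounds in order, since each one essentially feeds into the next. For the first inequality, I would work directly from the definition of $a(f)$ acting on each $n$-particle sector. Writing
\[ (a(f)\psi)^{(n)}(x_1,\dots,x_n) = \sqrt{n+1}\,\int \rd x\, \overline{f(x)}\,\psi^{(n+1)}(x,x_1,\dots,x_n), \]
a Cauchy--Schwarz in the $x$-integral gives $\|(a(f)\psi)^{(n)}\|^2 \leq (n+1)\|f\|^2\|\psi^{(n+1)}\|^2$. Summing over $n\geq 0$ and reindexing $m=n+1$ produces $\|a(f)\psi\|^2 \leq \|f\|^2 \sum_{m\geq 1} m \|\psi^{(m)}\|^2 = \|f\|^2 \|\cN^{1/2}\psi\|^2$, which is exactly the first bound (and justifies viewing $\psi \in D(\cN^{1/2})$ as the natural domain).

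For the second bound, rather than computing with the explicit kernel of $a^*(f)$, I would exploit that $a^*(f) = a(f)^*$ and use the canonical commutation relation $[a(f),a^*(f)] = \|f\|^2$ from \eqref{eq:comm}. Then
\[ \|a^*(f)\psi\|^2 = \langle \psi, a(f)a^*(f)\psi\rangle = \langle\psi, a^*(f)a(f)\psi\rangle + \|f\|^2 \|\psi\|^2 = \|a(f)\psi\|^2 + \|f\|^2\|\psi\|^2. \]
Applying the first bound to $\|a(f)\psi\|^2$ and combining the two terms yields $\|a^*(f)\psi\|^2 \leq \|f\|^2 \langle\psi,(\cN+1)\psi\rangle = \|f\|^2\|(\cN+1)^{1/2}\psi\|^2$, as desired.

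The third bound then follows by the triangle inequality: $\|\phi(f)\psi\| \leq \|a(f)\psi\| + \|a^*(f)\psi\|$, and since $\|\cN^{1/2}\psi\| \leq \|(\cN+1)^{1/2}\psi\|$ (as $\cN \leq \cN + 1$ as operators), both terms are bounded by $\|f\|\|(\cN+1)^{1/2}\psi\|$, yielding the prefactor $2$. There is no real obstacle here beyond being careful about domains: since $a^*(f)$ and $a(f)$ are unbounded, one should initially derive the inequalities on the dense subspace of vectors with finitely many nonzero components $\psi^{(n)}$ (where all manipulations are literal sums), and then extend to $D(\cN^{1/2})$ by approximation. The main conceptual content is already visible on the first sector calculation; everything else is a bookkeeping consequence of the CCR.
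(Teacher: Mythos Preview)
Your proof is correct and matches the paper's argument essentially line for line: the paper proves the second bound via the CCR exactly as you do, and derives the third from the first two by the triangle inequality. The only cosmetic difference is in the first bound, where the paper writes $\|a(f)\psi\| \leq \int \rd x\, |f(x)|\,\|a_x\psi\|$ in the operator-valued distribution notation and then applies Cauchy--Schwarz, whereas you carry out the equivalent computation sector by sector; unpacking $\int \rd x\, \|a_x\psi\|^2 = \|\cN^{1/2}\psi\|^2$ recovers exactly your sum.
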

\begin{proof}
The last inequality clearly follows from the first two. To prove the
first bound we note that
\begin{equation*}
\begin{split} \| a (f) \psi \| &\leq \int \rd x \, |f(x)| \, \| a_x
\psi \| \leq \left( \int \rd x \, |f(x)|^2 \right)^{1/2} \, \left(
\int \rd x \, \| a_x \psi \|^2 \right)^{1/2} \\ &= \| f \| \, \|
\cN^{1/2} \psi \|\, .
\end{split}
\end{equation*}
The second estimate follows by the canonical commutation relations
(\ref{eq:comm}) because
\begin{equation*}
\begin{split}
\| a^* (f) \psi \|^2 &= \langle \psi, a(f) a^* (f) \psi \rangle =
\langle \psi, a^* (f) a(f) \psi \rangle + \| f \|^2 \| \psi \|^2  \\
&= \| a(f) \psi \|^2 + \| f\|^2 \| \psi \|^2 \leq \| f\|^2 \, \left(
\|\cN^{1/2} \psi \| + \| \psi \|^2 \right) = \| f \|^2 \| \left( \cN
+1 \right)^{1/2} \psi \|^2 \, .
\end{split}
\end{equation*}
\end{proof}

Given $\psi \in \cF$, we define the one-particle density
$\gamma^{(1)}_{\psi}$ associated with $\psi$ as the
operator on $L^2 (\bR^3)$ with kernel given by
\begin{equation}\label{eq:margi} \gamma^{(1)}_{\psi} (x; y) = \frac{1}{\langle \psi,
\cN \psi \rangle} \, \langle \psi, a_y^* a_x \psi \rangle\, .
\end{equation} By definition, $\gamma_{\psi}^{(1)}$ is a positive trace
class operator on $L^2 (\bR^3)$ with $\tr \, \gamma_{\psi}^{(1)}
=1$. For every $N$-particle state with wave function $\psi_N \in
L^2_s (\bR^{3N})$ (described on $\cF$ by the sequence $\{ 0, 0,
\dots, \psi_N, 0,0, \dots \}$) it is simple to see that this
definition is equivalent to the definition (\ref{eq:gammakN}).

\medskip

We define the Hamiltonian $\cH_N$ on $\cF$ by $ (\cH_N \psi)^{(n)} =
\cH^{(n)}_N \psi^{(n)}$, with
\[ \cH^{(n)}_N = - \sum_{j=1}^n \Delta_j + \frac{1}{N} \sum_{i<j}^n
V(x_i -x_j) \, . \] Using the distributions $a_x, a^*_x$, $\cH_N$
can be rewritten as
\begin{equation}\label{eq:ham2} \cH_N = \int \rd x \nabla_x a^*_x
\nabla_x a_x + \frac{1}{2N} \int \rd x \rd y \, V(x-y) a_x^* a_y^*
a_y a_x \, . \end{equation} By definition the Hamiltonian $\cH_N$
leaves sectors of $\cF$ with a fixed number of particles invariant.
Moreover, it is clear that on the $N$-particle sector, $\cH_N$
agrees with the Hamiltonian $H_N$ (the subscript $N$ in $\cH_N$ is a
reminder of the scaling factor $1/N$ in front of the potential
energy). We will study the dynamics generated by the operator
$\cH_N$. In particular we will consider the time evolution of
coherent states, which we introduce next.

\medskip

For $f \in L^2 (\bR^3)$, we define the Weyl-operator
\begin{equation*}
W(f) = \exp \left( a^* (f) - a(f) \right) = \exp \left( \int \rd x
\, (f(x) a^*_x - \overline{f} (x) a_x) \right) \, .
\end{equation*}
Then the coherent state $\psi (f) \in \cF$ with one-particle wave
function $f$ is defined by
\[ \psi (f) = W(f) \Omega \, .\]
Notice that \begin{equation}\label{eq:coh} \psi (f)= W(f) \Omega =
e^{-\| f\|^2 /2} \sum_{n \geq 0} \frac{ (a^* (f))^n}{n!} \Omega  =
e^{-\| f\|^2 /2} \sum_{n \geq 0} \frac{1}{\sqrt{n!}} \, f^{\otimes n} \,,
\end{equation}
where $f^{\otimes n}$ indicates the Fock-vector $\{ 0, \dots , 0 ,f^{\otimes n}, 0, \dots \}$. This follows from
\[ \exp (a^* (f) - a (f)) = e^{-\|f \|^2/2} \exp (a^* (f)) \exp
(-a(f)) \] which is a consequence of the fact that the commutator $[
a (f) , a^* (f)] = \| f \|^2$ commutes with $a(f)$ and $a^* (f)$.
{F}rom (\ref{eq:coh}), it follows that coherent states are
superpositions of states with different number of particles (the
probability of having $n$ particles in $\psi (f)$ is given by
$e^{-\| f\|^2} \| f \|^{2n}/n!$).

\medskip

In the following lemma we collect some important and well known
properties of Weyl operators and coherent states.
\begin{lemma}\label{lm:coh}
Let $f,g \in L^2 (\bR^3)$.
\begin{itemize}
\item[i)] The Weyl operator satisfy the relations
\[ W(f) W(g) = W(g) W(f) e^{-2i \, \text{Im} \, \langle f,g \rangle} = W(f+g) e^{-i\, \text{Im} \, \langle f,g \rangle} \,. \]
\item[ii)] $W(f)$ is a unitary operator and
\[ W(f)^* = W(f)^{-1}  = W (-f). \]
\item[iii)] For every $x \in \bR^3$, we have \[ W^* (f) a_x W(f) = a_x + f(x), \qquad \text{and} \quad W^* (f) a^*_x
W(f) = a^*_x + \overline{f} (x) \, .\]
\item[iv)] {F}rom iii) it follows that coherent states are eigenvectors of annihilation operators
\[ a_x \psi (f) = f(x) \psi (f)  \qquad \Rightarrow \qquad a (g)
\psi (f) = \langle g, f \rangle_{L^2} \psi (f) \, .\]
\item[v)] The expectation of the number of particles in the coherent
state $\psi (f)$ is given by $\| f\|^2$, that is
\[ \langle \psi (f), \cN \psi (f) \rangle = \| f \|^2
\, . \] Also the variance of the number of particles in $\psi (f)$
is given by $\|f \|^2$ (the distribution of $\cN$ is Poisson), that
is
\[ \langle \psi (f), \cN^2 \psi (f) \rangle - \langle \psi (f) ,
\cN \psi (f) \rangle^2 = \| f \|^2 \, .\]
\item[vi)] Coherent states are normalized but not orthogonal to each
other. In fact
\[ \langle \psi (f) , \psi (g) \rangle = e^{-\frac{1}{2}\left( \| f
\|^2 + \| g \|^2 - 2 (f,g) \right)}  \quad \Rightarrow \quad
|\langle \psi (f) , \psi (g) \rangle| = e^{-\frac{1}{2} \| f- g
\|^2} \, .\]
\end{itemize}
\end{lemma}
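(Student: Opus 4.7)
The backbone of the argument is the observation that if $A(f) := a^*(f) - a(f)$, then the commutator $[A(f), A(g)]$ is a scalar (a c-number) rather than a genuine operator. Indeed, using the canonical commutation relations $[a(f), a^*(g)] = \langle f, g \rangle$ together with $[a^*(f), a^*(g)] = [a(f), a(g)] = 0$, one computes
\[
[A(f), A(g)] = -[a^*(f), a(g)] - [a(f), a^*(g)] = \langle g, f \rangle - \langle f, g \rangle = -2i \, \mathrm{Im}\, \langle f, g \rangle.
\]
Since this commutator is central, the Baker-Campbell-Hausdorff formula terminates after two steps and gives, rigorously on the dense domain of finite-particle vectors (which is invariant under $A(f)$ and a core for its closure),
\[
e^{A(f)} e^{A(g)} = e^{A(f) + A(g)} \, e^{\frac{1}{2}[A(f),A(g)]} = W(f+g) \, e^{-i \, \mathrm{Im}\,\langle f, g\rangle}.
\]
This identity is the source of (i); reversing the roles of $f$ and $g$ and multiplying gives the first equality of (i) as well.

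For (ii), the operator $A(f)$ is anti-self-adjoint, so its closure generates a one-parameter unitary group, and $W(f)^* = e^{-A(f)} = W(-f)$. For (iii), I would apply the Hadamard formula $e^{-B} X e^{B} = X + [X, B] + \tfrac{1}{2!}[[X, B], B] + \cdots$ with $B = A(f)$ and $X = a_x$ (understood weakly against smooth test vectors, or against test functions in the $x$ variable). The first commutator $[a_x, A(f)] = [a_x, a^*(f)] = f(x)$ is a scalar, so all higher commutators vanish and $W(f)^* a_x W(f) = a_x + f(x)$; taking adjoints yields the companion formula for $a_x^*$.

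Item (iv) then follows by rewriting $a_x \psi(f) = W(f) (W(f)^* a_x W(f)) \Omega = W(f)(a_x + f(x))\Omega = f(x) \psi(f)$, since $a_x \Omega = 0$. For (v), write $\cN = \int \rd x \, a_x^* a_x$ and use (iv) twice together with $\|\psi(f)\|=1$ to get $\langle \psi(f), \cN \psi(f)\rangle = \int |f(x)|^2 \, \rd x = \|f\|^2$; for the variance, the Wick-ordering identity $\cN^2 = \int \rd x \rd y \, a_x^* a_y^* a_y a_x + \cN$ gives $\langle \psi(f), \cN^2 \psi(f)\rangle = \|f\|^4 + \|f\|^2$. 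Finally, for (vi), I combine (i) and (ii) to get $W(f)^* W(g) = W(-f) W(g) = W(g-f) \, e^{-i \, \mathrm{Im}\,\langle -f, g\rangle}$, and then use the explicit expansion (\ref{eq:coh}) which yields $\langle \Omega, W(h) \Omega \rangle = e^{-\|h\|^2/2}$. Combining these,
\[
\langle \psi(f), \psi(g) \rangle = e^{-\frac{1}{2}\|g-f\|^2} \, e^{i \, \mathrm{Im}\,\langle f, g\rangle} = e^{-\frac{1}{2}(\|f\|^2 + \|g\|^2 - 2\langle f, g\rangle)},
\]
whose modulus is $e^{-\|f-g\|^2/2}$, as claimed.

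The only genuinely delicate point is the justification of the BCH identity and the Hadamard expansion for the unbounded operators $A(f)$ and $a_x$; I would bypass this by working on the dense invariant subspace $\cD$ of vectors $\{\psi^{(n)}\}_{n \ge 0}$ with $\psi^{(n)} = 0$ for all but finitely many $n$ and each $\psi^{(n)}$ a Schwartz function, where $e^{tA(f)}\psi$ can be defined by its norm-convergent power series and the term-by-term commutator calculations above are entirely algebraic; the full statements then follow by continuity from the unitarity of $W(f)$.
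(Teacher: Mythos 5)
Your proof is correct, and it uses exactly the mechanism the paper itself invokes (the paper states these facts as well known without proof, but the Baker--Campbell--Hausdorff identity it displays just before the lemma to derive (\ref{eq:coh}) is the same central-commutator argument you build everything on, with the domain issues handled in the standard way via analytic vectors on the finite-particle subspace). One tiny remark: in (vi) your computation reproduces the paper's formula provided $(f,g)$ there is read as the full complex inner product $\langle f,g\rangle$, since your phase $e^{i\,\mathrm{Im}\langle f,g\rangle}$ combines with $e^{-\frac12\|f-g\|^2}$ to give $e^{-\frac12(\|f\|^2+\|g\|^2-2\langle f,g\rangle)}$ exactly.
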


\subsection{Time Evolution of Coherent States}\label{sec:coh}

Next we study the dynamics of coherent states with expected number
of particles $N$ in the limit $N \to \infty$. We choose the initial
data
\begin{equation*} \psi (\sqrt{N} \ph) = W(\sqrt{N} \ph) \Omega \qquad
\text{for $\ph \in H^1 (\bR^3)$ with } \| \ph \| =1 \end{equation*}
and we consider its time evolution $\psi (N,t) = e^{-i \cH_N t} \psi
(\sqrt{N} \ph)$ with the Hamiltonian $\cH_N$ defined in
(\ref{eq:ham2}).

\begin{theorem}\label{thm:coh}
Suppose that there exists $D >0$ such that the operator inequality
\begin{equation}\label{eq:assump-coh} V^2 (x) \leq D (1 - \Delta_x) \end{equation}
holds true. Let $\Gamma_{N,t}^{(1)}$ be the one-particle marginal
associated with $\psi(N,t)= e^{-i\cH_N t} W(\sqrt{N} \ph) \Omega$ (as defined
in (\ref{eq:margi})). Then there exist constants $C,K>0$ (only
depending on the $H^1$-norm of $\ph$ and on the constant $D$
appearing in (\ref{eq:assump-coh})) such that
\begin{equation}\label{eq:tr-1}
\tr \; \Big|\Gamma^{(1)}_{N,t} - |\ph_t \rangle \langle \ph_t| \Big|
\leq \frac{C}{N} \; e^{K |t|}
\end{equation}
for all $t \in \bR$.
\end{theorem}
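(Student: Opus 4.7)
\emph{Setup.} The plan is to follow the Hepp--Ginibre--Velo Fock-space strategy: introduce the fluctuation vector
\[
\xi(N,t) := W^*(\sqrt{N}\,\ph_t)\,e^{-i\cH_N t}\,W(\sqrt{N}\,\ph)\,\Omega,
\]
which is a unit vector with $\xi(N,0)=\Omega$. Applying the shift identity in Lemma~\ref{lm:coh}(iii) to conjugate $\cH_N$ by $W(\sqrt{N}\,\ph_t)$ and combining with the derivative of the Weyl operator, the order-$\sqrt{N}$ contributions (linear in $a,a^*$) cancel exactly because $\ph_t$ solves \eqref{eq:hartree2}, so $\xi(N,t)$ satisfies $i\partial_t\xi = \cL_N(t)\,\xi$ with
\[
\cL_N(t) = \mu_N(t) + \cL^{(2)}(t) + \tfrac{1}{\sqrt{N}}\,\cL^{(3)}(t) + \tfrac{1}{N}\,\cL^{(4)},
\]
where $\mu_N(t)\in\bR$, $\cL^{(2)}(t)$ is a Bogoliubov-type quadratic generator, and $\cL^{(j)}$ is a degree-$j$ polynomial in $a,a^*$ whose coefficients are controlled by $V$ and by the $L^2$- and $H^1$-norms of $\ph_t$ (conserved along \eqref{eq:hartree2}). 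Because $[\cH_N,\cN]=0$ and $\la\psi(\sqrt{N}\,\ph),\cN\psi(\sqrt{N}\,\ph)\ra=N$ by Lemma~\ref{lm:coh}(v), applying the shift identity to the kernel formula \eqref{eq:margi} gives
\[
\Gamma^{(1)}_{N,t}(x;y) - \ph_t(x)\,\overline{\ph_t(y)} = \tfrac{1}{\sqrt{N}}\,\ph_t(x)\,\overline{\la\xi,a_y\xi\ra} + \tfrac{1}{\sqrt{N}}\,\overline{\ph_t(y)}\,\la\xi,a_x\xi\ra + \tfrac{1}{N}\,\la\xi,\,a^*_y a_x\,\xi\ra.
\]

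\emph{Reduction by duality.} By the duality of trace-class with bounded operators, $\tr\bigl|\Gamma^{(1)}_{N,t}-|\ph_t\ra\la\ph_t|\bigr| = \sup_{\|J\|\le 1}\bigl|\tr J(\Gamma^{(1)}_{N,t}-|\ph_t\ra\la\ph_t|)\bigr|$, and for any such bounded $J$,
\[
\tr J\bigl(\Gamma^{(1)}_{N,t}-|\ph_t\ra\la\ph_t|\bigr) = \tfrac{1}{N}\,\bigl\la\xi,\,{\textstyle\int\!\rd x\,\rd y\,}J(y;x)\,a^*_y a_x\,\xi\bigr\ra + \tfrac{1}{\sqrt{N}}\bigl(\la\xi,a^*(J\ph_t)\xi\ra + \la\xi,a(J^*\ph_t)\xi\ra\bigr).
\]
The first term is immediately bounded by $\|J\|\,\la\xi,\cN\xi\ra/N$. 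Lemma~\ref{lm:a-bd} only controls the linear-in-$a$ terms to order $1/\sqrt{N}$, so the crux of the argument is to extract an additional factor $1/\sqrt{N}$ from those two expectations.

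\emph{A priori estimates.} Two uniform-in-$N$ bounds must be proved. First, the moment bounds
\[
\la\xi(N,t),(\cN+1)^k\,\xi(N,t)\ra \le C_k\,e^{K_k|t|}\qquad(k\le 3),
\]
obtained by differentiating in $t$, expanding $i[\cL_N(t),(\cN+1)^k]$ into monomials in $a,a^*$, bounding each monomial via Lemma~\ref{lm:a-bd} together with the operator hypothesis $V^2\le D(1-\Delta)$ (which is what makes the interaction coefficients of $\cL^{(j)}$ finite in the relevant operator norms), and closing by Gronwall. Second, a parity estimate: since $\Omega$ lies in the even sector $\cF_+:=\ker((-1)^\cN-1)$, and $\cL^{(2)},\cL^{(4)}$ commute with $(-1)^\cN$ while $\cL^{(3)}$ anti-commutes with it,
\[
\tfrac{\rd}{\rd t}\,\la\xi,(-1)^\cN\xi\ra = \tfrac{2i}{\sqrt{N}}\,\la\xi,\,\cL^{(3)}(t)(-1)^\cN\,\xi\ra.
\]
Writing $\xi = \xi_+ + \xi_-$ with $\xi_\pm:=\tfrac12(1\pm(-1)^\cN)\xi$, the parity-odd operator $\cL^{(3)}(-1)^\cN$ pairs $\xi_+$ with $\xi_-$ only; combined with the $k=3$ moment bound and $\xi_-(N,0)=0$, this produces the differential inequality $|\partial_t\|\xi_-\|^2|\le C\,e^{K|t|}\,\|\xi_-\|/\sqrt{N}$, hence $\|\xi_-(N,t)\|^2 \le C\,e^{K|t|}/N$. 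Since $a(g)$ is parity-odd, $\la\xi,a(g)\xi\ra = \la\xi_+,a(g)\xi_-\ra + \la\xi_-,a(g)\xi_+\ra$ (the diagonal pieces vanish), so Lemma~\ref{lm:a-bd} gives
\[
|\la\xi,a(g)\xi\ra| \le 2\|g\|\bigl(\|(\cN+1)^{1/2}\xi_+\|\,\|\xi_-\| + \|(\cN+1)^{1/2}\xi_-\|\,\|\xi_+\|\bigr) \le C\,\|g\|\,e^{K|t|}/\sqrt{N}.
\]
Taking $g\in\{J\ph_t,\,J^*\ph_t\}$ makes the linear-in-$a$ contribution to the duality integral $O(\|J\|\,e^{K|t|}/N)$, and the supremum over $\|J\|\le 1$ yields \eqref{eq:tr-1}.

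\emph{Main obstacle.} The delicate point is the moment propagation: the quadratic Bogoliubov piece $\cL^{(2)}(t)$ contains pair-creation and pair-annihilation terms that do not commute with $\cN$, so there is no conservation law to exploit, and $\la\xi,(\cN+1)^k\xi\ra$ must be controlled through itself in a Gronwall loop. The hypothesis $V^2\le D(1-\Delta)$ enters precisely here, guaranteeing the boundedness (in the appropriate Fock-space operator norms) of the interaction coefficients in $\cL^{(2)}$ and $\cL^{(3)}$ and thereby accommodating Coulomb-type singularities. The parity trick, while conceptually clean, inherits the need for third $\cN$-moment control because $\|\cL^{(3)}\xi\|$ is bounded by $\|(\cN+1)^{3/2}\xi\|$, which is the one technical price of converting the naive $1/\sqrt{N}$ into the optimal $1/N$.
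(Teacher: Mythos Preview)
Your overall architecture is sound and very close to the paper's: the fluctuation vector $\xi(N,t)=\cU_N(t;0)\Omega$, the decomposition of $\Gamma^{(1)}_{N,t}-|\ph_t\rangle\langle\ph_t|$ into a $1/N$ term and two $1/\sqrt{N}$ terms, and the use of parity to upgrade the latter to $1/N$. Your parity decomposition $\xi=\xi_++\xi_-$ is in fact a clean repackaging of what the paper does through the comparison dynamics $\wt\cU_N$ (which is precisely the parity-preserving part of $\cL_N$); the two arguments are equivalent, and yours is arguably more direct.

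The genuine gap is in the moment-propagation step. You assert that
\[
\langle\xi(N,t),(\cN+1)^k\xi(N,t)\rangle\le C_k e^{K_k|t|}\qquad(k\le 3)
\]
follows by ``closing by Gronwall'', and you locate the difficulty in the quadratic piece $\cL^{(2)}$. This misidentifies the obstacle. The commutator $[\cL^{(2)},(\cN+1)^k]$ is quadratic in $a,a^*$ and its expectation is indeed controlled by $\langle\xi,(\cN+1)^k\xi\rangle$, so that part of the loop closes. The real problem is the cubic term: $[\cL^{(3)},(\cN+1)^k]$ is cubic, and one only gets
\[
\frac{1}{\sqrt{N}}\bigl|\langle\xi,[\cL^{(3)},(\cN+1)^k]\xi\rangle\bigr|\le \frac{C}{\sqrt{N}}\,\langle\xi,(\cN+1)^k\xi\rangle^{1/2}\,\langle\xi,(\cN+1)^{k+1}\xi\rangle^{1/2},
\]
which couples to the next moment. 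Young's inequality turns this into $C\langle\xi,(\cN+1)^k\xi\rangle+\tfrac{C}{N}\langle\xi,(\cN+1)^{k+1}\xi\rangle$, but the chain never terminates: at $k=3$ you need $\langle\xi,(\cN+1)^4\xi\rangle/N$, and the only available input is the weak bound $\langle\xi,(\cN+1)^4\xi\rangle\le C(N+1)^4$, which is far too large. This is exactly why the paper introduces the auxiliary dynamics $\cW_N$, obtained by inserting a cutoff $\chi(\cN<N)$ into the cubic term of $\cL_N$. With the cutoff, the cubic commutator is bounded by $\tfrac{1}{\sqrt{N}}\cdot\sqrt{N}\cdot\langle\,\cdot\,,(\cN+1)^k\,\cdot\,\rangle$ and the Gronwall closes for $\cW_N$; one then compares $\cU_N$ with $\cW_N$ using the weak a priori bounds. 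Without this device (or an equivalent one), the moment estimates you need---and hence both your parity argument and your final bound---do not go through.
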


We explain next the main steps in the proof of Theorem \ref{thm:coh}.
By (\ref{eq:margi}), the kernel of $\Gamma^{(1)}_{N,t}$ is given by
\begin{equation}\label{eq:gamma}
\begin{split}
\Gamma^{(1)}_{N,t} (x;y) = &\frac{1}{N} \left\langle \Omega, W^*
(\sqrt{N} \ph) e^{i\cH_N t} a_y^* a_x e^{-i\cH_N t} W(\sqrt{N} \ph)
\Omega \right\rangle \\ = \; &\ph_t (x) \overline{\ph}_t (y) +
\frac{\overline{\ph}_t (y)}{\sqrt{N}} \left\langle \Omega, W^*
(\sqrt{N} \ph) e^{i\cH_N t} (a_x - \sqrt{N} \ph_t(x)) e^{-i\cH_N t}
W(\sqrt{N} \ph) \Omega \right\rangle \\ &+ \frac{\ph_t
(x)}{\sqrt{N}} \left\langle \Omega, W^* (\sqrt{N} \ph) e^{i\cH_N t}
(a_y^* - \sqrt{N} \overline{\ph}_t (y)) e^{-i\cH_N t} W(\sqrt{N}
\ph) \Omega \right\rangle \\ &+ \frac{1}{N} \left\langle \Omega, W^*
(\sqrt{N} \ph) e^{i\cH_N t} ( a_y^* - \sqrt{N} \overline{\ph}_t (y))
(a_x - \sqrt{N} \ph_t (x)) e^{-i \cH_N t} W(\sqrt{N} \ph ) \Omega
\right\rangle \,.
\end{split}
\end{equation}
It was observed by Hepp in \cite{Hepp} (see also
Eqs. (1.17)-(1.28) in \cite{GV1}) that
\begin{equation}\label{eq:GV}
\begin{split} W^* (\sqrt{N} \ph_s) \; e^{i\cH_N (t-s)} (a_x - \sqrt{N} \ph_t
(x)) e^{-i\cH_N (t-s)} W (\sqrt{N} \ph_s) &= \cU_N  (t;s)^* \, a_x \,
\cU_N (t;s) \\ &= \cU_N (s;t) \, a_x \, \cU_N (t;s)
\end{split}\end{equation} where the unitary evolution $\cU_N (t;s)$
is determined by the equation
\begin{equation}\label{eq:cUN} i
\partial_t \cU_N (t;s) = \cL_N (t) \cU_N (t;s) \qquad \text{and}
\quad \cU_N (s;s)= 1 \end{equation} with the generator
\begin{equation}\label{eq:cLN}
\begin{split}
\cL_N (t) = & \int \rd x \, \nabla_x a^*_x \nabla_x a_x + \int \rd x
\, \left(V *|\ph_t|^2 \right) (x) \, a^*_x a_x +
\int \rd x \rd y \, V(x-y) \, \overline{\ph_t} (x) \ph_t (y) a^*_y a_x \\
&+ \frac{1}{2} \int \rd x \rd y \, V(x-y) \left( \ph_t (x) \ph_t (y)
a^*_x a^*_y +
\overline{\ph_t} (x) \overline{\ph_t} (y) a_x a_y \right) \\
&+\frac{1}{\sqrt{N}} \int \rd x \rd y \, V(x-y) \, a_x^* \left(
\ph_t (y)
a^*_y  + \overline{\ph_t} (y) a_y \right) a_x \\
&+\frac{1}{2N} \int \rd x \rd y \, V(x-y) \, a^*_x a^*_y a_y a_x \, .
\end{split}
\end{equation}
It follows from (\ref{eq:gamma}) that
\begin{equation}\label{eq:gamma2cou}
\begin{split}
\Gamma^{(1)}_{N,t} (x,y) - \ph_t (x) \overline{\ph}_t (y) = \;
&\frac{1}{N} \left\langle \Omega, \cU_N (t;0)^* a_y^* a_x \cU_N
(t;0) \Omega \right\rangle \\ &+ \frac{\ph_t (x)}{\sqrt{N}}
\left\langle \Omega,\cU_N (t;0)^* a^*_y \cU_N (t;0) \Omega \right\rangle \\
&+ \frac{\overline{\ph}_t (y)}{\sqrt{N}} \left\langle \Omega,\cU_N
(t;0)^* a_x \cU_N (t;0) \Omega \right\rangle\,.
\end{split}
\end{equation}
In order to produce another decaying factor $1/\sqrt{N}$ in the last
two term on the r.h.s. of the last equation, we compare the
evolution $\cU_N (t;0)$ with another evolution $\wt\cU_N (t;0)$
defined through the equation
\begin{equation}\label{eq:wtcUN} i
\partial_t \wt \cU_N (t;s) = \wt\cL_N (t)\, \wt\cU_N (t;s) \qquad
\text{with} \quad \wt\cU_N (s;s) = 1 \end{equation} and
\begin{equation*}
\begin{split}
\wt \cL_N (t) = & \int \rd x \, \nabla_x a^*_x \nabla_x a_x + \int
\rd x \, \left(V *|\ph_t|^2 \right) (x) \, a^*_x a_x + \int \rd x
\rd y \, V(x-y) \overline{\ph_t} (x) \ph_t (y) a^*_y a_x \\
&+ \frac{1}{2} \int \rd x \rd y \, V(x-y) \left( \ph_t (x) \ph_t (y)
a^*_x a^*_y +
\overline{\ph_t} (x) \overline{\ph_t} (y) a_x a_y \right) \\
&+\frac{1}{2N} \int \rd x \rd y \, V(x-y) \, a^*_x a^*_y a_y a_x \, .
\end{split}
\end{equation*}
{F}rom (\ref{eq:gamma2cou}) we find
\begin{equation}\label{eq:gamma3cou}
\begin{split}
&\Gamma^{(1)}_{N,t} (x;y) - \ph_t (x) \overline{\ph}_t (y) \\ & =
\frac{1}{N} \langle \Omega, \cU_N (t;0)^* a_y^* a_x \cU_N (t;0)
\Omega \rangle \\ &\;\; + \frac{\ph_t (x)}{\sqrt{N}} \left(
\left\langle \Omega,\cU_N (t;0)^* a_y^* \left(\cU_N (t;0) - \wt
\cU_N (t;0)\right) \Omega \right\rangle + \left\langle \Omega,
\left(\cU_N (t;0)^* - \wt \cU_N (t;0)^* \right) a_y^* \wt \cU_N
(t;0) \Omega
\right\rangle \right)\\
&\;\; + \frac{\overline{\ph}_t (y)}{\sqrt{N}} \left( \left\langle
\Omega,\cU_N (t;0)^* a_x \left( \cU_N (t;0) - \wt \cU_N (t;0)
\right) \Omega \right\rangle + \left\langle \Omega, \left(\cU_N
(t;0)^* -\wt \cU_N (t;0)^* \right) a_x \wt \cU_N (t;0) \Omega
\right\rangle \right),
\end{split}
\end{equation}
because
\[ \left\langle \Omega, \wt \cU_N (t;0)^*  a_y \, \wt \cU_N (t;0) \Omega
\right\rangle = \left\langle \Omega,\wt \cU_N (t;0)^* a_x^*\,  \wt
\cU_N (t;0) \Omega \right\rangle = 0 \, .\] This follows from the
observation that, although the evolution $\wt\cU_N (t)$ does not
preserve the number of particles, it preserves the parity (it
commutes with $(-1)^{\cN}$). {F}rom (\ref{eq:gamma3cou}), it easily follows that
\begin{equation}\label{eq:fin1}
\begin{split}
\Big\| \Gamma^{(1)}_{N,t} - |\ph_t \rangle
\langle \ph_t| \Big\|_{\text{HS}}
\leq \; &\frac{1}{N} \; \langle \cU_N (t;0) \Omega, \cN \cU_N (t;0) \Omega \rangle \\
&+ \frac{2}{\sqrt{N}} \| (\cU_N (t;0) - \wt
\cU_N (t;0)) \Omega \| \, \| (\cN+1)^{1/2} \cU_N (t;0) \Omega \| \\
&+ \frac{2}{\sqrt{N}} \|(\cU_N (t;0) - \wt \cU_N
(t;0))\Omega\| \, \| (\cN+1)^{1/2} \wt \cU_N (t;0) \Omega \|\,.
\end{split}
\end{equation}

To bound the r.h.s. of (\ref{eq:fin1}), we need to compare the dynamics $\cU_N (t;0)$ and $\wt \cU_N (t;0)$, and to control the growth of the number of particle $\cN$ with respect to the fluctuation dynamics $\cU_N (t;0)$ and $\wt \cU_N (t;0)$. We show, first of all, that
\begin{equation}\label{eq:lm0cou}
\langle \wt \cU_N (t;0) \, \Omega, \cN \, \wt \cU_N
(t;0) \Omega \rangle \leq C \, e^{K |t|} \,. \end{equation}
To prove this bound, we compute the time derivative
\begin{equation*}
\begin{split}
\frac{\rd}{\rd t} \langle \wt \cU_N (t;0) \, \Omega, (\cN+1) \, \wt \cU_N
(t;0) \Omega \rangle = \; & \langle \wt \cU_N (t;0) \Omega, [ \wt \cL_N (t) , \cN  ]
\wt \cU_N (t;0) \Omega \rangle \\ = \; & \langle \wt \cU_N (t;0) \Omega,
[ \wt \cL_N (t) , \cN ] \wt \cU_N (t;0) \Omega \rangle \\
= \; &2\text{Im} \int \rd x \rd y V(x-y) \ph_t (x) \ph_t (y) \langle \wt
\cU_N (t;0) \Omega, [ a_x^* a_y^* , \cN ] \wt \cU_N (t;0)
\Omega \rangle
\\ = \; &4\text{Im} \int
\rd x \rd y V(x-y) \ph_t (x) \ph_t (y) \langle \wt \cU_N (t;0)
\Omega,  a_x^* a_y^*  \wt \cU_N (t;0) \Omega \rangle\,.
\end{split}
\end{equation*}
Thus, from Lemma \ref{lm:a-bd}, we obtain
\begin{equation}\label{eq:gron}
\begin{split}
\Big| \frac{\rd}{\rd t} \langle \wt \cU_N (t;0) &\Omega, (\cN+1)
\wt \cU_N (t;0) \Omega \rangle \Big| \\ \leq \; &4 \int \rd x
|\ph_t (x)| \| a_x \wt \cU_N (t;0) \Omega \| \, \| a^*
(V(x-.)\ph_t) \wt \cU_N (t;0) \Omega \| \\ \leq \; &4 \sup_x \| V(x-.) \ph_t \|
\| (\cN+1)^{1/2} \wt \cU_N (t;0)\Omega \|^2 \\ \leq \; &C \, \langle
\wt \cU_N (t;0) \Omega, (\cN+1) \wt \cU_N (t;0) \Omega \rangle\, ,
\end{split}
\end{equation}
where we used the fact that
\[ \| V (x-. ) \ph_t \|^2 = \int \rd y \, |V(x-y)|^2 |\ph_t (y)|^2\leq C \| \ph_t \|^2_{H_1} \leq C \| \ph \|_{H^1}^2 \]
because of the assumption (\ref{eq:assump-coh}). {F}rom (\ref{eq:gron}), we obtain (\ref{eq:lm0cou}) applying Gronwall's Lemma.

\medskip

Making use of (\ref{eq:lm0cou}) (and of an analogous bound for the growth of the expectation of $\cN^4$ w.r.t. the evolution $\wt\cU_N (t;0)$; see \cite{RS}[Lemma 3.7]), we can derive the bound
\begin{equation}\label{eq:lm3cou} \left\| \left(\cU_N (t;0) - \wt \cU_N
(t;0) \right) \Omega \right\| \leq \frac{C}{\sqrt{N}} \, e^{K |t|} \,
\end{equation}
for the difference between the two time evolutions $\cU_N (t;0)$ and $\wt \cU_N (t;0)$ (note that, at least formally, the difference between the two generators $\cL_N (t)$ and $\wt \cL_N (t)$ is a term of the order $N^{-1/2}$; this explains the decay in $N$ on the r.h.s. of (\ref{eq:lm3cou})).

\medskip

In \cite{Hepp,GV1} the time evolution $\cU (t;s)$ was proven to converge, as $N \to \infty$, to a limiting dynamics $\cU_{\infty} (t;s)$ defined by \[ i \partial_t \cU_{\infty} (t;s) = \cL_{\infty} (t) \cU_{\infty} (t;s) \qquad \text{and } \quad \cU_{\infty} (s;s) = 1 \] with generator \begin{equation*}
\begin{split}
\cL_{\infty} (t) = & \int \rd x \, \nabla_x a^*_x \nabla_x a_x + \int
\rd x \, \left(V *|\ph_t|^2 \right) (x) \, a^*_x a_x + \int \rd x
\rd y \, V(x-y) \overline{\ph_t} (x) \ph_t (y) a^*_y a_x \\
&+ \frac{1}{2} \int \rd x \rd y \, V(x-y) \left( \ph_t (x) \ph_t (y)
a^*_x a^*_y + \overline{\ph_t} (x) \overline{\ph_t} (y) a_x a_y \right) \,.
\end{split}
\end{equation*}
In this sense, Hepp (in \cite{Hepp}, for smooth potentials) and Ginibre-Velo (in \cite{GV1}, for singular potentials) were able to identify the limiting time evolution of the fluctuations around the Hartree dynamics. Analogously to (\ref{eq:lm3cou}), the bound  (\ref{eq:lm0cou}) can also be used to show that, for a dense set of $\Psi \in \cF$, there exists constants $C,K$ such that $\| ( \cU_N (t;s) - \cU_{\infty} (t;s)) \Psi \| \leq C N^{-1/2} e^{K|t-s|}$, giving therefore a quantitative control on the convergence established in \cite{Hepp,GV1}.

\medskip

Note, however, that the convergence of $\cU_N (t;s)$ to the limiting dynamics $\cU_{\infty} (t;s)$ is still not enough to obtain estimates on the difference between $\Gamma^{(1)}_{N,t}$ and $|\ph_t \rangle \langle \ph_t|$. In fact, to reach this goal, we still need, by (\ref{eq:fin1}), to control the growth of $\cN$ with respect to the time evolution $\cU_N (t;s)$. We are going to prove that
\begin{equation}\label{eq:lm1cou}
\langle \cU_N (t;0)\, \Omega, \cN \, \cU_N (t;0) \Omega \rangle \leq C \, e^{K |t|}\,.
\end{equation}
Inserting (\ref{eq:lm0cou}), (\ref{eq:lm3cou}) and (\ref{eq:lm1cou}) on the r.h.s. of (\ref{eq:fin1}), it follows immediately that
\begin{equation}\label{eq:HS1} \Big\| \Gamma^{(1)}_{N,t} - |\ph_t \rangle
\langle \ph_t| \Big\|_{\text{HS}} \leq C \frac{e^{K|t|}}{N} \, , \end{equation}
which implies the claim (\ref{eq:tr-1}). In fact, since $|\ph_t \rangle \langle \ph_t|$ is a rank one projection, the operator $\delta \gamma = \gamma^{(1)}_{N,t} - |\ph_t \rangle \langle \ph_t|$ has at most one negative eigenvalue. Noticing that $\tr \, \delta \gamma = 0$, it follows that $\delta \gamma$ has a negative eigenvalue, and that the negative eigenvalue must equal, in absolute value, the sum of all its positive eigenvalues. Therefore, the trace norm of $\delta \gamma$ is twice as large as the operator norm of $\delta \gamma$. Since the operator norm is always controlled by the Hilbert Schmidt norm, we obtain (\ref{eq:tr-1}) (this nice argument was pointed out to us by R. Seiringer).

\medskip

The proof of (\ref{eq:lm1cou}) is much more involved than the proof of the analogous bound (\ref{eq:lm0cou}). This is a consequence of the presence, in the generator $\cL_N (t)$, of terms which are cubic in the creation and annihilation operators (these terms are absent from $\wt \cL_N (t)$). Because of these terms, also the commutator $[\cL_N (t) , \cN]$ is cubic in creation and annihilation operators, and thus its expectation (in absolute value) cannot be controlled by the expectation of $\cN$. For this reason, to prove (\ref{eq:lm1cou}), we have to introduce yet another approximate dynamics $\cW_N (t;s)$, defined by
\[ i\partial_t \cW_N (t;s) = \cK_N (t) \cW_N (t;s), \qquad \text{with } \quad \cW_N (s;s) = 1 \]
and with generator
\begin{equation}\label{eq:cKN}
\begin{split}
\cK_N (t) = & \int \rd x \, \nabla_x a^*_x \nabla_x a_x + \int \rd x
\, \left(V *|\ph_t|^2 \right) (x) \, a^*_x a_x +
\int \rd x \rd y \, V(x-y) \, \overline{\ph_t} (x) \ph_t (y) a^*_y a_x \\
&+ \frac{1}{2} \int \rd x \rd y \, V(x-y) \left( \ph_t (x) \ph_t (y)
a^*_x a^*_y +
\overline{\ph_t} (x) \overline{\ph_t} (y) a_x a_y \right) \\
&+\frac{1}{\sqrt{N}} \int \rd x \rd y \, V(x-y) \, a_x^* \left(
\ph_t (y)
\chi (\cN < N) a^*_y  + \overline{\ph_t} (y) a_y \chi (\cN < N) \right) a_x \\
&+\frac{1}{2N} \int \rd x \rd y \, V(x-y) \, a^*_x a^*_y a_y a_x \, .
\end{split}
\end{equation}
Observe, that the generator $\cK_N (t)$ has exactly the same form as the generator $\cL_N (t)$; the only difference is the presence of a cutoff in the number of particles $\cN$ inserted in the cubic term. Thanks to the cutoff in $\cN$ and to the factor $N^{-1/2}$ in front of the cubic term in $\cK_N (t)$, we can prove, making use of a Gronwall-type argument, that
\begin{equation}\label{eq:WNts1}
\langle \cW_N (t;s)\, \Omega, \cN \, \cW_N (t;s) \Omega \rangle \leq C \, e^{K |t-s|}\,.
\end{equation}
Actually, it is simple to see that the last inequality can be improved to
\begin{equation}\label{eq:WNts}
\langle \cW_N (t;s)\, \Omega, \cN^j \, \cW_N (t;s) \Omega \rangle \leq C_j \, e^{K_j |t-s|}\,.
\end{equation}
for every $j \in \bN$ and for appropriate $j$-dependent constants $C_j$ and $K_j$. To obtain (\ref{eq:lm1cou}), we still have to compare the dynamics $\cU_N (t;s)$ and $\cW_N (t;s)$. To this end, we first show weak a-priori bounds of the form
\begin{equation}\label{eq:weakbd}
\langle \cU_N (t;s) \, \psi , \cN^j \, \cU_N (t;s) \psi \rangle \leq C \, \langle \psi, (\cN +N +1)^j \psi \rangle
\end{equation}
for every $\psi \in \cF$ and for $j=1,2,3$ (these bounds hold uniformly in $t,s \in \bR$ and can be proven using the very definition of the unitary group $\cU_N (t;s)$; see \cite{RS}[Lemma 3.6]). Using (\ref{eq:weakbd}), we find
\begin{equation*}
\begin{split}
\langle \cU_N (t;0) &\Omega, \cN \left( \cU_N (t;0) - \cW_N
(t;0)\right) \Omega \rangle \\ = \; &\langle \cU_N (t;0) \Omega, \cN
\cU_N (t;0) \left( 1 - \cU_N (t;0)^* \cW_N (t;0)\right) \Omega
\rangle \\ = \; &-i\int_0^t \rd s \; \langle \cU_N (t;0) \Omega, \cN
\cU_N (t;s) \left(\cL_N (s) - \cK_N (s) \right) \cW_N
(s;0) \Omega \rangle \\
= \; &-\frac{i}{\sqrt{N}} \int_0^t \rd s \int \rd x \, \rd y  V(x-y)  \\
&\hspace{.2cm} \times \langle \cU_N (t;0) \Omega, \cN \cU_N (t;s)
a^*_x \left( \overline{\ph}_t (y) a_y \chi (\cN > N) + \ph_t (y)
\chi (\cN
>N) a_y^* \right) a_x \cW_N
(s;0) \Omega \rangle \\
= \; &-\frac{i}{\sqrt{N}}\int_0^t \rd s \int \rd x  \, \langle a_x
\cU_N (t;s)^* \cN \cU_N (t;0) \Omega, a (V(x-.)\ph_t) \chi (\cN
> N) a_x \cW_N (s;0) \Omega \rangle \\
&-\frac{i}{\sqrt{N}}\int_0^t \rd s \int \rd x  \langle a_x \cU_N
(t;s)^* \cN \cU_N (t;0) \Omega, \chi (\cN >N) a^* (V(x-.)\ph_t) a_x
\cW_N (s;0) \Omega \rangle\,.
\end{split}
\end{equation*}
Hence
\begin{equation*}
\begin{split}
\Big| \langle \cU_N (t;0) &\Omega, \cN \left( \cU_N (t;0) -
\cW_N (t;0)\right) \Omega \rangle \Big| \\ \leq
 \; &\frac{1}{\sqrt{N}} \int_0^t \rd s \int \rd x  \| a_x \cU_N (t;s)^* \cN
\cU_N (t;0) \Omega \| \, \| a (V(x-.)\ph_t)  a_x \chi (\cN > N+1)
\cW_N (s;0) \Omega \|  \\
&+\frac{1}{\sqrt{N}}\int_0^t \rd s \int \rd x  \| a_x \cU_N (t;s)^*
\cN \cU_N (t;0) \Omega \| \| a^* (V(x-.)\ph_t) a_x \chi (\cN
>N) \cW_N (s;0) \Omega \| \\
\leq \; &\frac{2 \sup_x \| V(x-.)\ph_t \|}{\sqrt{N}} \int_0^t \rd s  \|\cN^{1/2} \cU_N
(t;s)^* \cN \cU_N (t;0) \psi\| \, \|  \cN \chi (\cN > N)
\cW_N (s;0) \psi \| \, .
\end{split}
\end{equation*}
Therefore, using the inequality $\chi (\cN > N) \leq (\cN/N)^2$ and applying (\ref{eq:WNts}) (with $j=4$) and (\ref{eq:weakbd}) (first with $j=1$, and then with $j=3$) we can bound the two norms in the $s$-integral. It follows that
\[ \Big| \langle \cU_N (t;0) \Omega, \cN \left( \cU_N (t;0) -
\cW_N (t;0)\right) \Omega \rangle \Big|  \leq C e^{K t} \]
which, combined with (\ref{eq:WNts1}), implies (\ref{eq:lm1cou}).

\subsection{Time-evolution of Factorized Initial Data}

To prove Theorem \ref{thm:fact}, we express the factorized initial data as a linear combination of coherent states. Using the properties listed in Lemma \ref{lm:coh}, it is simple to check that \[ \{ 0, 0, \dots, 0, \ph^{\otimes N}, 0, 0, \dots \} =
\frac{(a^* (\ph))^N}{\sqrt{N!}} \Omega = d_N \int_0^{2\pi} \frac{\rd
\theta}{2\pi} \; e^{i\theta N} W ( e^{-i\theta} \sqrt{N} \ph) \Omega \]
with the constant
\begin{equation*}
d_N = \frac{\sqrt{N!}}{N^{N/2} e^{-N/2}} \simeq N^{1/4}\,.
\end{equation*}

The kernel of the one-particle density $\gamma_{N,t}^{(1)}$
associated with the solution of the Schr\"odinger equation
$\{ 0 , \dots , 0 , e^{-iH_N t} \ph^{\otimes N} , 0, \dots \}$
is thus given by (see (\ref{eq:margi}))
\begin{equation*}
\begin{split}
\gamma^{(1)}_{N,t} (x;y) =\; &\frac{1}{N} \left\langle \frac{(a^*
(\ph))^N}{\sqrt{N!}} \Omega, e^{i \cH_N t} a_y^* a_x e^{-i\cH_N t}
\frac{(a^* (\ph))^N}{\sqrt{N!}} \Omega \right\rangle
\\ = \; & \frac{d^2_N}{N} \int_0^{2\pi} \frac{\rd \theta_1}{2\pi} \int_0^{2\pi}
\frac{\rd \theta_2}{2\pi} \, e^{-i\theta_1 N} e^{i\theta_2 N}
\langle W(e^{-i\theta_1} \sqrt{N} \ph) \Omega, a_y^* (t) a_x (t)
W(e^{-i\theta_2} \sqrt{N} \ph) \Omega \rangle
\end{split}
\end{equation*}
where we introduced the notation $a_x (t) = e^{i\cH_N t} a_x
e^{-i\cH_N t}$. Next, we expand
\begin{equation}\label{eq:onepart}
\begin{split}
\gamma^{(1)}_{N,t} (x;y) = \; & \frac{d^2_N}{N} \int_0^{2\pi}
\frac{\rd \theta_1}{2\pi} \int_0^{2\pi} \frac{\rd \theta_2}{2\pi} \,
e^{-i\theta_1 N} e^{i\theta_2 N}  \left\langle W(e^{-i\theta_1}
\sqrt{N} \ph) \Omega, \, \left( a_y^*
(t) - e^{i\theta_1} \sqrt{N} \overline{\ph}_t (y) \right) \right. \\
&\hspace{3cm} \left. \times \left( a_x (t) - e^{-i\theta_2} \sqrt{N}
\ph_t (x) \right) \, W(e^{-i\theta_2} \sqrt{N} \ph) \Omega
\right\rangle
\\ & + \frac{d^2_N \, \overline{\ph}_t (y)}{\sqrt{N}} \int_0^{2\pi}
\frac{\rd \theta_1}{2\pi} \int_0^{2\pi} \frac{\rd \theta_2}{2\pi} \,
e^{-i\theta_1 (N-1)} e^{i\theta_2 N} \\ &\hspace{3cm} \times
\left\langle W(e^{-i\theta_1} \sqrt{N} \ph) \Omega, \left( a_x (t) -
e^{-i\theta_2} \sqrt{N} \ph_t (x) \right)W(e^{-i\theta_2} \sqrt{N}
\ph) \Omega \right\rangle
\\& + \frac{d^2_N \, \ph_t (x)}{\sqrt{N}} \int_0^{2\pi} \frac{\rd
\theta_1}{2\pi} \int_0^{2\pi} \frac{\rd \theta_2}{2\pi} \,
e^{-i\theta_1 N} e^{i\theta_2 (N-1)} \\ &\hspace{3cm} \times
\left\langle W(e^{-i\theta_1} \sqrt{N} \ph) \Omega, \left( a^*_y (t)
- e^{i\theta_1} \sqrt{N} \overline{\ph}_t (y) \right) \,
W(e^{-i\theta_2} \sqrt{N} \ph) \Omega \right\rangle \\ &+ d_N^2 \,
\ph_t (x) \overline{\ph}_t (y) \int_0^{2\pi} \frac{\rd
\theta_1}{2\pi} \int_0^{2\pi} \frac{\rd \theta_2}{2\pi} \,
e^{-i\theta_1 (N-1)} e^{i\theta_2 (N-1)} \\ &\hspace{3cm} \times
\left\langle W(e^{-i\theta_1} \sqrt{N} \ph) \Omega, \,
W(e^{-i\theta_2} \sqrt{N} \ph) \Omega \right\rangle \, .
\end{split}
\end{equation}
Since
\begin{equation*}
\begin{split} d_N \int_0^{2\pi} \frac{\rd \theta}{2\pi} e^{i\theta (N-1)}
W (e^{-i\theta} \sqrt{N} \ph) \Omega &= d_N e^{-N/2}
\sum_{j=0}^{\infty} \left( \int_0^{2\pi} \frac{\rd \theta}{2\pi}
e^{i\theta (N-1-j)} \right) N^{j/2} \frac{(a^* (\ph))^j}{j!} \Omega
\\ &= d_N \frac{e^{-N/2} N^{(N-1)/2}}{\sqrt{N-1!}} \frac{(a^*
(\ph))^{N-1}}{N-1!} \Omega \\ &= \ph^{\otimes N-1} \, ,
\end{split}
\end{equation*}
we find that
\begin{equation}\label{eq:gamma1-phi}
\begin{split}
\gamma^{(1)}_{N,t} & \, (x;y) - \ph_t (x) \overline{\ph}_t (y)
\\ = \; & \frac{d^2_N}{N} \int_0^{2\pi}
\frac{\rd \theta_1}{2\pi} \int_0^{2\pi} \frac{\rd \theta_2}{2\pi} \,
e^{-i\theta_1 N} e^{i\theta_2 N}  \\ &\hspace{.5cm} \times \left\langle W(e^{-i\theta_1}
\sqrt{N} \ph) \Omega, \, \left( a_y^*
(t) - e^{i\theta_1} \sqrt{N} \overline{\ph}_t (y) \right)
\left( a_x (t) - e^{-i\theta_2} \sqrt{N}
\ph_t (x) \right)  W(e^{-i\theta_2} \sqrt{N} \ph) \Omega
\right\rangle
\\ & + \frac{d_N \, \overline{\ph}_t (y)}{\sqrt{N}}
\int_0^{2\pi} \frac{\rd \theta_2}{2\pi} \, e^{i\theta_2 N} 
\left\langle \ph^{\otimes (N-1)}, \left( a_x (t) -
e^{-i\theta_2} \sqrt{N} \ph_t (x) \right) W(e^{-i\theta_2} \sqrt{N}
\ph) \Omega \right\rangle
\\& + \frac{d_N \, \ph_t (x)}{\sqrt{N}} \int_0^{2\pi} \frac{\rd
\theta_1}{2\pi} \,
e^{-i\theta_1 N} 
\left\langle W(e^{-i\theta_1} \sqrt{N} \ph) \Omega, \left( a^*_y (t)
- e^{i\theta_1} \sqrt{N} \overline{\ph}_t (y) \right) \,
\ph^{\otimes (N-1)} \right\rangle \,
\end{split}
\end{equation}
and thus
\begin{equation*}
\begin{split}
\left| \gamma^{(1)}_{N,t}  \, (x;y) - \ph_t (x) \overline{\ph}_t (y)\right| \leq \; & \frac{d^2_N}{N} \int_0^{2\pi}
\frac{\rd \theta_1}{2\pi} \int_0^{2\pi} \frac{\rd \theta_2}{2\pi} \, \| a_y \cU^{\theta_1}_N (t;0)   \Omega \| \, \| a_x \cU^{\theta_2}_N (t;0) \Omega \|
\\ & + \frac{d_N \, |\ph_t (y)|}{\sqrt{N}}
\int_0^{2\pi} \frac{\rd \theta}{2\pi}  \| a_x \cU_N^{\theta} (t;0) \Omega \| \\
& + \frac{d_N \, |\ph_t (x)|}{\sqrt{N}}
\int_0^{2\pi} \frac{\rd \theta}{2\pi}  \| a_y \cU_N^{\theta} (t;0) \Omega \|
\end{split}
\end{equation*}
where $\cU_N^{\theta} (t;s)$ is defined as (\ref{eq:cUN}), with $\ph_t$ replaced by $e^{i\theta} \ph_t$ (we are using, here, the fact that, although the Hartree equation is nonlinear, $e^{i\theta} \ph_t$ is always a solution if $\ph_t$ is). Since $d_N \simeq N^{1/4}$, it follows from the bound (\ref{eq:lm1cou}) for the growth of the expectation of $\cN$ with respect to the fluctuation evolution $\cU_N^{\theta} (t;s)$ (the bound clearly holds uniformly in $\theta$) that
\[ \left\| \gamma^{(1)}_{N,t} - |\ph_t \rangle \langle \ph_t| \right\|_{\text{HS}} \leq \frac{C}{N^{1/4}} e^{K t} \, \]
and therefore (using the argument presented after (\ref{eq:HS1}) that
\begin{equation}\label{eq:fintrace} \tr\, \left| \gamma^{(1)}_{N,t} - |\ph_t \rangle \langle \ph_t| \right| \leq \frac{C}{N^{1/4}} e^{K t} \, . \end{equation}

\medskip

To improve the decay in $N$ on the r.h.s. of (\ref{eq:fintrace}) from $N^{-1/4}$ to $N^{-1/2}$ (as claimed in Theorem~\ref{thm:fact}), it is necessary to study the second and third error terms on the r.h.s. of (\ref{eq:gamma1-phi}) more precisely; for the details, see \cite{RS}[Lemma 4.2].

\appendix

\section{Non-Standard Sobolev- and Poincar{\'e} Inequalities}
\label{app:poin}\setcounter{equation}{0}

In this section, we collect some non-standard Sobolev- and Poincar{\'e}-type inequalities which are very useful when dealing with singular potentials.

\begin{lemma}[Sobolev-type inequalities]\label{lm:sob1}
Let $\psi \in L^2 (\bR^6, \rd x_1 \rd x_2)$. If $V \in L^{3/2} (\bR^3)$, we have
\begin{equation}\label{eq:3/2}
\langle \psi, V(x_1 -x_2) \psi \rangle \leq C \| V \|_{3/2} \, \langle \psi, (1-\Delta_1) \psi \rangle \, .
\end{equation}
If $V \in L^1 (\bR^3)$, then \begin{equation}\label{eq:Vsob} \langle \psi, V(x_1 -x_2) \psi \rangle \leq C \| V \|_1 \, \langle \psi, (1-\Delta_1) (1-\Delta_2) \psi \rangle \end{equation}
\end{lemma}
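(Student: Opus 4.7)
The plan is to treat (\ref{eq:3/2}) and (\ref{eq:Vsob}) separately, both by combining H\"older's inequality with a Sobolev-type embedding, though the two cases differ substantially in character. The first is a one-variable Sobolev bound applied with $x_2$ frozen, while the second requires a mixed-derivative trace estimate carried out on the Fourier side.

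For (\ref{eq:3/2}), I would fix $x_2\in\bR^3$ and apply H\"older's inequality in $x_1$ with conjugate exponents $3/2$ and $3$:
\[
\int V(x_1-x_2)|\psi(x_1,x_2)|^2\,\rd x_1 \;\leq\; \|V\|_{3/2}\,\|\psi(\cdot,x_2)\|_{L^6(\bR^3)}^2.
\]
The three-dimensional Sobolev embedding $H^1(\bR^3)\hookrightarrow L^6(\bR^3)$ then yields $\|\psi(\cdot,x_2)\|_{L^6}^2 \leq C\,\langle \psi(\cdot,x_2),(1-\Delta_1)\psi(\cdot,x_2)\rangle$, and integration in $x_2$ produces (\ref{eq:3/2}).

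For (\ref{eq:Vsob}), the idea is to change variables $y=x_1-x_2$, $x=x_2$, so that the left-hand side becomes $\int V(y)\,\|g_y\|_{L^2(\bR^3)}^2\,\rd y$ with $g_y(x):=\psi(x+y,x)$. Since $\int V(y)\|g_y\|_2^2\,\rd y \leq \|V\|_1\,\sup_y\|g_y\|_{L^2}^2$, it suffices to prove the $y$-uniform bound
\[
\sup_{y\in\bR^3} \|g_y\|_{L^2(\bR^3)}^2 \;\leq\; C\,\langle \psi,(1-\Delta_1)(1-\Delta_2)\psi\rangle.
\]
Passing to Fourier variables, one computes $\widehat{g_y}(q) = c\int \widehat{\psi}(p,q-p)\,e^{ipy}\,\rd p$, and Cauchy--Schwarz with weight $(1+|p|^2)(1+|q-p|^2)$ gives
\[
|\widehat{g_y}(q)|^2 \;\leq\; \Bigl(\int \frac{\rd p}{(1+|p|^2)(1+|q-p|^2)}\Bigr)\int (1+|p|^2)(1+|q-p|^2)|\widehat\psi(p,q-p)|^2\,\rd p.
\]
The first factor is the convolution $(1+|\cdot|^2)^{-1}*(1+|\cdot|^2)^{-1}$ evaluated at $q$; since $(1+|\cdot|^2)^{-1}\in L^2(\bR^3)$, this convolution is bounded in $L^\infty(\bR^3)$ uniformly in $q$ by Cauchy--Schwarz (equivalently, by Young's inequality). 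Integrating the second factor in $q$ reproduces exactly $\langle \psi,(1-\Delta_1)(1-\Delta_2)\psi\rangle$.

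The main subtlety — and the reason (\ref{eq:Vsob}) is more delicate than (\ref{eq:3/2}) — is that a naive one-variable Sobolev embedding $H^s(\bR^3)\hookrightarrow L^\infty(\bR^3)$ would require $s>3/2$, giving at best a bound in terms of $(1-\Delta_1)^2$ rather than the symmetric mixed operator $(1-\Delta_1)(1-\Delta_2)$. The Fourier-side Cauchy--Schwarz above distributes the derivatives evenly between the two momentum variables $(p,q-p)$, at the price of a convolution integral that happens to be uniformly bounded precisely in dimension three; the same scheme would fail in $d\geq 4$.
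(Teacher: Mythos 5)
Your proof is correct and takes essentially the same route as the paper: (\ref{eq:3/2}) is exactly the H\"older-plus-Sobolev argument in $x_1$ with $x_2$ frozen, and for (\ref{eq:Vsob}) the paper simply refers to the Fourier-space weighted Cauchy--Schwarz argument of Lemma \ref{lm:poin}, which is what you carry out, your reduction to $\sup_y \|g_y\|_{L^2}^2$ being a repackaging of the same computation resting on the same uniform bound $\sup_q \int \rd p\, \left[(1+p^2)(1+(q-p)^2)\right]^{-1} < \infty$.
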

The first bound follows from a H\"older inequality followed by a standard Sobolev inequality (in the variable $x_1$, with fixed $x_2$). The proof of (\ref{eq:Vsob}) can be obtained through the same arguments used in the proof of the next Poincar{\'e}-type inequality (see \cite{EY}).
\begin{lemma}[Poincar{\'e}-type inequality] \label{lm:poin}
Suppose that $h\in L^1 (\bR^3)$ is a probability density with $\int \rd x \, |x|^{1/2} \, h(x) < \infty$. For $\alpha >0$, let $h_{\alpha} (x) = \a^{-3} h (x/\a)$. Then we have, for every $0\leq \kappa<1/2$,
\[ \left| \langle \ph, \left( h_{\alpha} (x_1 -x_2) -\delta (x_1 -x_2) \right) \psi \rangle \right| \leq C \a^{\kappa} \langle \ph, (1-\Delta_1) (1-\Delta_2) \ph \rangle^{1/2} \langle \psi, (1-\Delta_1) (1-\Delta_2) \psi \rangle^{1/2} \,. \]
\end{lemma}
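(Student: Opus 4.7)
The plan is to reduce the inequality to a H\"older-type continuity bound for the diagonal trace of functions in the mixed Sobolev space with norm $\|f\|_\star^2 := \langle f, (1-\Delta_1)(1-\Delta_2) f\rangle$. Since $h_\alpha$ is a probability density, $\int h_\alpha(y)\,\rd y = 1$, so one can rewrite
\[
\langle \ph, (h_\alpha(x_1-x_2) - \delta(x_1-x_2))\psi\rangle = \int \rd y\, h_\alpha(y)\,[F(y)-F(0)],
\]
where $F(y) := \int \rd x\, \bar\ph(x+y,x)\psi(x+y,x)$. If one can show $|F(y)-F(0)| \leq C|y|^\kappa \|\ph\|_\star \|\psi\|_\star$, the claim follows by integrating against $h_\alpha$ and using the scaling identity $\int h_\alpha(y)|y|^\kappa\, \rd y = \alpha^\kappa \int h(z)|z|^\kappa\, \rd z$; the latter is finite for every $\kappa\leq 1/2$ by the hypothesis $\int |z|^{1/2} h(z)\,\rd z <\infty$ together with $|z|^\kappa\leq 1+|z|^{1/2}$.

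The estimate on $|F(y)-F(0)|$ will follow from the product identity
\[
\bar\ph(x+y,x)\psi(x+y,x) - \bar\ph(x,x)\psi(x,x) = \bar\ph(x+y,x)[\psi(x+y,x)-\psi(x,x)] + \psi(x,x)[\bar\ph(x+y,x)-\bar\ph(x,x)],
\]
followed by Cauchy-Schwarz in $x$, which reduces the problem to proving, for $f=\ph,\psi$, the two inequalities
\[
\|f(\cdot+y,\cdot)|_{x_1=x_2}\|_{L^2(\bR^3)} \leq C\|f\|_\star, \qquad \|f(\cdot+y,\cdot)|_{x_1=x_2} - f|_{x_1=x_2}\|_{L^2(\bR^3)} \leq C|y|^\kappa \|f\|_\star.
\]
The first is a standard trace-type embedding of the mixed Sobolev space into $L^2$ of the diagonal hyperplane (uniform in $y$ by translation invariance), which follows from a short Fourier computation based on the three-dimensional bound $\int \rd\xi_1\,/(\langle\xi_1\rangle^2 \langle\eta-\xi_1\rangle^2) \leq C/\langle\eta\rangle$.

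The hard part will be the second, H\"older-type inequality. In Fourier variables its left-hand side squared equals $\int \rd\eta\, |G(\eta,y)|^2$, with $G(\eta,y) = \int \rd\xi_1\,(e^{i\xi_1\cdot y}-1)\hat f(\xi_1, \eta-\xi_1)$ up to irrelevant factors of $2\pi$. Applying Cauchy-Schwarz against the weight $\langle\xi_1\rangle\langle\eta-\xi_1\rangle$ and using the pointwise bound $|e^{i\xi_1\cdot y}-1|^2 \leq C(|\xi_1||y|)^{2\kappa}$ produces $|G(\eta,y)|^2 \leq C|y|^{2\kappa}\, I_\kappa(\eta)\, J(\eta)$, where $I_\kappa(\eta) := \int \rd\xi_1\, |\xi_1|^{2\kappa}/(\langle\xi_1\rangle^2\langle\eta-\xi_1\rangle^2)$ and $J(\eta) := \int \rd\xi_1\,\langle\xi_1\rangle^2\langle\eta-\xi_1\rangle^2|\hat f(\xi_1,\eta-\xi_1)|^2$. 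The main obstacle, and the origin of the restriction $\kappa<1/2$, is the convolution-type estimate $I_\kappa(\eta) \leq C/\langle\eta\rangle^{1-2\kappa}$ in $\bR^3$, which just fails at $\kappa = 1/2$ because the integrand decays like $|\xi_1|^{2\kappa-2}$ at large $|\xi_1|$ and is only integrable there for $\kappa<1/2$. Granting this bound, integrating in $\eta$, changing variables $\xi_2 = \eta-\xi_1$, and using $1/\langle\xi_1+\xi_2\rangle^{1-2\kappa}\leq 1$ give $\int \rd\eta\,|G(\eta,y)|^2 \leq C|y|^{2\kappa}\|f\|_\star^2$, which is the sought H\"older estimate and completes the plan.
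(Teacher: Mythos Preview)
Your argument is correct. Both your route and the paper's ultimately rest on the same two analytic ingredients: the pointwise bound $|e^{i\xi\cdot y}-1|\le C|\xi|^\kappa|y|^\kappa$ and a three-dimensional convolution estimate of the form $\int \rd\xi\,\langle\xi\rangle^{-(2-2\kappa)}\langle\eta-\xi\rangle^{-2}\le C\langle\eta\rangle^{-(1-2\kappa)}$, which is exactly where the restriction $\kappa<1/2$ enters. The organization, however, differs. The paper stays in Fourier space throughout: it writes the bilinear form as a four-momentum integral, bounds $|e^{i\alpha(p_1-q_1)\cdot x}-1|\le \alpha^\kappa|x|^\kappa(|p_1|^\kappa+|q_1|^\kappa)$, and then separates $\ph$ and $\psi$ by a single weighted Cauchy--Schwarz with asymmetric weights $(1+p_1^2)^{(1-\kappa)/2}(1+p_2^2)^{1/2}/(1+q_1^2)^{1/2}(1+q_2^2)^{1/2}$, reducing to $\sup_q\int \rd p\,(1+p^2)^{-1}(1+(q-p)^2)^{-(1-\kappa)}<\infty$. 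You instead first pass to physical space via $\int h_\alpha(y)[F(y)-F(0)]\,\rd y$, use the product rule and Cauchy--Schwarz in $x$ to separate $\ph$ and $\psi$, and then prove a unilateral H\"older estimate for the diagonal trace $y\mapsto f(\cdot+y,\cdot)|_{x_1=x_2}$ in $L^2$. Your packaging is a bit more modular (the trace H\"older bound is a reusable statement about a single function), while the paper's single weighted Cauchy--Schwarz is slightly shorter; the underlying analysis is equivalent.
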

\begin{proof}
We rewrite the inner product in Fourier space.
\begin{equation*}
\begin{split}
\langle \ph, \Big( h_{\alpha} &(x_1 -x_2) -\delta (x_1 -x_2) \Big) \psi \ra\\ &= \int \rd p_1 \rd p_2 \rd q_1 \rd q_2 \rd x \, \delta (p_1 + p_2 - q_1 -q_2) \, \overline{\widehat{\ph}} (p_1, p_2) \, \widehat{\psi} (q_1 , q_2) \, h (x) \, \left( e^{i\alpha (p_1 -q_1) \cdot x} - 1 \right).
\end{split}
\end{equation*}
Using that $|e^{i\a (p_1 - q_1) \cdot x} -1| \leq \a^{\kappa} |x|^{\kappa} |p_1- q_1|^{\kappa}$, we obtain
\begin{equation*}
\begin{split}
\Big| \langle \ph, \Big( h_{\alpha} (x_1 -x_2) - &\delta (x_1 -x_2) \Big) \psi \ra \Big| \\ \leq \; \a^{\kappa} &\left( \int \rd x \, h(x) |x|^{\kappa} \right) \\ &\times \int \rd p_1 \rd p_2 \rd q_1 \rd q_2  \, \delta (p_1 + p_2 - q_1 -q_2) \left( |p_1|^{\kappa} + | q_1|^{\kappa} \right) |\widehat{\ph} (p_1, p_2)| \, | \widehat{\psi} (q_1 , q_2)| \,.
\end{split}\end{equation*}
We show how to control the term proportional to $|p_1|^{\kappa}$; the other term can be handled similarly.
\begin{equation*}
\begin{split}
\Big| \langle \ph, &\left( h_{\alpha} (x_1 -x_2) -\delta (x_1 -x_2) \right) \psi \ra \Big| \\ \leq \;& C \a^{\kappa} \, \int \rd p_1 \rd p_2 \rd q_1 \rd q_2  \, \delta (p_1 + p_2 - q_1 -q_2) \frac{|p_1|^{\kappa} (1+p_1^2)^{(1-\kappa)/2} (1+p_2^2)^{1/2}}{(1+q_1^2)^{1/2} (1+q_2^2)^{1/2}} | \widehat{\ph} (p_1, p_2)| \\ & \hspace{9cm} \times \frac{(1+q_1^2)^{1/2} (1+q_2^2)^{1/2}}{(1+p_1^2)^{(1-\kappa)/2} (1+p_2^2)^{1/2}} | \widehat{\psi} (q_1 , q_2)| \\ \leq \; & C \a^{\kappa}\left( \int \rd p_1 \rd p_2 \rd q_1 \rd q_2  \, \delta (p_1 + p_2 - q_1 -q_2) \frac{(1+p_1^2)(1+p_2^2)}{(1+q_1^2)(1+q_2^2)} | \widehat{\ph} (p_1, p_2)|^2 \right)^{1/2} \\ &\hspace{2cm} \times \left( \int \rd p_1 \rd p_2 \rd q_1 \rd q_2  \, \delta (p_1 + p_2 - q_1 -q_2)\frac{(1+q_1^2) (1+q_2^2)}{(1+p_1^2)^{1-\kappa} (1+p_2^2)} | \widehat{\psi} (q_1 , q_2)|^2 \right)^{1/2} \\
\leq \; &C \a^{1/2} \la \ph, (1-\Delta_1)(1-\Delta_2) \ph \ra^{1/2} \la \psi, (1-\Delta_1)(1-\Delta_2) \psi \ra^{1/2} \\ &\hspace{2cm} \times \left( \sup_p \int \rd q \frac{1}{(1+q^2)(1+(p-q)^2)} \right)^{1/2} \left( \sup_q \int \rd p \frac{1}{(1+p^2) (1+(q-p)^2)^{1-\kappa}} \right)^{1/2}\,.
\end{split}
\end{equation*}
The claim follows because
\begin{equation}\label{eq:int10}
\sup_{q\in \bR^3} \int \rd p \frac{1}{(1+p^2) (1+(q-p)^2)^{1-\kappa}} \leq C
\end{equation}
for all $\kappa <1/2$. To prove (\ref{eq:int10}) we consider the three regions $|p| > 2 |q|$, $|q|/2 \leq |p| \leq 2 |q|$ and $|p| < |q|/2$ separately. Since $|p-q|>|p|/2$ for $|p|>2|q|$, it follows that
\[ \int_{|p|> 2|q|} \frac{\rd p}{(1+p^2) (1+(q-p)^2)^{1-\kappa}} \leq \int_{|p| >2|q|} \frac{\rd p}{\left(1+\frac{p^2}{4}\right)^{2-\kappa}} < C \int \frac{\rd p}{\left(1+p^2\right)^{2-\kappa}} < \infty \] for $\kappa < 1/2$, uniformly in $q$. For $|p| < |q|/2$, we use the fact that $|q-p|>|q|/2$, and we obtain \[\int_{|p|<|q|/2} \frac{\rd p}{(1+p^2) (1+(q-p)^2)^{1-\kappa}} \leq \frac{C}{(1+q^2)^{1-\kappa}} \int_{|p| < |q|/2} \frac{\rd p}{1+p^2} \leq \frac{C |q|}{(1+q^2)^{1-\kappa}} \, \] which is bounded uniformly in $q$. Finally, in the region $|q|/2 \leq |p| \leq 2|q|$, we use that
\begin{equation*}
\begin{split}
\int_{|q|/2 < |p| < 2 |q|} \frac{\rd p}{(1+p^2)(1+(q-p)^2)^{1-\kappa}} \leq \frac{C}{(1+q^2)} \int_{|p| < 3|q|} \frac{\rd p}{(1+p^2)^{1-\kappa}} \leq C \frac{|q|^{2\kappa + 1}}{1+q^2} < \infty
\end{split}
\end{equation*}
uniformly in $q\in \bR^3$, for all $\kappa < 1/2$.
\end{proof}

In the approach developed in \cite{ESY5} for the case of large interaction potential we can only prove weaker estimates on the solution $\psi_{N,t}$ of the Schr\"odinger equation. As discussed in Section \ref{sec:largeV}, we can only prove that
\[ \la W_{N,(i,j)}^* \psi_{N,t}, \left( (\nabla_i \cdot \nabla_j)^2 -\Delta_i -\Delta_j + 1\right) W_{N,(i,j)}^* \psi_{N,t} \ra \leq C \] uniformly in $N$ and $t$. For this reason, we need estimates which only require the boundedness of the expectation of this particular combination of derivatives. The next lemma gives a Sobolev inequality of this type.
\begin{lemma}\label{lm:VL1}
Suppose $V \in L^1 (\bR^3)$. Then
\begin{equation*}
\begin{split}
\left|\langle \ph, V (x_1 -x_2) \psi \rangle \right| \leq C \| V \|_1 \, \langle
&\psi, \left( (\nabla_1 \cdot \nabla_2)^2 -\Delta_1 - \Delta_2 + 1 \right) \psi \rangle^{1/2} \; \\ &
 \times \langle \ph, \left( (\nabla_1 \cdot \nabla_2)^2 -\Delta_1 - \Delta_2 + 1 \right)
\ph \rangle^{1/2}\end{split}
\end{equation*}
for every $\psi,\ph \in L^2 ( \bR^6, \rd x_1  \rd x_2)$.
\end{lemma}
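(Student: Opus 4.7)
The plan is to pass to Fourier variables and reduce the bound to a uniform estimate on a single momentum integral. Introduce the symbol
\[
D(a,b) = (a \cdot b)^2 + |a|^2 + |b|^2 + 1, \qquad a,b \in \bR^3,
\]
so that the operator $(\nabla_1 \cdot \nabla_2)^2 - \Delta_1 - \Delta_2 + 1$ is the Fourier multiplier with symbol $D(p_1,p_2)$. Using Plancherel and the fact that $V(x_1 - x_2)$ acts as convolution with $\widehat V$ in the relative momentum, I can write
\[
\langle \ph, V(x_1-x_2) \psi \rangle = \int \overline{\widehat\ph}(p_1,p_2) \, \widehat V(p_1 - q_1) \, \widehat\psi(q_1,q_2) \, \delta(p_1 + p_2 - q_1 - q_2) \, \rd p_1 \rd p_2 \rd q_1 \rd q_2,
\]
and bound $|\widehat V(k)| \leq \|V\|_1$ pointwise.

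Next, I would split the integrand as $|\widehat\ph(p_1,p_2)|\sqrt{D(p_1,p_2)/D(q_1,q_2)}$ times $|\widehat\psi(q_1,q_2)|\sqrt{D(q_1,q_2)/D(p_1,p_2)}$ and apply Cauchy--Schwarz on the measure $\delta(p_1+p_2-q_1-q_2) \, \rd p \rd q$. Integrating out the $q$-variables in the first factor and the $p$-variables in the second (both times using the momentum-conservation delta), the estimate reduces to
\[
|\langle \ph, V \psi \rangle| \leq \|V\|_1 \, \Bigl(\sup_{P \in \bR^3} I(P)\Bigr) \, \langle \ph, D(\nabla_1,\nabla_2)\ph\rangle^{1/2} \langle \psi, D(\nabla_1,\nabla_2)\psi\rangle^{1/2},
\]
where, after the substitution $q = P/2 + k$,
\[
I(P) = \int_{\bR^3} \frac{\rd q}{D(q,\,P-q)} = \int_{\bR^3} \frac{\rd k}{(|P|^2/4 - |k|^2)^2 + |P|^2/2 + 2|k|^2 + 1}.
\]
The lemma is therefore reduced to the uniform bound $\sup_{P} I(P) < \infty$.

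The only non-routine step is this uniform bound. Passing to spherical coordinates with $R = |P|/2$ and $r = |k|$, I must estimate $J(R) := \int_0^\infty r^2 \bigl[(R^2 - r^2)^2 + 2R^2 + 2r^2 + 1\bigr]^{-1} \rd r$ uniformly in $R \geq 0$. For $R \leq 1$ the integrand is bounded and decays like $r^{-2}$ at infinity, so this regime is immediate. For $R \geq 1$, I would split the integration into three pieces. On $r \leq R/2$ one has $(R^2 - r^2)^2 \geq 9R^4/16$, giving a contribution of order $R^{-1}$; on $r \geq 2R$ one has $(R^2 - r^2)^2 \geq 9r^4/16$, again yielding $O(R^{-1})$. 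The hard piece is the resonance window $R/2 \leq r \leq 2R$, where $(R^2 - r^2)^2$ vanishes at $r = R$. Writing $r = R + s$, however, one obtains $(R^2 - r^2)^2 = s^2(2R + s)^2 \geq (9/4) R^2 s^2$ throughout this window, while $r^2 \leq 4R^2$, so the integrand is dominated by $C/(s^2 + 1)$, whose integral in $s$ is finite and independent of $R$. Combining the three contributions yields $\sup_R J(R) < \infty$, which gives the desired uniform bound on $I(P)$ and hence the lemma.
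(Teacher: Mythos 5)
Your proof is correct and follows essentially the same route as the paper: Fourier representation with the momentum-conservation delta, the pointwise bound $|\widehat V|\leq \|V\|_1$, a weighted Cauchy--Schwarz with the symbol $D(p_1,p_2)=(p_1\cdot p_2)^2+p_1^2+p_2^2+1$, and reduction to the uniform bound $\sup_{P}\int \rd q \, D(q,P-q)^{-1}<\infty$. Your treatment of that integral (radial coordinates centered at $P/2$, with the resonance window $R/2\leq r\leq 2R$ handled via $r=R+s$ and $(R^2-r^2)^2\geq \tfrac{9}{4}R^2s^2$) is a sound minor variant of the paper's two-region splitting $|q-\tfrac{p}{2}|\gtrless |p|$.
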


\begin{proof}
Switching to Fourier space, we find
\begin{equation*}
\begin{split}
\langle \ph, V (x_1 -x_2) \psi \rangle
=\; & \int \rd p_1 \rd p_2 \rd q_1 \rd q_2 \; \overline{\widehat\ph (p_1, p_2)}
\widehat{\psi} (q_1, q_2) \, \widehat{V} (q_1 -p_1)\, \delta (p_1 + p_2 - q_1 -q_2) \, .
\end{split}
\end{equation*}
Therefore, by a weighted Schwarz inequality,
\begin{equation*}
\begin{split}
\Big| \langle \ph, V &(x_1 -x_2) \psi \rangle \Big|  \\
\leq \; & \| \widehat{V} \|_{\infty} \left( \int \rd p_1 \rd p_2 \rd q_1 \rd q_2 \;
\frac{(p_1 \cdot p_2)^2 + p_1^2 + p^2_2 +1}{(q_1 \cdot q_2)^2 + q_1^2 + q^2_2 +1} \,
 |\widehat\ph (p_1, p_2)|^2 \delta (p_1 + p_2 - q_1 -q_2) \right)^{1/2} \\ &
\hspace{1cm} \times \left(\int \rd p_1 \rd p_2 \rd q_1 \rd q_2
\frac{(q_1 \cdot q_2)^2 + q_1^2 + q^2_2 +1}{(p_1 \cdot p_2)^2 + p_1^2 + p^2_2 +1}
 |\widehat{\psi} (q_1, q_2)|^2 \,\delta (p_1 + p_2 - q_1 -q_2)\right)^{1/2} \\
\leq \; & \| V \|_{1} \; \left( \sup_{p} \int \rd q \;
\frac{1}{(q \cdot (p -q))^2 + q^2 + (p - q)^2 +1}\right) \\ & \hspace{1cm} \times
\left\langle \psi, \left((\nabla_1 \cdot \nabla_2)^2 -\Delta_1 - \Delta_2 + 1 \right)
\psi \right\rangle^{1/2} \, \left\langle \ph, \left((\nabla_1 \cdot \nabla_2)^2 -\Delta_1 - \Delta_2 + 1 \right)
\ph \right\rangle^{1/2}\,.
\end{split}
\end{equation*}
The lemma follows from
\begin{equation}\label{eq:int1}
\sup_{p \in \bR^3} \int \rd q \; \frac{1}{(q \cdot (p -q))^2 + q^2 + (p - q)^2 +1} < \infty \, . \end{equation}
To prove (\ref{eq:int1}), we write
\begin{equation}\label{eq:int2}
\begin{split}
\int \rd q \; \frac{1}{(q \cdot (p -q))^2 + q^2 + (p - q)^2 +1} = \; & \int_{|q-\frac{p}{2}| > |p|}
\rd q \; \frac{1}{\left( \left(q-\frac{p}{2}\right)^2 - \frac{p^2}{4}\right)^2 + q^2 + (p - q)^2 +1} \\ &
+ \int_{|q-\frac{p}{2}| < |p|} \rd q \; \frac{1}{\left( \left(q-\frac{p}{2}\right)^2 -
\frac{p^2}{4}\right)^2 + q^2 + (p - q)^2 +1}\,.
\end{split}
\end{equation}
The first term on the r.h.s. of the last equation is bounded by
\begin{equation*}
\begin{split}
\int_{|q-\frac{p}{2}| > |p|} \rd q \; \frac{1}{\left( \left(q-\frac{p}{2}\right)^2 - \frac{p^2}{4}\right)^2
 + q^2 + (p - q)^2 +1} \leq \; & \int_{|q-\frac{p}{2}| > |p|} \rd q \; \frac{1}{\frac{9}{16}
\left|q-\frac{p}{2}\right|^4 +1}
\\ \leq \; & \frac{16}{9} \int_{\bR^3} \rd q \; \frac{1}{|q|^4  +1} < \infty\,,
\end{split}
\end{equation*}
uniformly in $p \in \bR^3$. As for the second term on the r.h.s. of (\ref{eq:int2}), we observe that
\begin{equation*}
\begin{split}
\int_{|q-\frac{p}{2}| < |p|} \rd q \; &\frac{1}{\left( \left(q-\frac{p}{2}\right)^2 -
\frac{p^2}{4}\right)^2 + q^2 + (p - q)^2 +1} \\ = \; & \int_{|x| < |p|} \rd x \;
\frac{1}{\left( x^2 - \frac{p^2}{4}\right)^2 + \left( x+\frac{p}{2} \right)^2 + \left(x-\frac{p}{2} \right)^2 +1} \\
= \; & 4\pi \int_{0}^{|p|} \rd r \frac{r^2}{\left( r^2 - \frac{|p|^2}{4}\right)^2 +  2r^2 + \frac{|p|^2}{2} +1} \\
\leq \; & C |p|^2 \int_{-|p|/2}^{|p|/2} \rd r \frac{1}{ r^2 \left( r + |p|\right)^2 +
 \left( r+\frac{|p|}{2} \right)^2 + \frac{|p|^2}{4} +1}  \\
\leq\; & C \int_{-|p|/2}^{|p|/2} \rd r \, \frac{1}{ r^2  + 1}  \leq C \int_{\bR} \rd r \, \frac{1}{r^2 + 1} < \infty ,
\end{split}
\end{equation*}
uniformly in $p$.
\end{proof}

\thebibliography{hh}

\bibitem{ABGT} Adami, R.; Bardos, C.; Golse, F.; Teta, A.:
Towards a rigorous derivation of the cubic nonlinear Schr\"odinger
equation in dimension one. \textit{Asymptot. Anal.} \textbf{40}
(2004), no. 2, 93--108.

\bibitem{AGT} Adami, R.; Golse, F.; Teta, A.:
Rigorous derivation of the cubic NLS in dimension one. {\it J. Stat. Phys.} {\bf 127} (2007),
 no. 6, 1193--1220.

\bibitem{CW} Anderson, M.H.; Ensher, J.R.; Matthews, M.R.; Wieman, C.E.; Cornell, E.A.;
{\it Science} ({\bf 269}), 198 (1995).

\bibitem{BGMEY} Bardos, C.; Erd\H os, L.; Golse, F.; Mauser, N.; Yau, H.-T.: Derivation of the Schr\"odinger-Poisson equation from the quantum $N$-body problem. {\it C.R. Math. Acad. Sci. Paris} {\bf 334} (2002), no. 6, 515--520.

\bibitem{BGM}
Bardos, C.; Golse, F.; Mauser, N.: Weak coupling limit of the
$N$-particle Schr\"odinger equation.
\textit{Methods Appl. Anal.} \textbf{7} (2000), 275--293.



\bibitem{Kett} Davis, K.B.; Mewes, M.-O.; Andrews, M. R.;
van Druten, N. J.; Durfee, D. S.; Kurn, D. M.; Ketterle, W.: {\it Phys. Rev. Lett.} {\bf
75}, 3969 (1995).

\bibitem{EESY0}  Elgart, A.; Erd{\H{o}}s, L.; Schlein, B.; Yau, H.-T.:
Nonlinear Hartree equation as the mean field limit of weakly
coupled fermions. \textit{J. Math. Pures Appl. (9)} \textbf{83} (2004),
no. 10, 1241--1273.

\bibitem{EESY} Elgart, A.; Erd{\H{o}}s, L.; Schlein, B.; Yau, H.-T.:
 {G}ross--{P}itaevskii equation as the mean filed limit of weakly
coupled bosons. \textit{Arch. Rat. Mech. Anal.} \textbf{179} (2006),
no. 2, 265--283.

\bibitem{ES} Elgart, A.; Schlein, B.: Mean Field Dynamics of Boson Stars.
\textit{Commun. Pure Appl. Math.} {\bf 60} (2007), no. 4, 500--545.

\bibitem{ErS} Erd\H os, L.; Schlein, B.: Quantum dynamics with mean field interactions: a new approach. Preprint arXiv:0804.3774. To appear in {\em J. Stat. Phys.}

\bibitem{ESY0}
Erd{\H{o}}s, L.; Schlein, B.; Yau, H.-T.:  Derivation of the
{G}ross-{P}itaevskii Hierarchy for the Dynamics of {B}ose-{E}instein
Condensate. \textit{Commun. Pure Appl. Math.} {\bf 59} (2006), no. 12, 1659--1741.

\bibitem{ESY2} Erd{\H{o}}s, L.; Schlein, B.; Yau, H.-T.:
Derivation of the cubic non-linear Schr\"odinger equation from
quantum dynamics of many-body systems. {\it Invent. Math.} {\bf 167} (2007), 515--614.

\bibitem{ESY3} Erd{\H{o}}s, L.; Schlein, B.; Yau, H.-T.: Derivation of the
 Gross-Pitaevskii Equation for the Dynamics of Bose-Einstein Condensate.
Preprint arXiv:math-ph/0606017. To appear in {\it Ann. Math.}

\bibitem{ESY4} Erd\H os, L.; Schlein, B.; Yau, H.-T.: Rigorous derivation of the Gross-Pitaevskii equation. {\it Phys. Rev Lett.} {\bf 98} (2007), no. 4, 040404.

\bibitem{ESY5}  Erd{\H{o}}s, L.; Schlein, B.; Yau, H.-T.: Rigorous Derivation of the Gross-Pitaevskii Equation with a Large Interaction Potential.
    Preprint arXiv:0802.3877.

\bibitem{EY} Erd{\H{o}}s, L.; Yau, H.-T.: Derivation
of the nonlinear {S}chr\"odinger equation from a many body {C}oulomb
system. \textit{Adv. Theor. Math. Phys.} \textbf{5} (2001), no. 6,
1169--1205.

\bibitem{FGS} Fr\"ohlich, J.; Graffi, S.; Schwarz, S.:
Mean-field- and classical limit of many  body
Schr\"odinger dynamics for bosons. {\it Comm. Math. Phys.}
{\bf 271} (2007), no. 3, 681-697.

\bibitem{FKP} Fr\"ohlich, J.; Knowles, A.; Pizzo, A.: Atomism and quantization. {\it J. Phys. A} {\bf 40} (2007), no. 12, 3033-3045.

\bibitem{FKS} Fr\"ohlich, J.; Knowles, A.; Schwarz, S.: On the mean-field limit of bosons with Coulomb two-body interaction. Preprint arXiv:0805.4299.

\bibitem{GV1}  Ginibre, J.; Velo, G.:
The classical field limit of scattering theory for nonrelativistic many-boson systems. I.-II. {\it Comm. Math. Phys.}  {\bf 66}  (1979), no. 1, 37--76, and  {\bf 68}  (1979), no. 1, 45--68.



\bibitem{Hepp} Hepp, K.: The classical limit for quantum mechanical correlation functions.
{\it Comm. Math. Phys.} {\bf 35} (1974), 265--277.

\bibitem{LS} Lieb, E.H.; Seiringer, R.:
Proof of {B}ose-{E}instein condensation for dilute trapped gases.
\textit{Phys. Rev. Lett.} \textbf{88} (2002), 170409-1-4.

\bibitem{LSSY} Lieb, E.H.; Seiringer, R.; Solovej, J.P.; Yngvason, J.:
{\sl The mathematics of the Bose gas and its condensation. }
Oberwolfach Seminars, {\bf 34.}
Birkhauser Verlag, Basel, 2005.

\bibitem{LSY} Lieb, E.H.; Seiringer, R.; Yngvason, J.: Bosons in a trap:
a rigorous derivation of the {G}ross-{P}itaevskii energy functional.
\textit{Phys. Rev A} \textbf{61} (2000), 043602.


\bibitem{KM} Klainerman, S.; Machedon, M.: On the uniqueness of solutions to the
Gross-Pitaevskii hierarchy.
 {\it Commun. Math. Phys.} \textbf{279} (2008), no. 1, 169--185.



\bibitem{NS} Narnhofer, H.; Sewell, G. L.: Vlasov hydrodynamics of a quantum mechanical model. {\it Commun. Math. Phys.} {\bf 79} (1981),  9--24.

\bibitem{RSi} Reed, M.; Simon, B.: {\it Methods of Modern Mathematical Physics}.
Vol.I. Academic Press, 1975.

\bibitem{RS} Rodnianski, I.; Schlein, B.: Quantum fluctuations and rate of convergence towards
mean field dynamics. Preprint arXiv:math-ph/0711.3087.

\bibitem{Ru} Rudin, W.: {\sl Functional analysis.}
McGraw-Hill Series in Higher Mathematics, McGraw-Hill Book~Co., New
York, 1973.

\bibitem{Sp} Spohn, H.: Kinetic Equations from Hamiltonian Dynamics.
   \textit{Rev. Mod. Phys.} \textbf{52} (1980), no. 3, 569--615.

\bibitem{S} Spohn, H.: On the Vlasov hierarchy. {\it Math. Methods Appl.
Sci.} {\bf 3} (1981), no. 4, 445--455.

\bibitem{Ya} Yajima, K.: The $W\sp{k,p}$-continuity of wave operators for Schr\"odinger operators.
{\it J. Math. Soc. Japan} {\bf 47} (1995), no. 3, 551--581.

\bibitem{Ya0} Yajima, K.: The $W\sp{k,p}$-continuity of wave operators for Schr\"odinger operators.
{\it Proc. Japan Acad. Ser. A Math. Sci.} {\bf 69} (1993), no. 4, 94--98.

\end{document}